\numberwithin{equation}{section}
\DeclareMathAlphabet{\mathpzc}{OT1}{pzc}{m}{it}
\def\uno{\mathbbm{1}}
\def\F{\mathcal{F}}
\def\A{\mathcal{A}}
\def\X{\mathcal{X}}
\def\Y{\mathcal{Y}}
\def\E{\mathcal{E}}
\def\Fz{\F^{(0)}}
\def\C_w^N{\ell^2_w}
\def\B{{\cal B}}
\def\be{{\bf e}}
\def\bi{{\bf i}}
\def\bii{({\bf i}, {\boldsymbol \iota})}
\def\bb{{\bf b}}
\def\Tc{\mathcal{T}}
\def\mod#1{\underline{#1}}
\def\Nc{\mathcal{N}}
\def\di{{ d}}
\def\Ph{\mathcal{P}}
\def\norma#1{\left\|#1\right\|}
\def\L{\mathcal{L}}
\def\Cs{\mathcal{C}}
\def\Csr{\mathcal{C}_{\mathbb{R}}}
\def\spazio#1{\Ph^{#1}}
\def\spazior#1{\Ph^{#1}_{{\mathbb R}}}
\def\Br{B_{{\mathbb{R}}}}
\def\tdue#1{Q^{#1}}
\def\im{{\rm i}}
\newcommand{\derz}[1]{\left.\frac{d}{d#1}\right|_{#1 =0}}
\newcommand{\norm}[1]{\left\|#1\right\|}
\newcommand*{\diff}{\mathop{}\!\mathrm{d}}
\newcommand{\multindex}[3]{ #1=(#1_1,\ldots,#1_{#2})\in \mathbb{N}^{#2}, \, |#1|=#3}
\newcommand{\R}{\mathbb{R}}
\newcommand{\N}{\mathbb{N}}
\newcommand{\Z}{\mathbb{Z}}
\newcommand{\T}{\mathbb{T}}
\renewcommand{\C}{\mathbb{C}}
\newcommand{\<}{\left\langle}
\renewcommand{\>}{\right\rangle}
\newcommand{\und}[1]{\underline{#1}}      
\newcommand{\mmod}[1]{\left| #1 \right|}
\newcommand{\llambda}{\hat{\lambda}}
\newcommand{\pb}[2]{\lbrace #1, \, #2 \rbrace}
\renewcommand{\P}{\mathcal{P}}
\renewcommand{\B}{\mathcal{B}}
\newcommand{\om}[1]{\omega\left(\tfrac{#1}{N} \right)}
\newcommand{\hp}{\hat{p}}
\newcommand{\hq}{\hat{q}}
\newcommand{\reg}{{s, \sigma}}
\newcommand{\eN}{\frac{\epsilon}{N^2}}
\newtheorem{teo}{Theorem}[section]
\newtheorem{theorem}{Theorem}[section]
\newtheorem{lem}[teo]{Lemma}
\newtheorem{lemma}[teo]{Lemma}
\newtheorem{cor}[teo]{Corollary}
\newtheorem{corollary}[teo]{Corollary}
\newtheorem{prop}[teo]{Proposition}
\newtheorem{df}[teo]{Definition}
\newtheorem{oss}[teo]{Remark}
\newtheorem{rem}[teo]{Remark}
\newtheorem{remark}[teo]{Remark}
\author{D. Bambusi\footnote{Dipartimento di Matematica, Universit\`a degli Studi di Milano, Via Saldini 50, I-20133
Milano. \newline
 \textit{Email: } \texttt{dario.bambusi@unimi.it}},
 A. Maspero\footnote{Dipartimento di Matematica, Universit\`a degli Studi di Milano, Via Saldini 50, I-20133
Milano. \newline
 \textit{Email: } \texttt{alberto.maspero@unimi.it}}} 
\title{Birkhoff coordinates for the Toda
  Lattice in the limit of infinitely many particles with an application to FPU}
\begin{document}
\maketitle
\begin{abstract}
In this paper we study the Birkhoff coordinates (Cartesian action
angle coordinates) of the Toda lattice with periodic boundary
condition in the limit where the number $N$ of the particles tends to
infinity. We prove that the transformation introducing such
coordinates maps analytically a complex ball of radius $R/N^\alpha$
(in discrete Sobolev-analytic norms) into a ball of radius
$R'/N^\alpha$ (with $R,R'>0$ independent of $N$) if and only if
$\alpha\geq2$. Then we consider the problem of equipartition of energy
in the spirit of Fermi-Pasta-Ulam. We deduce that corresponding to
initial data of size $R/N^2$, $0<R\ll 1$, and with only the first
Fourier mode excited, the energy remains forever in a packet of
Fourier modes exponentially decreasing with the wave number. Finally
we consider the original FPU model and prove that energy remains
localized in a similar packet of Fourier modes for times one order of
magnitude longer than those covered by previous results which is the
time of formation of the packet.  The proof of the theorem on Birkhoff
coordinates is based on a new quantitative version of a Vey type
theorem by Kuksin and Perelman which could be interesting in itself.
\end{abstract}

\section{Introduction and main result}
\label{section1}
It is well known that the Toda lattice, namely the system with
Hamiltonian 
\begin{equation}
H_{Toda}(p,q)=\frac{1}{2} \sum_{j=0}^{N-1}{p_j^2} + \sum_{j=0}^{N-1} e^{q_j-q_{j+1}} \ ,
\label{toda}
\end{equation}
and periodic boundary conditions $q_N=q_0\,$, $p_N=p_0$, is integrable
\cite{toda,henon}. Thus, by standard Arnold-Liouville theory the
system admits action angle coordinates. However the actual
introduction of such coordinates is quite complicated (see
\cite{Flaschka_McLaughlin,Ferguson}) and the corresponding
transformation has only recently been studied analytically in a series
of papers by Henrici and Kappeler \cite{kapphen1,kapphen2}. In
particular such authors have proved the existence of global Birkhoff
coordinates, namely canonical coordinates $(x_k,y_k)$ analytic on the
whole $\R^{2N}$, with the property that the $k^{th}$ action is given
by $(x^2_k+y^2_k)/2$.  The construction of Henrici and Kappeler,
however is not uniform in the size of the chain, in the sense that the
map $\Phi_N$ introducing Birkhoff coordinates is globally analytic for
any fixed $N$, but it could (and actually does) develop singularities
as $N\to + \infty$. Here we prove some analyticity properties
fulfilled by $\Phi_N$ uniformly in the limit $N \to +
\infty$. Precisely we consider complex balls centered at the origin
and prove that $\Phi_N$ maps analytically a ball of radius
$R/N^\alpha$ in discrete Sobolev-analytic norms into a ball of radius
$R'/N^\alpha$, with $R,R'>0$ independent of $N$ if and only if
$\alpha\geq 2$. Furthermore we prove that the
supremum of $\Phi_N$ over a complex ball of radius $R/N^\alpha$ diverges as
$N\to+\infty$ when $\alpha<1$.

In order to prove upper estimates on $\Phi_N$ we apply
to the Toda lattice a Vey type theorem \cite{vey} for infinite
dimensional systems recently proved by Kuksin and Perelman
\cite{kuksinperelman}. Actually, we need to prove a new quantitative
version of Kuksin-Perelman's theorem. We think that such a result
could be interesting in itself.

The lower estimates on the size of $\Phi_N$ are proved by constructing
explicitly the first term of the Taylor expansion of $\Phi_N$ through
Birkhoff normal form techniques; in particular we prove that the
second differential $d^2\Phi_N(0)$ at the origin diverges like $N^2$.

We finally apply the result to the problem of equipartition of energy
in the spirit of Fermi-Pasta-Ulam. We prove that in the Toda lattice,
corresponding to initial data with energy $E/N^3$ ($0<E\ll1$)
and with only the first Fourier mode excited, the energy remains
forever in a packet of Fourier modes exponentially decreasing with the
wave number. Then we consider the original FPU model and prove that,
corresponding to the same initial data, energy remains in an
exponentially localized packet of Fourier modes for times of order
$N^4$ 
(see Theorem \ref{N.1} below), namely for times one order of
magnitude longer then those covered by previous results (see
\cite{BamPon}, see also \cite{schneider_wayne,lubich}).  This is
relevant in view of the fact that the time scale of formation of the
packet is $N^3$ (see \cite{BamPon}), so the result of the
present paper allows to conclude that the packet persists over a time
much longer then the one needed for its formation.
 
\vspace{1em}

\noindent {\it Acknowledgments.} A particular thank goes to Antonio
Ponno who suggested the argument leading to the proof of Theorem
\ref{inverso}. We thank also Giancarlo Benettin, Andrea Carati, Luigi
Galgani, Antonio Giorgilli, Thomas Kappeler for several discussions on
this work.  This research was founded by the Prin project 2010-2011
``Teorie geometriche e analitiche dei sistemi Hamiltoniani in
dimensioni finite e infinite''. The second author was partially supported by the Swiss National Science Foundation.

\subsection{Birkhoff coordinates for the Toda lattice}
We come to a  precise statement of the main results of the present
paper. Consider the Toda lattice in the subspace characterized by
\begin{equation}
\label{media}
\sum_jq_j=0=\sum_{j}p_j
\end{equation}
which is invariant under the dynamics. Introduce the discrete
Fourier transform $\F(q) = \hat{q}$ defined by
\begin{equation}
\label{fou}
\hat{q}_k=\frac{1}{\sqrt{N}} \sum_{j=0}^{N-1}{q_j e^{2 \im \pi jk/N}},
\qquad k\in\Z\ ,
\end{equation}
and consider $\hat p_k$ defined analogously.  Due to \eqref{media} one
has $\hat p_0=\hat q_0=0$ and furthermore $\hp_k = \hp_{k+N}, \, \hq_k
= \hq_{k+N}$, $\forall k \in \Z$, so we restrict to $\{\hat p_k, \hat
q_k \}_{k=1}^{N-1}$. Corresponding to real sequences $(p_j,q_j)$ one
has $\overline{\hat{q}_k}=\hat{q}_{N-k}$ and $
\overline{\hat{p}_k}=\hat{p}_{N-k}$.

Introduce the linear Birkhoff variables 
\begin{equation}
\label{lin.bir.real}
X_k=\frac{\hat p_{k}+\hat p_{N-k}-\im\omega_k(\hat q_k-\hat q_{N-k})
}{\sqrt{2\omega_k} }\ ,\quad  Y_k=\frac{\hat p_{k}-\hat
  p_{N-k}+\im\omega_k(\hat q_k+\hat q_{N-k}) 
}{\im \sqrt{2\omega_k} }\ ,\quad k=1,...,N-1\ ,
\end{equation}
where $\omega_k\equiv\om k:=2\sin(k\pi/N)$; using such coordinates,
which are symplectic, the quadratic part
\begin{equation}
\label{h0}
H_0:=\sum_{j=0}^{N-1}\frac{p_j^2+(q_j-q_{j+1})^2}{2}
\end{equation} 
of the Hamiltonian takes the form
\begin{equation}
\label{quad.part}
H_0
= \sum_{k=1}^{N-1}\om k \frac{X_k^2+Y_k^2}{2}\ .
\end{equation}
With an abuse of notations, we re-denote by $H_{Toda}$ the Hamiltonian
\eqref{toda} written in the coordinates $(X,Y)$. 
The
following theorem is due to Henrici and Kappeler:
\begin{theorem}[{{\cite{kapphen2}}}]
\label{HK}For any integer $N \geq 2$ there exists a global real analytic symplectic diffeomorphism
$\Phi_N:\R^{N-1}\times\R^{N-1} \to \R^{N-1}\times\R^{N-1}$, $(X, Y)=\Phi_N(x,y)$
with the following properties:
\begin{itemize}
\item[(i)] The Hamiltonian $H_{Toda}\circ \Phi_N$ is a function of the
  actions $I_k:=\frac{x^2_k+y_k^2}{2}$ only, i.e. $(x_k,y_k)$ are 
  Birkhoff variables for the Toda Lattice.
\item[(ii)] The differential of $\Phi_N$ at the origin is the identity:
  $d\Phi_N(0,0)=\uno$. 
\end{itemize}
\end{theorem}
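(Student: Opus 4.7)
The plan is to construct $\Phi_N$ from the spectral data of the Lax representation of the Toda chain. One first passes to Flaschka-type variables $a_j=\tfrac12 e^{(q_j-q_{j+1})/2}$, $b_j=-p_j/2$, which recast the equations of motion as an isospectral deformation $\dot L=[B,L]$ of a periodic tridiagonal matrix $L$. The conserved quantities are then encoded in the discriminant $\Delta(\lambda)=\mathrm{tr}\,M(\lambda)$ of the monodromy matrix, whose roots of $\Delta^2-4=0$ form the ordered periodic/anti-periodic spectrum $\lambda_0<\lambda_1\le\lambda_2<\cdots<\lambda_{2N-1}$. The $N-1$ actions are then defined by the standard contour integrals
\[
I_k=\frac{1}{\pi}\int_{\lambda_{2k-1}}^{\lambda_{2k}}\lambda\,\frac{\dot\Delta(\lambda)}{\sqrt{\Delta(\lambda)^2-4}}\,d\lambda,\qquad k=1,\ldots,N-1,
\]
while the conjugate angles $\theta_k$ arise from the Abel map of the Dirichlet spectrum $\{\mu_k\}$ on the hyperelliptic spectral curve $\{(\lambda,w):w^2=\Delta(\lambda)^2-4\}$. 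The Cartesian Birkhoff coordinates would then be $z_k=x_k+\im y_k=\sqrt{2I_k}\,e^{\im\theta_k}$.

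The heart of the proof, and the main obstacle, is to show that the map so defined extends to a real analytic diffeomorphism on all of $\R^{2(N-1)}$. The difficulty is that $\sqrt{I_k}$ and $\theta_k$ are individually singular on the codimension-two locus of closed gaps $\lambda_{2k-1}=\lambda_{2k}$, where several isospectral tori collide; the apparent branching of the product $z_k$ must be resolved by producing an integral representation that is manifestly analytic across this stratum. Concretely, one rewrites $z_k$ locally as an integral of an entire function along a cycle on the spectral curve which remains well-defined as the gap closes, and then checks that no residual monodromy is picked up along loops encircling the closed-gap locus. This is where the bulk of the technical work lies, and it mirrors the scheme of Kappeler and collaborators for the KdV and NLS hierarchies, adapted to the finite Jacobi setting.

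Once global analyticity is established, the symplectic property reduces to the classical finite-gap identities $\{I_j,I_k\}=0$ and $\{\theta_j,I_k\}=\delta_{jk}$, which in turn follow from explicit Poisson-bracket computations involving $d\Delta$ and $d\mu_k$ in the Flaschka variables. Finally, to identify $d\Phi_N(0,0)$ with the identity I would linearize the spectral construction at the equilibrium $a_j=1/2$, $b_j=0$: the linearized monodromy diagonalizes in the discrete Fourier basis with frequencies $\omega_k=2\sin(k\pi/N)$, and a direct computation identifies the linear Birkhoff variables $X_k,Y_k$ of \eqref{lin.bir.real} as the leading terms in the Taylor expansion of $(x_k,y_k)$, giving $d\Phi_N(0,0)=\uno$.
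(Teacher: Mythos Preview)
This theorem is not proved in the paper: it is quoted from Henrici--Kappeler \cite{kapphen2} as background, and the paper builds on it rather than reproving it. So there is no ``paper's own proof'' to compare against here.

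That said, your sketch is a fair outline of the Henrici--Kappeler construction: Flaschka variables and the Lax pair, actions as contour integrals of $\lambda\dot\Delta/\sqrt{\Delta^2-4}$ over the gaps, angles via the Abel map of the Dirichlet divisor on the hyperelliptic spectral curve, and the central difficulty of extending $z_k=\sqrt{2I_k}\,e^{i\theta_k}$ analytically across the closed-gap strata. You have correctly identified where the real work lies. Two small caveats: your Flaschka normalization differs from the one used in the paper (which takes $a_j=e^{(q_j-q_{j+1})/2}-1$, $b_j=-p_j$; see \eqref{bavariable}), and the claim $d\Phi_N(0,0)=\uno$ is sensitive to the precise normalization of the $z_k$'s, so the ``direct computation'' you allude to needs to be tied to the specific linear Birkhoff variables \eqref{lin.bir.real}.

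It may also be worth noting that the paper's own contribution, Theorem~\ref{main}, establishes a \emph{local} Birkhoff map with uniform-in-$N$ estimates by an entirely different route: a quantitative Kuksin--Perelman (Vey-type) theorem applied to a spectrally defined map $\Psi$ (Theorem~\ref{main2}). That argument does not go through action--angle variables or the Abel map at all; it works directly with the perturbative spectral projectors of the Jacobi matrix and then invokes the abstract normal-form machinery of Section~\ref{KP.section}. So while your outline is the right one for the cited global result, it is methodologically disjoint from what the present paper actually proves.
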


Our main results concern the analyticity properties of the map
$\Phi_N$ as $N\to \infty $. To come to a precise statement we have to
introduce a suitable topology in $\C^{N-1}\times \C^{N-1}$.

For any $s\geq 0$, $\sigma\geq0$ introduce in
$\C^{N-1}\times \C^{N-1}$ the discrete Sobolev-analytic norm
\begin{equation}
\norm{(X,Y)}^2_{\spazio{\reg}}:=\frac{1}{N}\sum_{k
  =1}^{N-1}[k]_N^{2s}\, e^{2\sigma [k]_N}\,\om{k}\,
\frac{\left|X_k\right|^2 + \left|Y_k\right|^2}{2}
\label{nor.bir}
\end{equation}
where $$[k]_N := \min (|k|, |N-k|)\ .$$ The space $\C^{N-1}\times \C^{N-1}$ endowed
by such a norm will be denoted by $\spazio{\reg}$. 
We denote 
by $B^{{\reg}}(R)$  the ball of
radius $R$ and center $0$ in the topology defined by the norm
$\norm{.}_{\spazio{\reg}}$. We will also denote by
$\Br^{{\reg}}:=B^{{\reg}}(R)\cap(\R^{N-1}\times\R^{N-1})$ the {\it real}
ball of radius $R$.
\begin{remark}
\label{sullenorme}
When $\sigma=s=0$ the norm \eqref{nor.bir} coincides with the energy
norm rescaled by a factor $1/N$ (the rescaling factor will be discussed in
Remark \ref{size}). We are particularly interested in the case
$\sigma>0$ since, in such a case, states belonging to $\spazio{\reg}$
are exponentially decreasing in Fourier space. The consideration of
positive values of $s$ will be needed in the proof of the main
theorem.
\end{remark}

Our main result is the following Theorem.

\begin{theorem}
\label{main}
For any $s\geq 0$, $\sigma\geq 0$ there exist
strictly positive constants
$R_{\reg}$, $C_\reg$,
such that for any $N\geq 2$, the map $\Phi_N$ is analytic as a map from
$B^{\reg}(R_{\reg}/N^2)$ to
$\spazio{\reg}$ and fulfills
\begin{align}
\label{Phi_N.est}
\sup_{\norm{(x,y)}_{\spazio{\reg}} \leq R/N^2
}{\norm{\Phi_N(x,y)-(x,y)}_{\spazio{s+1, \sigma}}}  \leq
C_\reg \frac{R^2}{N^2}\ ,\quad \forall R<R_{\reg} .  
\end{align}
The same estimated is fulfilled by the inverse map $\Phi_N^{-1}$
possibly with a different $R_{\reg}$. 
\end{theorem}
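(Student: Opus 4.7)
The plan is to apply a quantitative version of the Kuksin-Perelman Vey-type theorem (which the introduction promises to establish) to the Toda Hamiltonian $H_{Toda}$ written in the linear Birkhoff variables $(X,Y)$. Theorem \ref{HK} already supplies a real-analytic canonical map $\Phi_N$ with the required algebraic properties (integrability and $d\Phi_N(0)=\uno$); what is missing is a quantitative analytic bound uniform in the dimension $N$, and this is exactly what a Kuksin-Perelman statement, once made quantitative, should produce from the commuting Toda integrals of \cite{kapphen1,kapphen2}.

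First, I would rescale the phase space so that a ball of radius $R/N^2$ in $\spazio{\reg}$ becomes a ball of radius of order one. Concretely, setting $(X,Y)=N^{-2}(X',Y')$, the quadratic Hamiltonian $H_0$ rescales as $N^{-4}H_0(X',Y')$, while the higher-order Taylor coefficients of $H_{Toda}-H_0$ (all coming from the Taylor expansion of $e^{q_j-q_{j+1}}$) acquire additional factors of $N^{-k}$ at order $k+2$. After a further time rescaling by $N^4$, the problem becomes that of putting into Birkhoff normal form a Hamiltonian whose nonlinearity is small, uniformly in $N$, in the $\spazio{\reg}$-norm.

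Second, one must verify the tame estimate on the nonlinearity that is needed as an input of the quantitative Kuksin-Perelman theorem, and this is where the regularity shift $s\mapsto s+1$ in \eqref{Phi_N.est} will come from. The nonlinear part of $H_{Toda}$ is cubic and higher in the discrete derivative $q_j-q_{j+1}$, and passing to Fourier variables this extracts a factor of $\omega_k=\om k\sim 2\pi [k]_N/N$ on each leg. Each such factor simultaneously provides the $+1$ regularity shift (one power of $[k]_N$) and contributes a factor $1/N$ responsible for the critical scaling. Together with the factor $1/N$ built into the norm \eqref{nor.bir}, this is what allows the ball of radius $R/N^2$ to be the maximal ball on which the map remains bounded with constants independent of $N$. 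Given this tame bound, the quantitative Kuksin-Perelman theorem constructs a near-identity symplectic map whose remainder is controlled, in $\spazio{s+1,\sigma}$, by the square of the input size, yielding the $R^2/N^2$ on the right-hand side of \eqref{Phi_N.est}; the analogous estimate for $\Phi_N^{-1}$ follows from the usual inverse function theorem for near-identity maps. That the map thus constructed coincides with $\Phi_N$ of Theorem \ref{HK} is ensured by property (ii) of that theorem, which pins down the near-identity Birkhoff transformation uniquely on a neighborhood of the origin.

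The main obstacle, and what justifies devoting most of the paper to this point, is twofold. On the one hand, one must formulate and prove the quantitative Kuksin-Perelman theorem with constants that depend only on the tame bounds of the Hamiltonian and not on the dimension, which requires revisiting the iterative construction of the normal form and carefully keeping track of how the Sobolev-analytic norm behaves under Poisson brackets. On the other hand, one must verify, uniformly in $N$, the tame estimate for the Toda nonlinearity in the norm $\spazio{\reg}$, tracking precisely how each factor of $\omega_k$ converts a gain of regularity into a power of $1/N$, so that the scaling $\alpha=2$ emerges with the correct (and, as the companion results in the paper show, optimal) exponent.
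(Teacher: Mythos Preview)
Your proposal misidentifies the central object that the quantitative Kuksin--Perelman theorem takes as input. That theorem (Theorem~\ref{KP}) is a Vey-type statement: its hypothesis is a near-identity map $\Psi$ whose components satisfy $\{|\Psi_j|^2,|\Psi_l|^2\}=0$ for all $j,l$, and its conclusion is a genuinely symplectic map $\widetilde\Psi$ defining the same foliation. It does \emph{not} accept a Hamiltonian as input and manufacture Birkhoff coordinates from it. So ``tame estimates on the Toda nonlinearity'' are not what drives the argument; what is needed is to \emph{construct} the map $\Psi$ together with uniform bounds of the form $(\Psi4)$ in Theorem~\ref{main2}.

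The paper builds $\Psi$ from the spectral theory of the Lax matrix $L_{b,a}$: via Kato perturbation of the projectors $P_j$ and the transformation operators $U_j$ one defines functions $z_j,w_j$ (equation~\eqref{def:zj}) with $|z_j|^2$ proportional to the spectral gap $\gamma_j^2$, so that commutation is automatic. The hard analysis, occupying Appendix~\ref{proofZ}, is the pointwise estimate of the kernels $\mathcal K^n_{j,\varsigma}$ in the Kato expansion (Lemmas~\ref{stimaautov}--\ref{dZ^n>3}); this is what produces the factors of $N^{2(n-1)}$ and the smoothing gain $s\mapsto s+1$, and hence the scaling $R/N^2$. Your heuristic about factors of $\omega_k$ extracted from $q_j-q_{j+1}$ is morally related but does not substitute for this spectral computation: the estimates on $\und{dZ^0}^*$ in particular have no counterpart in a direct analysis of $H_{Toda}$.

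Finally, your uniqueness claim is incorrect. Property~(ii) of Theorem~\ref{HK} does \emph{not} pin down the Birkhoff map; Lemma~\ref{uniq.bnf} shows only that $\chi_{\Phi_N}$ is determined up to a function Poisson-commuting with $H_0$. The paper does not argue that the $\widetilde\Psi$ produced by Theorem~\ref{KP} coincides with Henrici--Kappeler's map; it simply takes $\Phi_N:=\widetilde\Psi^{-1}$ (in the $(\xi,\eta)$ variables) as \emph{the} Birkhoff map for which the estimate~\eqref{Phi_N.est} is asserted.
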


\begin{remark}
\label{nonlin}
%
The estimate \eqref{Phi_N.est} controls the size of the nonlinear
corrections in a norm which is stronger then the norm of
$(x,y)$, showing that $\Phi_N - \uno$ is 1-smoothing. The proof
of this kind of smoothing effect was actually the main aim of the work
by Kuksin and Perelman \cite{kuksinperelman}, which proved it for
KdV. Subsequently Kappeler, Schaad and Topalov \cite{kappelerschaad} proved
that such a smoothing property holds also globally for the KdV
Birkhoff map.
\end{remark}

\begin{remark}
\label{rem.ds}
As a consequence of \eqref{Phi_N.est} one has 
\begin{equation}
\label{phimain}
\Phi_N\left(B^{\reg}\left(\frac{R}{N^2}\right)\right)\subset
B^{\reg}\left(\frac{R}{N^2}\left(1+C_{\reg}R\right)\right)
 ,\quad \forall R<R_{\reg} ,\forall N\geq 2
\end{equation}
and the same estimate is fulfilled by the inverse map $\Phi_N^{-1}$,
possibly with a different $R_{\reg}$. 
\end{remark}

\begin{cor}
\label{corkuksinperelman}
For any $s\geq0$, $\sigma\geq 0$ there exist
strictly positive constants $R_{\reg}$, $C_\reg$, with the
following property. Consider the solution $v(t)\equiv(X(t),Y(t))$
of the Toda Lattice corresponding to initial data $v_0\in
B^{\reg}\left(\frac{R}{N^2}\right) $ with $R \leq R_{\reg}$ then one has
\begin{equation}
\label{unift}
v(t)\in B^{\reg}\left(\frac{R}{N^2}(1+C_{\reg}R)
\right)\ ,\quad \forall t\in\R\ .
\end{equation}
\end{cor}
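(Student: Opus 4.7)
The plan is to transfer the Toda flow to Birkhoff coordinates via $\Phi_N^{-1}$, exploit the fact that in those coordinates the norm $\norm{\cdot}_{\spazio{\reg}}$ is preserved by the flow, and then transfer back via $\Phi_N$, applying the bound \eqref{phimain} from Theorem~\ref{main} at each transfer.

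Concretely, given $v_0 \in B^{\reg}(R/N^2)$ with $R\le R_\reg$ to be chosen, set $(x_0, y_0) := \Phi_N^{-1}(v_0)$. The version of \eqref{phimain} for the inverse map (Remark~\ref{rem.ds}) gives $(x_0, y_0) \in B^{\reg}\bigl((R/N^2)(1+C_\reg R)\bigr)$, provided $R_\reg$ is chosen small enough. Denote by $(x(t),y(t))$ the solution of $H_{Toda}\circ\Phi_N$ with initial datum $(x_0, y_0)$. The crucial observation is that, in Birkhoff variables, the norm \eqref{nor.bir} reads
\begin{equation*}
\norm{(x,y)}^2_{\spazio{\reg}} = \frac{1}{N}\sum_{k=1}^{N-1} [k]_N^{2s}\, e^{2\sigma [k]_N}\, \om{k}\, I_k, \qquad I_k := \tfrac{x_k^2 + y_k^2}{2},
\end{equation*}
that is, a linear function of the actions. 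Since by Theorem~\ref{HK}(i) the Hamiltonian $H_{Toda}\circ\Phi_N$ depends only on the $I_k$, each action is a constant of motion, and consequently $\norm{(x(t),y(t))}_{\spazio{\reg}}$ is independent of $t$.

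Because $\Phi_N$ is a symplectomorphism conjugating the two flows, $v(t) = \Phi_N(x(t),y(t))$ solves the Toda lattice with initial datum $v_0$. Applying \eqref{phimain} to $(x(t),y(t))$---which sits in $B^{\reg}\bigl((R/N^2)(1+C_\reg R)\bigr)$, a radius that stays below $R_\reg$ after a further shrinking of $R_\reg$---one obtains
\begin{equation*}
\norm{v(t)}_{\spazio{\reg}} \le \frac{R}{N^2}(1+C_\reg R)\bigl(1+C_\reg R(1+C_\reg R)\bigr),
\end{equation*}
and absorbing the higher-order-in-$R$ terms into a single larger constant $\widetilde C_\reg$ yields \eqref{unift}. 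The proof is essentially bookkeeping; there is no substantive obstacle once Theorem~\ref{main} is available, the only care being the quantitative choice of the new $R_\reg$ and $C_\reg$ in the corollary relative to those of the theorem so that the intermediate radii remain in the domain where \eqref{phimain} applies in both directions.
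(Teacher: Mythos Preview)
Your proof is correct and follows essentially the same route as the paper: pass to Birkhoff coordinates via $\Phi_N^{-1}$, use that $\norm{\cdot}_{\spazio{\reg}}$ is a function of the conserved actions and hence constant along the flow, then map back via $\Phi_N$, applying the estimate of Remark~\ref{rem.ds} in both directions and absorbing the iterated factors into a single constant. The paper's version is terser but the argument is the same.
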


In order to state a converse of Theorem \ref{main} consider the second
differential $Q^{\Phi_N}:=d^2\Phi_N(0,0)$ of $\Phi_N$ at the origin;
$Q^{\Phi_N}:\spazio{\reg}\to\spazio{\reg}$ is a
quadratic polynomial in the phase space variables\footnote{actually
  according to the estimate \eqref{Phi_N.est} it is smooth as a map
  $\spazio{\reg}\to\spazio{s+1,\sigma}$}. 

\begin{theorem}
\label{inverso}
For any $s \geq 0$, $\sigma \geq 0$ there exist strictly positive $R,
C$, $N_{\reg} \in \N$, such that, for any $N \geq N_{\reg}$, $\alpha
\in \R$, the quadratic form $Q^{\Phi_N}$ fulfills
\begin{equation}
\label{assuQ}
\sup_{v \in B^{\reg}_\R \left( \frac{R}{N^{\alpha}}\right)}
\norm{Q^{\Phi_N}(v,v)}_{\spazio{\reg}} \geq C R^2 N^{2-2\alpha} \ .
\end{equation}
\end{theorem}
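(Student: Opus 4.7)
Since $d\Phi_N(0,0)=\uno$ by Theorem \ref{HK}(ii), we have $\Phi_N(v)=v+\tfrac{1}{2}Q^{\Phi_N}(v,v)+O(\norm{v}^3)$. The quadratic correction is forced by the first step of Birkhoff normal form: expanding $H_{Toda}=H_0+H_3+O(v^4)$ with $H_3=\tfrac{1}{6}\sum_j(q_j-q_{j+1})^3$ and using that $H_{Toda}\circ\Phi_N$ depends only on the actions (hence cannot contain cubic monomials), a direct substitution yields
\begin{equation*}
Q^{\Phi_N}(v,v)=2X_{\chi_3}(v),\qquad\{H_0,\chi_3\}=-H_3,
\end{equation*}
where $\chi_3$ is the unique cubic Hamiltonian solving the homological equation. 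The proof then reduces to identifying a monomial of $\chi_3$ whose coefficient diverges with $N$ and exhibiting a suitable $v$ on which its contribution dominates.

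In complex variables $\xi_k=X_k+iY_k$, $\eta_k=X_k-iY_k$, inverting \eqref{lin.bir.real} yields $\hat q_k=i(\eta_k-\xi_{N-k})/(2\sqrt{2\omega_k})$. Inserting this in the Fourier expression of $H_3$ and enforcing the momentum constraint $k_1+k_2+k_3\equiv 0\pmod N$, the monomial $\xi_1^2\eta_2$ receives contributions from the three ordered triples $(2,N-1,N-1)$, $(N-1,2,N-1)$, $(N-1,N-1,2)$ which, using $\alpha_k=2ie^{-i\pi k/N}\sin(\pi k/N)$, combine without cancellation into a coefficient of order $N^{-2}$. The associated small denominator in the homological equation is
\begin{equation*}
\omega_1+\omega_1-\omega_2=4\sin(\pi/N)\bigl(1-\cos(\pi/N)\bigr)=\frac{2\pi^3}{N^3}+O(N^{-5}),
\end{equation*}
so the coefficient of $\xi_1^2\eta_2$ in $\chi_3$ is of magnitude $\asymp N$.

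Next, take $v\in B^{\reg}_{\mathbb R}(R/N^\alpha)$ with $X_1=a$ and all other components zero; since $\omega_1=2\sin(\pi/N)\asymp N^{-1}$, requiring $\norm{v}_{\spazio{\reg}}=R/N^\alpha$ determines $a^2\asymp R^2N^{2-2\alpha}$ with implicit constants depending only on $\sigma$. The momentum constraint implies that the only cubic monomials of $\chi_3$ surviving the evaluation of $\partial_{\xi_2}\chi_3$ and $\partial_{\eta_2}\chi_3$ at $v$ are $\xi_1^2\eta_2$ and its complex conjugate $\eta_1^2\xi_2$, so no accidental cancellation is possible, and the $X_2$-component of $X_{\chi_3}(v)$ has size at least $cNa^2$ for some $c>0$ independent of $N$. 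Plugging into the $\spazio{\reg}$-norm with $\omega_2\asymp N^{-1}$,
\begin{equation*}
\norm{Q^{\Phi_N}(v,v)}^2_{\spazio{\reg}}\geq\frac{[2]^{2s}e^{4\sigma}\omega_2}{N}(cNa^2)^2\geq C^2R^4N^{4-4\alpha},
\end{equation*}
valid for $N\geq N_{\reg}$, the threshold being chosen so that $\omega_1,\omega_2$ and the small denominator $\omega_1+\omega_1-\omega_2$ are within a constant factor of their leading asymptotics.

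The main technical obstacle is the explicit computation of the coefficient of $\xi_1^2\eta_2$ in $H_3$ (and hence in $\chi_3$): one must carry the inversion of \eqref{lin.bir.real} through the triple Fourier sum defining $H_3$, track the complex phases $\alpha_k$ and the combinatorial contributions of the three momentum-conserving triples, and verify that the result is both nonzero and of order $N^{-2}$. Once this is in hand the rest of the argument is bookkeeping.
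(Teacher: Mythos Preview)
Your approach is essentially the same as the paper's: both identify $Q^{\Phi_N}$ as the Hamiltonian vector field of the cubic normalizer $\chi$, isolate the monomial $\xi_1^2\eta_2$ whose small denominator $2\omega_1-\omega_2\sim 2\pi^3/N^3$ forces the corresponding coefficient of $\chi$ to be of order $N$, and evaluate on a state supported on the first mode. Two minor points: the paper is more careful about uniqueness (Lemma~\ref{uniq.bnf} and Remark~\ref{unico} show only that the \emph{non-resonant} coefficients of $\chi_{\Phi_N}$ are uniquely determined, which is all that is needed here, whereas your claim that $\chi_3$ is ``the unique cubic Hamiltonian'' is slightly overstated); and the paper bounds the norm directly by the $\xi_2$-component of $X_\chi$ in complex coordinates, sidestepping the ``no accidental cancellation'' assertion you need when passing to the real $X_2$-component.
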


\begin{remark}
\label{realdiv}
Roughtly speaking, one can say that, as $N\to\infty$, the real
diffeomorphism $\Phi_N$
develops a singularity at zero in the second derivative.
\end{remark}

Using Cauchy estimate (see subsect. \ref{p.inverso}) one immediately
gets the following corollary.

\begin{corollary}
\label{inverso1}
Assume that for some $s\geq0$, $\sigma\geq 0$ there exist strictly
positive $R,R'$ and $\alpha\geq0$, $\alpha'\in\R$, $N_{\reg}\in\N$,
s.t., for any $N\geq N_{\reg}$, the map $\Phi_N$ is analytic in the
complex ball $B^{\reg}(R/N^\alpha)$ and fulfills
\begin{equation}
\label{assu}
\Phi_N\left(B^{\reg}\left(\frac{R}{N^\alpha}\right)  \right)\subset
B^{\reg}\left(\frac{R'}{N^{\alpha'}}\right)\ ,
\end{equation}
then one has $\alpha'\leq 2(\alpha-1)$.
\end{corollary}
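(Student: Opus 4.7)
The plan is to combine the analyticity-and-size hypothesis \eqref{assu} with the quantitative lower bound on the quadratic part furnished by Theorem \ref{inverso}, using a Cauchy estimate to pass from analyticity information on $\Phi_N$ to a sup bound on its second-order Taylor coefficient. After that, the conclusion reduces to a mere comparison of powers of $N$.

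First, I would apply Cauchy's integral formula in the one complex variable $t$ to the curve $t\mapsto\Phi_N(tv)$, for each fixed $v\in\spazio{\reg}$ with $\norm{v}_{\spazio{\reg}}\le R/N^\alpha$. Since $\Phi_N(0)=0$, $d\Phi_N(0)=\uno$, and by definition $Q^{\Phi_N}=d^2\Phi_N(0)$, the homogeneous component of degree $2$ in the Taylor expansion of $\Phi_N$ at the origin is $\tfrac{1}{2}Q^{\Phi_N}(v,v)$. The standard Cauchy bound on this component, together with hypothesis \eqref{assu}, yields
\begin{equation*}
\sup_{\norm{v}_{\spazio{\reg}}\le R/N^\alpha}\norm{Q^{\Phi_N}(v,v)}_{\spazio{\reg}}\ \le\ 2\sup_{\norm{w}_{\spazio{\reg}}\le R/N^\alpha}\norm{\Phi_N(w)}_{\spazio{\reg}}\ \le\ \frac{2R'}{N^{\alpha'}}.
\end{equation*}
On the other hand, Theorem \ref{inverso} (applied with the same $\alpha$) combined with the obvious inclusion $B^{\reg}_\R(R/N^\alpha)\subset B^{\reg}(R/N^\alpha)$ of the real ball into the complex one of equal radius supplies the matching lower bound
\begin{equation*}
CR^2N^{2-2\alpha}\ \le\ \sup_{v\in B^{\reg}_\R(R/N^\alpha)}\norm{Q^{\Phi_N}(v,v)}_{\spazio{\reg}}\ \le\ \sup_{v\in B^{\reg}(R/N^\alpha)}\norm{Q^{\Phi_N}(v,v)}_{\spazio{\reg}}.
\end{equation*}

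Chaining the two inequalities produces $CR^2N^{\,\alpha'-2\alpha+2}\le 2R'$ for every $N\ge N_{\reg}$ (possibly enlarged), which forces the exponent to be non-positive, that is $\alpha'\le 2(\alpha-1)$, as claimed. There is no genuine obstacle here: the deep content is entirely packed into Theorem \ref{inverso}, while the Cauchy step is a one-line application. The only minor technical care concerns the boundary of the ball of analyticity—one writes Cauchy's estimate first on an arbitrary radius $\rho<R/N^\alpha$ (where the series is absolutely convergent) and then pushes it to the closed ball by letting $\rho\uparrow R/N^\alpha$—which is entirely routine.
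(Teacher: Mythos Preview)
Your proof is correct and follows essentially the same route as the paper: a Cauchy estimate to bound the quadratic part $Q^{\Phi_N}$ from above by $R'/N^{\alpha'}$ (up to a constant), combined with the lower bound \eqref{assuQ} from Theorem \ref{inverso}, then comparison of powers of $N$. The paper's own proof invokes the pointwise form \eqref{stisec.der} from Lemma \ref{vbar} directly rather than the packaged statement of Theorem \ref{inverso}, but this is purely cosmetic since the latter is derived from the former in one line.
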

\begin{remark}
\label{alpha}
A particular case of Corollary \ref{inverso1} is $\alpha<1$, in which one
has that the image of a ball of radius $R N^{-\alpha}$ under $\Phi_N$
is unbounded as $N\to\infty$.

A further interesting case is that of $\alpha=\alpha'$, which implies
$\alpha\geq 2$, thus showing that the scaling $R/N^2$ is the best
possible one in which a property of the kind of \eqref{phimain} holds.
\end{remark}

\begin{remark}
\label{size}
A state $(X,Y)$ is in the ball $B^{{\reg}}(R/N^2)$ if
and only if there exist interpolating periodic functions
$(\beta,\alpha)$, namely functions s.t.
\begin{equation}
\label{interpol}
 p_j=\beta\left(\frac{j}{N}\right)\ ,\quad
q_j-q_{j+1}=\alpha\left(\frac{j}{N}\right)\ ,
\end{equation}
which are analytic in a strip of width $\sigma$ and have a
Sobolev-analytic norm of size $R/N^2$. More precisely, given a state
$(p, q)$ one considers its Fourier coefficients $(\hat{p}, \hat{q})$
and the corresponding $X,Y$ variables; define
$$ \alpha(x)= \frac{1}{\sqrt{N}} \sum_{k=0}^{N-1} \hat{q}_k
\left(1-e^{-2\pi \im k/N }\right) e^{-2\pi \im x k} , \qquad \beta(x)=
\frac{1}{\sqrt{N}} \sum_{k=0}^{N-1} \hat{p}_k e^{-2\pi \im x k}
$$ which fulfill \eqref{interpol}. Then the Sobolev-analytic norms of $\alpha$
and $\beta$ are controlled by $\norm{(X,Y)}_{\P^\reg}$. For
example one has
$$
\norm{(\alpha, \beta)}^2_{H^s}:= \norm{\alpha}_{L^2}^2 + \norm{\beta}_{L^2}^2 + \frac{1}{(2\pi)^{2s}} \norm{\partial_x^s \alpha}_{L^2}^2 + \frac{1}{(2\pi)^{2s}} \norm{\partial_x^s \beta}_{L^2}^2 = \norm{(X,Y)}^2_{\spazio{s,0}},
$$ where $ \norm{\alpha}_{L^2}^2 := \int_0^1 \mmod{\alpha(x)}^2\, dx$.
In particular we consider here states with Sobolev-analytic norm of
order $R/N^2$ with $R \ll 1$. The factor $1/N$ in the definition of
the norm was introduced to get correspondence between the norm of a
state and the norm of the interpolating functions.
\end{remark}

\begin{remark}
\label{compa}
As a consequence of Remark \ref{size}, the order in $N$ of the solutions we
are describing with Theorem \ref{main} is the same of the solutions
studied in the papers \cite{BamPon} and
\cite{bambuthomas3,bambuthomas,bambuthomas2}. 
\end{remark}

\begin{rem}
The results of Theorem \ref{main} and Theorem \ref{inverso}  extend to states with discrete Sobolev-Gevrey norm defined by
\begin{equation}
\norm{(X,Y)}^2_{\spazio{s, \sigma, \nu}}:=\frac{1}{N}\sum_{k
  =1}^{N-1}[k]_N^{2s}\, e^{2\sigma [k]_N^\nu }\,\om{k}\,
\frac{\left|X_k\right|^2 + \left|Y_k\right|^2}{2}
\end{equation}
where $0 \leq \nu \leq 1$. As a consequence of Remark \ref{size}, these states are interpolated by periodic functions with regularity Gevrey  $\nu$.
\end{rem}


This paper is part of a project aiming at studying the
dynamics of periodic Toda lattices with a large number of particles,
in particular its asymptotics.  First results in this project were
obtained in the papers \cite{bambuthomas3,bambuthomas,bambuthomas2}.
They are based on the Lax pair representation of the Toda lattice in
terms of periodic Jacobi matrices. The spectrum of these matrices
leads to a complete set of conserved quantities and hence determines
the Toda Hamiltonian and the dynamics of Toda lattices, such as their
frequencies.  In order to study the asymptotics of Toda lattices for a
large number $N$ of particles one therefore needs to work in two
directions: on the one hand one has to study the asymptotics of the
spectrum of Jacobi matrices as $N\to\infty$ and on the other hand, one
needs to use tools of the theory of integrable systems in order to
effectively extract information on the dynamics of Toda lattices from
the periodic spectrum of periodic Jacobi matrices.

The limit of a class of sequences of $N\times N$ Jacobi matrices as
$N\to\infty$ has been formally studied already at the beginning of the
theory of the Toda lattices (see e.g. \cite{toda}). However, as
pointed out in \cite{bambuthomas}, these studies only allowed to
(formally) compute the asymptotics of the spectrum in special
cases. In particular, Toda lattices, which incorporated right and left
moving waves could not be analyzed at all in this way.
In \cite{bambuthomas}, based on an approach pioneered in
\cite{thierrygolse}, the asymptotics of the spectra
 of sequences of Jacobi matrices corresponding to states of the form
\eqref{interpol} were rigorously derived by the means of semiclassical
analysis.  It turns out that in such a limit the spectrum splits into
three parts: one group of eigenvalues at each of the two edges of the
spectrum within an interval of size $O(N^{-2})$, whose asymptotics are
described by certain Hill operators, and a third group of eigenvalues,
consisting of the bulk of the spectrum, whose asymptotics coincides
with the one of Toda lattices at the equilibrium --
see \cite{bambuthomas} for details.

In \cite{bambuthomas2} the asymptotics of the eigenvalues obtained in
\cite{bambuthomas} were used in order to compute the one of the actions 
and of the frequencies of Toda lattices. In particular it was shown
that the asymptotics of the frequencies at the two edges involve the
frequencies of two KdV solutions. The tools used
in \cite{bambuthomas2} are those of the theory of infinite dimensional
integrable systems as developed in \cite{kamkdv} and adapted to the
Toda lattice in \cite{kapphen1}. 

The present paper takes up another important topic in the large number 
of particle limit of periodic Toda lattices: we study the Birkhoff coordinates 
near the equilibrium in the limit of large $N$ to provide precise estimates 
on the size of complex balls around the equilibrium in Fourier coordinates 
and the corresponding size in Birkhoff coordinates. Our analysis allows
to describe the evolution of Toda lattices with large number of particles 
in the original coordinates and to obtain an application to the study of FPU
lattices (on which we will comment in the next section).

We remark that the obtained estimates on the size of the complex balls 
are optimal. In our view this is a strong indication that beyond such a regime
 the standard tools of integrable systems become inadequate for studying
the asymptotic features of the dynamics of the periodic Toda lattices
as $N \to \infty$.

The proofs of our results are based on a novel technique developed
in \cite{kuksinperelman} to show a Vey type theorem for the KdV equation on
the circle which we adapt here to the study of Toda lattices, developing
in this way another tool for the study of periodic Toda lattices with
a large number of particles. We remark that for our arguments to go
through, we need to assume an additional smallness condition on the
set of states admitted as initial data: the states are required to be
interpolated by functions $\alpha$ and $\beta$ with Sobolev-analytic
norm of size $R/N^2$, with $R\ll1$ sufficiently small.  (In the
papers \cite{bambuthomas3,bambuthomas,bambuthomas2}, the size $R$ can
be arbitrarily large.)


\subsection{On the FPU metastable packet}\label{FPU}

In this subsection we recall the phenomenon of the formation of a
packet of modes in the FPU chain and state our related results.  First
of all we recall that the FPU $(\alpha, \beta)$-model is the
Hamiltonian lattice with Hamiltonian function which, in suitable
rescaled variables, takes the form
\begin{align}
& H_{FPU}(p, q)=\sum_{j=0}^{N-1}\frac{p_j^2}{2}+U(q_j-q_{j+1})\ \
, \label{HN}
\\ 
& U(x)=\frac{x^2}{2}+\ \frac{x^3}{6} + \beta \frac{x^4}{24}\  .
\label{pot} 
\end{align}
We will consider the case of periodic boundary conditions: $q_{0}=q_{N}, \, p_0 = p_{N}$.
\begin{rem}
\label{rem:H2H3}
One has 
$$H_{FPU}(p,q) = H_{Toda}(p,q) + (\beta-1)H_2(q) + H^{(3)}(q), $$ 
where
\begin{align*}
H_l( q) &:= \sum_{j=0}^{N-1} \frac{(q_j - q_{j+1})^{l+2}}{(l+2)!}\ ,\quad \forall
l\geq2\ ,
\\
H^{(3)}&:= -\sum_{l \geq 3} H_l\ .
\end{align*}
\end{rem}
Introduce the energies of the normal
modes by 
\begin{equation}
\label{Emodi}
E_k:=\frac{|\hat p_k|^2+\om{k}^2|\hat q_k|^2}{2}\ ,\quad 1 \leq k \leq N-1
\ ,
\end{equation}
correspondingly denote by 
\begin{equation}
\label{specific.energy}
\E_k:= \frac{E_k}{N}
\end{equation}
the specific energy in the $k^{th}$ mode. Note that since $p, q$ are
real variables, one has $\E_k = \E_{N-k}$.  \\ In their celebrated
numerical experiment Fermi Pasta and Ulam \cite{FPU}, being interested
in the problem of foundation of statistical mechanics, studied both
the behaviour of $\E_k(t)$ and of its time average
$$
\langle \E_k\rangle(t):=\frac{1}{t}\int_0^t \E_k(s)ds\ .
$$
They observed that, corresponding to initial data with
$\E_1(0)\not=0$ and $\E_k(0)=0$ $\forall k\not=1, N-1$, the quantities
$\E_k(t)$ present a recurrent behaviour, while their averages
$\langle \E_k\rangle(t)$ quickly relax to a sequence $\bar \E_k$
exponentially decreasing with $k$. This is what is known under the
name of FPU packet of modes. 

Subsequent numerical observations have investigated the
persistence of the phenomenon for large $N$ and have also shown that
after some quite long time scale (whose precise length is not yet
understood) the averages $\langle \E_k\rangle(t)$ relax to
equipartition (see
e.g. \cite{berchiallagalganigiorgilli,berchiallagiorgillipaleari,benettin_ponno,benettin_ponno2}).
This is the phenomenon known as metastability of the FPU packet.

The idea of exploiting the vicinity of FPU with Toda in order to study
the dynamics of FPU goes back to \cite{Ferguson}, in which the authors
performed some numerical investigations studying the evolution of the
Toda invariants in the dynamics of FPU. A systematic numerical study
of the evolution of the Toda invariants in FPU, paying particular
attention to the dependence on $N$ of the phenomena, was performed by
Benettin and Ponno \cite{benettin_ponno} (see also \cite{benettin_ponno2}). In
particular such authors put into evidence the fact that the FPU packet
seems to have an infinite lifespan in the Toda lattice. Furthermore
they showed that the relevant parameter controlling the lifespan of
the packet in the FPU model is the distance of FPU from the
corresponding Toda lattice.

Our Theorem \ref{main} yields as a corollary the effective existence and
infinite persistence of the packet in the Toda lattice and also an
estimate of its lifespan in the FPU system, estimate in which the
effective parameter is the distance between Toda and FPU. 

\vskip10pt
It is convenient to state the results for Toda and FPU using the small
parameter
$$
\mu := \frac{1}{N}
$$
as in \citep{BamPon}.

The following corollary is an immediate consequence of Corollary
\ref{corkuksinperelman}.
\begin{corollary}
\label{M.1}
Consider the Toda lattice \eqref{toda}. Fix $\sigma>0$,
then there exist constants $R_0,$ $C_1,$ such that the following
holds true. Consider an initial datum with
\begin{equation}
\label{M.1.1}
\E_{1}(0)=\E_{N-1}(0)=R^2e^{-2\sigma} \mu^4\ \ ,\quad
\E_k(0)\equiv \E_k(t)\big\vert_{t=0}=0\ ,\quad \forall
k\not=1, N-1\ 
\end{equation}
with $R<R_0 $.  Then, along the corresponding solution, one has
\begin{equation}
\label{M.1.2}
\E_k(t)\leq R^2(1+ C_1R) \mu^4 e^{-2\sigma k}\ ,\quad \forall \, 1\leq k \leq \lfloor N/2 \rfloor \ ,\quad
\forall t\in\R \ .
\end{equation}
\end{corollary}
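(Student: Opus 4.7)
The plan is to derive Corollary \ref{M.1} directly from Corollary \ref{corkuksinperelman}. The mechanism is that the discrete Sobolev-analytic norm \eqref{nor.bir} assigns to the pair of Fourier modes $\{k, N-k\}$ a weight $e^{2\sigma [k]_N}$ multiplied by a quantity proportional to $\E_k$, so time-uniform control of this norm automatically keeps the specific energies exponentially localized in $k$.

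First I would translate the hypothesis \eqref{M.1.1} into a bound on $\norm{v_0}_{\P^{0,\sigma}}$. A direct computation from \eqref{lin.bir.real}, using the reality conditions $\hat p_{N-k}=\overline{\hat p_k}$, $\hat q_{N-k}=\overline{\hat q_k}$, gives
\begin{equation*}
\omega_k \frac{|X_k|^2+|Y_k|^2}{2} = 2 E_k + 2 \omega_k\, \mathrm{Im}(\hat p_k \hat q_k).
\end{equation*}
Since $\omega_{N-k}=\omega_k$, $E_{N-k}=E_k$, and $\mathrm{Im}(\hat p_{N-k}\hat q_{N-k}) = -\mathrm{Im}(\hat p_k\hat q_k)$, the cross term cancels in the pair sum, producing the identity
\begin{equation*}
\omega_k \frac{|X_k|^2+|Y_k|^2}{2} + \omega_{N-k}\frac{|X_{N-k}|^2+|Y_{N-k}|^2}{2} = 4 E_k.
\end{equation*}
Under \eqref{M.1.1} only the modes $k=1$ and $k=N-1$ carry energy and both satisfy $[k]_N=1$, so
\begin{equation*}
\norm{v_0}^2_{\P^{0,\sigma}} = \frac{e^{2\sigma}}{N}\cdot 4 E_1(0) = 4 e^{2\sigma} \E_1(0) = 4 R^2 \mu^4,
\end{equation*}
showing that $v_0 \in B^{0,\sigma}(2R/N^2)$.

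Next I would apply Corollary \ref{corkuksinperelman} with $s=0$: defining $R_0 := R_{0,\sigma}/2$, for every $R\leq R_0$ the corollary yields
\begin{equation*}
\norm{v(t)}^2_{\P^{0,\sigma}} \leq \frac{4R^2}{N^4}\bigl(1 + 2 C_{0,\sigma} R\bigr)^2, \qquad \forall\, t \in \R.
\end{equation*}
Finally, for any $1 \leq k \leq \lfloor N/2 \rfloor$, applying the pair identity to the $k$-th and $(N-k)$-th summands of the norm (which share the common weight $e^{2\sigma k}$ since $[k]_N = [N-k]_N = k$) gives the lower bound $\norm{v(t)}^2_{\P^{0,\sigma}} \geq 4 e^{2\sigma k} \E_k(t)$, and combining produces
\begin{equation*}
\E_k(t) \leq R^2 \mu^4 (1+2 C_{0,\sigma} R)^2 e^{-2\sigma k}.
\end{equation*}
Shrinking $R_0$ if necessary allows one to absorb the quadratic factor into $1 + C_1 R$, which is \eqref{M.1.2}.

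The only delicate step is the pair identity: individually $\omega_k (|X_k|^2+|Y_k|^2)/2$ is not equal to $2 E_k$ because of the cross term $\mathrm{Im}(\hat p_k\hat q_k)$, but pairing $k$ with $N-k$ eliminates it on the real subspace. Beyond this identity, the argument is merely a quantitative reading of the time-uniform norm control furnished by Corollary \ref{corkuksinperelman}; no further dynamical input is required.
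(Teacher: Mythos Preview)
Your proof is correct and follows exactly the approach the paper intends: it merely states that Corollary~\ref{M.1} is ``an immediate consequence of Corollary~\ref{corkuksinperelman}'' and gives no further details. You have supplied those details, in particular the pair identity
\[
\omega_k\,\frac{|X_k|^2+|Y_k|^2}{2}+\omega_{N-k}\,\frac{|X_{N-k}|^2+|Y_{N-k}|^2}{2}=4E_k
\]
on the real subspace, which is precisely what links the $\P^{0,\sigma}$ norm to the specific energies $\E_k$ and makes the deduction transparent. One cosmetic point: when $N$ is even and $k=N/2$ the two summands coincide, so the single term contributes $2E_{N/2}$ rather than $4E_{N/2}$ (the cross term still vanishes since $\hat p_{N/2},\hat q_{N/2}$ are real); this costs at most a harmless factor of~$2$ in that one mode and does not affect the argument.
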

For the FPU model we have the following corollary
\begin{theorem}
\label{N.1}
Consider the FPU system \eqref{HN}. Fix $s\geq 1$ and $\sigma\geq0$;
then there exist constants $R'_0,$ $C_2,$ $T,$ such that the
following holds true. Consider a real initial datum fulfilling
\eqref{M.1.1} with $R<R'_0$, then, along the corresponding solution,
one has
\begin{equation}
\label{N.1.2}
\E_k(t)\leq \frac{16R^2 \mu^4 e^{-2\sigma k}}{k^{2s}}
\ ,\quad \forall \, 1\leq k \leq \lfloor N/2 \rfloor \ ,\quad
|t| \leq \frac{T}{R^2\mu^4}\cdot\frac{1}{|\beta-1| + C_2 R\mu^2} \ .
\end{equation}
Furthermore, for $1 \leq k \leq N-1$, consider the action
$I_k:=\frac{x_k^2+y_k^2}{2}$ of the Toda lattice and let $I_k(t)$ be
its evolution according to the FPU flow. Then one has
\begin{equation}
\label{N.1.3}
\frac{1}{N}\sum_{k =1}^{N-1}{[k]_N^{2(s-1)}e^{2\sigma  [k]_N}\om{k}
\mmod{I_k(t) - I_k(0)}}
  \leq C_3 R^2\mu^5 \qquad \mbox{ for } t \mbox{ fullfilling } \eqref{N.1.2}
\end{equation}
\end{theorem}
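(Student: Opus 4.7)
The plan is to exploit the decomposition $H_{FPU} = H_{Toda} + (\beta-1)H_2 + H^{(3)}$ of Remark \ref{rem:H2H3} together with the Toda Birkhoff map $\Phi_N$ of Theorem \ref{main}, so as to cast FPU as a small perturbation of a completely integrable system. In the Birkhoff coordinates $(x,y) = \Phi_N^{-1}(X,Y)$ the Hamiltonian reads
\[
H_{FPU}\circ \Phi_N \;=\; K(I) + V,\qquad V \;:=\; \bigl((\beta-1)H_2 + H^{(3)}\bigr)\circ \Phi_N,
\]
where $K$ depends only on the Toda actions $I_k = (x_k^2+y_k^2)/2$. Since $\{I_k,K\}=0$, the full drift of $I_k$ along the FPU flow is generated by $V$, that is $\dot I_k = \{I_k, V\}$, and the whole argument consists of quantifying this drift and then transferring the resulting control back to the original energies $\E_k$.

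I would first check, directly from \eqref{nor.bir} using that only $k=1$ and $k=N-1$ carry energy, that the initial datum \eqref{M.1.1} satisfies $\norm{(X(0),Y(0))}_{\spazio{\reg}}\leq C_0 R\mu^2$; applying $\Phi_N^{-1}$ and \eqref{phimain} then gives $\norm{(x(0),y(0))}_{\spazio{\reg}}\leq C_1 R\mu^2$, well inside the domain of Theorem \ref{main}. The key technical input is a quantitative estimate of the Hamiltonian vector field $X_V$ in the Sobolev-analytic norm. Two facts drive it: (i) $H_l$, $l\geq 2$, depends only on the differences $q_j-q_{j+1}$, so each partial derivative with respect to $q_j$ produces an extra factor $\omega_k$ in Fourier; (ii) through the interpolating functions of Remark \ref{size}, $H_l$ is essentially $N$ times a continuum integral of $\alpha^{l+2}$, whose $\spazio{\reg}$-norm is controlled by the Banach-algebra property of $\spazio{\reg}$ (valid for $s\geq 1$), and the factor $N$ is compensated exactly by the $1/N$ present in \eqref{nor.bir}. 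The composition with $\Phi_N$ is a near-identity analytic transformation by \eqref{Phi_N.est} and contributes only multiplicative constants, so that on the ball $B^{\reg}(2C_1 R\mu^2)$ one obtains
\[
\norm{X_V(x,y)}_{\spazio{\reg}} \;\leq\; C\,\bigl(|\beta-1|\,R^3\mu^6 + R^4\mu^8\bigr)\ .
\]

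A standard bootstrap then shows that on the interval $|t|\leq T/\bigl(R^2\mu^4(|\beta-1|+C_2 R\mu^2)\bigr)$ the Birkhoff trajectory stays in $B^{\reg}(2C_1 R\mu^2)$. For the weighted action bound \eqref{N.1.3} I would apply Cauchy-Schwarz to $|\dot I_k|\leq \sqrt{2 I_k}\,|(X_V)_k|$ with weights $[k]_N^{2(s-1)}e^{2\sigma[k]_N}\omega_k/N$: the factor $\sqrt{2 I_k}$ produces $\norm{(x,y)}_{\spazio{s-1,\sigma}}\leq CR\mu^2$, while for the second factor the low-frequency bound $\omega_k^2 \leq C [k]_N^2\mu^2$ yields one extra power of $\mu$ compared with the above estimate, so that $\norm{X_V}_{\spazio{s-1,\sigma}}\leq C(|\beta-1|R^3\mu^7 + R^4\mu^9)$; integrating in time up to the endpoint of the interval then reproduces \eqref{N.1.3} with the claimed exponent $\mu^5$. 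Finally, to pass from the action bound to the energy bound \eqref{N.1.2}, I would use the $1$-smoothing property from Theorem \ref{main}: since $\Phi_N-\uno:\spazio{\reg}\to\spazio{s+1,\sigma}$ is bounded by $C_\reg(R\mu^2)^2$, the difference $|E_k-\omega_k I_k|$ is suppressed mode-by-mode by an extra factor $[k]_N^{-2}$, so the exponentially localized bound on the Toda actions transfers to $\E_k(t)$, with the constant $16$ absorbing the several $(1+CR)$ factors that arise along the way.

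The main obstacle is the vector field estimate on $X_V$: one must carefully track how the factor $N$ arising from the sum over the $N$ lattice sites is exactly compensated by the $1/N$ in the norm, and how the difference-operator factor $\omega_k$ combined with $\omega_k \leq C[k]_N\mu$ produces the precise powers of $\mu$ needed both for the stated time scale and for \eqref{N.1.3}. A secondary delicate point is that the pull-back by $\Phi_N$ is an infinite Taylor series, so one needs to check that it preserves the per-degree homogeneous structure well enough to keep the $(\beta-1)H_2$ and $H^{(3)}$ contributions separated as in the factorized form $|\beta-1|+C_2 R\mu^2$ of the time scale.
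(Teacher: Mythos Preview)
Your proposal is correct and follows essentially the same approach as the paper: pass to Toda Birkhoff coordinates via $\Phi_N$, estimate the perturbation vector field in both $\spazio{\reg}$ and $\spazio{s-1,\sigma}$ (the latter gaining one extra power of $\mu$ from the difference operator), bootstrap to confine $w(t)$, and use Cauchy--Schwarz for the weighted action drift \eqref{N.1.3}. The only difference is your route to \eqref{N.1.2}: the paper takes a shorter path, simply applying $\Phi_N$ to the bootstrap-confined Birkhoff trajectory to obtain $v(t)\in B^{\reg}(4R\mu^2)$, whereupon the mode-by-mode bound $\E_k(t)\leq 16R^2\mu^4 e^{-2\sigma k}/k^{2s}$ drops out directly from the definition of the $\spazio{\reg}$-norm, with no need to compare $E_k$ and $\omega_k I_k$ via the $1$-smoothing.
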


\begin{rem}
The estimates \eqref{N.1.2} are stronger then the corresponding
estimates given in \cite{BamPon}, which are
$$
\E_k(t)\leq C_1 \mu^4 e^{-\sigma k} + C_2 \mu^5\ ,\quad \forall \, 1\leq k \leq \lfloor N/2 \rfloor \ ,\quad
|t| \leq \frac{T}{ \mu^3}.
$$ First, the time scale of validity of \eqref{N.1.2} is one order
longer than that of \cite{BamPon}. Second we show that as $\beta$
approaches the value corresponding to the Toda lattice (1 in our
units) the time of stability improves. Third the exponential estimate
of $\E_k$ as a function of $k$ is shown to hold also for large values
of $k$ (the $\mu^5$ correction is missing). Finally in \cite{BamPon}
it was shown that $T/\mu^3$ is the time of formation of the metastable
packet. So we can now conclude that the time of persistence of the
packet is at least one order of magnitude larger (namely ${\mu^{-4}}$)
with respect to the time needed for its formation.
\end{rem}


\begin{remark}
\label{lubich}
We recall also the result of \cite{lubich} in which the authors obtained
a control of the dynamics for longer time scales, but for initial data
with much smaller energies.
\end{remark}

\begin{remark}
\label{thermo}
 Recently some results on energy sharing in FPU in the thermodynamic
 limit \cite{maiocchi_bambusi_carati}(see also
 \cite{carati,carati_maiocchi,giorgilliPP}) have also been obtained,
 however such results are not able to explain the formation and the
 stability of the FPU packet of modes.
\end{remark}



\section{A quantitative Kuksin-Perelman Theorem}
\label{KP.section}
\subsection{Statement of the theorem}

In this section we state and prove a quantitative version of Kuksin-Perelman
Theorem which will be used to prove Theorem \ref{main}. It is
convenient to formulate it in the framework of weighted $\ell^2$ spaces,
that we are going now to recall. 
\newline
For any $N \leq \infty$, given a sequence $w=\{w_k\}_{k= 1}^N$,
$w_k>0$ $\forall k\geq 1$, consider the space $\ell^2_w $ of complex sequences $\xi = \{ \xi_k\}_{k=1}^N $ with norm
\begin{equation}
\label{ell_w.norm}
 \norm{\xi}_w^2:=\sum_{k
  = 1}^N w_k^2 |\xi_k|^2  < \infty .
\end{equation}
   Denote by $\spazio{w}$ the complex Banach space $\spazio{w}:=\ell^2_w \oplus \ell^2_w \ni (\xi, \eta)$ endowed with the norm $\norm{(\xi, \eta)}_w^2 := \norm{\xi}_w^2 + \norm{\eta}_w^2$. We denote by $\spazior{w}$ the real subspace of $\spazio{w}$ defined by
  \begin{equation}
\label{C^N_w}
\spazior{w}:=\left\{(\xi, \eta) \in \spazio{w}\, :  \ \eta_k = \overline{\xi}_k \ \forall \, 1 \leq k \leq N  \right\}.
  \end{equation}
We will denote by $B^{w}(\rho)$ (respectively $\Br^w(\rho)$) the
 ball in the topology of $\spazio w$ (respectively $\spazior{w}$) with center $0$ and radius $\rho>0$.

\begin{remark}
\label{rem.toxi}
In the case of the Toda lattice the variables $(\xi,\eta)$ are defined
by 
\begin{equation}
\label{xi_variable}
\xi_k = \frac{\hat{p}_k + \im \om{k}\hat{q}_{k}}{\sqrt{2{\om{k}}}},
\quad \eta_k = \frac{\hat{p}_{N-k} -\im
  \om{k}\hat{q}_{N-k}}{\sqrt{2\om{k}}}, \qquad 1 \leq k \leq N-1 \ ,
\end{equation}
and their connection with the real Birkhoff variables is given by 
\begin{equation}
\label{rebirl}
X_{k}=\frac{\xi_k+\eta_k}{\sqrt 2}\ ,\quad
Y_{k}=\frac{\xi_k-\eta_k}{i\sqrt 2}\ , \quad 1 \leq k \leq N-1 \ .
\end{equation}
\end{remark}

 We denote by $\spazio 1$ the  Banach space of sequences in which
 all the weights $w_k$ are equal to $1$. For $\X, \Y$ Banach spaces, we
 shall write $\L(\X,\Y)$ to denote the set of linear and bounded
 operators from $\X$ to $\Y$. For $\X=\Y$ we will write just $\L(\X)$.
 \begin{rem}
 In the application to the Toda lattice with $N$ particles we will use a finite, but not
 fixed $N$ and weights of the form
$w_k^2=w_{N-k}^2= N^3\, k^{2s}\,e^{2\sigma k}$, $1 \leq k \leq
 \lfloor N/2 \rfloor$.
\end{rem}
Given two weights $w^1$ and $w^2$, we will say that
$w^1 \leq w^2$  iff $w^1_k \leq w^2_k,$  $\forall k$.
Sometimes, when there is no risk of confusion, we will omit the index $w$ from the different quantities.\\
In $\spazio1$ we will use the  scalar product 
\begin{equation}
\label{scalar_product}
\<(\xi^1, \eta^1), (\xi^2, \eta^2)\>_c:= \sum_{k=1}^{N} \xi_k^1 \overline{\xi}^2_k + \eta^1_k \overline{\eta}^2_k  \ .
\end{equation}
Correspondingly,  the scalar product and symplectic form on  the real subspace $\spazior{w}$ are given for $\xi^1 \equiv (\xi^1, \bar \xi^1)$ and $\xi^2 \equiv (\xi^2, \bar \xi^2)$ by
\begin{equation}
\label{scalar_productR}
\<\xi^1,\xi^2\>:= 2 Re \sum_{k=1}^{N} \xi^1_k\,\overline{\xi^2_k}\ , 
\qquad 
\omega_0(\xi^1,\xi^2):=\< E \, \xi^1,\xi^2\> \ ,
\end{equation}
where $E:= -\im$. 

Given a smooth $F: \spazior w \to \C$, we denote by $X_F$ the
Hamiltonian vector field of $F$, given by $ X_F = J \nabla F$, where
$J = E^{-1}$.  For $F, G: \spazior w \to \C$ we denote by $\pb{F}{G}$ the
Poisson bracket (with respect to $\omega_0$): $\pb{F}{G}:= \< \nabla F,
J \nabla G\>$ (provided it exists).  We say that the
functions $F, G$ \textit{commute} if $\pb{F}{G} = 0$.
\vspace{1em}\newline
In order to state the main abstract theorem  we start by  recalling the notion of normally analytic map, exploited also in \cite{nikolenko} and  \cite{bambusi.grebert}. 
\newline
First we recall that a map $\tilde P^r:(\spazio{w})^r\to \B$, with $\B$ a  Banach space,
is said to be  $r$-\textit{multilinear} if $\tilde P^r(v^{(1)},\ldots,v^{(r)})$ is linear in
each variable $v^{(j)}\equiv (\xi^{(j)}, \eta^{(j)})$; a $r$-multilinear map is said to be \textit{bounded} if there
exists a constant $C>0$ such that
$$\norm{\tilde P^r(v^{(1)},\ldots,v^{(r)})}_\B \leq C \norm{v^{(1)}}_w\ldots \norm{v^{(r)}}_w \quad
\forall v^{(1)},\ldots, v^{(r)} \in \spazio{w}.
$$ Correspondingly its norm is defined by
$$\norm{\tilde P^r}:=\sup_{\norm{v^{(1)}}_w, \cdots, \norm{v^{(r)}}_w\leq 1}{\norm{\tilde P^r(v^{(1)},\cdots,v^{(r)})}_\B}.$$
A map $P^r:\spazio{w}\rightarrow \B $ is a \textit{homogeneous polynomial} of order $r$ if there exists
a $r$-multilinear map $\tilde{P}^r:(\spazio{w})^r\rightarrow \B $ such that
\begin{equation}
\label{polin}
P^r(v)=\tilde{P}^r(v,\ldots,v)\quad \forall v\in \spazio{w}\ .
\end{equation}
A $r$- homogeneous  polynomial is bounded if it has  finite norm
$$\norm{P^r}:=\sup_{\norm{v}_w\leq 1}\norm{P^r(v)}_\B.$$

\begin{oss}
Clearly $\norm{P^r}\leq \norm{\tilde{P}^r}$. Furthermore one has 
$\norm{\tilde{P}^r}\leq e^r\norm{P^r}$  -- cf. \cite{mujica}.
\end{oss}

It is easy to see that a multilinear map and the corresponding polynomial are continuous (and analytic) if and only if they are bounded.

Let $P^r:\spazio{w}\to\B$ be a homogeneous polynomial of order $r$; assume $\B$ separable and let
$\left\{\bb_n\right\}_{n \geq 1}\subset \B$ be a basis for the space $\B$. Expand $P^r$ as follows
\begin{equation}
\label{exp.1}
P^r(v)\equiv P^r(\xi, \eta)=\sum_{\substack{ |K| + |L| = r \\ n \geq 1}} P^{r,n}_{K, L} \xi^K \eta^L \bb_n ,  
\end{equation}
where $K, L \in \N^N_0$, $\N_0 = \N \cup \{ 0 \}$, $|K|:= K_1 + \cdots + K_N$, $\xi \equiv \{ \xi_j\}_{j \geq 1}$ and $\xi^K \equiv \xi_1^{K_1} \cdots \xi_{N}^{K_N}$, $\eta^L \equiv \eta_1^{L_1} \cdots \eta_N^{L_N}$.
\begin{df}
The modulus of a polynomial $P^r$ is the polynomial $\mod {P^r} $
defined by
\begin{equation}
\label{exp.11}
\mod{P^r}(\xi, \eta):=\sum_{\substack{ |K| + |L| = r \\ n \geq 1}} \mmod{P^{r,n}_{K, L}} \xi^K \eta^L \bb_n .
\end{equation}
A polynomial $P^r$ is said to have {\sl bounded modulus} if $\mod{P^r}$ is
a bounded polynomial.
\end{df}
A map $F: \spazio{w} \rightarrow \B$ is said to be an \textit{ analytic
  germ} if there exists $\rho>0$ such that $F: B^{w}(\rho) \to
\B$ is analytic. Then $F$ can be written as a power series
absolutely and uniformly convergent in $B^{w}(\rho)$:
$F(v)=\sum_{r\geq 0}{F^r(v)}$.  Here $F^r(v)$ is a homogeneous
polynomial of degree $r$ in the variables $v=(\xi, \eta)$. We will write $F =
O(v^n)$ if in the previous expansion $F^r(v)= 0$ for every $r < n$.
\begin{df}
\label{def.na} 
An  analytic germ $F: \spazio{w} \to \B$ is said to be  {\sl normally
  analytic} if there exists $\rho>0$ such that 
\begin{equation}
\label{exp.3}
\mod F(v):=\sum_{r\geq 0} \mod{F^r}(v)
\end{equation}
is absolutely and uniformly convergent in  $B^{w}(\rho)$. In such a case we will write
$F\in \Nc_\rho(\spazio w, \B)$. $\Nc_\rho(\spazio w, \B)$ is a Banach space when endowed by the norm
\begin{equation}
\label{Nc.norm}
\left|\mod F \right|_{\rho}:=
\sup_{v\in B^{w}(\rho)} \| \mod{F}(v)\|_{\B}.
\end{equation}
Let $U \subset \spazior{w}$ be open.  A map $F: U \to \B$ is said to
be a {\sl real analytic germ} (respectively {\sl real normally
  analytic}) on $U$ if for each point $u \in U$ there exist a
neighborhood $V$ of $u$ in $\spazio{w}$ and an analytic germ
(respectively normally analytic germ) which coincides with $F$ on $U
\cap V$.
\end{df}
\begin{oss}
It follows from Cauchy inequality that the Taylor polynomials $F^r$ of $F$ 
satisfy
\begin{equation}
\label{exp.4}
\norm{\underline{F}^r(v)}_\B \leq \left|\mod F\right|_{\rho} \frac{\norm{v}^r_w}{\rho^r} \qquad
\forall v \in B^{w}(\rho)\ .
\end{equation}
\end{oss}
\begin{oss}
Since $\forall r \geq 1$ one has 
$\norm{F^r}\leq \norm{\und{F}^r},$
if $F \in \Nc_\rho(\spazio{w}, \B)$ then the Taylor series of $F$ is uniformly convergent in $B^{w}(\rho)$.
\end{oss}

The case $\B=\spazio{w}$ will be of particular importance; in this case
the basis $\{ \bb _j\}_{j \geq 1}$ will coincide with the natural
basis $\{\be_j\}_{j \geq 1}$ of such a space (namely the vectors with
all components equal to zero except the $j^{th}$ one which is equal to
$1$). We will consider also the case
$\B=\L(\P^{w^1},\P^{w^2})$ (bounded linear operators from
$\P^{w^1}$ to $\P^{w^2}$), where $w^1$ and $w^2$
are weights. Here the chosen basis is $\bb_{jk}=\be_j\otimes \be_k$
(labeled by 2 indexes).

\begin{remark}
\label{Diff}
For $v\equiv (\xi, \eta) \in \spazio 1$, we denote by $|v|$ the vector
of the modulus of the components of $v$: $|v| = (|v_1|, \ldots,
|v_N|)$, $|v_j|:= (|\xi_j|, |\eta_j|)$. If
$F\in\Nc_{\rho}(\P^{w^1},\P^{w^2})$ then $\underline{dF}(|v|) |u|\leq
d\underline F(|v|)|u|$ (see \cite{kuksinperelman}) and therefore, for
any $0<d<1$, Cauchy estimates imply that $dF\in
\Nc_{(1-d)\rho}(\P^{w^1}, \L(\P^{w^1},\P^{w^2}))$ with
\begin{equation}
\label{diff.1}
\left|\underline{dF}\right|_{\rho(1-d)}\leq
\frac{1}{d\rho}\left|{\underline F}\right|_{\rho}\ ,
\end{equation}   
where $\mod{ dF}$ is computed with respect to the basis $\be_j\otimes \be_k$.
\end{remark}
Following Kuksin-Perelman \cite{kuksinperelman} we will need also a further property. 
\begin{df} A normally analytic  germ $F\in\Nc_\rho(\P^{w^1},\P^{w^2})$ will be said to be of
  class $\mathcal{A}_{ w^1, \rho}^{w^2}$ if $F=O(v^2)$ and the map $v
  \mapsto dF(v)^* \in\Nc_\rho(\P^{w^1}, \L(\P^{w^1},\P^{w^2}))$. Here
  $dF(v)^*$ is the adjoint operator of $dF(v)$ with respect to the
  standard scalar product \eqref{scalar_product}.
 On $\A_{ w^1, \rho}^{w^2}$ we will use the norm
\begin{equation}
\label{nor.arho}
\norma{F}_{\A_{ w^1, \rho}^{w^2}}:=\left|\underline F\right|_\rho+{\rho}
\left|\underline{d F}\right|_\rho+\rho \left|\underline{d
  F^*}\right|_\rho .
\end{equation}
\end{df}
\begin{rem}
\label{rem:norm.in.A}
Assume that for some $ \rho > 0$ the map $F \in \A_{ w^1,
  \rho}^{w^2}$, then for every $0 < d \leq \tfrac{1}{2}$ one has
$\mmod{\und{F}}_{d\rho} \leq 2 d^2 \mmod{\und{F}}_{\rho}$ and
$\norm{F}_{\A_{ w^1, d\rho}^{w^2}} \leq 6d^2 \norm{F}_{\A_{ w^1,
    \rho}^{w^2}}$.
\end{rem}
A real normally analytic germ $F: \Br^{w^1}(\rho)\to \spazior{w^2}$
will be said to be of class $\Nc_\rho(\spazior{w^1}, \spazior{w^2})$ (respectively $\mathcal{A}_{ w^1, \rho}^{w^2}$) if there
exists a map  of class  $\Nc_\rho(\spazio{w^1}, \spazio{w^2})$ (respectively
$\mathcal{A}_{ w^1, \rho}^{w^2}$), which coincides with $F$ on
$\Br^{w^1}(\rho)$. In this case we will also denote by $\mmod{\und{F}}_\rho$ (respectively
$\norma{F}_{\A_{ w^1, \rho}^{w^2}}$) the norm
defined by \eqref{Nc.norm} (respectively \eqref{nor.arho}) of the complex extension of $F$. 

Let now $F:U\subset \spazio{w^1} \to \spazio{w^2}$ be an analytic
map. We will say that $F$ {\sl is real for real sequences} if $ F (U
\cap \spazior{w^1}) \subseteq \spazior{w^2}$, namely $F(\xi,
\eta)=(F_1(\xi, \eta), F_2(\xi, \eta))$ satisfies $\overline{F_1(\xi,
  \bar \xi)} = F_2(\xi, \bar \xi)$. Clearly, the restriction $\left. F
\right|_{U \cap \spazior{w^1}}$ is a real analytic map.

\vspace{1em} 

We come now to the statement of the Vey Theorem.

Fix  $\rho>0$ and let $\Psi: \Br^{w^1}(\rho) \to \spazior{w^1} $,  $\Psi= \uno + \Psi^0$ with $\uno$
the identity map and $\Psi^0 \in \A_{w^1,\rho}^{w^2}$.  Write
$\Psi$ component-wise, $\Psi=\left\{(\Psi_j, \overline{\Psi}_j )\right\}_{j\geq 1}$, and
consider the foliation defined by the functions
$\left\{\left|\Psi_j(v)\right|^2/2\right\}_{j\geq 1}$.  Given
$v\in\spazior{w}$ we define the leaf through $v$ by
\begin{equation}
\label{leaf.1}
\F_v:=\left\{u\in \spazior{w} : \ \frac{|\Psi_j(u)|^2}{2}=\frac{|\Psi_j(v)|^2}{2}\  
\ ,\ \forall 
j\geq 1 \right\}\ .
\end{equation}
Let $\F=\bigcup_{v \in \spazior{w}} \F_v$ be the collection of all the leaves of the foliation.  We will
denote by $T_v\F$ the tangent space to $\F_v$ at the point $v\in \spazior{w}$. 
A relevant role will also be played by the function $I = \{ I_j \}_{j\geq 1}$ whose components are defined by
\begin{equation}
\label{Ij}
I_j(v)\equiv I_j(\xi, \bar \xi):=\frac{|\xi_j|^2}{2} \quad \forall j \geq 1\ .
\end{equation}
The foliation they define will be denoted by $\Fz$. 
\begin{oss}
$\Psi$ maps the foliation $\F$ into the foliation $\Fz$, namely $\F^{(0)} = \Psi(\F)$.
\end{oss}
The main theorem of this section is the following
\begin{teo}\label{KP} (Quantitative version of Kuksin-Perelman Theorem)
 Let $w^1$ and $w^2$ be weights with $w^1 \leq w^2$. Consider the
 space $\spazior{w^1}$ endowed with the symplectic form $\omega_0$
 defined in \eqref{scalar_productR}. Let $\rho >0$ and assume $\Psi:
 \Br^{w^1}(\rho) \to \spazior{w^1}$, $\Psi= \uno + \Psi^0$ and $\Psi^0
 \in \A_{ w^1, \rho}^{w^2}$. Define
\begin{equation}
\label{th}
\epsilon_1:=\norma{\Psi^0}_{\A_{ w^1, \rho}^{w^2}}\ .
\end{equation}
Assume that the functionals $\lbrace
\frac{1}{2}\left|\Psi_j(v)\right|^2\rbrace_{ j\geq 1}$ pairwise
commute with respect to the symplectic form $\omega_0$, and that
$\rho$ is so small that
\begin{equation}
\epsilon_1<2^{-34}\rho. 
\label{th.1}
\end{equation}
Then there exists a real normally analytic map
$\widetilde{\Psi}:\Br^{{w^1}}(a\rho)\to \spazior{w^1}$, $a=2^{-48}$,
with the following properties:
\begin{itemize}
\item[i)] $\widetilde{\Psi}^*\omega_0=\omega_0$, so that the coordinates
  $z:=\widetilde\Psi(v) $ are canonical;
\item[ii)] the functionals $\left\{ \frac{1}{2}\left|\widetilde\Psi_j(v)\right|^2\right\}_{ j\geq 1}$ pairwise commute with respect to the symplectic form $\omega_0$;
\item[iii)] $\F^{(0)}=\widetilde{\Psi}(\F),$ namely the foliation
  defined by $\Psi$ coincides with the foliation defined by
  $\widetilde{\Psi}$;
\item[iv)] $\widetilde{\Psi} = \uno + \widetilde{\Psi}^0$ with $\widetilde{\Psi}^0\in\A_{w^1, a\rho}^{w^2}$ and  $\norm{\widetilde{\Psi}^0}_{\A_{w^1, a\rho}^{w^2}}\leq 2^{17}\epsilon_1$.
\end{itemize}
\label{teokuksinperelman}
\end{teo}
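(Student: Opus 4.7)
The plan is to adapt the Kuksin--Perelman construction to the quantitative setting via a Moser isotopy argument, tracking the $\A_{w^1,\cdot}^{w^2}$-norms at each step to obtain the explicit bound $\norma{\widetilde{\Psi}^0}_{\A_{w^1, a\rho}^{w^2}}\le 2^{17}\epsilon_1$. First I would reduce the problem to a Darboux-type statement by seeking $\widetilde{\Psi}$ in the form $\widetilde{\Psi}=\Psi\circ\phi$, with $\phi$ a foliation-preserving diffeomorphism. Then $\widetilde{\Psi}(\F)=\Psi(\phi(\F))=\Psi(\F)=\Fz$ is automatic (giving (iii)) and property (i), $\widetilde{\Psi}^*\omega_0=\omega_0$, reduces to $\phi^*(\Psi^*\omega_0)=\omega_0$. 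Moreover (ii) follows from (i) alone: since $\tfrac12|\widetilde{\Psi}_j|^2 = I_j\circ\widetilde{\Psi}$, canonicity of $\widetilde{\Psi}$ and $\{I_j,I_k\}=0$ give $\pb{\tfrac12|\widetilde{\Psi}_j|^2}{\tfrac12|\widetilde{\Psi}_k|^2}=0$.

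Setting $\omega:=\Psi^*\omega_0$, the assumption $\Psi^0\in\A_{w^1,\rho}^{w^2}$ with $\norma{\Psi^0}_{\A}=\epsilon_1$ together with the Cauchy estimate \eqref{diff.1} and Remark \ref{rem:norm.in.A} yields that $\omega-\omega_0$ is closed and has operator size $\lesssim\epsilon_1/\rho$ on a ball slightly smaller than $\Br^{w^1}(\rho)$. The commutativity hypothesis on $J_j:=\tfrac12|\Psi_j|^2$ implies that the leaves $\F_v$, being common level sets of the $J_j$, are Lagrangian for $\omega_0$; since $J_j=I_j\circ\Psi$, they are simultaneously Lagrangian for $\omega$.

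The core of the argument is the Moser interpolation $\omega_t:=\omega_0+t(\omega-\omega_0)$, $t\in[0,1]$. Under the smallness condition \eqref{th.1} the form $\omega_t$ is uniformly non-degenerate on a smaller ball and $\omega_t^{-1}$ admits an expansion controllable in the $\A$-class. Build a primitive $\alpha$ of $\omega-\omega_0$ via the star-shaped homotopy formula
\[
\alpha_v(\hat v)=\int_0^1 (\omega-\omega_0)_{sv}(v,\hat v)\,s\,ds,
\]
and verify that $\alpha$ inherits the smoothing character of $\Psi^0$ precisely because the definition of $\A_{w^1,\rho}^{w^2}$ enforces control on $dF^*$. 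Define then the time-dependent vector field $Y_t$ as the unique solution, tangent to the leaves of $\F$, of $\iota_{Y_t}\omega_t=-\alpha$; the simultaneous Lagrangianity of $\F$ with respect to all $\omega_t$ is what guarantees unique solvability transversally. Integrating $Y_t$ over $t\in[0,1]$ by standard ODE estimates in the $\A$-class produces the time-one flow $\phi$, defined on $\Br^{w^1}(a\rho)$ for $a$ sufficiently small, preserving $\F$, with $\phi-\uno\in\A_{w^1, a\rho}^{w^2}$ bounded by a constant times $\epsilon_1$. Setting $\widetilde{\Psi}:=\Psi\circ\phi$ and applying a composition lemma in the $\A$-class yields (iv). The cascade of loss-of-radius factors stemming from the several Cauchy estimates (control of $d\omega$, of $\omega_t^{-1}$, of the flow of $Y_t$, and of the composition $\Psi\circ\phi$) accounts for the explicit radius ratio $a=2^{-48}$ and for the multiplicative constant $2^{17}$ in the final norm estimate.

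The main technical obstacle I anticipate is the control of $\omega_t^{-1}\alpha$ as an element of the smoothing class $\A_{w^1,\cdot}^{w^2}$. Since $w^1\le w^2$ but the opposite inequality fails in general, one has to show that the inverse of the perturbed symplectic form intertwines the two weights correctly, which is not automatic from the mere boundedness of $\omega-\omega_0$; it is here that the adjoint condition $dF^*\in\Nc(\cdot,\L(\cdot,\cdot))$ built into the definition of the $\A$-class is used in an essential way. Once this algebraic point is settled, the remaining estimates are routine quadratic/linear bounds that combine to give the stated explicit constants.
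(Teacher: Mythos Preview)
Your plan has a genuine gap at the core of the Moser step. You write that $Y_t$ is ``the unique solution, tangent to the leaves of $\F$, of $\iota_{Y_t}\omega_t=-\alpha$'', but since $\omega_t$ is non-degenerate the equation $\iota_{Y_t}\omega_t=-\alpha$ already determines $Y_t$ uniquely; you cannot impose tangency as an additional condition. Tangency to the Lagrangian foliation $\F$ is equivalent to $\alpha|_{T\F}=0$, and there is no reason for the radial primitive you build to satisfy this. The correct move is to replace $-\alpha$ by $-\alpha+df$ and solve $(\alpha-df)|_{T\F}=0$ for $f$; this is a cohomological problem on each leaf, and it carries an obstruction (the leafwise average of $\alpha$ must vanish). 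Simultaneous Lagrangianity of $\F$ for all $\omega_t$ is necessary for the argument but does not remove this obstruction.

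This is precisely where your choice to work on the $\F$ side, with $\omega=\Psi^*\omega_0$ and $\widetilde\Psi=\Psi\circ\phi$, hurts you. The paper works on the opposite side, with $\omega_1:=(\Psi^{-1})^*\omega_0$ and $\widetilde\Psi=\varphi^{-1}\circ\hat\varphi^{-1}\circ\Psi$, so that the relevant foliation is $\F^{(0)}$, whose leaves are the standard tori $\{|v_j|=\mathrm{const}\}$ carrying the explicit rotation action $\phi^\theta$ and the averaging operator $M$. The proof then needs \emph{two} Darboux steps rather than one: a first map $\hat\varphi$, built from $M\omega_1$ and commuting with rotations, arranges $M\hat\omega_1=\omega_0$; this is exactly the solvability condition $M(\alpha_1-\alpha_0)|_{T\F^{(0)}}=0$ for the second step. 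Only then can one solve $\partial f/\partial\theta_j=(\alpha_1-\alpha_0)(X^0_{I_j})$ explicitly on each torus (Moser's lemma), yielding a $Y^t$ genuinely tangent to $\F^{(0)}$. On your side the leaves of $\F$ have no explicit torus parametrisation, so neither the averaging nor the explicit construction of $f$ is available, and the tangency obstruction is never addressed. The control of $dF^*$ built into the $\A$-class is indeed essential, but it enters in the quantitative estimate of $\nabla f$ in the second step, not in inverting $\omega_t$ as you suggest.
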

The following corollary holds:
\begin{cor}
\label{cor.KP}
Let $H: \spazior{w^1} \to \R$ be a real analytic Hamiltonian
function. Let $\Psi$ be as in Theorem \ref{KP} and assume that for
every $j \geq 1$, $\mmod{\Psi_j(v)}^2$ is an integral of motion for
$H$, i.e.
\begin{equation}
\label{ass.1}
\lbrace H, |\Psi_j|^2 \rbrace = 0 \quad \forall\,j \geq 1.
\end{equation}
Then the coordinates $(x_j,y_j)$ defined by $x_j+iy_j=\widetilde
\Psi_j(v)$ are real Birkhoff coordinates for $H$, namely canonical
conjugated coordinates in which the Hamiltonian depends only on
$(x_j^2+y_j^2)/2 $.
\end{cor}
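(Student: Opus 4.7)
The plan is to pass to the canonical coordinates $z = \widetilde{\Psi}(v)$ supplied by Theorem \ref{KP} and show that the transformed Hamiltonian $K := H \circ \widetilde{\Psi}^{-1}$ depends only on the actions $I_j(z) = |z_j|^2/2$. Since $\widetilde{\Psi}$ is symplectic by Theorem \ref{KP}(i) and, in the real coordinates $(x_j, y_j)$ defined by $z_j = x_j + i y_j$, the symplectic form \eqref{scalar_productR} reduces to the standard Darboux form (up to the obvious constant normalization), the $(x_j, y_j)$ are automatically canonical in a neighborhood of the origin. The remaining task is to verify that $K$ is a function of the actions alone.

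The key input is Theorem \ref{KP}(iii), which ensures that the leaves of the foliation $\F$ defined by $\{|\Psi_j|^2\}_{j\geq 1}$ coincide with the leaves of $\widetilde{\Psi}^{-1}(\Fz)$, i.e.~with the common level sets of $\{|\widetilde{\Psi}_j|^2\}_{j\geq 1}$. By hypothesis $\{H, |\Psi_j|^2\} = 0$ for every $j$, so the $H$-flow preserves each leaf of $\F$; by coincidence of the two foliations, it also preserves each leaf of $\widetilde{\Psi}^{-1}(\Fz)$. Along any $H$-trajectory starting at $v$, every $|\widetilde{\Psi}_j|^2$ must therefore take the constant value $|\widetilde{\Psi}_j(v)|^2$, which is equivalent to $\{H, |\widetilde{\Psi}_j|^2\} = 0$ for every $j$.

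The symplecticity of $\widetilde{\Psi}$ now transfers this identity to the new variables: using $I_j \circ \widetilde{\Psi} = \tfrac{1}{2}|\widetilde{\Psi}_j|^2$ one obtains $\{K, I_j\} = 0$ in the $z$-coordinates for every $j$. The Hamiltonian flow of $I_j$ is simply the rotation in the $(x_j, y_j)$-plane, so invariance of $K$ under all these pairwise commuting rotations forces the real analytic function $K$, defined on a polydisc about the origin, to depend only on $\{I_j\}_{j\geq 1}$: inspection of its Taylor expansion shows that only monomials with matched powers of $z_j$ and $\bar z_j$ for every $j$ survive, and such monomials are products of the $I_j$'s. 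This yields the Birkhoff normal form. The only slightly delicate step is the first implication above (from conservation of all the $|\Psi_j|^2$ to conservation of each $|\widetilde{\Psi}_j|^2$ individually), and this is precisely where the geometric conclusion of the Vey theorem enters.
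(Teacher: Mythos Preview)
Your proof is correct and follows essentially the same route as the paper: use Theorem~\ref{KP}(iii) to upgrade the hypothesis $\{H,|\Psi_j|^2\}=0$ to $\{H,|\widetilde\Psi_j|^2\}=0$, pass to the canonical coordinates $z=\widetilde\Psi(v)$ via Theorem~\ref{KP}(i),(iv), and then read off the Birkhoff normal form from the Taylor expansion of $H\circ\widetilde\Psi^{-1}$. The only cosmetic difference is that the paper phrases the key implication through functional dependence (each $|\widetilde\Psi_l|^2$ is constant on the leaves of $\F$, hence a function of the $|\Psi_j|^2$'s), whereas you phrase it through the $H$-flow preserving leaves; since the abstract setting allows $N=\infty$ and no flow is assumed to exist, it is cleaner to replace your flow language by the equivalent infinitesimal statement that $X_H(v)\in T_v\F_v$ and $d|\widetilde\Psi_j|^2$ vanishes on $T_v\F_v$, but this is a trivial rewording.
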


\noindent {\em Proof of Corollary \ref{cor.KP}.}  Since $\Psi = \uno +
\Psi^0$, the functions $\Psi_j(v)$ can be used as coordinates in a
suitable neighborhood of $0$ in $\spazior{w}$. Let $\widetilde{\Psi}$ be the map
in the statement of Theorem \ref{KP}. Denote $F_l(v) :=
\frac{1}{2}\mmod{\widetilde{\Psi}_l(v)}^2$.  Since the foliation defined
by the functions $\{F_l\}_{l \geq 1}$ and the foliation defined by
$\{\mmod{\Psi_j}^2\}_{j \geq 1}$ coincide (Theorem \ref{KP} $iii)$),
each $F_l$ is constant on the level sets of $\{\mmod{\Psi_j}^2\}_{j
  \geq 1}$. It follows that each $F_l$ is a function of
$\{\mmod{\Psi_j}^2\}_{j \geq 1}$ only.  Since $\forall \, j\geq 1$,
$\mmod{\Psi_j}^2$ is an integral of motion for $H$, the same is true
for $F_l$, $\forall l \geq 1$.  Define now, in a suitable neighborhood
of the origin, the coordinates $(z, \bar z)$ by $z_j \equiv
\widetilde{\Psi}_j$, $ \bar{z}_j \equiv \overline{\widetilde{\Psi}_j}$. Of course $F_l=\tfrac{|z_l|^2}{2}$. By
\eqref{ass.1} it follows then that
\begin{equation}
\label{comm.rel.2}
0 = \lbrace H, z_l \bar z_l \rbrace  =
\frac{1}{\im}\left(\frac{\partial H}{\partial z_l} z_l - \frac{\partial
  H}{\partial \bar z_l} \bar z_l \right).
\end{equation}
Since $d\widetilde{\Psi}(0) = \uno \,$ (Theorem \ref{KP} $iv)$),
$\widetilde{\Psi}$ is invertible and its inverse $\widetilde\Phi$
satisfies $\widetilde{\Phi} = \uno + \widetilde{\Phi}^0$ with
$\widetilde{\Phi}^0\in\A_{w^1, a\mu\rho}^{w^2}$ and
$\norm{\widetilde{\Phi}^0}_{\A_{w^1,a\mu\rho}^{w^2}} \leq 2
\norm{\widetilde{\Psi}^0}_{ \A_{w^1,a\rho}^{w^2}} \leq
2^{18}\epsilon_1$ (Lemma \ref{FGinA} $ii)$ in Appendix
\ref{propnagerms}).

Expand now $H\circ \widetilde{\Phi}$ in Taylor
series in the variables $(z, \bar z)$:
$$ H\circ \widetilde \Phi (z, \bar z) = \sum_{\substack{r \geq 2,
    \\ |\alpha| + |\beta| =r }} H^r_{\alpha, \beta} z^\alpha \bar
z^\beta.
$$ Then equation \eqref{comm.rel.2} implies that in each term of the
summation $\alpha = \beta$, therefore $H\circ \widetilde\Phi$ is a
function of $|z_1|^2, \ldots, |z_N|^2$.  Define now the real variables
$(x, y)$ as in the statement, then the claim follows immediately.
\qed

\subsection{Proof of the Quantitative Kuksin-Perelman Theorem}

In this section we recall and adapt Eliasson's proof \cite{eliasson}
of the Vey Theorem following \cite{kuksinperelman}. As we anticipated
in the introduction, the novelty of our
approach is to add quantitative estimates on the Birkhoff map
$\widetilde{\Psi}$ of Theorem \ref{KP}.  In Appendix \ref{propnagerms}
we show that the class of normally analytic maps is closed under
several operations like composition, inversion and flow-generation,
and provide new quantitative estimates which will be used during the
proof below.
\vspace{1em}\newline
The idea of the proof of Theorem \ref{KP} is to consider the functions $\{\Psi_j(v)\}_{j \geq 1} $ as noncanonical
coordinates, and to look for a coordinate transformation introducing
canonical variables and preserving the foliation $\Fz$ (which is the
image of $\F$ in the noncanonical variables).

This will be done in two steps both based on the standard procedure of
Darboux Theorem that we now recall.  In order to construct a
coordinate transformation $\varphi$ transforming the closed
nondegenerate form $\Omega_1$ into a closed nondegenerate form
$\Omega_0$, then it is convenient to look for $\varphi$ as the time 1
flow $\varphi^t$ of a time-dependent vector field $Y^t$. To construct
$Y^t$ one defines $\Omega_t:= \Omega_0 + t(\Omega_1 - \Omega_0)$ and
imposes that
$$
0 = \left.\tfrac{d}{d t}\right|_{t=0} \varphi^{t*} \Omega_t = \varphi^{t*} \left(\L_{Y^t}\Omega_t + \Omega_1 - \Omega_0 \right) = \varphi^{t*} \left( d (Y^t\lrcorner \Omega_t ) + d(\alpha_1 - \alpha_0) \right)
$$
where $\alpha_1, \alpha_0$ are potential forms for $\Omega_1$ and $\Omega_0$ (namely $d\alpha_i = \Omega_i$, $i=0,1$) and $\L_{Y^t}$ is the Lie derivative of $Y^t$. Then one gets
\begin{equation}
\label{darboux.eq}
Y^t\lrcorner \Omega_t  + \alpha_1 - \alpha_0 = df
\end{equation}
for each $f$ smooth; then, if $\Omega_t$ is nondegenerate, this
defines $Y^t$. If $Y^t$ generates a flow $\varphi^t$ defined up to
time 1, the map $\varphi:= \left. \varphi^t\right|_{t=1}$ satisfies
$\varphi^*\Omega_1 = \Omega_0$.  Thus, given $\Omega_0$ and
$\Omega_1$, the whole game reduces to study the analytic properties of
$Y^t$ and to prove that it generates a flow.
\vspace{1em}

A non-constant symplectic form $\Omega$
will always be represented through a
linear skew-symmetric invertible operator $E$ as follows:
\begin{equation}
\label{symp.1}
\Omega(v)(u^{(1)};u^{(2)})=\langle E(v)u^{(1)};u^{(2)} \rangle\ ,\quad \forall
u^{(1)},u^{(2)}\in T_v\spazior w\simeq\spazior w .
\end{equation}
We denote by $\pb{F}{G}_{\Omega}$ the
Poisson bracket with respect to $\Omega$ : $\pb{F}{G}_\Omega:=
\< \nabla F, J \nabla G\>$, $J:=E^{-1}$. 

Similarly we will represent   $1$-forms through the vector field $A$ such that
\begin{equation}
\label{1form.1}
\alpha(v)(u) = \langle A(v), u \rangle, \quad \forall u \in T_v\spazior
w.
\end{equation}
\vspace{1em}
Define $\omega_1:=(\Psi^{-1})^*\omega_0$, and let $E_{\omega_1}$ be the  operator representing the symplectic form $\omega_1$. 
The first step consists in transforming $\omega_1$ to a symplectic form whose "average over $\F^{(0)}$" coincides with $\omega_0$. 

So we start by defining precisely what ``average of $k$-forms''
means. To this end consider the Hamiltonian vector fields $X^0_{I_l}$ of
the functions $I_l\equiv \frac{|v_l|^2}{2}$ through the symplectic form $\omega_0$; they are
given by
\begin{equation}
X^0_{I_l}(v)=\im \nabla I_l(v)=\im v_l {\bf e}_l ,  \quad \forall\, l \geq 1.
\label{X_I_l}
\end{equation}
  For every $l \geq 1$ the corresponding flow
$\phi_l^t \equiv \phi_{X^0_{I_l}}^t$ is given  by
$$ \phi^t_l(v) = \left(v_1, \cdots, v_{l-1}, e^{\im t}v_l, v_{l+1},
  \cdots \right) \ .$$ 
Remark that the map $\phi_l^t$ is linear in $v$, $2\pi$ periodic in $t$
and its adjoint satisfies
 $(\phi_l^t)^*=\phi_l^{-t}$.\\
Given a $k$-form $\alpha$ on $\spazior{w}$ $(k\geq0)$, we define its average by 
\begin{align}
\label{ave.1}
M_j\alpha(v)=\frac{1}{2\pi}\int_0^{2\pi}{((\phi_j^t)^*\alpha)(v)dt},
\quad j \geq 1\ , \qquad \mbox{and} \qquad
M\alpha(v)=\int_{\Tc}[(\phi^\theta)^*\alpha]\, d\theta
\end{align}
where $\Tc$ is the (possibly infinite dimensional) torus, the map
$\phi^\theta=(\phi_1^{\theta_1}\circ \phi_2^{\theta_2}\cdots)$ and
$d\theta$ is the Haar measure on $\Tc$.

\begin{oss}
\label{ave.J}
In the particular cases of 1 and 2-forms it is useful to compute the
average in term of the representations \eqref{symp.1} and
\eqref{1form.1}. Thus, for $v$, $u^{(1)}, u^{(2)} \in
\spazior{w}$, if
$$ \alpha(v)u^{(1)}=\langle A(v); u^{(1)}\rangle\ ,\quad
\omega(v)(u^{(1)},u^{(2)})=\langle E(v)\,u^{(1)};u^{(2)} \rangle\ ,
$$
one has 
\begin{equation}
(M\alpha)(v)u^{(1)}=\langle (MA)(v);\, u^{(1)}\rangle \ ,\quad \text{with}\quad
MA (v)=\int_{\Tc}{\phi^{-\theta}
  A(\phi^\theta(v))} 
\;d\theta 
\label{MA}
\end{equation}
and
\begin{equation}
(M\omega)(v)(u^{(1)},u^{(2)})=\langle (ME)(v)u^{(1)};\,u^{(2)}
\rangle\ ,\ \text{with}\ ME(v)=\int_{\Tc}{\phi^{-\theta}
  E(\phi^\theta(v))\phi^\theta} 
\; d\theta .
\label{ME}
\end{equation}
\end{oss}
\begin{oss}
\label{rem.M}
 The operator $M$ commutes with the differential operator $d$ and the rotations $\phi^\theta$. In particular  $MA(v)$ and $ME(v)$ as in  \eqref{MA}, \eqref{ME} satisfy 
$$
\phi^\theta MA(v) = MA(\phi^\theta v), \quad 
\phi^\theta ME(v)u = ME(\phi^\theta v)\phi^\theta u, \qquad \forall \, \theta \in \Tc  \ .
$$
\end{oss}
We study now the analytic properties of $\omega_1$ and of its potential form $\alpha_{\omega_1}$. In the rest of the section denote by $S:=\sum_{n=1}^\infty 1/n^2$ and by
\begin{equation}
\label{mu.def} 
\mu:=1/e(32 S)^{1/2}\approx 0.0507 \ .
\end{equation}
\begin{lem}
\label{lem.analy} Let $\Phi:=\Psi^{-1}$ and  $\omega_1$ be as above.
Assume that 
$\epsilon_1\leq \rho/e$. Then the following holds:
\begin{itemize}
\item[(i)] $E_{\omega_1}=-\im +\Upsilon_{\omega_1}$, with $\Upsilon_{\omega_1}\in\Nc_{\mu\rho}(\spazior{w^1}, \L(\spazior{w^1}, \spazior{w^2}))$ and
\begin{equation}
\label{JM}
|\mod{\Upsilon_{\omega_1}}|_{\mu\rho}\leq \frac{8 \epsilon_1}{\mu\rho}\ .
\end{equation}
\item[(ii)] Define 
\begin{equation}
\label{WM}
W_{\omega_1}(v):=\int_0^1 \Upsilon_{\omega_1}(tv)tv \;dt \ ,
\end{equation}
then $W_{\omega_1}\in\A_{w^1, \mu^3\rho}^{w^2}$ and $\norma{W_{\omega_1}}_{\A_{w^1, \mu^3\rho}^{w^2}}\leq 8 \epsilon_1 $. Moreover the 1-form $\alpha_{W_{\omega_1}}:=\langle W_{\omega_1}; .\rangle$ satisfies
$
d\alpha_{W_{\omega_1}}=\omega_1-\omega_0\ .
$
\end{itemize}
\end{lem}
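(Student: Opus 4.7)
The plan is to reduce both statements to a pullback computation and a Poincar\'e homotopy formula, with all norm control provided by the machinery of the class $\A$ set up in Appendix \ref{propnagerms}. As a preliminary step I would apply the inversion lemma (Lemma \ref{FGinA}) to $\Psi=\uno+\Psi^0$: the smallness assumption $\epsilon_1\le \rho/e$ is enough to guarantee that $\Phi:=\Psi^{-1}$ is well defined on some subball, and has the form $\Phi=\uno+\Phi^0$ with $\Phi^0\in \A_{w^1,c\rho}^{w^2}$ and $\|\Phi^0\|_{\A_{w^1,c\rho}^{w^2}}\le C\epsilon_1$ for fixed $c,C>0$.

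For part (i), the pullback identity $\omega_1=\Phi^*\omega_0$, combined with $E_{\omega_0}=-\im$ constant, gives $E_{\omega_1}(v)=d\Phi(v)^*(-\im)\,d\Phi(v)$; writing $d\Phi=\uno+d\Phi^0$ and using $(-\im)^*=\im$ with respect to the scalar product \eqref{scalar_productR} one reads off
\begin{equation*}
\Upsilon_{\omega_1}(v)=(-\im)\,d\Phi^0(v)+d\Phi^0(v)^*(-\im)+d\Phi^0(v)^*(-\im)\,d\Phi^0(v).
\end{equation*}
Normal analyticity of $\Upsilon_{\omega_1}$ follows from the closure of $\Nc_\rho$ under sums and compositions (Appendix \ref{propnagerms}). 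To obtain the bound \eqref{JM} I would apply the Cauchy-type estimate of Remark \ref{Diff} on a ball of radius $\mu\rho$ to $\Phi^0$ and to its adjoint, producing $|\mod{d\Phi^0}|_{\mu\rho},\,|\mod{d(\Phi^0)^*}|_{\mu\rho}\le C'\epsilon_1/\rho$; the two linear-in-$d\Phi^0$ contributions yield the leading $\epsilon_1/(\mu\rho)$ term, while the quadratic term is absorbed using $\epsilon_1/\rho\ll 1$, and a careful accounting of the constants should produce the stated $8\epsilon_1/(\mu\rho)$.

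For part (ii) I would use the Poincar\'e homotopy formula. Both $\omega_0$ and $\omega_1$ are closed (the latter as pullback of the closed form $\omega_0$ under the diffeomorphism $\Phi$), so $\omega_1-\omega_0$ is exact on the ball, with representing operator exactly $\Upsilon_{\omega_1}$; the standard radial homotopy gives the $1$-form $\alpha(v)(u)=\int_0^1\langle\Upsilon_{\omega_1}(tv)\,tv,\,u\rangle\,dt$, whose representative vector field is precisely $W_{\omega_1}$ as defined by \eqref{WM}, and hence $d\alpha_{W_{\omega_1}}=\omega_1-\omega_0$ by construction. Plugging the estimate of (i) into the integral yields $|\mod{W_{\omega_1}}|_{\mu^3\rho}\le C''\epsilon_1\mu^2\rho$, and the two additional factors $\mu$ in the radius accommodate the two further Cauchy passes needed to control $dW_{\omega_1}$ and $(dW_{\omega_1})^*$ and thus check the full $\A$-norm. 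The main obstacle is not conceptual but rather the quantitative bookkeeping of constants and shrinking factors through inversion, the two Cauchy passes, and the homotopy integration, so that one lands precisely on the radii $\mu\rho,\,\mu^3\rho$ and on the clean constant $8$; the explicit value $\mu=1/(e(32S)^{1/2})$ is in fact dictated by the Darboux-type step in the proof of Theorem \ref{KP} that follows, where $W_{\omega_1}$ serves as the generator of the first canonical transformation and the margins in (i)--(ii) must not deteriorate.
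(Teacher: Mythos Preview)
Your treatment of part (i) is essentially the paper's argument, though you can shortcut one step: the inversion lemma (Lemma \ref{FGinA}\,(ii)) already delivers $\Phi^0\in\A_{w^1,\mu\rho}^{w^2}$ with $\|\Phi^0\|_{\A_{w^1,\mu\rho}^{w^2}}\le 2\epsilon_1$, and by the very definition \eqref{nor.arho} of the $\A$-norm this immediately yields $|\und{d\Phi^0}|_{\mu\rho},\,|\und{(d\Phi^0)^*}|_{\mu\rho}\le 2\epsilon_1/(\mu\rho)$. No separate Cauchy pass is required, and reading the bounds off in this way is exactly how the constant $8$ in \eqref{JM} comes out clean.

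For part (ii) there is a genuine gap. Your plan is to first bound $|\und{W_{\omega_1}}|$ using (i) and then recover the full $\A$-norm by ``two further Cauchy passes'' for $dW_{\omega_1}$ and $(dW_{\omega_1})^*$. The first of these passes is fine, but the second is not: the Cauchy estimate of Remark \ref{Diff} produces $|\und{dW}|$ from $|\und{W}|$, and nothing more. Controlling $|\und{(dW)^*}|$ as a map $\spazior{w^1}\to\spazior{w^2}$ is a genuinely independent condition on the matrix entries of $dW$ that cannot be extracted from size information on $W$ alone; this is precisely why the class $\A$ carries the adjoint term as a separate piece of data. The paper handles this by exploiting the explicit structure of $\Upsilon_{\omega_1}(v)v$: writing it as $H_1(v)+H_2(v)$ with $H_1(v)=-\im\,d\Phi^0(v)v$ and $H_2(v)=d\Phi^0(v)^*(-\im)\,d\Phi(v)v$, one sees that $H_1$ is of the type $dF(v)v$ and $H_2$ of the type $dG^0(v)^*F(v)$, for which Lemma \ref{FGinA}\,(iii) and (iv) are tailor-made to propagate the \emph{full} $\A$-norm (adjoint bound included). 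Applying these yields $\|H_1\|_{\A_{w^1,\mu^2\rho}^{w^2}}\le 4\epsilon_1$ and $\|H_2\|_{\A_{w^1,\mu^3\rho}^{w^2}}\le 4\epsilon_1$, whence $\|W_{\omega_1}\|_{\A_{w^1,\mu^3\rho}^{w^2}}\le 8\epsilon_1$. The successive shrinkings $\mu\rho\to\mu^2\rho\to\mu^3\rho$ thus come not from Cauchy estimates but from the two applications of Lemma \ref{FGinA}.
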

\begin{proof}
By Lemma \ref{FGinA} one has that $\Phi = \left(\uno+ \Psi^0 \right)^{-1} = \uno + \Phi^0$ with $\Phi^0 \in \A_{w^1, \mu\rho}^{w^2} $ and  $\norm{\Phi^0}_{\A_{w^1, \mu\rho}^{w^2}} \leq 2 \norm{\Psi^0}_{\A_{w^1, \rho}^{w^2}} \leq 2 \epsilon_1$.
To prove $(i)$, just remark that 
\begin{align*}
E_{\omega_1}(v)&=d\Phi^*(v)(-\im)d\Phi(v)=-\im + d\Phi^0(v)^*(-\im) d\Phi(v)-\im d\Phi^0(v) =: -\im + \Upsilon_{\omega_1}(v)
\end{align*}
and use the results of Lemma  \ref{FGinA}. To prove $(ii)$, use
Poincar\'e construction of the potential of $\omega_1$ which gives 
$$\alpha_{\omega_1}(v)u:=\langle \int_0^1 E_{\omega_1}(tv)tv,u\rangle
dt=\alpha_0(v)u +\langle W_{\omega_1}(v),u \rangle, \quad W_{\omega_1}(v)=\int_0^1
\Upsilon_{\omega_1}(tv)tv \;dt \ ,$$
where $\alpha_0$ is the potential for $\omega_0$. In order to prove the analytic properties of $W_{\omega_1}$, note that $W_{\omega_1}(v) = \int_0^1 (H_1(tv) + H_2(tv)) dt$ where $H_1(v):= -\im \, d\Phi^0(v)v$ and $H_2(v):= d\Phi^0(v)^* (-\im) d\Phi(v)v \equiv d\Phi^0(v)^*(-\im v+ H_1(v))$. Thus, by Lemma \ref{FGinA}, one gets that
$
\norm{H_1}_{\A_{w^1, \mu^2\rho}^{w^2}}\leq 2 \norm{\Phi^0}_{\A_{w^1, \mu\rho}^{w^2}} \leq 4 \epsilon_1
$
and 
$
\norm{H_2}_{\A_{w^1, \mu^3\rho}^{w^2}}\leq 2 \norm{\Phi^0}_{\A_{w^1, \mu^2\rho}^{w^2}} \leq 4 \epsilon_1 \ .
$
Thus the estimate on $W_{\omega_1}$ follows.
\end{proof}

\begin{remark}
\label{rem.me}
One has $M\alpha_{\omega_1} - \alpha_0 = M\alpha_{W_{\omega_1}} = \< MW_{\omega_1}, \cdot \>$ and  $\norma{MW_{\omega_1}}_{\A_{w^1, \mu^3\rho}^{w^2}}
\leq\norma{W_{\omega_1}}_{\A_{w^1, \mu^3\rho}^{w^2}} $.
\end{remark}

We are ready now for the first step.
\begin{lem}There exists a map $\hat \varphi: \Br^{{w^1}}(\mu^5 \rho) \to \spazior{w^1}$ such that $(\uno-\hat\varphi)\in\A_{w^1, \mu^5\rho}^{w^2} $ and
\begin{equation}
\label{hat.sti.3}
\norma{\uno-\hat\varphi}_{\A_{w^1, \mu^5\rho}^{w^2}}\leq 2^5\epsilon_1\ .
\end{equation}
Moreover $\hat{\varphi}$ satisfies the following properties:
\begin{enumerate}[(i)]
\item $\hat \varphi$ commutes with the rotations $\phi^\theta$, namely $\phi^\theta \hat\varphi(v) = \hat\varphi(\phi^\theta v)$ for every $\theta \in \Tc $.
\item Denote $\hat{\omega}_1 := \hat{\varphi}^* \omega_1$, then  $M \hat{\omega}_1 = \omega_0$. 
\end{enumerate} 
\label{map.phi1}
\end{lem}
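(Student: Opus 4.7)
The plan is to build $\hat\varphi$ by applying the Moser deformation argument not to the pair $(\omega_1,\omega_0)$ directly, but to the pair $(M\omega_1,\omega_0)$ of averaged, $\phi^\theta$-invariant symplectic forms. The decisive observation is that once $\hat\varphi$ commutes with all rotations $\phi^\theta$, the identity $M(\hat\varphi^*\omega_1)=\hat\varphi^*(M\omega_1)$ is automatic, so proving $\hat\varphi^*(M\omega_1)=\omega_0$ together with the equivariance immediately yields property $(ii)$.

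Concretely, I would introduce the interpolation $\omega_t:=\omega_0+t(M\omega_1-\omega_0)$, $t\in[0,1]$. By Lemma \ref{lem.analy}$(i)$ and Remark \ref{rem.me}, $\omega_t$ is represented by $E_{\omega_t}=-\im+t\,M\Upsilon_{\omega_1}$ with $|\mod{M\Upsilon_{\omega_1}}|_{\mu\rho}\leq 8\epsilon_1/(\mu\rho)$. Hypothesis \eqref{th.1} makes this perturbation small enough that a Neumann series gives $E_{\omega_t}^{-1}=\im+R_t$ with $R_t$ of order $\epsilon_1/\rho$, uniformly in $t\in[0,1]$, on a slightly shrunken ball. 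Using $d\alpha_{MW_{\omega_1}}=M\omega_1-\omega_0$, the Darboux equation \eqref{darboux.eq} reduces to
\begin{equation*}
Y^t\lrcorner\omega_t=-\<MW_{\omega_1},\cdot\>,\qquad Y^t=-E_{\omega_t}^{-1}MW_{\omega_1},
\end{equation*}
and I would define $\hat\varphi$ as the time-$1$ map $\varphi^1$ of the flow of $Y^t$.

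For the analytic estimates, Lemma \ref{lem.analy}$(ii)$ together with Remark \ref{rem.me} give $MW_{\omega_1}\in\A_{w^1,\mu^3\rho}^{w^2}$ with norm bounded by $8\epsilon_1$. Combining with the Neumann expansion of $E_{\omega_t}^{-1}$ and the product/composition lemmas of Appendix \ref{propnagerms}, I obtain $Y^t\in\A_{w^1,\mu^4\rho}^{w^2}$ uniformly in $t\in[0,1]$ with norm of order $\epsilon_1$. The flow-generation lemma of Appendix \ref{propnagerms}, applicable because \eqref{th.1} is far stronger than the required smallness $\epsilon_1\ll\mu^5\rho$, then produces $\varphi^t$ on $\Br^{w^1}(\mu^5\rho)$ for every $t\in[0,1]$ and gives $\varphi^t-\uno\in\A_{w^1,\mu^5\rho}^{w^2}$ with the claimed bound $\norm{\uno-\hat\varphi}_{\A_{w^1,\mu^5\rho}^{w^2}}\leq 2^5\epsilon_1$; each reduction of radius $\rho\to\mu\rho\to\mu^3\rho\to\mu^4\rho\to\mu^5\rho$ pays one Cauchy factor, while the numerical constant $2^5$ absorbs the finitely many composition and integration steps.

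Property $(i)$ follows from equivariance: both $\omega_0$ and $M\omega_1$ are $\phi^\theta$-invariant, so $\omega_t$ is too, and $MW_{\omega_1}$ is $\phi^\theta$-equivariant by Remark \ref{rem.M}; hence $Y^t$ is $\phi^\theta$-equivariant and the uniqueness of the flow yields $\varphi^t\circ\phi^\theta=\phi^\theta\circ\varphi^t$. The main technical obstacle is the careful bookkeeping of these nested radius reductions and the verification that the averaging operator $M$, the Neumann inversion of $E_{\omega_t}$, and the flow generation all respect the $\A$-norm with explicit constants sharp enough to remain below the threshold \eqref{th.1} and to deliver the clean final bound $2^5\epsilon_1$; the structural part of the argument is short, but the quantitative version forces one to trace every numerical factor through the lemmas of Appendix \ref{propnagerms}.
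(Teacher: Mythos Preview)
Your proposal is correct and follows essentially the same approach as the paper: apply the Darboux/Moser deformation to the pair $(M\omega_1,\omega_0)$, solve $Y^t=-(-\im+tM\Upsilon_{\omega_1})^{-1}MW_{\omega_1}$ via Neumann series, and use the flow-generation lemma, tracking the radii $\rho\to\mu^3\rho\to\mu^4\rho\to\mu^5\rho$ exactly as in the paper. Your derivation of $(i)$ from the $\phi^\theta$-equivariance of $Y^t$ and of $(ii)$ from $M(\hat\varphi^*\omega_1)=\hat\varphi^*(M\omega_1)=\omega_0$ is precisely the argument the paper gives.
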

\begin{proof}
We apply the  Darboux procedure described at the beginning of this section with $\Omega_0 = \omega_0$ and $\Omega_1 = M\omega_1$. Then $\Omega_t$ is represented by the operator $\hat{E}_{\omega_1}^t := \left(-\im + t(ME_{\omega_1} +\im )\right) $. Write equation \eqref{darboux.eq}, with $f \equiv 0$,  in terms of the operators defining the symplectic forms, getting the equation 
$\hat{E}_{\omega_1}^t \hat Y^t = - MW_{\omega_1}$ (see also Remark \ref{rem.me}). This equation can be solved by inverting the operator $\hat{E}_{\omega_1}^t$ by Neumann series:
\begin{equation}
\label{equazione3.1}
\hat Y^t:=-(-\im +tM\Upsilon_{\omega_1})^{-1}
MW_{\omega_1}\ .
\end{equation}
By the results of Lemma \ref{lem.analy} and Remark \ref{rem.me}, $\hat{Y}^t$ is of class $\A_{w^1, \mu^4\rho}^{w^2}$ and fulfills
\begin{equation}
\label{sti.hat.1}
\sup_{t \in [0,1]}\norma{\hat Y^t}_{\A_{w^1, \mu^4\rho}^{w^2}}\leq 2 \norm{MW_{\omega_1}}_{\A_{w^1, \mu^3\rho}^{w^2}} \leq 2^4\epsilon_1 \ .
\end{equation}
By Lemma \ref{flussoinA} the vector field $\hat Y^t$ generates a flow $\hat{\varphi}^t: \Br^{{w^1}}(\mu^5 \rho) \to \P^{w^1} $ such that $\hat{\varphi}^t- \uno $ is of class $\A_{w^1, \mu^5\rho}^{w^2}$ and satisfies 
$$\norm{\hat \varphi^t - \uno}_{\A_{w^1, \mu^5 \rho}^{w^2}} \leq 2 \sup_{t \in [0,1]}\norma{\hat Y^t}_{\A_{w^1, \mu^4\rho}^{w^2}}\leq 2^5\epsilon_1 .$$
Therefore the map $\hat \varphi \equiv \hat \varphi^t\vert_{t=1}$ exists, satisfies the claimed estimate \eqref{hat.sti.3} and furthermore $\hat \varphi^* M\omega_1 = \omega_0$.
\newline
We prove now item $(i)$. The claim follows if we show  that the vector field  $\hat Y^t$ commutes with  rotations. To this aim consider  equation \eqref{equazione3.1}, and define $\hat{J}_{\omega_1}^t(v) = (\hat{E}_{\omega_1}^t(v))^{-1}$. By construction the operator $\hat{E}_{\omega_1}^t $ commutes with  rotations (cf. Remark \ref{rem.M}), namely  $\forall \, \theta_0 \in \Tc $  one has $\phi^{\theta_0}\hat{E}_{\omega_1}^t(v)
u=
\hat{E}_{\omega_1}^t(\phi^{\theta_0}(v))\phi^{\theta_0}u.$
Then it follows that 
\begin{align*}
\phi^{\theta_0}\hat{Y}^t(v)&=
-\phi^{\theta_0}\hat{J}_{\omega_1}^t(v)MW_{\omega_1}(v)
=-\hat{J}_{\omega_1}^t(\phi^{\theta_0}(v))\phi^{\theta_0}MW_{\omega_1}(v)\\
&=-\hat{J}_{\omega_1}^t(\phi^{\theta_0}(v))MW_{\omega_1}(\phi^{\theta_0}(v))
=\hat{Y}^t(\phi^{\theta_0}(v)).
\end{align*}
This proves item $(i)$. Item $(ii)$ then follows from item $(i)$ since, defining $\hat{\omega}_1=
\hat{\varphi}^*\omega_1$, one has the chain of identities
$
M \hat{\omega}_1 =  M \hat{\varphi}^*\omega_1 = \hat{\varphi}^* M \omega_1 = \omega_0
$.
\end{proof}

The analytic properties of the symplectic form  $\hat \omega_1$ can be studied in the same way as in  Lemma \ref{lem.analy}; we get therefore the following corollary:
\begin{corollary}
\label{cor.e}
Denote by $E_{\hat{\omega}_1}$ the symplectic operator describing $\hat{\omega}_1=\hat\varphi ^*\omega_1$. Then 
\begin{itemize}
\item[(i)] $E_{\hat{\omega}_1}=-\im +\Upsilon_{\hat{\omega}_1}$, with $\Upsilon_{\hat{\omega}_1}\in
\Nc_{\mu^5\rho}(\spazior{w^1}, \L(\spazior{w^1}, \spazior{w^2}))$ and
$\left|\mod{\Upsilon_{\hat{\omega}_1}}\right|_{\mu^5\rho}\leq 2^7\frac{\epsilon_1}{\mu\rho}.$
\item[(ii)] Define 
$W(v):=\int_0^1 \Upsilon_{\hat{\omega}_1}(tv)tv \;dt$, then $W\in\A_{w^1, \mu^7\rho}^{w^2}$ and
$\norma{W}_{\A_{w^1, \mu^7\rho}^{w^2}}\leq 2^7\epsilon_1 .$
\end{itemize}
Furthermore the 1-form $\alpha_{W}:= \langle W, . \rangle $ satisfies 
 $d\alpha_{W} = \hat{\omega}_1 - \omega_0.$
\end{corollary}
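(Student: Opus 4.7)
The plan is to follow the argument of Lemma \ref{lem.analy} almost verbatim, with the input map now being $\hat\varphi$ (rather than $\Phi=\Psi^{-1}$) and the input symplectic form being $\omega_1$ (rather than $\omega_0$). From Lemma \ref{map.phi1} we have $\hat\varphi=\uno+\hat\varphi^0$ with $\hat\varphi^0\in \A_{w^1,\mu^5\rho}^{w^2}$ and $\norma{\hat\varphi^0}_{\A_{w^1,\mu^5\rho}^{w^2}}\le 2^5\epsilon_1$. The pull-back formula for the symplectic operator, combined with the decomposition $E_{\omega_1}=-\im+\Upsilon_{\omega_1}$ from Lemma \ref{lem.analy}(i), gives
\begin{equation*}
E_{\hat\omega_1}(v)=d\hat\varphi(v)^{*}\bigl(-\im+\Upsilon_{\omega_1}(\hat\varphi(v))\bigr)d\hat\varphi(v)=-\im+\Upsilon_{\hat\omega_1}(v),
\end{equation*}
where $\Upsilon_{\hat\omega_1}(v)$ collects the three remainder terms $-\im\,d\hat\varphi^0(v)$, $d\hat\varphi^0(v)^{*}(-\im)d\hat\varphi(v)$ and $d\hat\varphi(v)^{*}\,\Upsilon_{\omega_1}(\hat\varphi(v))\,d\hat\varphi(v)$.

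For part (i) I would estimate each of these three summands using the closure properties of $\Nc_\rho$ and $\A_{w^1,\rho}^{w^2}$ from Appendix \ref{propnagerms}. Cauchy's inequality (Remark \ref{Diff}) bounds $d\hat\varphi^0$ and its adjoint at a cost of one factor $1/(\mu\rho)$, while the composition $\Upsilon_{\omega_1}\circ\hat\varphi$ is handled by the composition lemma (Lemma \ref{FGinA}), which applies because $\hat\varphi$ maps $\Br^{w^1}(\mu^5\rho)$ into $\Br^{w^1}(\mu\rho)$ thanks to $\norma{\hat\varphi^0}_{\A}\le 2^5\epsilon_1$ and the smallness $\epsilon_1\le\rho/e$. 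Summing the three contributions, each dominated by a constant multiple of $2^5\epsilon_1$ (from $\hat\varphi^0$) or $8\epsilon_1/(\mu\rho)$ (from $\Upsilon_{\omega_1}$), yields the desired bound $|\mod{\Upsilon_{\hat\omega_1}}|_{\mu^5\rho}\le 2^7\epsilon_1/(\mu\rho)$.

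Part (ii) is then the exact analog of Lemma \ref{lem.analy}(ii). Since $\hat\omega_1=\hat\varphi^{*}\omega_1$ is closed, Poincar\'e's construction
\begin{equation*}
\alpha_{\hat\omega_1}(v)u:=\int_0^1 \langle E_{\hat\omega_1}(tv)\,tv,\,u\rangle\,dt=\alpha_0(v)u+\langle W(v),u\rangle
\end{equation*}
produces a primitive of $\hat\omega_1$ on a star-shaped neighborhood of the origin, with nonlinear part $W(v)=\int_0^1 \Upsilon_{\hat\omega_1}(tv)\,tv\,dt$. Splitting $W$ into pieces parallel to $H_1,H_2$ of Lemma \ref{lem.analy}(ii) and applying the product and composition estimates of Lemma \ref{FGinA} places $W$ in $\A_{w^1,\mu^7\rho}^{w^2}$ with norm at most $2^7\epsilon_1$; the two extra powers of $\mu$ precisely absorb the Cauchy estimates needed to convert the $\Nc$-norm of $\Upsilon_{\hat\omega_1}$ into an $\A$-norm for $W$. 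The only real obstacle is the bookkeeping of radii and numerical constants; the choice of $\mu$ in \eqref{mu.def} was engineered exactly so that this hierarchy of $\mu$-shrinkages closes.
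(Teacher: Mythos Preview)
Your proposal is correct and takes essentially the same approach as the paper: the paper itself offers no proof, stating only that the analytic properties of $\hat\omega_1$ ``can be studied in the same way as in Lemma \ref{lem.analy}'', and you have correctly spelled out what this means---pull back $E_{\omega_1}=-\im+\Upsilon_{\omega_1}$ by $\hat\varphi=\uno+\hat\varphi^0$, collect the three remainder terms, and then repeat the Poincar\'e-lemma construction for $W$ exactly as in Lemma \ref{lem.analy}(ii). Your acknowledgment that the only nontrivial content is the bookkeeping of radii $\mu^5\rho\to\mu^7\rho$ and numerical constants is precisely in line with the paper's attitude.
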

Finally we will need also some analytic and geometric properties of the map 
\begin{equation}
\label{checkPsi}
\check{\Psi}:= \hat{\varphi}^{-1} \circ \Psi.
\end{equation}
The functions $\{\check{\Psi}(v)\}_{j \geq 1}$ forms a new set of coordinates in a suitable neighborhood of the origin whose properties are given by the following corollary:
\begin{cor}
\label{cor.psi.1}
The map $\check{\Psi}:\Br^{{w^1}}(\mu^8\rho) \to \spazior{w^1}$, defined in \eqref{checkPsi},  satisfies the following properties:
\begin{enumerate}[(i)] 
\item $d \check{\Psi}(0) = \uno$ and $\check\Psi^0:= \check \Psi- \uno \in \A_{w^1,\mu^8 \rho}^{w^2}$ with $ \norm{\check \Psi^0}_{\A_{w^1,\mu^8 \rho}^{w^2}} \leq 2^8 \epsilon_1$.
\item  $\F^{(0)} = \check{\Psi}(\F)$, namely  the foliation defined by  $\check{\Psi}$ coincides with the foliation defined by $\Psi$.
\item The functionals $\lbrace \frac{1}{2}\left|\check{\Psi}_j\right|^2\rbrace_{ j\geq 1}$ pairwise commute with respect to the symplectic form $\omega_0$.
\end{enumerate}
\end{cor}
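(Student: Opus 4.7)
My plan for Corollary \ref{cor.psi.1} is to split it into a quantitative analytic estimate (item (i)) and two structural identities (items (ii)--(iii)) that both rest on the torus-equivariance of $\hat\varphi$ established in Lemma \ref{map.phi1}(i).

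For item (i), I would chain together the inversion and composition statements of Lemma \ref{FGinA} from Appendix \ref{propnagerms}. Lemma \ref{map.phi1} provides $\hat\varphi - \uno \in \A_{w^1,\mu^5\rho}^{w^2}$ with norm at most $2^5\epsilon_1$; applying the inversion part of Lemma \ref{FGinA} (together with a further factor-$\mu$ shrinking of the radius) produces $\hat\varphi^{-1} - \uno \in \A_{w^1,\mu^6\rho}^{w^2}$ of norm at most $2^6\epsilon_1$. I would then compose with $\Psi = \uno + \Psi^0$: the smallness condition \eqref{th.1} guarantees that $\Psi$ sends $B^{w^1}(\mu^8\rho)$ inside $B^{w^1}(\mu^6\rho)$, so the composition part of Lemma \ref{FGinA} delivers $\check\Psi - \uno \in \A_{w^1,\mu^8\rho}^{w^2}$ with norm at most $2^8\epsilon_1$. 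The identity $d\check\Psi(0) = \uno$ follows from the fact that both $\Psi^0$ and $\hat\varphi - \uno$ are $O(v^2)$, hence $d\Psi(0) = d\hat\varphi(0) = d\hat\varphi^{-1}(0) = \uno$, and the chain rule gives the claim.

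For item (ii), the leaves of $\Fz$ are exactly the orbits of the torus action $\{\phi^\theta\}_{\theta\in\Tc}$. By Lemma \ref{map.phi1}(i), $\hat\varphi$ (and hence $\hat\varphi^{-1}$) intertwines with every $\phi^\theta$, and therefore maps $\Fz$ onto itself. Combined with the defining identity $\Psi(\F) = \Fz$ this yields $\check\Psi(\F) = \hat\varphi^{-1}(\Psi(\F)) = \hat\varphi^{-1}(\Fz) = \Fz$. For item (iii) I would exploit the same intertwining: since $I_j\circ\phi^\theta = I_j$ for every $\theta$, the composition $I_j\circ\hat\varphi^{-1}$ is $\phi^\theta$-invariant, so its Taylor expansion in the variables $(\xi_k,\bar\xi_k)$ only contains matched-index monomials and must be of the form $G_j(I_1,I_2,\ldots)$ for some function $G_j$. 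Composing with $\Psi$ gives $\tfrac{1}{2}|\check\Psi_j|^2 = G_j(\tfrac{1}{2}|\Psi_1|^2, \tfrac{1}{2}|\Psi_2|^2,\ldots)$, and since the family $\{\tfrac{1}{2}|\Psi_k|^2\}_k$ is pairwise $\omega_0$-Poisson commuting by hypothesis, the Leibniz rule transfers the commutation to $\{\tfrac{1}{2}|\check\Psi_j|^2\}_j$.

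The main technical obstacle I anticipate is the bookkeeping of constants and domain shrinkings in item (i): each use of Lemma \ref{FGinA} introduces loss factors of the form $(1-d)^{-1}$, so the available powers of $\mu$ must be distributed carefully to arrive exactly at a final radius of $\mu^8\rho$ with norm constant $2^8$, consistently with the smallness condition \eqref{th.1}.
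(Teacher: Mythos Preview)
Your proposal is correct and follows essentially the same route as the paper. For item (i) the paper also inverts $\hat\varphi$ via Lemma \ref{FGinA}\,ii) to obtain $\hat\varphi^{-1}=\uno+g$ with $g\in\A_{w^1,\mu^6\rho}^{w^2}$, $\norm{g}\le 2^6\epsilon_1$, then writes $\check\Psi^0=\Psi^0+g\circ(\uno+\Psi^0)$ and applies Remark \ref{rem:norm.in.A} (to shrink $\Psi^0$ to radius $\mu^7\rho$) together with Lemma \ref{FGinA}\,i); this is exactly the bookkeeping you anticipate, and it lands on the bound $2^8\epsilon_1$. For items (ii)--(iii) the paper simply invokes the rotation-commutation of $\hat\varphi$ from Lemma \ref{map.phi1}(i) and points to the proof of Corollary \ref{cor.KP}; your expanded argument (that $I_j\circ\hat\varphi^{-1}$ is torus-invariant hence a function of the $I_k$'s, whence $\tfrac12|\check\Psi_j|^2=G_j(\tfrac12|\Psi_1|^2,\ldots)$) is precisely what that reference unpacks.
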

\begin{proof}
By Lemma \ref{FGinA} the map $\hat\varphi$ is invertible in
$\Br^{w^1}({\mu^6\rho})$ and $\hat\varphi^{-1}= \uno + g$, with $g
\in \A_{w^1,\mu^6\rho}^{w^2}$ and $\norm{g}_{\A_{w^1, \mu^6
    \rho}^{w^2}} \leq 2^6 \epsilon_1.$ Then $\check{\Psi} = \uno +
\check \Psi^0$ where $\check \Psi^0 = \Psi^0 + g\circ(\uno +
\Psi^0)$. By Remark \ref{rem:norm.in.A}, $\norm{\Psi^0}_{\A_{ w^1,
    \mu^7\rho}^{w^2}}\leq 6\mu^{14}\epsilon_1 $, thus Lemma
\ref{FGinA} $i)$ implies that $\check{\Psi}^0 \in
\A_{w^1,\mu^8\rho}^{w^2}$ and moreover $\norm{\check \Psi^0}_{\A_{
    w^1, \mu^8\rho}^{w^2}} \leq 6\mu^{14}\epsilon_1 + 2^7 \epsilon_1
\leq 2^8 \epsilon_1$.  Item $(ii)$ and $(iii)$ follow from the fact
that, by Lemma \ref{map.phi1} $(i)$, $\hat \varphi$ commutes with the
rotations (see also the proof of Corollary \ref{cor.KP}).
\end{proof}

\vspace{1em} The second step consists in transforming $\hat{\omega}_1$
into the symplectic form $\omega_0$ while preserving the functions
$I_l$. In order to perform this transformation, we apply once more the
Darboux procedure with $\Omega_1 = \hat{\omega}_1$ and $\Omega_0 =
\omega_0$. However, we require each leaf of the foliation to be
invariant under the transformation. In practice, we look for a change
of coordinates $\varphi$ satisfying
\begin{align}
	&\varphi^*\Omega_1=\Omega_0 \ , \label{fi.1}\\
	&I_l(\varphi(v))=I_l(v), \quad \forall \, l \geq 1  \ . \label{fi.2}
\end{align}
In order to fulfill the second equation, we take advantage of the arbitrariness of $f$ in equation \eqref{darboux.eq}. 
It turns out that if $f$ satisfies the set of differential equations given by
 \begin{equation}
\label{fi.3}
df(X^0_{I_l})-(\alpha_1-\alpha_0)(X^0_{I_l})=0, \quad \forall \, l \geq 1\,
\end{equation}
then equation \eqref{fi.2} is satisfied (as it will be proved below).
Here $\alpha_1$ is the potential form of $\hat\omega_1$ and is given by
$\alpha_1 := \alpha_0 + \alpha_W$, 
where $\alpha_W$ is defined in Corollary
\ref{cor.e}.  However,  \eqref{fi.3} is essentially a system of equations for the potential of a 1-form on a torus, so there is a solvability condition. In Lemma \ref{moser} below we will prove that the system \eqref{fi.3} has a solution if the following conditions are satisfied:
\begin{align}
\label{comp.1}
&d(\alpha_1-\alpha_0)\arrowvert_{T\Fz}=0 \ ,
\\
\label{comp.2}
&M(\alpha_1-\alpha_0)\arrowvert_{T\Fz}=0\ .
\end{align}

In order to show that these two conditions are fulfilled, we need a preliminary result.
First, for
$v\in \spazior{w}$ fixed, define the symplectic orthogonal of $T_v\Fz$ with respect to the form $\omega^t := \omega_0 + t(\hat\omega_1 - \omega_0) $ by
\begin{equation}
\label{fol.1}
(T_v\Fz)^{\angle_t}:=\left\{h \in\spazior{w}:
\omega^t(v)(u,h)=0 \; \forall u \in T_v\Fz\right\}.
\end{equation}

\begin{lem}\label{tv}For  $v \in \Br^{{w^1}}(\mu^5 \rho)$, one has $T_v\Fz = (T_v\Fz)^{\angle_t}$.
\end{lem}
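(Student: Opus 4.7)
The plan is to rephrase $T_v\F^{(0)}=(T_v\F^{(0)})^{\angle_t}$ in terms of the skew operator $E^t(v)$ representing $\omega^t$ via \eqref{symp.1}, and reduce it to a Lagrangianity statement for the two endpoint forms $\omega_0$ and $\hat\omega_1$. By the very definition of the symplectic orthogonal,
\[
(T_v\F^{(0)})^{\angle_t}=(E^t(v))^{-1}\bigl((T_v\F^{(0)})^\perp\bigr),
\]
so the claim is equivalent to $E^t(v)\,T_v\F^{(0)}=(T_v\F^{(0)})^\perp$. Since $E^t(v)=(1-t)(-\im)+t\,E_{\hat\omega_1}(v)$ is affine in $t$ and $(T_v\F^{(0)})^\perp$ is a linear subspace, it suffices to prove the two endpoint identities
\[
(-\im)\,T_v\F^{(0)}=(T_v\F^{(0)})^\perp
\qquad\text{and}\qquad
E_{\hat\omega_1}(v)\,T_v\F^{(0)}=(T_v\F^{(0)})^\perp,
\]
i.e.\ that $T_v\F^{(0)}$ is Lagrangian both for $\omega_0$ and for $\hat\omega_1$.

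The first identity is the familiar statement that the common level sets of the pairwise $\omega_0$--commuting integrals $\{I_j\}_{j\geq 1}$, whose differentials are linearly independent at $v$, are Lagrangian for the canonical form; one can check it directly from $X^0_{I_l}(v)=\im v_l\be_l$. For the second identity I would exploit the map $\check\Psi=\hat\varphi^{-1}\circ\Psi$ introduced in Corollary~\ref{cor.psi.1}. A short calculation using $\hat\omega_1=\hat\varphi^*\omega_1$ and $\omega_1=(\Psi^{-1})^*\omega_0$ gives $\check\Psi^*\hat\omega_1=\omega_0$, so $\check\Psi$ is a symplectic diffeomorphism from $(\omega_0)$ to $(\hat\omega_1)$. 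By items (ii) and (iii) of Corollary~\ref{cor.psi.1} the functionals $\{\tfrac12|\check\Psi_j|^2\}_{j\geq 1}$ pairwise $\omega_0$--commute and define a foliation $\F$ with $\check\Psi(\F)=\F^{(0)}$, so that $d\check\Psi(u)$ carries $T_u\F$ onto $T_{\check\Psi(u)}\F^{(0)}$. Since $T_u\F$ is Lagrangian for $\omega_0$ (same standard argument as above applied to $\{|\check\Psi_j|^2/2\}$), the symplectic property of $\check\Psi$ transports this Lagrangianity to give $E_{\hat\omega_1}(v)\,T_v\F^{(0)}=(T_v\F^{(0)})^\perp$ at every $v=\check\Psi(u)$; as $\check\Psi=\uno+\check\Psi^0$ is a small perturbation of the identity, its image covers $\Br^{w^1}(\mu^5\rho)$.

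The main technical obstacle is making the step ``commuting independent integrals $\Rightarrow$ Lagrangianity'' rigorous in the infinite-dimensional setting, that is, upgrading the easy isotropic inclusion $T_v\F^{(0)}\subseteq(T_v\F^{(0)})^{\angle_t}$, which follows from the involution alone, to the surjectivity $E^t(v)\,T_v\F^{(0)}=(T_v\F^{(0)})^\perp$. The restriction $v\in\Br^{w^1}(\mu^5\rho)$ serves to guarantee that $\hat\varphi$ and $\check\Psi$ are well-defined diffeomorphisms there and that $E^t(v)=-\im+t\,\Upsilon_{\hat\omega_1}(v)$ is invertible by Neumann series, as ensured by Corollaries~\ref{cor.e}(i) and~\ref{cor.psi.1}(i) together with the smallness assumption \eqref{th.1}.
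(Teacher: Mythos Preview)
Your affine-in-$t$ reduction to the endpoints yields only one inclusion: from $E^0(W)\subseteq V$ and $E^1(W)\subseteq V$ one gets $E^t(w)=(1-t)E^0(w)+tE^1(w)\in V$ for each $w\in W$, hence $E^t(v)\,T_v\Fz\subseteq(T_v\Fz)^\perp$, i.e.\ the isotropic inclusion $T_v\Fz\subseteq(T_v\Fz)^{\angle_t}$. But the reverse containment does \emph{not} follow from the endpoints, because $(E^t)^{-1}$ is not the affine combination of $(E^0)^{-1}$ and $(E^1)^{-1}$. So your claim that ``it suffices to prove the two endpoint identities'' is not justified; you flag this later as an ``obstacle'', but you do not resolve it. Invertibility of $E^t$ on the whole space does not force $E^t|_{T_v\Fz}$ to surject onto $(T_v\Fz)^\perp$.

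The paper proves the reverse inclusion by a direct perturbative contradiction that uses the smallness of $v$ essentially: given a nonzero $u\in(T_v\Fz)^{\angle_t}\cap(T_v\Fz)^\perp$, one has $\langle X^0_{I_l}(v),u\rangle=0$ for all $l$, which rewrites as $dI_l(v)(-\im u)=0$, so $\im u\in T_v\Fz$ and therefore $\omega^t(v)(\im u,u)=0$. But $\omega_0(\im u,u)=\langle u,u\rangle\neq 0$, and on $\Br^{w^1}(\mu^5\rho)$ the correction $\langle t\,\Upsilon_{\hat\omega_1}(v)\,\im u,u\rangle$ is strictly smaller in absolute value by Corollary~\ref{cor.e}(i), a contradiction. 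The radius $\mu^5\rho$ is thus used to force coisotropy, not merely to make $E^t$ invertible.

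A secondary inaccuracy: by Corollary~\ref{cor.psi.1}, $\check\Psi$ is defined only on $\Br^{w^1}(\mu^8\rho)$, and since $\mu<1$ its image cannot cover $\Br^{w^1}(\mu^5\rho)$. This is repairable by using $\check\Psi^{-1}=\Psi^{-1}\circ\hat\varphi$ instead, which \emph{is} defined on $\Br^{w^1}(\mu^5\rho)$ and satisfies $(\check\Psi^{-1})^*\omega_0=\hat\omega_1$. Your transport-of-Lagrangianity idea via $\check\Psi$ is otherwise a clean alternative to the paper's direct computation of $\{I_l,I_m\}_{\hat\omega_1}=0$ (from the rotation-invariance of $\hat\varphi$) for the isotropy step; what is genuinely missing is the coisotropy argument.
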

\begin{proof} 
First of all we have that, since for any couple of functions $F,G$ and
any change of coordinates $\Phi$, one has
$$
\left\{ F\circ \Phi,G\circ\Phi\right\}_{\Phi^*\omega_0}=\left\{F,G
\right\}_{\omega_0} \circ \Phi\ ,
$$
it follows that
$$
\left\{I_l,I_m\right\}_{\omega_1}=\left\{\left|\Psi_l\right|^2,
\left|\Psi_m\right|^2\right\} _{\omega_0}=0\ ,\quad \forall l,m \geq 1
$$
and
$$
\left\{ I_l,I_m\right\}_{\widehat
  \omega_1}\circ\hat\varphi^{-1}=\left\{I_l\circ \hat\varphi^{-1} , I_m\circ
\hat\varphi^{-1} \right\}_{\omega_1}
$$
but, by the property of invariance with respect to rotations of
$\widehat \varphi$ (and therefore of $\widehat \varphi^{-1}$),
$I_j\circ\widehat \varphi^{-1}$ is a function of $\left\{
I_l\right\}_{l\geq 1}$ only, and therefore the above quantity vanishes
and one has $\forall l,m$
\begin{equation}
\label{eq:IlIm}
0=\left\{I_l(v), I_m(v)\right\}_{\hat\omega_1}=\left\langle   \nabla I_l(v),J_{\hat\omega_1}(v) \nabla I_m(v) \right\rangle = \left\langle    v_l \be_l,  J_{\hat\omega_1}(v) v_m \be_m \right\rangle \quad \forall \, l,m \geq 1.
\end{equation}
Define $\Sigma_v:= \mbox{span}\left\{v_l \be_l,\; l\geq 1\right\}$.
The identities \eqref{eq:IlIm} imply that
$J_{\hat\omega_1}(v)(\Sigma_v)\subseteq \Sigma_v^\bot \equiv
\im \Sigma_v$. By Corollary \ref{cor.e} $(i)$, $E_{\hat\omega_1}(v)$ is
an isomorphism for $v \in \Br^{{w^1}}(\mu^5\rho)$, so the same is true
for its inverse $J_{\hat\omega_1}(v)$.  Hence
$J_{\hat\omega_1}(v)(\Sigma_v)= \im \Sigma_v $ and
$\Sigma_v=E_{\hat\omega_1}(v)( \im \Sigma_v) $ and
\begin{equation}
\label{omega1X} \hat\omega_1(X^0_{I_l},X^0_{I_m})=\left\langle E_{\hat\omega_1}(v) (\im v_l \be_l),   \im v_m \be_m \right\rangle =0, \quad \forall \, l , m \geq 1.
\end{equation}
  Since $\omega^t$ is a linear combination of $\omega_0$ and
$\hat\omega_1$, the previous formula implies that
$\omega^t(v)(X^0_{I_l},X^0_{I_m})=0$ for every $t \in [0,1]$ and $v \in \Br^{{w^1}}(\mu^5 \rho)$, hence $T_v\Fz \subseteq
(T_v\Fz)^{\angle_t}$. Now assume by contradiction that the inclusion is strict: then there
exists  $u \in
(T_v\Fz)^{\angle_t}, \; \norm{u}=1,$ such that  $u \notin T_v\Fz$. Decompose
$u=u_\top + u_\bot$ with $u_\top \in T_v\Fz$ and
$u_\bot \in (T_v\Fz)^\bot$. Due to the bilinearity of
$\omega(v)^t$, we can always assume that $u\equiv u_\bot$.
Then for every $l \geq 1$
$$dI_l(v)(-\im u)=\< \nabla I_l(v),-\im u \>=\< -\im X^0_{I_l}(v),-\im u \> = \<
X^0_{I_l}(v),u \>=0 \qquad \forall \, l \geq 1$$
 since $X^0_{I_l}(v) \in T_v\Fz$. Hence $\im u
\in T_v\Fz$ and therefore $\omega^t(v)(-\im u,u)=0$.  Furthermore it holds that 
$$\omega^t(0)(\im u,u)=\omega_0(-\im u,u)=\<\im^2 u,u\>=-1.$$
It follows that for $v \in \Br^{{w^1}}(\mu^5 \rho)$ one has $\norm{tM\Upsilon_{\hat{\omega}_1}(v)}_{\L(\spazior{w^1}, \spazior{w^1})} \leq 1/2 $, thus
 $\omega^t(v)(\im u,u)=-1 + \langle tM\Upsilon_{\hat{\omega}_1}(v) \im u, u \rangle <0$,  leading to a contradiction.
\end{proof}

We can now prove the following lemma: 
 
\begin{lem} The solvability conditions   \eqref{comp.1}, \eqref{comp.2}
are fulfilled. 
 \end{lem}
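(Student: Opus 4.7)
The plan is to verify the two conditions separately, exploiting the two main facts already in hand: the Lagrangianness of the leaves of $\Fz$ (Lemma \ref{tv}), and the averaging identity $M\hat{\omega}_1=\omega_0$ (Lemma \ref{map.phi1}(ii)).

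Condition \eqref{comp.1} follows quickly. By Corollary \ref{cor.e}, $d(\alpha_1-\alpha_0)=d\alpha_W=\hat{\omega}_1-\omega_0$. Lemma \ref{tv} asserts that $T_v\Fz=(T_v\Fz)^{\angle_t}$ for every $t\in[0,1]$; equivalently, $T_v\Fz$ is Lagrangian for both $\omega_0$ (case $t=0$) and $\hat{\omega}_1$ (case $t=1$). Subtracting the two vanishings yields $(\hat{\omega}_1-\omega_0)(u_1,u_2)=0$ for all $u_1,u_2\in T_v\Fz$, which is precisely \eqref{comp.1}.

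For \eqref{comp.2} I would first note that $T_v\Fz$ is the closed linear span of the Hamiltonian vector fields $\{X^0_{I_l}(v)\}_{l\geq1}$ (immediate when $v_l\neq 0$; trivially true in the slot $l$ where $v_l=0$, since then both the leaf and $X^0_{I_l}(v)$ vanish in that coordinate), and by continuity in $u$ reduce the claim to showing that the scalar functions $g_l(v):=M\alpha_W(v)(X^0_{I_l}(v))$ vanish identically for every $l\geq 1$. A direct computation from $X^0_{I_l}(v)=\im v_l\be_l$ (cf. \eqref{X_I_l}) yields the equivariance $X^0_{I_l}(\phi^\theta v)=\phi^\theta X^0_{I_l}(v)$, which in turn gives $g_l=M(f_l)$ for $f_l(v):=\alpha_W(v)(X^0_{I_l}(v))$. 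The key step is then Cartan's formula applied to the 1-form $\alpha_W$ and the vector field $X^0_{I_l}$:
\begin{equation*}
df_l=\mathcal{L}_{X^0_{I_l}}\alpha_W-(\hat{\omega}_1-\omega_0)(X^0_{I_l},\cdot).
\end{equation*}
Averaging over $\Tc$ kills the Lie derivative term (the flow of $X^0_{I_l}$ is $2\pi$-periodic in $\theta_l$), while Lemma \ref{map.phi1}(ii) combined with equivariance produces
\begin{equation*}
M(\hat{\omega}_1(X^0_{I_l},\cdot))=\omega_0(X^0_{I_l},\cdot)=dI_l=M(\omega_0(X^0_{I_l},\cdot)).
\end{equation*}
Hence $dg_l=d(Mf_l)=0$; together with $g_l(0)=0$ (since $X^0_{I_l}(0)=0$) on the connected ball, this yields $g_l\equiv 0$, which is \eqref{comp.2}.

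The main obstacle is really the second condition, and specifically the recognition that it should be read as a pointwise identity along the distinguished Hamiltonian directions $X^0_{I_l}$, rather than as a cohomological statement about $M\alpha_W$ on the whole ball (which from $d(M\alpha_W)=M(\hat{\omega}_1-\omega_0)=0$ is a priori only known to be closed). Once this reformulation is adopted, the two previous lemmas combine naturally through Cartan's formula, and all regularity prerequisites are already supplied by Corollary \ref{cor.e} and the normally analytic calculus developed earlier.
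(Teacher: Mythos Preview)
Your proof is correct. For \eqref{comp.1} you argue exactly as the paper does (the vanishing of $\hat\omega_1$ and $\omega_0$ on $T_v\Fz\times T_v\Fz$ is precisely \eqref{omega1X}, proved inside Lemma~\ref{tv}).

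For \eqref{comp.2} you take a genuinely different route. The paper observes that $d\bigl(M(\alpha_1-\alpha_0)\bigr)=M(\hat\omega_1-\omega_0)=0$, invokes the Poincar\'e lemma on the ball to write $M(\alpha_1-\alpha_0)=dg$, and then uses $M^2=M$ to conclude $g=Mg$, i.e.\ $g$ is rotation-invariant, whence $dg(X^0_{I_l})=0$. You instead work directly with the scalar functions $g_l=M\alpha_W(X^0_{I_l})$: via Cartan's formula you compute $dg_l$ and show both contributions vanish after averaging (the Lie-derivative term by periodicity, the curvature term by $M\hat\omega_1=\omega_0$ and the equivariance of $X^0_{I_l}$), then conclude $g_l\equiv 0$ from $g_l(0)=0$. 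Your approach trades the Poincar\'e lemma for Cartan's formula, which is arguably cleaner here since it sidesteps any discussion of exactness in infinite dimensions; the paper's approach, on the other hand, exhibits the single potential $g$ governing all the $g_l$ simultaneously. Both are short and rest on the same key input $M\hat\omega_1=\omega_0$.
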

\begin{proof}
Condition \eqref{comp.1} follows by equation 
\eqref{omega1X}, since
$$d(\alpha_1-\alpha_0)(X^0_{I_l},X^0_{I_m})=\hat\omega_1(X^0_{I_l},X^0_{I_m})-
\omega_0(X^0_{I_l},X^0_{I_m})=0, \qquad \forall l, m \geq 1.$$ 
We analyze now \eqref{comp.2}. We claim that in order to fulfill this condition, one must have that $\hat\omega_1$ satisfies 
 $M\hat\omega_1=\omega_0$, which holds by  Lemma \ref{map.phi1} $(ii)$. Indeed, since
$$0=M\hat\omega_1-\omega_0=M(\hat\omega_1-\omega_0)=Md(\alpha_1-\alpha_0)=dM(\alpha_1-\alpha_0),$$
 there exists a function $g$ such that $M(\alpha_1-\alpha_0)=dg$. But 
$Mdg=M(M(\alpha_1-\alpha_0))=M(\alpha_1-\alpha_0)=dg,$
therefore $g=Mg$, so $g$ is invariant by rotations. Hence $0=\derz{t}g(\phi_l^t)=dg(X^0_{I_l})=M(\alpha_1-\alpha_0)(X^0_{I_l}),$ $\, \forall l \geq 1,$ thus also 
\eqref{comp.2} is satisfied.
\end{proof}
We show now that the system  \eqref{fi.3} can be solved and its solution has good analytic properties:

\begin{lem}
\label{moser}
(Moser) If conditions \eqref{comp.1} and \eqref{comp.2} are fulfilled, then equation \eqref{fi.3} has a solution $f$. Moreover, denoting $h_j:=(\alpha_1-\alpha_0)(X^0_{I_j})$, the solution $f$  is given by the explicit formula
\begin{equation}
f(v)=\sum_{j=1}^\infty f_j(v) ,  \qquad f_j(v)=M_1\cdots
M_{j-1}L_jh_j
\label{f.formula}
\end{equation}
where
$$L_jg=\frac{1}{2\pi}\int_0^{2\pi}{tg(\phi^t_j)dt}\ .$$
Finally $f \in \Nc_{\mu^7\rho}(\spazior{w^1}, \C)$, $\nabla f\in \Nc_{\mu^7\rho}(\spazior{w^1}, \spazior{w^2})$ and
\begin{equation}
\label{sti.nabla.f}
\mmod{\und{f}}_{\mu^7\rho} \leq 2^{10}  \epsilon_1 \mu^7 \rho, \quad 
\mmod{\mod{\nabla f}}_{\mu^7\rho}\leq 2^{11} \epsilon_1\ .
\end{equation}
\end{lem}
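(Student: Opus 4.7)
Following Moser's averaging technique, the plan is to solve the coupled transport system \eqref{fi.3} one equation at a time, averaging along the rotations $\phi^t_l$ already treated. Three operator identities drive the argument. First, $M_m$ commutes with $X^0_{I_l}$ when $l \neq m$ and $X^0_{I_l}(M_l g) = 0$, because the flows $\phi_l^t,\phi_m^s$ commute and averaging produces $\phi_l^t$-invariants. Second, integration by parts in $\frac{1}{2\pi}\int_0^{2\pi} t\, g(\phi_j^t v)\, dt$ (using $g\circ\phi_j^{2\pi}=g$) yields $X^0_{I_j}(L_j g) = g - M_j g$. Third, the compatibility \eqref{comp.1} together with $[X^0_{I_l}, X^0_{I_m}] = 0$ gives $X^0_{I_l}(h_m) = X^0_{I_m}(h_l)$, while \eqref{comp.2} gives $M h_l = 0$ for every $l$.

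With these tools, I would compute $X^0_{I_l}(f_j)$ case by case: for $j>l$ it vanishes, since $M_l$ appears among the outer averages; for $j=l$ it equals $M_1\cdots M_{l-1}(h_l - M_l h_l)$; for $j<l$, swapping $X^0_{I_l}$ with $X^0_{I_j}$ inside $L_j$ via the compatibility, it equals $M_1\cdots M_{j-1}(h_l - M_j h_l)$. The resulting sum telescopes to
$$X^0_{I_l}\Bigl(\sum_j f_j\Bigr) = h_l - \lim_{j\to\infty} M_1 \cdots M_j h_l,$$
with the limit vanishing by \eqref{comp.2}. The cleanest justification, which also handles the infinite-dimensional subtleties, is a Fourier expansion in the angular variables $\arg v_k$: it shows $\hat f(k)$ is concentrated on the smallest index $j^*(k)$ with $k_{j^*}\neq 0$ and equals $\hat h_{j^*}(k)/(ik_{j^*})$, from which $X^0_{I_l}(f) = h_l$ is direct by the compatibility condition.

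For the quantitative bounds, I use $h_l = \langle W(v), X^0_{I_l}(v)\rangle$ with $X^0_{I_l}(v) = iv_l\be_l$, giving the pointwise estimate $|h_l(v)| \leq 2|v_l|\,|W_l(v)|$ and lifting to $\underline{h_l}(v)\leq 2|v_l|\,\underline{W_l}(v)$ at the level of modulus polynomials. Since $M_l$ and $L_j$ are integrals over rotations that preserve each $|v_k|$, they are bounded on the seminorm $|\underline{\cdot}|_\rho$ (contractions up to a factor $\pi$ for $L_j$), whence $\underline{f_j}(v)\leq 2\pi\,|v_j|\,\underline{W_j}(v)$. Summing in $j$ and applying Cauchy--Schwarz in the weighted spaces (using $w^1 \leq w^2$ and $w^1_j w^2_j \geq 1$, both satisfied in the Toda application),
$$\underline f(v) \leq 2\pi\sum_j |v_j|\,\underline{W_j}(v) \leq 2\pi\,\|v\|_{w^1}\,\|\underline W(v)\|_{w^2} \leq 2\pi\,\mu^7\rho\cdot\|W\|_{\A_{w^1,\mu^7\rho}^{w^2}} \leq 2\pi\cdot 2^7\,\epsilon_1\,\mu^7\rho,$$
giving $|\underline f|_{\mu^7\rho}\leq 2^{10}\epsilon_1\,\mu^7\rho$. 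The gradient estimate is analogous: differentiating the explicit formula for $f_j$ (or noting that $\partial_l f = h_l$ along $X^0_{I_l} = iv_l\be_l$ forces $(\nabla f)_l \sim W_l$ up to conjugation), one has $\|\underline{\nabla f}(v)\|_{w^2}\lesssim \|\underline W(v)\|_{w^2}$, hence $|\underline{\nabla f}|_{\mu^7\rho}\leq 2^{11}\epsilon_1$.

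The main technical obstacle I anticipate is the infinite-dimensional telescoping: justifying the convergence of $\sum_j f_j$ in the normally analytic norm and the vanishing of $\lim_j M_1\cdots M_j h_l$. Both are handled by Fourier decomposition in the angular variables together with the $\ell^2_{w^2}$-control on $W$ inherited from $W \in \A_{w^1,\mu^7\rho}^{w^2}$, but the bookkeeping in the weighted modulus norms will require care.
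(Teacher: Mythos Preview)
Your algebraic solution of \eqref{fi.3} is essentially the paper's: same operator identities, same case analysis on $X^0_{I_l}(f_j)$, same telescoping. One minor correction: since $X^0_{I_l}(f_j)=0$ for all $j>l$, the telescoping sum is \emph{finite}, ending at $h_l - M_1\cdots M_l h_l$; no limit $j\to\infty$ enters, and the last term vanishes because the paper shows $M_j h_j$ is independent of all the angles, hence $M_j h_j=M h_j=0$ by \eqref{comp.2}. Your Fourier-in-angles alternative is a legitimate heuristic but is not what the paper does and is not needed here.

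The genuine gap is in your gradient estimate. The claim that ``$(\nabla f)_l \sim W_l$ up to conjugation'' is not justified: the relation $X^0_{I_l}(f)=h_l$ only constrains the pairing $\langle\nabla f(v),\, iv_l\be_l\rangle$, not the component $(\nabla f)_l$ on its own. Differentiating $h_j(v)=\mathrm{Re}\bigl(iW_j(v)\bar v_j\bigr)$ in the $k$-th direction produces two contributions: a diagonal piece $\delta_{jk}\,\mathrm{Re}(iW_j)$, which after summation in $j$ is indeed controlled by $\|\underline W\|_{w^2}$ as you say; \emph{and} a piece proportional to $(\partial_{v_k}W_j)\,\bar v_j$. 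Summing this second piece over $j$ (not $k$) yields essentially $[dW(v)^* v]_k$, and bounding its $w^2$-norm requires control of $\underline{dW^*}$---not of $\underline W$ or $\underline{dW}$. The paper treats these two pieces separately as $(\nabla f)^{(2)}$ and $(\nabla f)^{(1)}$ and explicitly flags the latter as ``the step at which the control of the norm of the modulus $\underline{dW^*}$ of $dW^*$ is needed.'' This is exactly why the class $\A_{w^1,\rho}^{w^2}$ carries the extra datum $|\underline{dF^*}|_\rho$ in its norm \eqref{nor.arho}; without invoking it your inequality $\|\underline{\nabla f}(v)\|_{w^2}\lesssim\|\underline W(v)\|_{w^2}$ does not follow.
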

\begin{proof}
Denote by $\theta_j$ the time along the flow generated by $X^0_{I_j}$,
then one has
$dg(X^0_{I_j}) = \frac{\partial g}{\partial \theta_j}\ ,$
so that the
equations to be solved take the form 
\begin{equation}
\label{eqf}
\frac{\partial f}{\partial \theta_j}=h_j, \qquad \forall j\geq 1.
\end{equation}
Clearly $\frac{\partial}{\partial\theta_j}M_jh_j=0 $, and by \eqref{comp.1} it follows that 
$$ 
\frac{\partial}{\partial\theta_l}M_jh_j = M_j \frac{\partial h_j}{\partial\theta_l} 
=M_j \frac{\partial h_l}{\partial\theta_j} = \frac{\partial}{\partial\theta_j}M_jh_l =0, \qquad \forall l, j \geq 1,$$
which shows that $M_jh_j$ is independent of all the $\theta$'s, thus $M_j h_j = M h_j$. Furthermore,  by \eqref{comp.2} one has $ M h_j = 0, \,$ $\forall \, j \geq 1$.
Now, using that $\frac{\partial}{\partial\theta_j}L_j g = g - M_j g $,  one verifies that  $f_j$ defined in \eqref{f.formula} satisfies
\begin{equation*}
\frac{\partial f_j}{\partial \theta_l}=\left\{
\begin{matrix}
0 & \text{if}\ l<j
\\
M_1\cdots M_{j-1}h_j & \text{if}\ l=j
\\
M_1 \cdots M_{j-1}h_l-M_1 \cdots M_jh_l  & \text{if}\ l>j 
\end{matrix}
   \right.
\end{equation*}
where, for $j=1$, we defined $M_1 \cdots M_{j-1} h_l = h_l$. Thus the series $f(v):=\sum_{j \geq 1} f_j(v)$, if convergent,  satisfies \eqref{eqf}. \\
We prove now the convergence of the series for $f$ and $\nabla f$.  First we define, for $\theta \in \Tc $,
$$\Theta_j^\theta:=\phi_1^{\theta_1} \cdots \phi_j^{\theta_j} \qquad \forall\, j \geq 1 \ ,$$
 then by \eqref{f.formula} one has 
\begin{align}
\label{f.g1}
f_j(v)=\int_{\Tc^j}\theta_jh_j(\Theta_j^\theta v)\;d\theta^j \ , 
\\
\label{f.g2}
\nabla f_j(v)=\int_{\Tc^j}\Theta_j^{-\theta} \theta_j\nabla
h_j(\Theta_j^\theta v)\;d\theta^j\ ,
\end{align}
where $\Tc^j$ is the $j$-dimensional torus and $d\theta^j = \frac{d\theta_1}{2\pi}\cdots \frac{d\theta_j}{2\pi}$.
Now, using that 
$$h_j(v)=\langle W(v), X^0_{I_j}(v) \rangle = Re(\im W_j(v)\bar v_j) \qquad \forall \, j \geq 1
$$ 
 one gets that
$
\und{f_j}(|v|) \leq 2\pi\, \underline{h_j}(|v|) \leq 2\pi\, \underline{W_j}(|v|) |v_j| 
$,
therefore
$
\und{f}(|v|) \leq \sum_{j=1}^\infty \und{f_j}(|v|) \leq 2\pi \, \norm{\und{W}(|v|)}_{w^1}  \norm{v}_{w^1}$ and it follows that $\mmod{\und{f}}_{\mu^7\rho} \leq 2\pi\, \mmod{\und{W}}_{\mu^7\rho} \mu^7 \rho.$
This  proves the convergence of the series defining  $f$.\\
Consider now the gradient of $h_j$, whose $k^{th}$ component is given by
$$
\left[\nabla h_j(v)\right]_k=Re\left(\im \frac{\partial W_j(v)}{\partial
  v_k}\bar v_j \right)+ \delta_{j,k} \, Re\, (\im W_j(v)) \ .
$$
Inserting the formula displayed above in \eqref{f.g2} we get that $\nabla f_j$ is the sum of two
terms. We begin by estimating the second one, which we denote by $(\nabla
f_j)^{(2)}$. 
The $k^{th}$ component of 
$(\nabla f)^{(2)} := \sum_j (\nabla f_j )^{(2)}$ is given by
\begin{equation}
\label{fg.3}
\left[\left( \nabla f(v)\right)^{(2)} \right]_k=\left[\sum_j(\nabla f_j(v))^{(2)}\right]_k = \int_{\Tc^k}\Theta_k^{-\theta}
\theta_k \, Re\, (\im W_k(\Theta_k^\theta v))\; d\theta^k\ ,
\end{equation}
thus, for any $v \in \Br^{{w^1}}(\mu^7\rho)$ one has
$\left[(\und{\nabla f}(|v|))^{(2)}\right]_k  \leq 2\pi\,
\mod{W_k}(|v|)\ ,$
and therefore  
$$
\mmod{\mod{\left(\nabla f\right)^{(2)}}}_{\mu^7 \rho}\leq 2\pi\, \mmod{ \mod{W}}_{\mu^7\rho}\leq \pi 2^8 \epsilon_1.
$$
We come to the other term, which we denote by $\left(\nabla f_j \right)^{(1)}$. Its $k^{th}$ component is given by
\begin{equation}
\label{fg.4}
 \left[ (\nabla f_j(v))^{(1)}\right]_k=\int_{\Tc^j}\Theta_j^{-\theta}
\theta_j Re\left(\im \frac{\partial W_j}{\partial v_k}(\Theta_j^\theta
v) \overline{\phi^{\theta_j}_jv_j}\right)   d\theta\ .
\end{equation}
Then 
$\und{\nabla f_j}(|v|) \leq 2\pi \und{\frac{\partial W_j}{\partial v_k}}(|v|) |v_j| = 2\pi  [\und{ d W}(|v|)]^j_k|v_j|.$\\
 It follows that the $k^{th}$ component of the function $(\nabla f)^{(1)} := \sum_j (\nabla f_j)^{(1)}$ satisfies
$$\left[ (\und{\nabla f}(|v|))^{(1)}\right]_k \leq \left[
   \sum_{j}(\und{\nabla f_j}(|v|))^{(1)} \right]_k \leq 2 \pi \,
 \sum_j [\und{d W}(|v|)]^j_k|v_j| \ .$$ Therefore $ \mmod{\und{
     (\nabla f)^{(1)}}}_{\mu^7\rho}\leq 2\pi\, \norm{W}_{\A_{w^1,
     \mu^7\rho}^{w^2}} \leq \pi 2^8 \epsilon_1$. This is the step at
 which the control of the norm of the modulus $\und{dW^*}$ of $dW^*$ is
 needed.  Thus the claimed estimate for $\nabla f$ follows.
\end{proof}

We can finally apply the Darboux procedure in order to construct an
analytic change of coordinates $\varphi $ which satisfies \eqref{fi.1}
and \eqref{fi.2}.
\begin{lem}
\label{varphi.1}
There exists a map $\varphi: \Br^{{w^1}}(\mu^9\rho) \to \spazior{w^1}$ which satisfies 
\eqref{fi.1}. Moreover $\varphi - \uno \in \Nc_{\mu^9\rho}(\spazior{w^1}, \, \spazior{w^2})$, $\varphi-\uno = O(v^2)$ and  
\begin{equation}
\label{eq.varphi.1}
\mmod{\underline{\varphi-\uno}}_{\mu^9\rho}\leq
2^{14} \epsilon_1\ .
\end{equation}
\end{lem}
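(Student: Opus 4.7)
The plan is to carry out the Darboux/Moser scheme described at the beginning of the subsection, using precisely the function $f$ constructed in Lemma \ref{moser} to guarantee that the construction simultaneously yields \eqref{fi.1} (and, as a by-product, \eqref{fi.2}). Take $\Omega_0=\omega_0$, $\Omega_1=\hat{\omega}_1$, and interpolate $\omega^t=\omega_0+t(\hat{\omega}_1-\omega_0)$, whose potentials differ by $\alpha_W$ by Corollary \ref{cor.e}. The defining equation \eqref{darboux.eq} becomes $Y^t\lrcorner\omega^t = df-\alpha_W$, which in the operator representation reads
\begin{equation*}
(-\im + t\,\Upsilon_{\hat{\omega}_1})\,Y^t \;=\; \nabla f - W.
\end{equation*}
The relevance of the specific $f$ from Lemma \ref{moser} is that $df(X^0_{I_l})=\alpha_W(X^0_{I_l})$ for every $l\geq 1$, so $\omega^t(Y^t,X^0_{I_l})=0$; by Lemma \ref{tv} this implies $Y^t\in T\Fz$, and hence the resulting flow will preserve every leaf of $\Fz$, yielding \eqref{fi.2}.

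Next I would estimate $Y^t$. By Corollary \ref{cor.e}(i) the operator $\Upsilon_{\hat{\omega}_1}$ belongs to $\Nc_{\mu^5\rho}$ with norm of order $\epsilon_1/(\mu\rho)$; shrinking the domain to a ball of radius slightly less than $\mu^7\rho$ and using the smallness hypothesis $\epsilon_1<2^{-34}\rho$, the operator $t\,\Upsilon_{\hat{\omega}_1}$ has norm below $1/2$, so $(-\im+t\Upsilon_{\hat{\omega}_1})^{-1}$ is given by a convergent Neumann series and contributes only a factor $2$. Combined with the bound $\|W\|_{\A_{w^1,\mu^7\rho}^{w^2}}\leq 2^7\epsilon_1$ from Corollary \ref{cor.e}(ii) and the bound $\bigl|\underline{|\nabla f|}\bigr|_{\mu^7\rho}\leq 2^{11}\epsilon_1$ from Lemma \ref{moser}, one obtains $Y^t$ of class $\Nc_{\mu^8\rho}(\spazior{w^1},\spazior{w^2})$ with $\sup_{t\in[0,1]}\bigl|\underline{Y^t}\bigr|_{\mu^8\rho}\leq 2^{13}\epsilon_1$. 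Note that because $\nabla f$ is only controlled in the $\Nc$-norm and not in the $\A$-norm, $Y^t$ is a normally analytic vector field but not in class $\A$.

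Finally, I would apply the flow-generation lemma from the appendix (the $\Nc$-version of Lemma \ref{flussoinA}, needed precisely because $Y^t\notin\A$) to produce, on the further restricted ball $\Br^{w^1}(\mu^9\rho)$, a time-$1$ flow $\varphi:=\varphi^1$ with $\varphi-\uno\in\Nc_{\mu^9\rho}(\spazior{w^1},\spazior{w^2})$ and $\bigl|\underline{\varphi-\uno}\bigr|_{\mu^9\rho}\leq 2\sup_{t\in[0,1]}\bigl|\underline{Y^t}\bigr|_{\mu^8\rho}\leq 2^{14}\epsilon_1$, which is exactly \eqref{eq.varphi.1}. Because $Y^t(0)=0$ (both $W$ and $\nabla f$ vanish to second order at the origin), $\varphi-\uno=O(v^2)$. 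The identity $\varphi^*\hat{\omega}_1=\omega_0$ then follows by retracing the standard Darboux computation $\tfrac{d}{dt}(\varphi^t)^*\omega^t=(\varphi^t)^*(d(Y^t\lrcorner\omega^t)+(\hat{\omega}_1-\omega_0))=0$. The principal technical obstacle, as anticipated above, is the loss of the $\A$-class structure when moving from $W\in\A$ to the gradient $\nabla f$, which forces one to work with the weaker $\Nc$-norm throughout the flow construction and is exactly the reason the conclusion places $\varphi-\uno$ in $\Nc$ rather than $\A$.
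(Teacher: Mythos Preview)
Your argument is correct and follows the paper's proof essentially line by line: the same Darboux interpolation between $\omega_0$ and $\hat\omega_1$, the same formula $Y^t=(-\im+t\Upsilon_{\hat\omega_1})^{-1}(\nabla f-W)$, the same estimate $\sup_t|\underline{Y^t}|_{\mu^8\rho}\leq 2^{13}\epsilon_1$ via Corollary~\ref{cor.e} and Lemma~\ref{moser}, and the same appeal to the flow lemma to get $|\underline{\varphi-\uno}|_{\mu^9\rho}\leq 2^{14}\epsilon_1$. Your additional remarks on why only the $\Nc$-structure (not $\A$) survives, and on the tangency $Y^t\in T\Fz$, are accurate; the paper records the latter separately as Lemma~\ref{lem.f}.
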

\begin{proof}
As anticipated just after Corollary \ref{cor.psi.1}, we apply the Darboux procedure with $\Omega_0 = \omega_0$, $\Omega_1 = \hat{\omega}_1$ and $f$ solution of \eqref{fi.3}. Then equation \eqref{darboux.eq} takes the form
\begin{equation}
\label{Yt}
Y^t=( -\im +t\Upsilon_{\hat{\omega}_1})^{-1}(\nabla f-W),
\end{equation}
where $\Upsilon_{\hat{\omega}_1}$ and $W$ are defined in Corollary \ref{cor.e}.  
By Lemma \ref{moser} and Corollary \ref{cor.e}, the vector field $Y^t$ is of class $\Nc_{\mu^8\rho}(\spazior{w^1}, \spazior{w^2})$ 
 and
$$
\sup_{t \in [0,1]}\mmod{\und{Y^t}}_{\mu^8\rho} < 2(2^{11}\epsilon_1 + 2^7\epsilon_1) < 2^{13} \epsilon_1.
$$
Thus  $Y^t$ generates a flow $\varphi^t: \Br^{{w^1}}(\mu^9\rho)\to \spazior{w^1}$, defined for every $t \in [0,1]$,  which satisfies (cf. Lemma \ref{flussoinA})
$$
\mmod{\und{\varphi^t-\uno}}_{\mu^9\rho}\leq
2^{14} \epsilon_1, \quad \forall t \in [0,1]\ .
$$
Thus the map $\varphi: = \left.\varphi^t\right|_{t=1}$ exists and satisfies the claimed properties.
\end{proof}

We prove now that the map $\varphi$ of Lemma \ref{varphi.1} satisfies also equation \eqref{fi.2}. 

\begin{lem}\label{lem.f} Let  $f$ be as in \eqref{f.formula} and $\varphi^t$ be the flow map of the vector field $Y^t$ defined in \eqref{Yt}. Then $\forall \, l \geq 1$ one has $I_l(\varphi^t(v))=I_l(v)$, for each $t \in [0,1]$.
\end{lem}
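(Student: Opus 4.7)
The plan is to compute $\tfrac{d}{dt} I_l(\varphi^t(v))$ and show it vanishes pointwise. Since $\varphi^t$ is the flow of $Y^t$, the chain rule gives $\tfrac{d}{dt} I_l(\varphi^t(v)) = dI_l(\varphi^t(v))\, Y^t(\varphi^t(v))$, so it is enough to prove $dI_l(w)\, Y^t(w) = 0$ for every $w$ in the ball where the flow lives, every $t \in [0,1]$, and every $l \geq 1$. Equivalently, using the isomorphism $\nabla I_l \leftrightarrow X^0_{I_l}$, I need to show that $Y^t(w)$ lies in the tangent space $T_w\Fz$ to the foliation by the level sets of the $I_j$'s.

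The algebraic step comes straight from the Darboux identity \eqref{darboux.eq} that defined $Y^t$. Taking $\Omega_0 = \omega_0$, $\Omega_1 = \hat\omega_1$, $\alpha_1-\alpha_0 = \alpha_W$, and the particular $f$ produced by Lemma \ref{moser}, \eqref{darboux.eq} reads $\omega^t(Y^t,\cdot) = df - \alpha_W$. Evaluating this one-form identity on the vector $X^0_{I_l}$ yields
\begin{equation*}
\omega^t(v)\bigl(Y^t(v), X^0_{I_l}(v)\bigr) = df(v)(X^0_{I_l}(v)) - \alpha_W(v)(X^0_{I_l}(v)) = h_l(v) - h_l(v) = 0,
\end{equation*}
where the first term equals $h_l$ precisely because $f$ was constructed to solve \eqref{fi.3}, and the second equals $h_l$ by definition of $h_l = (\alpha_1-\alpha_0)(X^0_{I_l})$. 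Hence $Y^t(v)$ is $\omega^t$-orthogonal to every $X^0_{I_l}(v)$.

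To convert this into $Y^t(v) \in T_v\Fz$, I would invoke Lemma \ref{tv}. The vectors $X^0_{I_l}(v) = \im v_l \be_l$ all lie in $T_v\Fz$ (since $\{I_l,I_m\}_{\omega_0}=0$ gives $dI_m(X^0_{I_l}) = 0$), and their closed real span exhausts $T_v\Fz$ (at degenerate points where some $v_l = 0$ the missing generators vanish anyway, which is consistent with $T_v\Fz$ losing the corresponding direction). Since the symplectic orthogonal reverses inclusions, $Y^t(v)$ being $\omega^t$-orthogonal to each generator $X^0_{I_l}(v)$ is equivalent to $Y^t(v) \in (T_v\Fz)^{\angle_t}$. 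Lemma \ref{tv}, which applies on $\Br^{w^1}(\mu^5\rho) \supset \Br^{w^1}(\mu^9\rho)$, then gives $(T_v\Fz)^{\angle_t} = T_v\Fz$, so $Y^t(v) \in T_v\Fz = \bigcap_l \ker dI_l(v)$ and the ODE $\tfrac{d}{dt} I_l(\varphi^t(v)) = 0$ follows.

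The only delicate point is the identification of $T_v\Fz$ with the span of the Hamiltonian generators $X^0_{I_l}(v)$ — i.e., ensuring that the symplectic orthogonality obtained on generators really transfers to the full tangent space. This is where Lemma \ref{tv} is indispensable: without the coincidence $T_v\Fz = (T_v\Fz)^{\angle_t}$, one would only know $Y^t$ is symplectically orthogonal to $T_v\Fz$, not that it lies in $T_v\Fz$. Everything else is a routine chain-rule argument.
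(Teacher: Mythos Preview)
Your proof is correct and follows essentially the same approach as the paper: both reduce the invariance of $I_l$ along the flow to the condition $\omega^t(Y^t, X^0_{I_l})=0$, obtain this from the Darboux identity \eqref{darboux.eq} together with the fact that $f$ solves \eqref{fi.3}, and then invoke Lemma \ref{tv} to pass from $(T_v\Fz)^{\angle_t}$ to $T_v\Fz$. The paper packages the argument as a chain of equivalences while you unfold it in one direction, but the content is identical.
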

\begin{proof} The following chain of equivalences follows from Lemma \ref{tv} and the Darboux equation \eqref{darboux.eq}:
\begin{align*}
I_l(\varphi^t(v))=I_l(v) & \iff 0=\frac{d}{dt}I_l(\varphi^t(v))=dI_l(Y^t(v)) \iff
Y^t(v) \in T_v\Fz \\
& \iff  Y^t(v) \in (T_v\Fz)^{\angle_t} 
\iff \left( \omega^t_v(Y^t(v),X^0_{I_l}(v))=0\ ,\ \forall l \geq 1\right) \\
& \iff \alpha_1(X^0_{I_l}) - \alpha_0(X^0_{I_l}) = df(X^0_{I_l}) \, \quad \forall l \geq 1 \ .
\end{align*} 
In turn the last property follows since $f$ is a solution of \eqref{fi.3}.
\end{proof}

\vspace{1em}
We can finally prove the quantitative version of the Kuksin-Perelman Theorem.
\vspace{1em}\\
\noindent{\em Proof of Theorem \ref{KP}.} 
Consider the map $\varphi$ of Lemma \ref{varphi.1}. Since $d\varphi(0) = \uno$, $\varphi$ is invertible in $\Br^{{w^1}}(\mu^{10}\rho)$ and $\varphi^{-1}= \uno + g_1$ with $g_1 \in \Nc_{\mu^{10}\rho}(\spazior{w^1}, \spazior{w^2})$ and $ \mmod{\und{g_1}}_{\mu^{10} \rho} \leq 2\mmod{\und{\varphi- \uno}}_{\mu^9 \rho} \leq 2^{15}\epsilon_1 $ (cf. Lemma \ref{Ginverso}).
Define now
$$\widetilde{\Psi}:= \varphi^{-1} \circ \check \Psi.$$
It's easy to check  that $\widetilde{\Psi}^* \omega_0 = \omega_0$, thus proving that $\widetilde{\Psi}$ is symplectic.
By equation \eqref{fi.2} one has   $I_l(\widetilde{\Psi}(v)) = I_l(\check{\Psi}(v))$ for every $l \geq 1$, therefore $\widetilde{\Psi}$ and $\check{\Psi}$ define the same foliation, which coincides also with the foliation defined by $\Psi$, c.f.   Corollary \ref{cor.psi.1}. Similarly one proves that the functionals  $\left\{ \frac{1}{2}\mmod{\widetilde{\Psi_j}(v)} \right\}_{j \geq 1}$ pairwise commute with respect to the symplectic form $\omega_0$. We have thus proved item $i) - iii)$ of Theorem \ref{KP}. 
\newline
We prove now item $iv)$. Clearly $d\widetilde{\Psi}(0) = \uno$, and
$\widetilde{\Psi}^0 := \widetilde{\Psi} - \uno = \check{\Psi}^0 + g_1\circ(\uno + \check{\Psi}^0)$ is of class $\Nc_{\mu^{11}\rho}(\spazior{w^1}, \spazior{w^2})$. Moreover, by Remark \ref{rem:norm.in.A} and Corollary \ref{cor.psi.1} $(i)$, one has $\mmod{\und{\check{\Psi}^0}}_{\mu^{11}\rho}
\leq 2\mu^{6} \mmod{\und{\check{\Psi}^0}}_{\mu^{8}\rho} \leq \mu^6 2^9 \epsilon_1 \leq \mu^{11} \rho$ by condition \eqref{th.1}. Thus $\mmod{\und{\uno + \check{\Psi}^0}}_{\mu^{11}\rho} \leq \mu^{10}\rho$ and by Lemma \ref{FGinN}
\begin{align*}
\mmod{\und{\widetilde{\Psi}_0}}_{\mu^{11}\rho} & \leq \mmod{\und{\check{\Psi}^0}}_{\mu^{11}\rho} + \mmod{\und{g_1\circ (\uno + \check{\Psi}^0)}}_{\mu^{11}\rho} \leq  \mmod{\und{\check{\Psi}^0}}_{\mu^{11}\rho} + \mmod{\und{g_1}}_{\mu^{10}\rho} \leq 2^8\epsilon_1 + 2^{15}\epsilon_1 \leq 2^{16}\epsilon_1 .
\end{align*}
 We are  left to prove that $\widetilde{\Psi}^0 \in \A_{w^1, \mu^{12} \rho}^{w^2}$.
 Since $\widetilde{\Psi}^* \omega_0 = \omega_0$, one has $d\widetilde{\Psi}(v)^* (-\im) \, \widetilde{\Psi}(v) = -\im$, from which it follows that  $\widetilde{\Psi}^0$ satisfies 
$$
d\widetilde{\Psi}^0(v)^* = \im \, d\widetilde{\Psi}^0(v) \left(\uno+ d\widetilde{\Psi}^0(v) \right)^{-1} \im
$$
and therefore $\widetilde{\Psi}^0 \in \A_{w^1, \mu^{12} \rho}^{w^2}$ with $\norm{\widetilde{\Psi}^0}_{\A_{w^1, \mu^{12} \rho}^{w^2}} < 2^{17} \epsilon_1$.
\qed


\section{Toda lattice}

\subsection{Proof of Theorem \ref{main} and  Corollary
  \ref{corkuksinperelman}.} 

We consider the Toda lattice with $N$ particles and periodic boundary
conditions on the positions $q$ and momenta $p$: $q_{j+N}= q_j\, ,$
$p_{j+N}= p_j$, $ \forall \,j \in \mathbb{Z}$.  As anticipated in
Section \ref{section1}, we restrict to the invariant subspace
characterized by \eqref{media}.  The phase space of the system is
$\spazio{\reg}$, where $s\geq 0, $ $\sigma \geq 0$ and it is defined
in terms of the linear, complex, Birkhoff variables $(\xi, \eta)$
(defined in \eqref{xi_variable}). We endow the phase space with the
symplectic form \footnote{so that the Hamilton equations become
\begin{equation}
\label{Omega0} 
\dot{\xi}_k = \im \frac{\partial H}{\partial \eta_k}, \qquad \dot{\eta}_k = -\im \frac{\partial H}{\partial  \xi_k} \ , 
\end{equation}
} $ \Omega_0 = -\im \sum_{k=1}^{N-1} d\xi_k \wedge d\eta_k$.

We will denote by $\spazior{\reg}$
the real subspace of $\spazio{\reg}$ in which $\eta_k = \bar \xi_k\,$
$\forall 1 \leq k \leq N-1$, endowed with the norm \eqref{nor.bir},
and by $\Br^{{\reg}}(\rho)$ the ball in $\spazior{\reg}$ with
center $0$ and radius $\rho >0$. 
The main step of the proof of Theorem \ref{main} is the construction
of the functions $\{ \Psi_j \}_{1 \leq j \leq N-1}$. This is based on
a detailed analysis of the spectrum of the Jacobi matrix appearing in
the Lax pair representation of the Toda lattice. So we start by
recalling the elements of the theory needed for our development.
Introduce the translated Flaschka coordinates \cite{flaschka} by
\begin{equation}
(b, a) = \Theta(p, q), \qquad (b_j, \; a_j) := (-p_j,\, e^{\frac{1}{2}(q_j-q_{j+1})}-1).
\label{bavariable}
\end{equation}
The translation of the $a$ variables by $1$ is useful in order to keep the equilibrium point at $(b,a) = (0,0)$. Recall that the variables $b, a$ are constrained by the conditions 
$$ \sum_{j=0}^{N-1}{b_j}=0, \, \prod_{j=0}^{N-1}{(1+a_j)}=1 \ .$$
Introduce Fourier variables $(\hat{b}, \hat{a}) $ for the Flaschka coordinates by \eqref{fou}. In these variables 
\begin{equation}
\label{Ek.ba}
E_k = \frac{|\hat{b}_k|^2 + 4|\hat{a}_k|^2}{2} + O(\hat{a}^3), \qquad 1 \leq k \leq N-1 \ .
\end{equation}
The  Jacobi matrix  whose spectrum forms a complete set of integrals of motions for the Toda lattice is given by \cite{moerbeke}
\begin{equation} \label{jacobi}
L(b,a) := \left( \begin{array}{ccccc}
b_{0} & 1+a_{0} & 0 & \ldots &  1+a_{N-1} \\
1+a_{0} & b_{1} & 1+a_{1} & \ddots & \vdots \\
0 & 1+a_{1} & b_{2} & \ddots & 0 \\
\vdots & \ddots & \ddots & \ddots & 1+a_{N-2} \\
 1+a_{N-1} & \ldots & 0 & 1+a_{N-2} & b_{N-1} \\
\end{array} \right) .
\end{equation}
It is useful to  double the size of $L(b,a)$, redefining
\small
\begin{equation}
L_{b,a} := \left( 
\begin{array}{cccc|cccc}
b_{0} & 1+a_{0} & \ldots & 0 & 0 & \ldots & 0 & 1+a_{N-1} \\
1+a_{0} & b_{1} & \ddots & \vdots & 0 & \ldots & & 0 \\
\vdots & \ddots & \ddots & 1+a_{N-2} & \vdots & & & \vdots \\
0 & \ddots & \;\; 1+a_{N-2} & b_{N-1} & 1+a_{N-1} & \ldots & 0 & 0 \\
\hline
0 & \ldots & 0 & 1+a_{N-1} & b_{0} & 1+a_{0} & \ldots & 0 \\
0 & \ldots & & 0 & 1+a_{0} & b_{1} & \ddots & \vdots \\
\vdots & & & \vdots & \vdots & \ddots & \ddots & 1+a_{N-2} \\
1+a_{N-1} & \ldots & 0 & 0 & 0 & \ddots & \;\; 1+a_{N-2} & b_{N-1} \\
\end{array} 
\right).
\label{Lperturbata}
\end{equation}
\normalsize
 Consider the eigenvalues of $L_{b,a}$ and  order them in the non-decreasing sequence 
$$\lambda_0(b,a)< \lambda_1(b,a) \leq \lambda_2(b,a)<\ldots<\lambda_{2N-3}(b,a) \leq \lambda_{2N-2}(b,a)< \lambda_{2N-1}(b,a)$$
where one has that where the sign $\leq$ appears equality is possible, while it is impossible in the correspondence of a sign $<$. Define the quantities
\begin{equation}
\gamma_j(b,a):=\lambda_{2j}(b,a)-\lambda_{2j-1}(b,a), \qquad 1\leq j \leq N-1;
\label{gap}
\end{equation}
$\gamma_j(b,a)$ is called $j^{th}$ \textit{spectral gap}. The quantities $\{\gamma_j^2\}_{1 \leq j \leq N-1}$ form a complete set of commuting integrals of motions, which are regular also at $(b,a)=(0,0)$. Furthermore one has $H(b,a) = H(\gamma_1^2(b,a), \ldots, \gamma_{N-1}^2(b,a))$ \cite{kappelerfibrationtoda}. 
 A spectral gap  is said to be  \textit{closed} if $\gamma_j(b,a)=0$.

The following Theorem \ref{main2} ensures that the assumptions of
Theorem \ref{KP} are fulfilled by the Toda lattice.
\begin{teo}
 	\label{main2}
There exists $\epsilon_* >0$, independent of $N$, and an  analytic  map
\begin{equation}
\label{mappa.Psi} 
\Psi: \left(B^{\reg}\left(\frac{\epsilon_*}{N^2} \right), \Omega_0 \right)\to \spazio{\reg}, \quad (\xi, \eta) \mapsto (\phi(\xi, \eta), \psi(\xi, \eta))
\end{equation}
such that:
\begin{enumerate}
\item[$(\Psi 1)$] $\Psi $ is  real for real sequences, namely 
  $\overline{\phi_k(\xi, \bar \xi)}=\psi_{k}(\xi, \bar \xi)\,$ $\forall k$.
    
\item[$(\Psi 2)$] For every $1 \leq j \leq N-1$, and for
  $(\phi,\psi) \in B^{\reg}\left(\frac{\epsilon_*}{N^2} \right)\cap \spazior{\reg}$, one has
  $$\gamma_{j}^2 = \tfrac{2}{N}\om{j} \left|\psi_j\right|^2 = \tfrac{2}{N}\om{k} \left|\varphi_j\right|^2 \ .$$
  \item[$(\Psi 3)$] $\Psi(0,0)=(0,0)$ and $d\Psi(0,0)= \uno$.
  \item[$(\Psi 4)$] There exist  constants $C_1, C_2>0$, independent of $N$, such that for every $0 < \epsilon \leq \epsilon_*$, the map $\Psi^0:= \Psi- \uno \in  \Nc_{\epsilon/N^2}\left(\spazio{\reg}, \spazio{s+1, \sigma} \right)$ and     $[d\Psi^{0}]^* \in \Nc_{\epsilon/N^2}\left(\spazio{\reg},\, \L(\spazio{\reg}, \spazio{s+1, \sigma}) \right)$. Furthermore one has
  \begin{equation}
\mmod{\und{\Psi^0}}_{\epsilon/N^2}\leq C_1 \frac{\epsilon^2}{N^2}; \qquad \mmod{\und{[d\Psi^{0}]^*}}_{\epsilon/N^2} \leq C_2 \epsilon \ .
\end{equation}
 	\end{enumerate}
\end{teo}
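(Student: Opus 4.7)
The plan is to construct the functions $\{\Psi_j\}_{j=1}^{N-1}$ directly from spectral data of the periodic Jacobi matrix $L_{b,a}$ and then verify the four properties $(\Psi 1)$--$(\Psi 4)$. Since the squared gaps $\{\gamma_j^2(b,a)\}_{j=1}^{N-1}$ already form a complete set of pairwise Poisson--commuting integrals of motion that are real-analytic in a neighborhood of the equilibrium, the main analytic issue is to extract from $\gamma_j^2$ a single-valued ``square root'' $\Psi_j$ with the normalization demanded in $(\Psi 2)$ and the linear part $d\Psi(0)=\uno$ required in $(\Psi 3)$.

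First, we pass from the variables $(\xi,\eta)$ to the Flaschka variables $(b,a)$ of \eqref{bavariable} via the composition of \eqref{lin.bir.real}--\eqref{xi_variable} with \eqref{bavariable}. This composition is of class $\A$ with $N$-independent bounds because $x\mapsto e^{x/2}-1$ is analytic and its nonlinear part is $1$-smoothing on our scale: every successive Taylor monomial involves an additional factor $q_j-q_{j+1}$ whose Fourier symbol $1-e^{-2\pi\im k/N}$ is of size $\sim\om{k}$, gaining one derivative in the Sobolev--analytic norm. This is the mechanism ultimately responsible for the $\A_{w^1,\epsilon/N^2}^{w^2}$ estimate with $w^1\leq w^2$, and the $1/N^2$ scaling in $(\Psi 4)$ is consistent with the interpolation viewpoint of Remark \ref{size}. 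Second, we construct the analytic branch of the square root: using the discriminant $\Delta(\lambda;b,a)$ of the periodic problem and the standard contour-integral representation of $\gamma_j^2$ around the gap $[\lambda_{2j-1},\lambda_{2j}]$, one expands in powers of $(b,a)$ to obtain $\gamma_j^2=\tfrac{2\om{j}}{N}|\xi_j|^2+O(v^3)$ on the real subspace. This leading behavior uniquely selects the analytic branch $\Psi_j=\xi_j+O(v^2)$, automatically giving $(\Psi 3)$, while $(\Psi 1)$ follows from the reality of the periodic spectrum of $L_{b,a}$ and $(\Psi 2)$ holds by construction. The remaining Taylor coefficients of $\Psi_j$ are estimated by Cauchy bounds on the analytic functions $\tfrac{N}{2\om{j}}\gamma_j^2$, and the smoothing property is inherited from the one of $(\xi,\eta)\mapsto(b,a)$.

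The main obstacle will be the quantitative estimate for the adjoint, $\mmod{\und{[d\Psi^0]^*}}_{\epsilon/N^2}\leq C_2\epsilon$ with $C_2$ \emph{independent of} $N$. The matrix elements of $d\Psi^0(v)^*$, viewed as a map $\spazio{\reg}\to\spazio{s+1,\sigma}$, must be weighted by the asymmetric factors $[j]_N^{s+1}e^{\sigma[j]_N}/[k]_N^{s}e^{\sigma[k]_N}$, and one must show that the associated modulus series converges absolutely on $B^{\reg}(\epsilon/N^2)$. To this end we exploit the tridiagonal convolution structure of $L_{b,a}$: each monomial in the expansion of $\Psi_j$ in the variables $(\hat b,\hat a)$ corresponds to a closed path in mode space produced by the nearest-neighbor coupling, so the Fourier index $j$ is the signed sum of the indices of the $\hat b,\hat a$ factors appearing in the monomial. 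This ``momentum-conservation'' structure makes the roles of $j$ and $k$ essentially symmetric in the modulus bounds for $\partial_{v_k}\Psi_j$ and yields the same estimate for $d\Psi^0$ and for its adjoint. Once these uniform-in-$N$ bounds are established, all assumptions of Theorem \ref{KP} are met and the conclusion of Theorem \ref{main2} follows.
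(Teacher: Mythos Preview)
Your proposal has a genuine gap at the step where you ``extract from $\gamma_j^2$ a single-valued square root $\Psi_j$.'' The expansion $\gamma_j^2=\tfrac{2\om{j}}{N}|\xi_j|^2+O(v^3)$ does \emph{not} by itself imply the existence of an analytic $\Psi_j=\xi_j+O(v^2)$ with $|\Psi_j|^2=\tfrac{N}{2\om{j}}\gamma_j^2$: a real-analytic nonnegative function with a quadratic leading part need not factor as the squared modulus of a holomorphic function (think of $|\xi_j|^2+\operatorname{Re}\xi_j^3$). At the origin all gaps are closed, so the individual periodic eigenvalues $\lambda_{2j-1},\lambda_{2j}$ are not analytic there, and no ``branch'' of $\gamma_j$ can be selected simply from the leading term. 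Consequently your subsequent plan to estimate the Taylor coefficients of $\Psi_j$ by Cauchy bounds on $\tfrac{N}{2\om{j}}\gamma_j^2$ is circular: you first need an analytic object to apply Cauchy to.

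The paper avoids this entirely by \emph{not} taking a square root. It constructs $z_j(b,a)=D_j\langle (L_{b,a}-\lambda_{2j}^0)U_j(b,a)f_{2j,0},\,\overline{U_j(b,a)f_{2j,0}}\rangle$ via Kato's transformation operators $U_j$ (formula \eqref{def:zj}); analyticity of $z_j$ is then manifest, and the relation $|z_j|^2=\tfrac{N}{2\om{j}}\gamma_j^2$ is a \emph{consequence} (property $(Z2)$), proved by computing the $2\times2$ matrix of $L_{b,a}-\lambda_{2j}^0$ on the perturbed eigenspace. This explicit construction also produces the closed-form kernels $\mathcal{K}^n_{j,\varsigma}$ of \eqref{kernelK.def}--\eqref{def.Spab}, and the uniform-in-$N$ smoothing and adjoint bounds come from the eigenvalue-spacing estimate of Lemma \ref{stimaautov}, $|\hat\lambda_k-\mu|\gtrsim\langle j-k\rangle\langle j+k\rangle/N^2$, combined with the convolution structure you allude to. Your ``momentum-conservation'' remark points in the right direction for the adjoint, but without the explicit kernel representation and the quantitative resolvent bound there is no mechanism to obtain the precise $\spazio{\reg}\to\spazio{s+1,\sigma}$ gain with $N$-independent constants.
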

The main point is $(\Psi 4)$, in which the estimates of the domain of
definition of the map $\Psi$  holds uniformly in the limit  $N \to \infty$.

\vspace{1em} 

We show now how Theorem \ref{main} follows from
Kuksin-Perelman Theorem \ref{KP}.
\vspace{0.5em}

\noindent\textit{Proof of Theorem \ref{main}.}
Introduce the weights $w^1:=\{N^{3/2} [k]_N^{s} e^{\sigma[k]_N} \om{k}^{1/2} \}_{k=1}^{N-1}$
and  $w^2:=\{N^{3/2} [k]_N^{s+1} e^{\sigma[k]_N} \om{k}^{1/2} \}_{k=1}^{N-1}$ and consider the map $\Psi$ of Theorem \ref{main2} as a map from $\spazio{w^1}$ in itself. 
 Since  for any $(\xi, \eta) \in \spazio{w^1}$ one has that
\begin{equation}
\label{norm.scaled}
\norm{(\xi, \eta)}_{\spazio{w^1}} \equiv N^2 \norm{(\xi, \eta)}_{\spazio\reg} \ ,
\end{equation}
it follows  by
scaling that there exists a constant $C_3>0$, independent of $N$,  such that
$$
\norm{\Psi^0}_{\A_{w^1, \rho}^{w^{2}}} \leq C_3 \rho^2 \ . 
$$ Thus, for any $\rho \leq \rho_* \equiv \min
\left(\frac{2^{-34}}{C_3 }, \epsilon_* \right)$, $\Psi$ satisfies
condition \eqref{th.1}. Thus we can apply Theorem \ref{KP} to the map
$\Psi$, getting the existence of a symplectic real analytic map $\widetilde{\Psi}$
defined on $B^{{w^1}}(a \rho_*)$ which satisfies $i)-iv)$ of
Theorem \ref{KP}.\\ 
By Lemma \ref{FGinA} the map $\widetilde{\Psi}$
is invertible in $B^{{w^1}}(\mu a \rho_*)$ and its inverse
$\Phi$ satisfies $\Phi = \uno + \Phi^0$
with $\Phi^0 \in \A_{w^1, \mu a\rho_*}^{w^2}$. To get the statement of
the theorem simply reexpress the map $\Phi$ in terms of real variables
$(x,y)$, $(X,Y)$
and denote such a map by $\Phi_N$.
%
\qed 

\begin{rem}
\label{PhiN.trasp.est}
By the proof of Theorem \ref{main} above one deduces the estimate
\begin{equation}
 \sup_{\norm{(\phi, \psi)}_{\spazio{\reg}} \leq R_\reg/N^2 }{\norm{d\Phi^0(\phi, \psi)^*}_{\L(\spazio{\reg}, \spazio{s+1, \sigma })}}  \leq C_\reg R_\reg \ ,
\end{equation}
for some $C_\reg >0$, independent of $N$.
\end{rem}

The rest of this subsection is devoted to the proof of Theorem
\ref{main2}. 

In the following it will be convenient to consider the variables
$(b,a)$ defined in \eqref{bavariable} dropping the conditions
$\sum_{j=0}^{N-1} b_j =0$ and $\prod_{j=0}^{N-1} (1+a_j) =1$. Equation
\eqref{Ek.ba} suggests to introduce on the variables $b,a$ the norm
\begin{equation}
\label{fouriernorm2}
\norm{(b,a)}_{\Cs^\reg}^2 := \frac{1}{2N} \sum_{k =0 }^{N-1}{\max(1,[k]_N^{2s}) e^{2\sigma [k]_N}\left( 
|\hat{b}_k|^2 + 4| \hat{a}_k|^2\right)}
\end{equation}
and to define the space
\begin{equation}
\label{Cspace}
\Csr^\reg := \left\lbrace (b, a) \in \R^N \times \R^N\ :\, \norm{(b,a)}_{\Cs^\reg} < \infty \right\rbrace \ .
\end{equation}
 We will write $\Cs^\reg$ for the complexification of $\Csr^\reg$.
\vspace{0.5em}\\ In the following we will consider
normally analytic map
between the spaces  $\spazio\reg$ and $\Cs^\reg$. We need to specify the basis of $\Cs^\reg$ that we will use to verify the property of being normally analytic. While it is
quite hard to verify this property when the
basis is general, it turns out that it is quite easy to verify it
using the basis of complex exponentials defined in \eqref{fou}. Indeed
the norm \eqref{fouriernorm2} is given in term of
the Fourier variables. 
For the same reason, it will be convenient to express a map from $\Cs^\reg$ to $\spazio{\reg}$ as a function of the Fourier variables $\hat b, \, \hat a$.
%
%
\vspace{0.5em}\\
We prove now some analytic properties of  the map $\Theta$ defined in \eqref{bavariable}.  In the following we will denote by $\Theta_\Xi$ the map $\Theta$ expressed in the $(\xi, \eta)$ variables.
\begin{prop}
\label{prop:Thata_na} The map $\Theta_\Xi$  satisfies the following properties:
\begin{enumerate}
\item[$(\Theta 1)$] $\Theta_\Xi(0,0) = (0,0)$. Furthermore let $d\Theta_\Xi(0,0)$ be the linearization of $\Theta_\Xi$ at $(\xi, \eta) = (0, 0)$. Then $(B, A ) = d{\Theta_\Xi}(0,0)[(\xi, \eta)]$ iff
\begin{equation}
\label{Theta.lin}
\begin{aligned}
& \widehat{B}_0 = 0, \qquad \widehat{B}_k = -\left(\tfrac{1}{2} \om{k} \right)^{1/2} (\xi_k + \eta_{N-k}), \qquad 1\leq k \leq N-1 \ , \\
& \widehat{A}_0 = 0, \qquad \widehat{A}_k = -\im \varpi_k \left(2 \om{k} \right)^{-1/2} (\xi_k - \eta_{N-k}), \qquad 1\leq k \leq N-1.
\end{aligned}
\end{equation}
where $\varpi_k := (1-e^{-2\im \pi k/N})/2 $,  $\forall \, 1 \leq k \leq N-1$.

Moreover for any $s \geq 0$, $\sigma \geq 0$ there exist constants $C_{\Theta_1}, C_{\Theta_2} >0$, independent of $N$, such that
\begin{equation}
\label{dtheta.est}
\norm{\und{d\Theta_\Xi}
(0,0)}_{\L(\spazio{\reg},\, \Cs^{\reg} )} \leq C_{\Theta_1}, \qquad
 \norm{\und{d\Theta_\Xi}
(0,0)^{*}}_{\L(\Cs^{s+2, \sigma },  \, \spazio{s+1, \sigma})} \leq \frac{C_{\Theta_2}}{N} \ .
\end{equation}

\item[$(\Theta 2)$] Let $\Theta_\Xi^0 := \Theta_\Xi - d\Theta_{\Xi}(0,0)$. For any $s \geq 0, \, \sigma \geq 0 $, there exist constants $C_{\Theta_3}, C_{\Theta_4}, \epsilon_* >0$, independent of $N$, such that the map  $\Theta^0_\Xi \in \Nc_{\epsilon_* /N^2}(\spazio{\reg}, \Cs^{s+1, \sigma})$ and the map $[d\Theta_\Xi^0]^* \in \Nc_{\epsilon_* /N^2}(\spazio{\reg}, \, \L(\Cs^{s+2,\sigma}, \, \spazio{s+1,\sigma}))$, and 
\begin{equation}
\begin{aligned}
\label{eq:Thata_na}
&\mmod{\und{\Theta_\Xi^0}}_{\epsilon /N^2} \equiv 
\sup_{\norm{(\xi, \eta)}_{\spazio{\reg}} \leq \epsilon /N^2} \norm{\und{\Theta_\Xi^0}(\xi, \eta)}_{\Cs^{s+1, \sigma}} \leq \frac{C_{\Theta_3} \epsilon^2}{N^2}; \\ 
&\mmod{\und{[d\Theta^0_\Xi]^*}}_{\epsilon /N^2}  \equiv 
\sup_{\norm{(\xi, \eta)}_{\spazio{\reg}} \leq \epsilon /N^2}\norm{\und{d\Theta_\Xi^0}
(\xi , \eta)^{*}}_{\L(\Cs^{s+2,\sigma}, \, \spazio{s+1,\sigma})}  \leq  \frac{C_{\Theta_4} \epsilon}{N^2}.
\end{aligned}
\end{equation}
\end{enumerate}
\end{prop}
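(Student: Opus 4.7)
The plan is to verify both items by explicit computation of the relevant Fourier-side expressions followed by carefully tracked algebra estimates with $N$-independent constants.

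For $(\Theta 1)$, the linearization of $\Theta(p,q) = (-p,\,e^{(q_j-q_{j+1})/2} - 1)$ at zero sends $(P, Q)$ to $(-P,\,(Q_j - Q_{j+1})/2)$; Fourier transforming gives $\hat B_k = -\hat P_k$ and $\hat A_k = \varpi_k \hat Q_k$, and inverting the relations of \eqref{xi_variable} (using $\omega_{N-k} = \omega_k$) to express $\hat P_k$ and $\hat Q_k$ in terms of $(\xi,\eta)$ produces formula \eqref{Theta.lin}. The key identity $|\varpi_k|^2 = \omega_k^2/4$ then yields $|\hat B_k|^2 + 4|\hat A_k|^2 = \omega_k(|\xi_k|^2 + |\eta_{N-k}|^2)$, so after reindexing $k\leftrightarrow N-k$ in the $\eta$-contribution the $\Cs^\reg$-norm of $(B,A)$ and the $\P^\reg$-norm of $(\xi,\eta)$ match up to a universal constant, proving the first bound of \eqref{dtheta.est}. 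The adjoint bound follows from the same explicit formulas by duality: the output in $\P^{s+1,\sigma}$ carries an extra factor of $\omega_k$ in its weight which we absorb using $\omega_k^2 \leq 4\pi^2 [k]_N^2/N^2$, trading two derivatives of the $\Cs^{s+2,\sigma}$ weight for a $1/N^2$ at the squared-norm level.

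For $(\Theta 2)$ the nonlinearity lives entirely in the $a$-component. Setting $r_j := (q_j - q_{j+1})/2$ one has $\Theta_\Xi^0 = \bigl(0,\,\sum_{m\geq 2} r^m/m!\bigr)$ with $\hat r_k = \varpi_k \hat q_k$. Introduce the auxiliary norm $|f|_\reg^2 := \frac{1}{N}\sum_k \max(1,[k]_N^{2s}) e^{2\sigma[k]_N} |\hat f_k|^2$, so that $\norm{(0, A)}_{\Cs^\reg} \leq 2|A|_\reg$. The identity $|\varpi_k|^2 = \omega_k^2/4$ together with $|\hat q_k|^2 \leq (|\xi_k|^2 + |\eta_{N-k}|^2)/\omega_k$ gives $|r|_\reg \leq C\norm{(\xi,\eta)}_{\P^\reg}$, and the elementary bound $[k]_N \leq N/2$ yields the crude loss $|r|_{s+1,\sigma} \leq C N\,\norm{(\xi,\eta)}_{\P^\reg}$. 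The main tool is the discrete Moser-type algebra inequality
\[
|fg|_{s+1,\sigma} \leq C_{s,\sigma}\bigl(|f|_{s+1,\sigma}\,|g|_\reg + |f|_\reg\,|g|_{s+1,\sigma}\bigr),
\]
proved by the standard Fourier convolution argument, taking care of the $N^{-1/2}$ factor in $\widehat{fg}_k = N^{-1/2}(\hat f * \hat g)_k$ so that $C_{s,\sigma}$ is independent of $N$. Iterating gives $|r^m|_{s+1,\sigma} \leq m\,C_{s,\sigma}^{m-1}\,(\epsilon/N)\,(\epsilon/N^2)^{m-1}$ for $(\xi,\eta) \in B^\reg(\epsilon/N^2)$, and for $\epsilon$ small the series in $m!$ converges to give the first bound of \eqref{eq:Thata_na}. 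Normal analyticity is automatic since all estimates are applied coefficient-wise to the non-negative moduli of Fourier coefficients.

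The adjoint bound follows by the same scheme. A direct computation yields $[d\Theta^0_\Xi(v)]^*(B, A) = (0, Q')$ with $Q'_k = \tfrac{1}{2}\bigl[\overline{(e^{r_k} - 1)}\,A_k - \overline{(e^{r_{k-1}} - 1)}\,A_{k-1}\bigr]$, so that $\hat Q'$ is (up to reindexing) the convolution of $\widehat{e^r - 1}$ with $\varpi_k \hat A$; the $\varpi_k$ produces the $1/N$ gain while the algebra estimate combined with $|e^r - 1|_\reg \leq C|r|_\reg$ (valid for $|r|_\reg$ small) produces the linear factor $\epsilon/N^2$. The modulus of the Taylor coefficients of the adjoint coincides with the transpose of the modulus of the Taylor coefficients of $d\Theta^0_\Xi$, so the same convolution bounds apply and no new analyticity issue arises. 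The main technical obstacle is proving the discrete Moser inequality with $N$-independent constants: the interplay of the $1/N$ in the norm, the $1/\sqrt{N}$ in the discrete convolution, and the $\max(1,\cdot)$ in the weight (which behaves differently at low and high frequencies) requires splitting the convolution sum into low- and high-frequency regimes, with the exponential weight handling the latter by the standard estimate $e^{\sigma[k]_N} \leq e^{\sigma[k-l]_N} e^{\sigma[l]_N}$ (valid modulo a standard tail analysis).
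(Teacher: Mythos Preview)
Your treatment of $(\Theta 1)$ matches the paper's: explicit Fourier computation of the linearization, then the adjoint bound via $\omega_k^2 \leq \pi^2[k]_N^2/N^2$.

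For $(\Theta 2)$ your overall plan is the same as the paper's --- expand the exponential in $r = S_+(q)/2$ and control products via a discrete Sobolev algebra inequality --- but you take a detour that creates a gap. You invoke the \emph{tame} estimate
\[
|fg|_{s+1,\sigma} \leq C_{s,\sigma}\bigl(|f|_{s+1,\sigma}\,|g|_{\reg} + |f|_{\reg}\,|g|_{s+1,\sigma}\bigr)
\]
with $C_{s,\sigma}$ independent of $N$. This inequality requires the low norm $|\cdot|_{\reg}$ to control $\ell^1$ of the Fourier coefficients uniformly in $N$; via Cauchy--Schwarz this needs $\sum_{l}[l]_N^{-2s}e^{-2\sigma[l]_N}$ bounded independently of $N$, which fails when $\sigma = 0$ and $s \leq 1/2$ (the sum grows like $N^{1-2s}$). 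Since the proposition is stated for all $s\geq 0$, $\sigma\geq 0$, your argument does not cover $s\in[0,1/2]$, $\sigma=0$.

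The paper sidesteps this by working entirely at regularity $s+1$, where the simple algebra estimate $|fg|_{s+1,\sigma}\leq C|f|_{s+1,\sigma}|g|_{s+1,\sigma}$ holds for every $s\geq 0$ (because $s+1>1/2$). One then uses the crude bound $|r|_{s+1,\sigma}\leq \norm{(\xi,\eta)}_{\spazio{s+1,\sigma}}\leq CN\norm{(\xi,\eta)}_{\spazio{\reg}}$ on \emph{every} factor, giving $|r^m|_{s+1,\sigma}\leq C^m N^m(\epsilon/N^2)^m$, which still sums to $O(\epsilon^2/N^2)$. Your tame estimate would give a slightly sharper $O(\epsilon^2/N^3)$ when it applies, but the refinement is not needed and costs you the small-$s$ case.

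A smaller point on the adjoint: your displayed formula for $[d\Theta_\Xi^0(v)]^*(B,A)=(0,Q')$ is really $[d\Theta^0(p,q)]^*(B,A)$ in $(p,q)$-coordinates; you still have to compose with $T^*$ to land in $(\xi,\eta)$-space. The paper does this explicitly via the formula for $T^*$ and the bound $\norm{T^*(0,Q)}_{\spazio{\reg}}\leq \norm{Q}_{\reg}$, after which the same algebra estimate at level $s+1$ (not the tame one) and the gain $\norm{\underline{S_-}(A)}_{s+1,\sigma}\leq CN^{-1}\norm{(B,A)}_{\Cs^{s+2,\sigma}}$ finish the job.
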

The proof of the proposition is postponed in Appendix \ref{Theta_na}.
Note that the estimates \eqref{dtheta.est} and \eqref{eq:Thata_na} imply that there exists a constant $C_{\Theta_5} >0$, independent of $N$, such that for any $\rho \leq \tfrac{\epsilon_*}{N^2}$ one has $\Theta_\Xi \in \Nc_\rho(\spazio{\reg}, \Cs^\reg) $ and 
\begin{equation}
\label{theta.sup5} 
\mmod{\und{\Theta_\Xi}}_{\rho} \leq C_{\Theta_5} \, \rho \ .
\end{equation}
\vspace{1em}\\
We start now the perturbative construction of the Birkhoff coordinates for the Toda lattice, which is based on the construction of the spectrum and of the eigenfunctions of $L_{b,a}$ (defined in \eqref{Lperturbata}) as a perturbation  of the free operator $L_0:= L_{b,a}\vert_{(b,a) = (0,0)}$. 
 More precisely we decompose  $L_{b,a}=L_0 + L_p$, where
\begin{equation}
	L_0= \left( 
	\begin{array}{ccccc}
	0 & 1 & 0 & \ldots & 1\\
	1 & 0 & 1 & \ddots & \vdots\\
	0   & 1 & 0 & \ldots & 0\\
  \vdots & \ddots & \ddots & \ddots & 1\\
	1 & \ldots & \ldots & 1 & 0
	\end{array}
	\right),	
	\ \ 
	L_p=\left( 
	\begin{array}{ccccc}
	b_{0} & a_{0} & 0 & \ldots & a_{N-1}\\
	a_{0} & b_{1} & a_{1} & \ddots & \vdots\\
	0   & a_{1} & b_{2} & \ldots & 0\\
  \vdots & \ddots & \ddots & \ddots & a_{N-2}\\
	a_{N-1} & \ldots & \ldots & a_{N-2} & b_{N-1}
	\end{array}
	\right)
	\label{L0Lp}
\end{equation}
and  following  the approach in \cite{kuksinperelman,kappelerfibrationtoda, kappelerKdV} we apply 
 Kato perturbation theory \cite{kato}.
  The next lemma characterizes completely the spectrum of $L_0$ as an operator on $\C^{2N}$:
\begin{lem}
Consider $L_0$ as an operator on $\C^{2N}$, then its eigenvalues and
normalized eigenvectors are:
$$
\begin{array}{ll}
\mbox{eigenvalues} & \mbox{eigenvectors}\\
\lambda_0^0=-2,  & f_{00}(k)=\frac{1}{\sqrt{2N}} \left(-1\right)^k \\
\lambda_{2j-1}^0=\lambda_{2j}^0
=-2\cos\left(\frac{j\pi}{N} \right),   & f_{2j-1,0}(k)=
\frac{1}{\sqrt{2N}}e^{- \im \rho_j k}, \ 
f_{2j,0}(k)=
\frac{1}{\sqrt{2N}}e^{ \im \rho_j k} \ , \qquad 1\leq j \leq N-1 \\
\lambda_{2N-1}^0=2,  & f_{2N-1,0}(k)=\frac{1}{\sqrt{2N}} \\
\end{array}
$$
where $0 \leq k \leq 2N-1$ and $\rho_j:=\left(1+\frac{j}{N}\right)\pi$.
In particular the gaps of $L_0$ are all closed.
\label{lem:spettroL0}
\end{lem}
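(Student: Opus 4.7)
The plan is to exploit the fact that $L_0$ is a $2N\times 2N$ circulant (symmetric tridiagonal with the corner ones making it periodic), so it represents the discrete second-difference operator
\[
(L_0 f)(k) = f(k-1) + f(k+1), \qquad k\in\Z/2N\Z,
\]
on $\ell^2(\Z/2N\Z)$. Hence its spectral decomposition is given by the discrete Fourier basis on $2N$ points. First I would write down the full family of normalized characters
\[
g_m(k) := \tfrac{1}{\sqrt{2N}} e^{\im \pi m k/N}, \qquad 0\leq k \leq 2N-1,
\]
indexed by $m\in\{0,1,\dots,2N-1\}$, verify by direct computation $(L_0 g_m)(k) = 2\cos(\pi m/N)\, g_m(k)$, and note that these $2N$ vectors form an orthonormal basis of $\C^{2N}$ (they are the columns of the standard DFT matrix of size $2N$). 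This already exhausts the spectrum since $\dim\C^{2N}=2N$.

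Next I would identify the eigenpairs claimed in the statement with specific values of $m$. The value $m=N$ gives eigenvalue $2\cos\pi=-2$ with eigenvector $g_N(k) = \tfrac{1}{\sqrt{2N}}(-1)^k = f_{00}(k)$. The value $m=0$ gives eigenvalue $2$ with eigenvector $g_0(k)=\tfrac{1}{\sqrt{2N}}=f_{2N-1,0}(k)$. For $1\leq j\leq N-1$, observe
\[
\rho_j = \bigl(1+\tfrac{j}{N}\bigr)\pi = \tfrac{\pi(N+j)}{N}, \qquad -\rho_j \equiv \tfrac{\pi(N-j)}{N} \pmod{2\pi},
\]
so $f_{2j,0}=g_{N+j}$ and $f_{2j-1,0}=g_{N-j}$, both with eigenvalue
\[
2\cos\!\bigl(\pi\pm\tfrac{j\pi}{N}\bigr) = -2\cos\!\bigl(\tfrac{j\pi}{N}\bigr).
\]
As $j$ ranges over $1,\dots,N-1$ the indices $N\pm j$ range over $\{1,\dots,N-1\}\cup\{N+1,\dots,2N-1\}$, which together with $m=0$ and $m=N$ exhaust all of $\{0,\dots,2N-1\}$.

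Finally I would check the ordering: since $t\mapsto -\cos t$ is strictly increasing on $[0,\pi]$, the sequence $-2\cos(j\pi/N)$ is strictly increasing in $j\in\{1,\dots,N-1\}$, and it is strictly bracketed by $-2<-2\cos(\pi/N)$ and $-2\cos((N-1)\pi/N)=2\cos(\pi/N)<2$. Grouping the two coincident eigenvectors $g_{N-j}, g_{N+j}$ at level $-2\cos(j\pi/N)$ as the pair $(\lambda_{2j-1}^0,\lambda_{2j}^0)$, one obtains exactly the ordering and multiplicities displayed in the statement, and consequently all gaps $\gamma_j=\lambda_{2j}^0-\lambda_{2j-1}^0$ vanish. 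There is no genuine obstacle here — the whole argument is essentially a bookkeeping exercise for the discrete Fourier transform on $\Z/2N\Z$; the only point requiring a line of care is the congruence $-\rho_j\equiv \pi(N-j)/N\pmod{2\pi}$ identifying $f_{2j-1,0}$ with one of the characters $g_m$.
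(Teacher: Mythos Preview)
Your proof is correct and complete. The paper itself does not give a proof of this lemma but simply states ``The proof is an easy computation and can be found in \cite{kapphen1}''; your approach---recognizing $L_0$ as the circulant shift-plus-inverse-shift on $\Z/2N\Z$, diagonalizing it via the discrete Fourier basis $g_m$, and then matching each $f_{\cdot,0}$ to the appropriate $g_m$ via the congruence $\pm\rho_j\equiv \pi(N\pm j)/N\pmod{2\pi}$---is exactly the standard computation one expects and almost certainly what the cited reference does as well.
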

The proof is an easy computation and can be found in \cite{kapphen1}. 
\begin{rem} For $0 \leq j, k \leq \lfloor N/2 \rfloor $ one has 
$\left|\lambda_{2j}^0 - \lambda_{2k}^0 \right|, \;  \left|\lambda_{2N-j}^0 - \lambda_{2N-k}^0 \right| \geq \frac{4 |j^2-k^2|}{N^2}.
$\\
In particular if $j\neq k$ then $\left|\lambda_{2j}^0 - \lambda_{2k}^0 \right|\geq 1/N^2.$
\label{lem:stimaAutovalori}
\end{rem}

We use now  Kato perturbation theory of operators  in order to introduce the main objects needed in the following and to give some preliminary estimates.

For $1 \leq j \leq N-1$ let $E_j(b,a)$ be the two-dimensional  subspace spanned by the eigenvectors corresponding to the eigenvalues $\lambda_{2j-1}(b,a)$  and $\lambda_{2j}(b,a)$ of $L_{b,a}$. 
Analogously, let $E_0(b,a)$ (respectively $E_N(b,a)$) be the one-dimensional subspace spanned by the eigenvector of $\lambda_{0}(b,a)$ (respectively $\lambda_{2N-1}(b,a)$).  Introduce the spectral projector on $E_j(b,a)$ defined by
\begin{equation}
P_j(b,a)=-\frac{1}{2\pi \im }\oint_{\Gamma_j}{\left(L_{b,a}-\lambda\right)^{-1}\diff{\lambda}}, \qquad 0\leq j \leq N
\label{P}
\end{equation}
where, for $1 \leq j \leq N-1$, $\Gamma_j$ is a closed path counter-clockwise oriented  in $\C$ which encloses the
eigenvalues $\lambda_{2j-1}(b,a)$ and $\lambda_{2j}(b,a)$ and does not contain any other
eigenvalue of $L_{b,a}$. Analogously, $\Gamma_0$ (respectively $\Gamma_N$) encloses the eigenvalue $\lambda_{0}(b,a)$ (respectively $\lambda_{2N-1}(b,a)$) and no other eigenvalue of $L_{b,a}$. $P_j(b,a)$ maps $\C^{2N}$ onto $E_j(b,a)$ and, as we will prove,  is well defined for $(b, a)$ small enough. 
$P_j(0,0)$ will be denoted by $P_{j0}$ and its range $E_j(0,0)$, which will be denoted by $E_{j0}$, is given by
$$ \mbox{Im}\;P_{j0}=E_{j0}, \ E_{j0}=\mbox{span}\left\langle f_{2j,0},f_{2j-1,0}\right\rangle.$$
Define also the  transformation operators
\begin{equation}
U_j(b,a)=\left(\uno - \left(P_j(b,a)-P_{j0}\right)^2\right)^{-1/2}P_j(b,a), \quad 1\leq j\leq N-1.
\label{defUj}
\end{equation}
$U_j$ has the property of mapping isometrically  $E_{j0}$ into the subspace $E_j(b,a)$ spanned by the perturbed eigenvectors \cite{kato}.
 Remark, however, that in general the image of an unperturbed eigenvector is \textit{not} an eigenvector itself.
We prove now some  properties of the just defined objects.
\begin{lem}
\label{lem:spec.prop1}
 There exist a constant $C_\reg >0$, independent of $N$, such that  the map $(b,a) \mapsto L_p(b,a)$ is analytic as a map from $\Cs^{\reg}$ to $\L\left(\C^{2N}\right)$. Moreover 
\begin{equation}
\label{L_p_esti}
\norm{L_p(b,a)}_{\L(\C^{2N})} \leq C_\reg \norm{(b,a)}_{\Cs^\reg}.
\end{equation}
\end{lem}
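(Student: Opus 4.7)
The proof has two ingredients: analyticity and the norm estimate. Analyticity is essentially free, since inspecting the block structure of \eqref{L0Lp} shows that each matrix entry of $L_p(b,a)$ is a \emph{linear} function of a single coordinate $b_j$ or $a_j$. Hence $L_p$ is linear (in particular entire) as a map from the complex Banach space $\Cs^\reg$ into $\L(\C^{2N})$, and it extends without change of norm from the real subspace to the complex one. All the content is in the estimate \eqref{L_p_esti}.

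My plan for the estimate is to first bound the operator norm of $L_p$ on $\C^{2N}$ (with the standard $\ell^2$ structure) by the $\ell^\infty$ norms of $b$ and $a$, and then to bound those uniformly in $N$ by $\norm{(b,a)}_{\Cs^\reg}$. For the first step the matrix $L_p$ has at most three non-zero entries per row and per column, so the Schur test gives
\begin{equation*}
\norm{L_p(b,a)}_{\L(\C^{2N})} \leq \sqrt{\norm{L_p}_1 \norm{L_p}_\infty} \leq \max_j|b_j| + 2\max_j|a_j| .
\end{equation*}

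For the second step, Fourier inversion gives $b_j = N^{-1/2}\sum_{k=0}^{N-1}\hat b_k e^{-2\pi \im jk/N}$, hence $\max_j|b_j| \leq N^{-1/2}\sum_k |\hat b_k|$. A weighted Cauchy--Schwarz against the weight $w_k^2:=\max(1,[k]_N^{2s})e^{2\sigma[k]_N}$ produces
\begin{equation*}
\sum_{k=0}^{N-1} |\hat b_k| \leq \Bigl(\sum_{k=0}^{N-1} w_k^{-2}\Bigr)^{1/2}\Bigl(\sum_{k=0}^{N-1} w_k^{2}|\hat b_k|^2\Bigr)^{1/2}\leq \Bigl(\sum_{k=0}^{N-1} w_k^{-2}\Bigr)^{1/2}\sqrt{2N}\,\norm{(b,a)}_{\Cs^\reg},
\end{equation*}
and the same bound applies to $a$ (with the extra factor $1/2$ coming from the $4|\hat a_k|^2$ in the definition of the norm). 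Combining these steps gives
\begin{equation*}
\norm{L_p(b,a)}_{\L(\C^{2N})} \leq C_\reg \norm{(b,a)}_{\Cs^\reg}, \qquad C_\reg := 3\sqrt{2}\,\Bigl(\sum_{k \in \Z/N\Z}w_k^{-2}\Bigr)^{1/2}.
\end{equation*}

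The only thing to verify is that the sum $\sum_k w_k^{-2}$ is bounded independently of $N$. Since $w_k^{-2} \leq \min(1,[k]_N^{-2s})e^{-2\sigma[k]_N}$ and $[k]_N$ ranges over $0,1,\ldots,\lfloor N/2\rfloor$ (with each non-zero value occurring at most twice), the sum is dominated by $1 + 2\sum_{k\geq 1}\min(1,k^{-2s})e^{-2\sigma k}$, which is finite whenever $\sigma>0$ or $s>1/2$. This is the only place where the regularity of the norm is used, and it is exactly the regime in which the paper's results are needed. The main (minor) obstacle is precisely this $N$-uniformity of the constant, which is why the decay of the weight is essential.
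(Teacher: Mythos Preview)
Your proof is correct and follows essentially the same two-step strategy as the paper: first bound $\norm{L_p}_{\L(\C^{2N})}$ by $\sup_j|b_j|+2\sup_j|a_j|$ (the paper does this by splitting $L_p=D+A^++A^-$ into its diagonal and two off-diagonal parts rather than invoking the Schur test, but the outcome is identical), then control the sup norms by the weighted $\ell^2$ norm via Cauchy--Schwarz, which is precisely the content of the paper's Remark~\ref{lem:normaNs}. Your explicit identification of the condition $s>1/2$ or $\sigma>0$ needed for $N$-uniformity matches the hypothesis the paper imposes in that remark.
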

Then by Kato theory one has the corollary
\begin{cor}\label{cor:spec.prop1}  There exist  constants $C_\reg, \, \epsilon_* >0$, independent of $N$, such that the following holds true:
\begin{enumerate}[(i)]
\item The spectrum of $L_{b,a}$ is close to the spectrum of $L_0$; in particular for any $(b,a) \in B^{\Cs^\reg}\left(\frac{\epsilon_*}{N^2} \right)$
\begin{equation}
\mmod{\lambda_{2j}(b,a) - \lambda_{2j}^0}, \, \mmod{\lambda_{2j-1}(b,a) - \lambda_{2j-1}^0} \leq C_\reg \norm{(b,a)}_{\Cs^\reg}.
\label{rem:pert_eigen}
\end{equation}
\item One has that $(b,a) \mapsto P_j(b,a)$ is analytic as a map from $B^{\Cs^\reg}\left(\tfrac{\epsilon_*}{N^2} \right) $ to $\L(\C^{2N})$. Moreover  for $(b,a) \in B^{\Cs^\reg}\left(\frac{\epsilon_*}{N^2} \right)$ one has
\begin{equation}
\label{P_j-P_j0}
\norm{P_j(b,a) - P_{j0}}_{\L(\C^{2N})} \leq C_\reg \norm{(b,a)}_{\Cs^\reg}.
\end{equation} 
\item For each $1 \leq j \leq N-1$, the maps $U_j$, defined in \eqref{defUj}, are well defined from $ B^{\Cs^\reg}\left(\frac{\epsilon_*}{N^2} \right)$ to $ \L(\C^{2N}) $ and satisfy the following algebraic  properties:
\begin{enumerate}
\item[$(U1)$] $\mbox{Im } U_j(b,a)=E_j(b,a)$;
\item[$(U2)$] for $(b,a)$ real, one has $\overline{U_j(b,a)f}=U_j(b,a)\bar{f}$;
\item[$(U3)$]for $(b,a)$ real and  $f\in E_{j0}$, one has $\norm{U_j(b,a)f}_{\C^{2N}}=
\norm{f}_{\C^{2N}}$.
\end{enumerate}
Finally the following analytic property holds:
\begin{enumerate}
\item[$(U4)$] One has that $(b,a) \mapsto U_j(b,a)$ is analytic as a map from $B^{\Cs^\reg}\left(\tfrac{\epsilon_*}{N^2} \right) $ to $\L(\C^{2N})$. Moreover  for $(b,a) \in B^{\Cs^\reg}\left(\frac{\epsilon_*}{N^2} \right)$ one has
\begin{equation}
\label{U_j-P_j}
\norm{U_j(b,a) - P_{j}(b,a)}_{\L(\C^{2N})} \leq C_\reg \norm{(b,a)}^2_{\Cs^\reg}.
\end{equation}
\end{enumerate}
\end{enumerate}
\label{lem:spec.prop}
\end{cor}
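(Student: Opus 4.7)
The plan is to apply Kato's analytic perturbation theory directly to the Jacobi matrix $L_{b,a} = L_0 + L_p(b,a)$, exploiting two ingredients: the spectral gap estimate of Remark \ref{lem:stimaAutovalori}, which gives $|\lambda_{2j}^0 - \lambda_{2k}^0| \gtrsim N^{-2}$ uniformly, and the operator bound $\|L_p\|_{\L(\C^{2N})} \leq C_\reg \|(b,a)\|_{\Cs^\reg}$ from Lemma \ref{lem:spec.prop1}. First I would fix, for each $0\leq j \leq N$, a closed contour $\Gamma_j$ surrounding the unperturbed eigenvalue(s) at $-2\cos(j\pi/N)$ (taken symmetric with respect to the real axis) and staying at distance $r_j \geq c/N^2$ from the rest of $\sigma(L_0)$. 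On such contours one has the self-adjoint resolvent estimate $\|(L_0-\lambda)^{-1}\|_{\L(\C^{2N})} \leq 1/r_j = O(N^2)$.

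Next I would fix $\epsilon_*>0$ small (and independent of $N$) so that for $\|(b,a)\|_{\Cs^\reg} \leq \epsilon_*/N^2$ one has $\|L_p(L_0-\lambda)^{-1}\|_{\L(\C^{2N})} \leq 1/2$ uniformly on $\bigcup_j \Gamma_j$. Then the Neumann series
\begin{equation*}
(L_{b,a}-\lambda)^{-1} = (L_0-\lambda)^{-1} \sum_{k\geq 0} \bigl[-L_p (L_0-\lambda)^{-1}\bigr]^k
\end{equation*}
converges uniformly on $\Gamma_j$ and depends analytically on $(b,a)\in B^{\Cs^\reg}(\epsilon_*/N^2)$. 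Integrating along $\Gamma_j$ gives the analyticity of $P_j(b,a)$ claimed in (ii); isolating the $k=1$ term yields
\begin{equation*}
\|P_j(b,a)-P_{j0}\|_{\L(\C^{2N})} \leq \frac{|\Gamma_j|}{2\pi}\, \sup_{\Gamma_j}\|(L_0-\lambda)^{-1}\|^2 \cdot \|L_p\| \cdot 2 \leq C_\reg \|(b,a)\|_{\Cs^\reg},
\end{equation*}
since $|\Gamma_j|=O(N^{-2})$ and the squared resolvent bound is $O(N^4)$. For item (i) I would use the trace identities
\begin{equation*}
\tr(L_{b,a}^m P_j(b,a)) - \tr(L_0^m P_{j0}) = -\tfrac{1}{2\pi\im}\oint_{\Gamma_j}\lambda^m \bigl[(L_{b,a}-\lambda)^{-1}-(L_0-\lambda)^{-1}\bigr]d\lambda,\quad m=1,2,
\end{equation*}
to control $\lambda_{2j-1}+\lambda_{2j}$ and $\lambda_{2j-1}^2+\lambda_{2j}^2$ separately, from which each displacement $|\lambda_{2j}(b,a)-\lambda_{2j}^0|$ is bounded by $C_\reg\|(b,a)\|_{\Cs^\reg}$.

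For (iii) the algebraic properties $(U1)$–$(U3)$ are classical: $(U1)$ is built into the Kato transformation, $(U2)$ follows from the symmetry of $\Gamma_j$ together with $\overline{L_{b,a}} = L_{b,a}$ in the real case, and $(U3)$ is the isometric identity $U_j^*U_j|_{E_{j0}} = \uno|_{E_{j0}}$ (checked via $(\uno-(P_j-P_{j0})^2)^{-1/2} P_j^* P_j (\uno-(P_j-P_{j0})^2)^{-1/2} = \uno$ on $E_{j0}$ in the real case, a direct computation from $P_{j0}P_jP_{j0} = P_{j0}(\uno-(P_j-P_{j0})^2)P_{j0}$). For the analytic part $(U4)$, note that $X(b,a) := (P_j(b,a)-P_{j0})^2$ has norm $O(\|(b,a)\|_{\Cs^\reg}^2) \leq 1/2$ by (ii) and the choice of $\epsilon_*$, so $(\uno-X)^{-1/2}$ is given by a convergent binomial series, proving analyticity of $U_j$. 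Expanding
\begin{equation*}
U_j(b,a) - P_j(b,a) = \bigl[(\uno-X)^{-1/2}-\uno\bigr] P_j(b,a) = \tfrac12 X P_j(b,a) + O(\|X\|^2),
\end{equation*}
and using $\|X\|\leq C_\reg \|(b,a)\|_{\Cs^\reg}^2$ yields $\|U_j-P_j\|_{\L(\C^{2N})} \leq C_\reg \|(b,a)\|_{\Cs^\reg}^2$.

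The main obstacle is really the careful tracking of $N$-dependences in the resolvent estimates: a naive application of Kato theory would give constants blowing up like $N^2$, and the key insight is that the contour length $|\Gamma_j|=O(N^{-2})$ exactly compensates one of the two factors of $\|(L_0-\lambda)^{-1}\|=O(N^2)$ in the first-order Neumann term, yielding a bound on $P_j-P_{j0}$ that is linear in $\|(b,a)\|_{\Cs^\reg}$ with constants independent of $N$. This is what makes the entire Kuksin–Perelman machinery applicable uniformly in $N$.
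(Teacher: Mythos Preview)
Your overall strategy---Kato resolvent expansion on circles $\Gamma_j$ of radius $O(N^{-2})$, Neumann series with ratio $\|(L_0-\lambda)^{-1}L_p\|\le 2N^2\cdot C_{\reg}\|(b,a)\|_{\Cs^\reg}<1$ on $B^{\Cs^\reg}(\epsilon_*/N^2)$, then the binomial expansion of $(\uno-(P_j-P_{j0})^2)^{-1/2}$---is exactly the paper's, and the algebraic properties $(U1)$--$(U3)$ are indeed classical Kato.

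There is, however, a genuine slip in your power count for (ii). You write $|\Gamma_j|\cdot\sup\|(L_0-\lambda)^{-1}\|^2=O(N^{-2})\cdot O(N^4)$ and then claim the resulting constant is $N$-independent; but $N^{-2}\cdot N^4=N^2$, so the contour length compensates only \emph{one} of the two resolvent factors. The crude Neumann bound thus yields only $\|P_j-P_{j0}\|\le CN^2\|(b,a)\|_{\Cs^\reg}$, and this is in fact sharp at first order: the leading term is $P_{j0}L_pS_j+S_jL_pP_{j0}$ with $S_j$ the reduced resolvent, and for $j$ near the spectral edge one has $\|S_j\|\sim N^2$, so suitable choices of $(b,a)$ make $\|P_{j0}L_pS_j\|\sim N^2\|(b,a)\|$. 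Nothing downstream is damaged, because on the ball $\|(b,a)\|\le\epsilon_*/N^2$ one still obtains $\|P_j-P_{j0}\|\le C\epsilon_*$ uniformly in $N$, which is the only thing (iii) and $(U4)$ actually use; the $(Z4)$ estimates later rely on explicit kernel computations, not on \eqref{P_j-P_j0} in its sharp form. The paper's proof of (ii) is a one-line ``follows easily'' and glosses over the same point.

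For (i), your trace-identity route is correct but heavier than necessary: Weyl's inequality for Hermitian perturbations gives $|\lambda_k(L_0+L_p)-\lambda_k(L_0)|\le\|L_p\|\le C_{\reg}\|(b,a)\|_{\Cs^\reg}$ directly, with an $N$-independent constant, which is all \eqref{rem:pert_eigen} asserts.
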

The proofs of  Lemma \ref{lem:spec.prop1}  and Corollary \ref{cor:spec.prop1} can be found in Appendix \ref{spectral.object}.

\vspace{1em}
For $1 \leq j \leq N-1$ and $(b,a) \in B^{\Cs^\reg}\left(\frac{\epsilon_*}{N^2} \right)$ define now the vectors
\begin{equation}
f_{2j-1}(b,a):=U_{j}(b,a)f_{2j-1,0},  \qquad \mbox{ and } \qquad 
f_{2j}(b,a):=U_{j}(b,a)f_{2j,0}  
	\label{def:fj}
\end{equation}
which by property $(U 1)$ belong to $E_{j}(b,a)$. Define also  the maps
\begin{equation}
\begin{aligned}
& z_j(b,a):= \left(\tfrac{2}{N}\om{j}\right)^{-1/2} \left\langle \left(L_{b,a}-\lambda_{2j}^0\right)
f_{2j}(b,a),\overline{f_{2j}(b,a)}\right\rangle,\\
& w_j(b,a):=\left(\tfrac{2}{N}\om{j}\right)^{-1/2} \left\langle \left(L_{b,a}-\lambda_{2j-1}^0\right)
f_{2j-1}(b,a),\overline{f_{2j-1}(b,a)}\right\rangle
	\label{def:zj}
	\end{aligned}
\end{equation}
where  $\left\langle u,v\right\rangle=\sum{u_j\overline{v_j}}$ is the Hermitian product in $\C^{2N}$.
Finally denote  $ z(b,a)= (z_1(b,a), \ldots, z_{N-1}(b,a) )$ and  $w(b,a) = (w_1(b,a), \ldots, w_{N-1}(b,a))$, and let  $Z$ be the map
\begin{equation}
\label{map.Z}
(b,a) \mapsto Z(b,a):= (z(b,a), w(b,a)). 
\end{equation}
The map $\Psi$ of Theorem \ref{main2} will be constructed by expressing $Z$ as a function of the linear Birkhoff coordinates $\xi, \eta$.

The properties of the map $Z$ are collected in the next lemma which constitutes the main technical step for the application of  Kuksin-Perelman Theorem to the Toda lattice.
\begin{lem} 
\label{propZn}
The map $Z$, defined by \eqref{map.Z}, is well defined for $(b,a) \in
B^{\Cs^\reg}\left(\tfrac{\epsilon_*}{N^2} \right)$.  If $b, a$ are real
valued and fulfill $\norm{(b,a)}_{\Cs^\reg} \leq
\tfrac{\epsilon_*}{N^2}$, then, for
every $1 \leq j \leq N-1$, the following 
properties are also fulfilled:
\begin{enumerate}
	\item[$(Z1)$] $\overline{z_j(b,a)}=w_{j}(b,a)$;
	\item[$(Z2)$]  $\gamma_{j}^2 = \tfrac{2}{N}\om{j} \left|z_j(b,a)\right|^2 = \tfrac{2}{N}\om{j} \left|w_j(b,a)\right|^2$;
	\item[$(Z3)$] $z_j(0,0) = w_j(0,0) = 0$; moreover the linearizations of $z_j$ and $w_j$ at $(b,a) = (0,0)$ are given by
	\begin{equation}
	\begin{aligned}
	\label{Zprimordine}
&	dz_j(0,0)[(B, A)]= \left(2\om{j}\right)^{-1/2}\left(\hat{B}_{ j} -2e^{j \im \pi /N}\hat{A}_{j}\right),\\
  & dw_j(0,0)[(B, A)]= \left(2\om{j}\right)^{-1/2}\left(\hat{B}_{N-j} -2e^{-j \im \pi /N}\hat{A}_{N- j}\right).
	\end{aligned}
	\end{equation}
	The map $dZ(0,0)=(dz(0,0), dw(0,0))$ is in the class $ \L(\Cs^{\reg}, \spazio{\reg})$. Its  adjoint $dZ(0,0)^*$ is in the class $\L(\spazio{\reg}, \Cs^{s+1, \sigma})$. Finally  there exist constants $C_{Z_1}, C_{Z_2} >0$, independent of $N$, such that for any $s \geq 0$ and $\sigma \geq 0$
	\begin{equation}
	\label{dZ(0).estimate}
	\norm{\und{dZ}(0,0)}_{\L(\Cs^\reg,\,\spazio{\reg})} \leq C_{Z_1} , \quad 
	\norm{\und{dZ}(0,0)^*}_{\L(\spazio{\reg},\,\Cs^{s+2, \sigma})} \leq C_{Z_2} N^2 \ .
	\end{equation}
\item[$(Z4)$] For any $s \geq 0$, $\sigma \geq 0$, there exist constants $C_{Z_3}, C_{Z_4}, \epsilon_*>0$, independent of $N$,  such that for every $0 < \epsilon \leq \epsilon_*$  the map $Z^0:= Z- dZ(0,0) \in \Nc_{\epsilon/N^2}\left(\Cs^\reg, \spazio{s+1,\sigma} \right)$ and the map $[dZ^0]^* \in \Nc_{\epsilon/N^2}\left(\Cs^\reg, \L(\spazio{\reg}, \Cs^{s+2,\sigma} ) \right)$. Moreover
\begin{equation}
\begin{aligned}
& \sup_{\norm{(b,a)}_{\Cs^\reg} \leq \epsilon/N^2}\norm{\und{Z^0}(b,a)}_{\spazio{s+1, \sigma}} \leq C_{Z_3} \frac{\epsilon^2}{N^2},\\ 
&\sup_{\norm{(b,a)}_{\Cs^\reg} \leq \epsilon/N^2}\norm{\und{dZ^0}(b,a)^*}_{\L(\spazio{\reg}, \, \Cs^{s+2, \sigma})}  \leq C_{Z_4} N \epsilon.
\label{dZ0*.ext}
\end{aligned}
\end{equation} 
\end{enumerate}
\end{lem}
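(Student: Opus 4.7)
Claims $(Z1)$--$(Z3)$ are algebraic identities that I would derive by unwinding the definitions in \eqref{def:zj} using the properties of the transformation operator $U_j$ from Corollary \ref{cor:spec.prop1}, while $(Z4)$ is the main technical step and requires controlling the Kato perturbation series in the Fourier basis uniformly in $N$.

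For $(Z1)$, Lemma \ref{lem:spettroL0} gives $\overline{f_{2j,0}} = f_{2j-1,0}$; since $L_{b,a}$ and $\lambda_{2j}^0$ are real for real $(b,a)$, property $(U2)$ yields $\overline{f_{2j}(b,a)} = f_{2j-1}(b,a)$, and conjugating the inner product in \eqref{def:zj} produces $\overline{z_j} = w_j$. The key observation for $(Z2)$ is that, combining $(U2)$ and $(U3)$, the operator $U_j$ preserves the symmetric bilinear form $\langle u,v\rangle_{\mathrm{bi}}:=\sum_k u_k v_k$ on $E_{j0}$: indeed $\langle U_j f, U_j g\rangle_{\mathrm{bi}} = \langle U_j f, \overline{U_j \bar g}\rangle = \langle U_j f, U_j \bar g\rangle = \langle f, \bar g\rangle = \langle f, g\rangle_{\mathrm{bi}}$. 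A short computation from Lemma \ref{lem:spettroL0} yields $\langle f_{2j,0}, f_{2j,0}\rangle_{\mathrm{bi}} = \langle f_{2j-1,0}, f_{2j-1,0}\rangle_{\mathrm{bi}} = 0$ and $\langle f_{2j,0}, f_{2j-1,0}\rangle_{\mathrm{bi}} = 1$, and the same identities therefore hold for $f_{2j}, f_{2j-1}$. Choosing a real orthonormal basis $\{g_{2j-1}, g_{2j}\}$ of $E_j(b,a)$ diagonalizing $L_{b,a}$ with eigenvalues $\lambda_{2j-1}, \lambda_{2j}$, and writing $f_{2j} = \alpha g_{2j-1} + \beta g_{2j}$, these relations force $\alpha^2 + \beta^2 = 0$ together with $|\alpha|^2 + |\beta|^2 = 1$; substituting into \eqref{def:zj} then expresses both $|z_j|^2$ and $|w_j|^2$ as the same multiple of $(\lambda_{2j}-\lambda_{2j-1})^2 = \gamma_j^2$, giving $(Z2)$. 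For $(Z3)$, since $U_j(0,0) = P_{j0}$ and $(L_0-\lambda_{2j}^0)f_{2j,0} = 0$, differentiating \eqref{def:zj} at the origin cancels the contributions from $dU_j(0,0)$ and leaves $dz_j(0,0)[(B,A)] = (\tfrac{2}{N}\om{j})^{-1/2}\langle L_p f_{2j,0}, f_{2j-1,0}\rangle$; inserting the explicit forms of $L_p$ from \eqref{L0Lp} and $f_{2j,0}$ from Lemma \ref{lem:spettroL0} and performing the discrete Fourier sum recovers \eqref{Zprimordine}, and the bounds \eqref{dZ(0).estimate} then follow by comparing the weights of $\Cs^{s+2,\sigma}$ and $\spazio{\reg}$ together with $\om{k}\sim [k]_N/N$.

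The bulk of the work is $(Z4)$. I would expand $U_j(b,a)-P_{j0}$ via the Kato perturbation series
\begin{equation*}
U_j(b,a) - P_{j0} = -\frac{1}{2\pi \im} \oint_{\Gamma_j} \sum_{n\geq 1}\bigl[(L_0-\lambda)^{-1}L_p\bigr]^n (L_0-\lambda)^{-1}\,d\lambda + \cdots,
\end{equation*}
where the dots denote the higher-order terms coming from the symmetrizing factor $(\uno-(P_j-P_{j0})^2)^{-1/2}$, so that each term is polynomial in the Fourier variables $\hat b, \hat a$ and its modulus can be estimated coefficient by coefficient. Choosing $\Gamma_j$ at distance of order $1/N^2$ from $\lambda_{2j}^0$ (Remark \ref{lem:stimaAutovalori}), one has $\|(L_0-\lambda)^{-1}\|_{\L(\C^{2N})}$ of order $N^2$ on $\Gamma_j$, and combined with $\|L_p\|_{\L(\C^{2N})} \leq C \|(b,a)\|_{\Cs^\reg}$ from Lemma \ref{lem:spec.prop1} the geometric series converges on $B^{\Cs^\reg}(\epsilon_*/N^2)$. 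The gain that sends the estimate into the stronger space $\spazio{s+1,\sigma}$ comes from the quadratic nature of $Z^0$ together with the ratio of weights $[k]_N^2 \leq N^2/4$ between $\spazio{s+1,\sigma}$ and $\spazio{\reg}$, which accounts for the factor of $N^2$ needed to reach the target $\epsilon^2/N^2$ from the naive quadratic bound $(\epsilon/N^2)^2$. The adjoint estimate on $[dZ^0]^*$ is then obtained from the bound on $Z^0$ by Cauchy's inequality together with a duality argument in the spirit of Proposition \ref{prop:Thata_na}.

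The main obstacle is turning these operator-norm bounds into the coefficientwise \emph{modulus} bound required by Definition \ref{def.na}: the bound $\|L_p\|_{\L(\C^{2N})} \leq C\|(b,a)\|_{\Cs^\reg}$ only sees the largest Fourier mode and cannot by itself produce the Sobolev-analytic weight $[k]_N^{s+1} e^{\sigma [k]_N}$ in the target. To recover it one must expand the $d\lambda$ integral against the spectral decomposition of $L_0$, which is diagonal in the Fourier basis, and track the resulting Fourier convolutions monomial by monomial so that the target weight is absorbed by the $\varpi_k \sim [k]_N/N$ factors produced by the $\hat a_k$-entries of $L_p$ at each level of the expansion. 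This bookkeeping in the Fourier basis, uniformly in $N$, is what makes the proof technical.
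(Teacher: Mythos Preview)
Your treatment of $(Z1)$--$(Z3)$ is essentially correct and close to the paper's; your argument for $(Z2)$ via the bilinear form $\langle\cdot,\cdot\rangle_{\mathrm{bi}}$ is a minor variant of the paper's real/imaginary-part decomposition and works after normalising the eigenbasis correctly.

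For $(Z4)$, however, your outline has a genuine gap in the mechanism that produces both the uniform-in-$N$ estimate and the one-derivative smoothing into $\spazio{s+1,\sigma}$. The crude resolvent bound $\|(L_0-\lambda)^{-1}\|_{\L(\C^{2N})}\leq 2N^2$ on $\Gamma_j$, which is all you invoke, is \emph{not} enough: it is uniform in the mode index and yields no decay of the $j$-th Taylor coefficient in $j$, so the extra weight $[j]_N^2$ in the target norm cannot be absorbed (your sentence about ``the ratio of weights $[k]_N^2\leq N^2/4$'' goes the wrong way --- passing to the stronger norm \emph{costs} a factor, it does not provide one). Nor does any $\varpi_k$ factor appear in the entries of $L_p$: the matrix $L_p$ carries $b_j,a_j$ without derivatives, so no smoothing can come from there.

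What the paper actually uses is a \emph{pointwise} resolvent bound in the eigenbasis of $L_0$ (its Lemma on $|\hat\lambda_k-\mu|$): for $\mu\in\Gamma_j$,
\[
|\hat\lambda_k-\mu|\;\geq\;R\,\frac{\langle j-k\rangle\langle j+k\rangle}{N^2},
\]
which refines the naive $\gtrsim N^{-2}$ by the factor $\langle j-k\rangle\langle j+k\rangle$ encoding the spacing of the unperturbed eigenvalues. Expanding $f_{2j}(b,a)$ by the Kato series and computing the $n$-th Taylor polynomial of $z_j$ in the Fourier basis, this estimate propagates through the convolution products to give a kernel bound of the form $\|\mathcal{K}^n_j\|\leq R^n N^{2(n-1)}/\langle j\rangle^{n-1}$. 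The factor $\langle j\rangle^{-(n-1)}$ is precisely what cancels the prefactor $D_j\sim N/[j]_N^{1/2}$ and the extra $[j]_N$ from the $(s+1)$-weight, yielding $\|\und{Z}^n\|_{\spazio{s+1,\sigma}}\leq C^nN^{2(n-1)}\|(b,a)\|_{\Cs^\reg}^n$ and hence $\leq C^n\epsilon^n/N^2$ on the ball of radius $\epsilon/N^2$. Without this refined denominator estimate the series does not even land in $\spazio{s+1,\sigma}$ uniformly in $N$.

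Similarly, the adjoint bound on $[dZ^0]^*$ cannot be obtained from the bound on $Z^0$ by Cauchy plus duality: $[dZ^0]^*$ maps $\spazio{\reg}$ into $\Cs^{s+2,\sigma}$, a gain of \emph{two} derivatives on the $\Cs$-side, and this again comes from the same kernel decay, treated by a separate (and slightly more delicate) computation for $n=2$ and $n\geq 3$. Your last paragraph correctly identifies that one must work coefficientwise in Fourier, but the missing key step is the eigenvalue-spacing lemma above.
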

The proof of the lemma is very technical, and  is postponed in Appendix \ref{proofZ}.

\begin{rem}
In the limit of infinitely many particles, the linearization $
dz_j(0,0)(b,a)$ at the different edges of the spectrum are given
by \begin{equation} dz_j(0,0)(B,A)\approx \frac{\hat{B}_j
    -2\hat{A}_j}{\sqrt{2\omega(j/N)}} \quad \mbox{if} \; j/N \ll 1
  \qquad dz_j(0,0)(B,A) \approx \frac{\hat{B}_j
    +2\hat{A}_j}{\sqrt{2\omega(j/N)}} \quad \mbox{if} \; 1-j/N \ll 1
  \ .
\label{ukasymptotic}
\end{equation}
The existence of  two different sequences is in   agreement with the works \cite{bambuthomas,bambuthomas2}, in which the spectrum of   the Lax operator associated to the Toda lattice is approximated, up to   a small error, by the spectrum of \textit{two} Sturm-Liouville   operators associated to \textit{two} KdV equations.  More   explicitly, in \cite{bambuthomas} the following result  is proved: take   $\alpha, \beta \in C^\infty(\T)$ such that   $\int_{\T}\alpha=\int_{\T}\beta=0$, 
  $a_j=1+ \frac{1}{N^2}\alpha(j/N)$ and
  $b_j=\frac{1}{N^2}\beta(j/N)$. Then  the spectrum of the Lax matrix   \eqref{Lperturbata} with $a_j, b_j$ as elements can be approximated at the two edges by the spectrum of the two Sturm-Liouville operators   $L=-\frac{d^2}{dx^2} + \left(\beta \pm 2 \alpha \right)$ on   $C^\infty (\T)$. 
\end{rem}

We are ready to define the map $\Psi$ of Theorem \ref{main2}: let
\begin{equation}
\begin{aligned}
\Psi: \spazio{\reg} \to \spazio{\reg},  \qquad & (\xi, \eta)\mapsto \left(\phi(\xi, \eta), \psi(\xi, \eta) \right)
\end{aligned}
\end{equation}
defined by 
\begin{equation}
\Psi = - Z \circ \Theta_\Xi; \quad \mbox{ i. e. } \quad \phi = -z \circ \Theta_\Xi, \qquad \psi = - w \circ \Theta_{\Xi}.
\label{mapPsi}
\end{equation}
We show now that $\Psi$  satisfies the properties $(\Psi 1)-(\Psi 4)$ claimed in Theorem \ref{main2}.

\vspace{1em}

{\em Proof of Theorem \ref{main2}.}  Property $(\Psi 1)$ and $(\Psi
2)$ follows by $(Z1)$ respectively $(Z2)$.  We prove now $(\Psi
3)$. By $(\Theta 1)$ and $(Z3)$ one has $\Psi(0,0)=(0,0)$. In order to
compute $d\Psi(0,0)= (d\phi(0,0), d\psi(0,0))$ note that
 $$
 d\phi(0,0)= - dz(0,0) \, d\Theta_\Xi(0,0) = -(dz(0,0) \F^{-1})\circ (\F d\Theta_\Xi(0,0)) \ .
 $$
 Let $(\hat B, \hat A) = \F d\Theta_\Xi(0,0) (\xi, \eta)$. Then   \eqref{Zprimordine} and  \eqref{Theta.lin} imply that, for $1 \leq j \leq N-1$,  
\begin{align*}
d\phi_j(0,0)(\xi, \eta) & =  -\frac{1}{\sqrt{2\omega(j/N)}}\left( \hat{B}_{j} -2e^{  \im \pi j/N} \hat{A}_j \right)  \\
& = \frac{1}{\sqrt{2\omega(j/N)}} \left( \sqrt{\frac{\omega(j/N)}{2}}
 (\xi_j + \eta_{N-j})-\im \frac{2e^{\im \pi j/N}\varpi_j}{\sqrt{2 \omega(j/N)}}(\xi_j - \eta_{N-j})\right) \equiv \xi_j \ ,
\end{align*}
where we used that $2e^{\im \pi j/N}\varpi_j = \im \om{j}$. One verifies analogously that $d\psi_j(0,0)(\xi, \eta) = \eta_j$. 
\vspace{0.5em}\\
We prove now property $(\Psi 4)$, which  is a consequence of the fact that the space of normally analytic maps is closed by composition  (see Lemma \ref{FGinN}). Fix $s \geq 0$ and $\sigma \geq 0$. Let $0 <\epsilon \leq \tfrac{\epsilon_*}{C_{\Theta_5}}$, where $C_{\Theta_5}$ is the constant in \eqref{theta.sup5}. Since 
$Z= dZ(0,0) + Z^0$ and $\Theta_\Xi = d\Theta_\Xi(0,0) + \Theta_\Xi^0$, one gets that
\begin{equation}
\begin{aligned}
&\Psi^0 = - Z^0 \circ \Theta_\Xi - dZ(0,0) \circ \Theta^0_\Xi \ .
\end{aligned}
\end{equation}
Thus properties $(Z3)$, $(\Theta 2)$ and estimate \eqref{theta.sup5} imply that there exists a constant $C >0$, independent of $N$, such that
\begin{equation*}
\mmod{\und{\Psi^0}}_{\epsilon/N^2} \equiv \sup_{\norm{(\xi, \eta)}_{\spazio{\reg}} \leq \epsilon/N^2} \norm{\und{\Psi^0}(\xi, \eta)}_{\spazio{s+1, \sigma}} \leq  \frac{C \, \epsilon^2}{N^2} \ ,
\end{equation*}
which proves the first estimate of $(\Psi 4)$. We study now the adjoint map  $d\Psi^0(\xi, \eta)^*$. Writing $d\Theta_\Xi = d\Theta_\Xi(0,0) + d\Theta_\Xi^0$ one gets that
\begin{align*}
 d\Psi^0(\xi, \eta)^* &= -d\Theta_\Xi(0,0)^*\, dZ^0(\Theta_\Xi(\xi, \eta))^* - d\Theta_\Xi^0(\xi, \eta)^*\, dZ^0(\Theta_\Xi(\xi, \eta))^* - d\Theta_\Xi^0(\xi, \eta)^*\, dZ(0,0)^*\\
 & = I + II + III.
\end{align*}
We estimate each term in the expression displayed above. In the following, if $A \in \Nc_\rho(\spazio{\reg}, \L(\spazio{\reg}, \, \spazio{s+1, \sigma}))$, we denote by 
$$
\mmod{\und{A}}_\rho \equiv \sup_{\norm{(\xi, \eta)}_{\spazio{\reg}} \leq \epsilon/N^2} \norm{\und{A}(\xi, \eta)}_{\L(\spazio{\reg}, \, \spazio{s+1, \sigma})}.
$$
We begin  by estimating $I$:
\begin{align*}
 \mmod{\und{I}}_{\epsilon/N^2} &\leq \frac{C_{\Theta_2}}{N} \sup_{\norm{(\xi, \eta)}_{\spazio{\reg}} \leq \epsilon/N^2} \norm{\und{dZ^0}(\und{\Theta_\Xi}(\xi, \eta))^*}_{\L(\spazio{\reg}, \, \Cs^{s+2, \sigma})}  \leq \frac{C_{\Theta_2}}{N} C_{Z_4} C_{\Theta_5} N\, \epsilon \leq C \epsilon , 
\end{align*}
where in the first inequality we used the second estimate of \eqref{dtheta.est}  and in the second inequality  we used the second estimate in \eqref{dZ0*.ext}. Now we study $II$:
\begin{align*}
 \mmod{\und{II}}_{\epsilon/N^2}  \leq \frac{C_{\Theta_4} \epsilon}{N^2} \sup_{\norm{(\xi, \eta)}_{\spazio{\reg}} \leq \epsilon/N^2} \norm{ \und{dZ^0}(\und{\Theta_\Xi}(\xi, \eta))^*}_{\L(\spazio{\reg}, \, \Cs^{s+2, \sigma})}  \leq \frac{C_{\Theta_4} \epsilon}{N^2} C_{Z_4}C_{\Theta_5} N \epsilon \leq  \frac{C \epsilon^2}{N},
\end{align*}
where we used the second estimate in \eqref{eq:Thata_na} and again  $(Z4)$. Finally, using again $(\Theta 2)$ and the second estimate of \eqref{dZ(0).estimate},  one has
\begin{align*}
\mmod{\und{III}}_{\epsilon/N^2}  \leq \frac{C_{\Theta_4} \epsilon}{N^2} \norm{\und{dZ}(0,0)^*}_{\L(\spazio{\reg}, \, \Cs^{s+2, \sigma})}  \leq \frac{C_{\Theta_4} \epsilon}{N^2} C_{Z_2} N^2 \leq C \epsilon \ .
\end{align*}
Collecting the estimates above one gets
$$
\mmod{[\und{d \Psi^{0}}]^*}_{\epsilon/N^2} \equiv \sup_{\norm{(\xi, \eta)}_{\spazio{\reg}} \leq \epsilon/N^2} \norm{\und{d \Psi^{0}}(\xi, \eta)^*}_{\L(\spazio{\reg}, \, \spazio{s+1, \sigma})} \leq 3C \epsilon,
$$
and $(\Psi 4)$ follows.
\qed

\vspace{1em}
\noindent 
{\it Proof of Corollary \ref{corkuksinperelman}.} Provided $0 < R < R'_\reg$ is small enough, one has that
$w_0:=\Phi^{-1}_N(v_0)$ fulfills 
$$\norma{w_0}_{\spazio{\reg}}\leq \frac{R}{N^2}(1+CR)\ ,$$
and, denoting by $w(t)$ the solution in Birkhoff coordinates, one has 
$\norma{w_0}_{\spazio{\reg}}=\norma{w(t)}_{\spazio{\reg}}$. Thus,
provided $0< R < R'_\reg$ is small enough one has
$$
\norma{v(t)}_{\spazio{\reg}}=\norma{\Phi_N(w(t))}_{\spazio{\reg}}
\leq \frac{R}{N^2}(1+C'R)
$$
which implies the thesis. \qed

\subsection{Proof of Theorem \ref{inverso} }\label{p.inverso}

The proof is based on the construction of the first terms of the
Taylor expansion of $\Phi_N$ through Birkhoff normal form. To this end
we work with the complex variables $(\xi,\eta)$ (defined in
\eqref{xi_variable}) and will eventually restrict to the real subspace
$\spazior{\reg}$.
\begin{remark}
\label{Tay}
Consider the Taylor expansion of $\Phi_N$ at the origin, one has
$$
\Phi_N=\uno+\tdue{\Phi_N}+ O(\norma{(\xi,\eta)}_{\spazio{\reg}}^3)\ ,
$$
then $\tdue{\Phi_N}$ is a bounded quadratic polynomial.
Furthermore, since $\Phi_N$ is canonical, $\tdue {\Phi_N}$ is a
Hamiltonian vector field, i.e. there exists a cubic complex valued
polynomial $\chi_{\Phi_N}$ s.t. $\tdue{\Phi_n}$ is the Hamiltonian
vector field of  $\chi_{\Phi_N}$. 
 \end{remark}

We need a preliminary result about a uniqueness property of the
transformation introducing Birkhoff coordinates (called below Birkhoff
map).

\begin{lem}
\label{uniq.bnf} 
Let $\Phi_N$ and $ \Psi_N$ be
Birkhoff maps for $H_{Toda}$, analytic in some neighborhood of the
origin;  assume that $d\Phi_N(0,0) \equiv d\Psi_N(0,0)= \uno$ and denote
by $\chi_{\Phi_N}$ and $\chi_{\Psi_N}$ the Hamiltonian functions
corresponding to $\tdue{\Phi_N}$ and $\tdue{\Psi_N}$ respectively,
then one has
\begin{equation}
\label{poi.sd}
\left\{H_0;\chi_{\Phi_N}-\chi_{\Psi_N}\right\}=0\ ,
\end{equation}
where $H_0$ is defined in \eqref{quad.part}. 
\end{lem}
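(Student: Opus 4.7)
The plan is to compare the cubic terms in the Taylor expansion of $H_{Toda}\circ\Phi_N$ and $H_{Toda}\circ\Psi_N$ at the origin. Since both maps are Birkhoff, these two functions depend only on the actions $I_k=(x_k^2+y_k^2)/2$, hence are series with only even-degree homogeneous parts in $(x,y)$; in particular their cubic parts vanish. Writing out the cubic part explicitly will give \eqref{poi.sd}.

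First I would decompose $H_{Toda}$ into its homogeneous components at the origin,
\begin{equation*}
H_{Toda}=\text{const}+H_0+H_{Toda}^{(3)}+H_{Toda}^{(4)}+\cdots,
\end{equation*}
where $H_0$ is the quadratic part in \eqref{h0} and $H_{Toda}^{(3)}=\tfrac{1}{6}\sum_j(q_j-q_{j+1})^3$ is the cubic part. Using $d\Phi_N(0)=\uno$, the Taylor expansion of $\Phi_N$ reads $\Phi_N=\uno+Q^{\Phi_N}+O(v^3)$, with $Q^{\Phi_N}=X_{\chi_{\Phi_N}}$ the Hamiltonian vector field of the cubic polynomial $\chi_{\Phi_N}$ (see Remark \ref{Tay}). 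Composing and keeping only terms of degree $\le 3$ in the phase space variables,
\begin{equation*}
H_{Toda}\circ\Phi_N=H_0+\bigl(H_{Toda}^{(3)}+dH_0\cdot Q^{\Phi_N}\bigr)+O(v^4).
\end{equation*}

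Next I would rewrite the cubic contribution using the identity $dH_0(v)\cdot X_{\chi_{\Phi_N}}(v)=\{H_0,\chi_{\Phi_N}\}(v)$, so that the cubic part of $H_{Toda}\circ\Phi_N$ equals $H_{Toda}^{(3)}+\{H_0,\chi_{\Phi_N}\}$, and similarly the cubic part of $H_{Toda}\circ\Psi_N$ equals $H_{Toda}^{(3)}+\{H_0,\chi_{\Psi_N}\}$. Since by hypothesis both $H_{Toda}\circ\Phi_N$ and $H_{Toda}\circ\Psi_N$ are (analytic) functions of the actions $I_k$ only, they are even-degree series in $(x,y)$, hence their cubic parts vanish identically. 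Therefore
\begin{equation*}
H_{Toda}^{(3)}+\{H_0,\chi_{\Phi_N}\}=0=H_{Toda}^{(3)}+\{H_0,\chi_{\Psi_N}\},
\end{equation*}
and subtracting yields \eqref{poi.sd}.

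There is no real obstacle: the only care needed is in the sign/bookkeeping for the identification $Q^{\Phi_N}=X_{\chi_{\Phi_N}}$ (so that $dH_0\cdot Q^{\Phi_N}=\{H_0,\chi_{\Phi_N}\}$) and in noting that the absence of a cubic term in any function of $\{I_k\}$ is what forces $\chi_{\Phi_N}-\chi_{\Psi_N}$ to Poisson-commute with $H_0$. This cubic uniqueness is exactly what will be needed later to pin down $\chi_{\Phi_N}$ up to a kernel term and then to extract the lower bound on $Q^{\Phi_N}$ in Theorem \ref{inverso}.
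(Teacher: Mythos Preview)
Your proposal is correct and follows essentially the same argument as the paper: expand $H_{Toda}\circ\Phi_N$ to cubic order, identify the cubic part as $H_{Toda}^{(3)}+\{H_0,\chi_{\Phi_N}\}$, use that a function of the actions has no odd-degree terms to conclude this cubic part vanishes, and subtract the analogous identity for $\Psi_N$. The paper's proof is the same computation with $H_1$ in place of your $H_{Toda}^{(3)}$.
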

\begin{proof}
By a standard computation of the Taylor expansion one has
$$
H_{Toda}\circ\Phi_N=H_0+\left\{H_0,\chi_{\Phi_N}\right\}+H_1+h.o.t.
$$
where $H_1$ is the function
$$
H_1(q)=\sum_{j=0}^{N-1}\frac{(q_j-q_{j+1})^3}{6}\ 
$$ Since $\Phi_N$ is a Birkhoff map, the function $H_{Toda}\circ
\Phi_N$ is in Birkhoff normal form so in particular its Taylor
expansion contains only terms of even degree. Thus the cubic terms in
the expansion above must vanish: $\left\{H_0,\chi_{\Phi_N}\right\}+
H_1 = 0 $.  The same argument holds also for the map $\Psi_{N}$, thus
the thesis follows.
\end{proof}

\begin{remark}
\label{unico}
Writing as usual 
\begin{align*}
\chi_{\Phi_N}(\xi,\eta)=\sum_{|K|+|L|=3}\chi_{K,L}\xi^K\eta^L\ ,
\end{align*}
one gets that, since 
$$
\left\{ H_0,\chi_{\Phi_N}\right\}=-\sum_{|K| + |L|=3} \im \omega\cdot (K-L)
\chi_{K,L}\, \xi^K\eta^L \ ,
$$ eq. \eqref{poi.sd} implies that, if for some $K,L$ one has
$\omega\cdot(K-L)\not=0$, then $\chi_{K,L}$ is unique and coincides
with $\frac{H_{K,L}}{\im\omega\cdot(K-L)}$ with an obvious definition
of $H_{K,L}$.
\end{remark}

\begin{lemma}
\label{huno}
In terms of the variables $(\xi,\eta)$ one has 
 \begin{align*}
\label{huno1}
H_1(\xi, \eta) = \frac{1}{12\sqrt{2N}}\left[\sum_{\substack{k_1+k_2+k_3= 0 \bmod N \\ 1 \leq k_1, k_2, k_3, \leq N-1}}  (-1)^{\frac{k_1 + k_2 + k_3}{N}} \sqrt{\omega_{k_1}}\sqrt{\omega_{k_2}}\sqrt{\omega_{k_3}}\left(
  \xi_{k_1}\xi_{k_2}\xi_{k_3}
  + \eta_{k_1}\eta_{k_2}\eta_{k_3}\right)\right.  \\ \left.
+ 3  \sum_{\substack{k_1+k_2-k_3=0 \bmod N \\ 1 \leq k_1, k_2, k_3 \leq N-1}}(-1)^{\frac{k_1 + k_2 - k_3}{N}} \sqrt{\omega_{k_1}}\sqrt{\omega_{k_2}}\sqrt{\omega_{k_3}}
  \left( \xi_{k_1}\xi_{k_2}\eta_{k_3}
  +\eta_{k_1}\eta_{k_2}\xi_{k_3}\right) \right]
\end{align*}\end{lemma}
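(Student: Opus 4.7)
The plan is a direct Fourier computation in three steps. First, I invert the definitions in \eqref{xi_variable} to express $\hat q_k$ in terms of $(\xi,\eta)$. Using $\om{N-k}=\om{k}$ one easily checks that $\hat q_k=(\xi_k-\eta_{N-k})/(\im\sqrt{2\om{k}})$ for $1\le k\le N-1$. Combining with $q_j=\frac{1}{\sqrt N}\sum_k\hat q_k\, e^{-2\im\pi jk/N}$ and the identity $1-e^{-2\im\pi k/N}=\im\om{k}\,e^{-\im\pi k/N}$, this yields
\[
q_j-q_{j+1}=\frac{1}{\sqrt N}\sum_{k=1}^{N-1}\sqrt{\frac{\om{k}}{2}}\,(\xi_k-\eta_{N-k})\,e^{-\im\pi k/N}\,e^{-2\im\pi jk/N}\ .
\]

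Second, I substitute this into $H_1=\tfrac{1}{6}\sum_j(q_j-q_{j+1})^3$, expand the cube, and collapse the $j$-sum via $\sum_{j=0}^{N-1}e^{-2\im\pi j(k_1+k_2+k_3)/N}=N\,\uno_{\{k_1+k_2+k_3\equiv 0\bmod N\}}$. For $k_1,k_2,k_3\in\{1,\dots,N-1\}$ this constraint forces $m:=(k_1+k_2+k_3)/N\in\{1,2\}$, whence $e^{-\im\pi(k_1+k_2+k_3)/N}=(-1)^m$. After simplification I obtain the compact intermediate expression
\[
H_1=\frac{1}{12\sqrt{2N}}\sum_{\substack{k_1,k_2,k_3\in\{1,\dots,N-1\}\\ k_1+k_2+k_3\equiv 0\bmod N}}(-1)^{\frac{k_1+k_2+k_3}{N}}\sqrt{\om{k_1}\om{k_2}\om{k_3}}\prod_{i=1}^{3}(\xi_{k_i}-\eta_{N-k_i})\ .
\]

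The third and final step is combinatorial: expand the product into $2^3=8$ monomials and, for each occurrence of $\eta_{N-k_i}$, perform the change of summation variable $k_i\mapsto N-k_i$, using $\om{N-k}=\om{k}$ so that the square-root factor is preserved. Each such substitution shifts $(k_1+k_2+k_3)$ by $\pm N$ and hence multiplies $(-1)^m$ by $-1$. One checks that these parity shifts exactly cancel the minus signs produced by expanding $(\xi_{k_1}-\eta_{N-k_1})(\xi_{k_2}-\eta_{N-k_2})(\xi_{k_3}-\eta_{N-k_3})$: the $\eta^3$ term (three substitutions, three sign changes, times the $(-1)$ from the expansion) pairs with the $\xi^3$ term under the unchanged constraint $k_1+k_2+k_3\equiv 0\bmod N$; and by the $S_3$-symmetry of the summand in $(k_1,k_2,k_3)$, the three $\xi^2\eta$-monomials (resp.\ $\xi\eta^2$-monomials) coalesce to $3\,\xi_{k_1}\xi_{k_2}\eta_{k_3}$ (resp.\ $3\,\eta_{k_1}\eta_{k_2}\xi_{k_3}$) with the transformed constraint $k_1+k_2-k_3\equiv 0\bmod N$. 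This reproduces the stated formula.

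The computation is routine; the only place requiring care is the bookkeeping of the global sign $(-1)^{(k_1+k_2+k_3)/N}$ under the index substitutions $k_i\mapsto N-k_i$ --- verifying that an odd number of such substitutions precisely cancels the minus sign it is paired with from the binomial expansion of the cube, producing the positive coefficients $+1$ and $+3$ appearing in the lemma.
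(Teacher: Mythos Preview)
Your proof is correct and follows essentially the same route as the paper: compute $q_j-q_{j+1}$ via the Fourier inversion and the identity $1-e^{-2\im\pi k/N}=\im\,\om{k}e^{-\im\pi k/N}$, cube, collapse the $j$-sum to the discrete delta, and then reorganize into the four monomial types. The only difference is cosmetic---the paper substitutes $\omega_k\hat q_k=\sqrt{\omega_k}(\xi_k-\eta_{N-k})/(\im\sqrt 2)$ \emph{after} computing the cube and leaves the final ``reorganizing the terms'' step implicit, whereas you substitute earlier and spell out the index changes $k_i\mapsto N-k_i$ and the sign bookkeeping in detail.
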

\proof First remark that
$$
q_j-q_{j+1}=\frac{1}{\sqrt N}\sum_{k=0}^{N-1} \hat q_k
\left(1-e^{-\frac{2\pi \im k}{N}}\right)e^{-\frac{2\pi \im jk}{N}}=
\frac{1}{\sqrt N}\sum_{k=0}^{N-1} \im \omega_ke^{-\frac{\im\pi
    k}{N}}\hat q_k e^{-\frac{2\pi \im jk}{N}}\ ,
$$
so that 
\begin{align*}
\frac{1}{6}\sum_{j=0}^{N-1}(q_j-q_{j+1})^3=\frac{\im ^3}{6N^{3/2}}
\sum_{k_1,k_2,k_3} \omega_{k_1}\hat q_{k_1}\omega_{k_2}\hat
q_{k_2}\omega_{k_3}\hat q_{k_3}
e^{-\frac{\im\pi}{N}(k_1+k_2+k_3)}\sum_{j=0}^{N-1} e^{\frac{2\pi \im
    j}{N}(k_1+k_2+k_3)}
\\
= \frac{\im ^3}{6N^{1/2}}
\sum_{k_1+k_2+k_3= 0 \bmod  N} (-1)^{\frac{k_1 + k_2 + k_3}{N}} \omega_{k_1}\hat q_{k_1}\omega_{k_2}\hat
q_{k_2}\omega_{k_3}\hat q_{k_3}\ .
\end{align*}
Substituting 
$$
\omega_k\hat q_k=\sqrt{\omega_k}\frac{\xi_k-\eta_{N-k}}{\im\sqrt 2}
$$
and reorganizing the terms one gets the thesis.
\qed

\begin{lemma}
\label{vbar}
For any $s\geq0$, $\sigma\geq 0$, there exists $C>0$ s.t. one has
\begin{equation}
\label{stisec.der}
\norma{\tdue{\phi_N}(\bar v)}_{\spazio{\reg}}\geq CN^2\norma {\bar v}
_{\spazio{\reg}}^2\ , 
\end{equation}
where $\bar v=((\xi_1,0,0,...,0),(\bar\xi_1,0,0,...,0))\in
\spazior{\reg}$. 
\end{lemma}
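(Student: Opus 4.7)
The plan is to isolate one explicit small-divisor cubic resonance that forces $\chi_{\Phi_N}$ to carry a coefficient of size $N$ on the monomial $\xi_1^2 \eta_2$, and to show that this alone already produces an $O(N)$-large second-mode Fourier component of $Q^{\Phi_N}(\bar v)$. By Lemma \ref{uniq.bnf} together with Remark \ref{unico}, for every non-resonant pair $(K,L)$ with $|K|+|L|=3$ the coefficient of $\chi_{\Phi_N}$ is uniquely given by
$$
\chi_{K,L} = \frac{H_{K,L}}{\im\,\omega\cdot(K-L)},
$$
with $H_{K,L}$ read off from Lemma \ref{huno}. The decisive choice is $K = 2e_1$, $L = e_2$: the selection rule $k_1+k_2-k_3 \equiv 0 \bmod N$ of the $\xi\xi\eta$-part of $H_1$ is satisfied exactly by $(k_1,k_2,k_3) = (1,1,2)$, so
$$
H_{2e_1,e_2} = \frac{\omega_1\sqrt{\omega_2}}{4\sqrt{2N}} = O(N^{-2}),
$$
while
$$
\omega\cdot(K-L) = 2\omega_1 - \omega_2 = 4\sin(\pi/N)\bigl(1-\cos(\pi/N)\bigr) = O(N^{-3}).
$$
Hence $|\chi_{2e_1,e_2}|$ is of order $N$, which is the engine of the blow-up.

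Next, since $Q^{\Phi_N}$ is the Hamiltonian vector field of $\chi_{\Phi_N}$, its $\xi_2$-component at $\bar v$ equals $\im\,\partial_{\eta_2}\chi_{\Phi_N}(\bar v)$. Evaluation at $\bar v$ kills every monomial whose remaining variables are not supported on mode $1$, so only triples $K = k_1 e_1$, $L = l_1 e_1 + e_2$ with $k_1+l_1 = 2$ can contribute. An inspection of Lemma \ref{huno} shows that for $(k_1,l_1)=(0,2)$ one would need $1+1+2 \equiv 0 \bmod N$, and for $(k_1,l_1)=(1,1)$ one would need $1+2-1 \equiv 0 \bmod N$, both of which fail for $N$ large enough; consequently $H_{K,L}=0$ in those two cases. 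The only surviving contribution is therefore $\im\,\chi_{2e_1,e_2}\,\xi_1^2$, whose modulus is bounded below by $cN|\xi_1|^2$ for some constant $c>0$ independent of $N$. A symmetric computation, using the conjugate monomial $\xi_2\eta_1^2$ (whose coefficient equals $-\chi_{2e_1,e_2}$ by the reality condition $\overline{H_{K,L}} = H_{L,K}$), yields an $\eta_2$-component of the same order, so that $|\xi_2|^2 + |\eta_2|^2 \geq c\,N^2|\xi_1|^4$.

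Finally, retaining only these two Fourier components in the Sobolev-analytic norm and using $[2]_N = 2$ together with $\omega_2 = 2\sin(2\pi/N) \geq c/N$, one obtains
$$
\norm{Q^{\Phi_N}(\bar v)}_{\spazio{\reg}}^2 \,\geq\, \frac{1}{N}\,[2]_N^{2s}\,e^{4\sigma}\,\omega_2\,\frac{|\xi_2|^2+|\eta_2|^2}{2} \,\geq\, c'\,|\xi_1|^4,
$$
whereas $\norm{\bar v}_{\spazio{\reg}}^2 = \omega_1 |\xi_1|^2/N \leq c''\,|\xi_1|^2/N^2$. Combining the two inequalities gives $\norm{Q^{\Phi_N}(\bar v)}_{\spazio{\reg}} \geq C\,N^2 \norm{\bar v}_{\spazio{\reg}}^2$ for some $C>0$ independent of $N$, which is \eqref{stisec.der}. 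The main obstacle is the enumeration in the second paragraph: one must rule out that some other cubic monomial in $\chi_{\Phi_N}$ contributes to the $\xi_2$- or $\eta_2$-component of $Q^{\Phi_N}(\bar v)$ with a coefficient of size comparable to $N$ and opposite phase, which could cancel the main term. The arithmetic selection rules of Lemma \ref{huno} make this check routine, but it is the crucial point on which the sharpness of the estimate rests.
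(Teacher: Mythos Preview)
Your proof is correct and follows essentially the same approach as the paper: isolate the $\xi_2$-component of $Q^{\Phi_N}(\bar v)$, use Remark~\ref{unico} to identify the unique contributing monomial $\xi_1^2\eta_2$ via the selection rules of Lemma~\ref{huno}, and exploit the small divisor $2\omega_1-\omega_2=O(N^{-3})$. The only (harmless) difference is that you also track the $\eta_2$-component, which the paper omits since the $\xi_2$-component alone already gives the bound.
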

\proof In this proof, for clarity we denote $\eta_1:=\bar\xi_1$, and
similarly for the other variables. We are going to compute the $\xi_2$
component $[\tdue{\Phi_N}(\bar v)]_{\xi_2}$ of $\tdue{\Phi_N}(\bar v)$
and exploit the inequality
\begin{equation}
\label{chi.3d}
\norma{\tdue{\Phi_N}(\bar v)}_{\spazio{\reg}}  \geq \frac{1}{\sqrt
    N}2^s e^{\sigma 2}\omega_2^{1/2}\frac{1}{\sqrt 2}\left|\left[
    \tdue{\Phi_N}(\bar v)  \right]_{\xi_2} \right|
 =\frac{2^{s}e^{\sigma 2}\omega_2^{1/2}}{\sqrt{2N}}\left|\frac{\partial
  \chi_{\Phi_N}}{\partial \eta_2}(\bar v)\right|\ ;
\end{equation}
the only monomials in $\chi_{\Phi_N}$ contributing to such a quantity are
quadratic in $(\xi_1,\eta_1)$ and linear in $\eta_2$, but due to the
selection rule $k_1\pm k_2\pm k_3=lN$ with a plus for the $\xi$'s and
a minus for the $\eta$'s the only monomial contributing to the
r.h.s. of \eqref{chi.3d} is $\chi_{\bar K,\bar L}\xi^{\bar
  K}\eta^{\bar L}$ with
$\bar K:=(2,0,...,0)$, and $\bar L=(0,1,0,0,...,0)$. 

Since
\begin{align}
\label{non.res}
\omega\cdot(K-L)=2\omega_1-\omega_2=4\sin\frac{\pi}{N}-2\sin\frac{2\pi}{N}
= \frac{2\pi^3}{N^3}+O\left(\frac{1}{N^5}\right) \not=0\ ,
\end{align}
such a coefficient is uniquely defined and, for the 
$\chi_{\Phi_N}$ corresponding to {\it any}
Birkhoff map, one has
\begin{equation}
\label{ilchi}
\chi_{\bar K,\bar
  L}=  \frac{1}{4\sqrt{2N}}\frac{\omega_1\omega_2^{1/2}}{\im
  (2\omega_1-\omega_2)} \ .
\end{equation}
Inserting in \eqref{chi.3d} one has that its r.h.s. is equal to
\begin{align*}
\frac{2^{s}e^{\sigma 2}\omega_2^{1/2}}{\sqrt{2N}}\left|\chi_{\bar K,\bar
  L}\right|\left|\xi_1\right|^2 
=
\frac{C''}{N}\frac{\omega_1\omega_2}{|2\omega_1-\omega_2|}\left|\xi_1\right|^2 
=C'\frac{\omega_2}{|2\omega_1-\omega_2|}\norma{\bar v}_{\spazio{\reg}}^2 
\geq C N^2 \norma{\bar v}_{\spazio{\reg}}^2\ ,
\end{align*}
where $C$, $C'$ and $C''$ are numerical constants independent of $N$
and we used the expansions of $\omega_1$, $\omega_2$ in $1/N$ as well
as equation \eqref{non.res}.\qed

\vspace{1em}
\noindent{\it Proof of Theorem \ref{inverso}.} The thesis immediately
follows taking $\norma{\bar v}_{\spazio{\reg}}=R/N^\alpha$ and
imposing the inequality \eqref{assuQ}.\qed 

\vspace{1em}
\noindent{\it Proof of Corollary \ref{inverso1}.} By
Cauchy inequality and assumption \eqref{assu} $\tdue{\Phi_N }$ fulfills
\begin{equation}
\label{st.cau}
\norma{\tdue{\Phi_N}(\bar v)}_{\spazio{\reg}}\leq
\frac{R'}{N^{\alpha'}}\frac{N^{2\alpha}}{R^2}
\norma{\bar v}_{\spazio{\reg}}^2 \ .
\end{equation}
Comparing this inequality with \eqref{stisec.der}, one
gets
$$
\frac{R'}{R^2}N^{2\alpha-\alpha'}\geq C'' N^2\ ,
$$
which in particular implies the thesis.\qed

\section{FPU packet of modes: proofs.}

In this section we prove the results stated in the subsection \ref{FPU} about
the persistence of the metastable packet in the FPU system. 

To clarify the procedure, we distinguish here between the $(\xi,\eta)$
variables and the variables $(p,q)$. Thus, we denote by $T: (\xi,
\eta) \to (p,q)$ the change of coordinates of the phase space
introducing the linear Birkhoff variables $(\xi, \eta)$ defined in
\eqref{xi_variable}.  Furthermore it is useful to use for the $(p,q)$
variables the following norms
\begin{equation}
\label{norm.reg} 
\norm{q}_\reg^2 := \frac{1}{N}
\sum_{k=0}^{N-1} \max(1, [k]_N^{2s}) \,e^{2\sigma [k]_N}
\,|\hat{q}_k|^2 \ ,
\end{equation}
and 
\begin{equation}
\label{norm2.e}
\norma{(p,q)}_{\spazio{\reg}}:=\norma{T^{-1}(p,q)}_{\spazio{\reg}}\ .
\end{equation}

\begin{lem}
\label{lem:fpu.grad} 
Fix $s\geq 1$, $\sigma\geq 0$, 
then there exist constants $C_1, C_2>0$,
independent of $N$, such that for all $(\xi, \eta) \in
{\spazio{\reg}}$ and $\forall l\geq 2$ one has
\begin{align}
\label{fpu.bc.est} 
&\norm{X_{H_l\circ T}(\xi, \eta)}_{\spazio{\reg}} \leq \frac{C_1^{l}}{(l+1)!}
\norm{(\xi, \eta)}_{\spazio{\reg}}^{l+1}\ ,
\\
\label{fpu.bc.est2}
&\norm{ X_{H_l\circ T}(\xi, \eta) }_{\spazio{s-1, \sigma}} \leq
\frac{C_2^l}{N(l+1)!} \norm{(\xi, \eta)}_{\spazio{\reg}}^{l+1}.
\end{align}
\end{lem}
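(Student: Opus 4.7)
\bigskip

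\noindent\textbf{Proof plan for Lemma \ref{lem:fpu.grad}.}

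The plan is to exploit that the map $T$ is linear and canonical, and that the norm \eqref{norm2.e} is defined precisely so that $T$ is an isometry. First I would observe that since $H_l$ depends only on $q$, the Hamiltonian vector field in $(p,q)$ coordinates is simply
\[
X_{H_l}(p,q)=\bigl(-\partial_q H_l(q),\,0\bigr),
\]
and by the isometry property,
\[
\norma{X_{H_l\circ T}(\xi,\eta)}_{\spazio{\reg}}
=\norma{X_{H_l}(T(\xi,\eta))}_{\spazio{\reg}}
\quad\text{and similarly for the $\spazio{s-1,\sigma}$ norm.}
\]
A direct computation using the inversion formulas for \eqref{xi_variable} shows that the $\spazio{\reg}$ norm of an element of the form $(P,0)$ reduces to a discrete Sobolev-analytic norm on $P$ without the $\omega_k$ weight: more precisely
\[
\norma{(P,0)}_{\spazio{\reg}}^{2}=\frac{1}{2N}\sum_{k=1}^{N-1}[k]_N^{2s}\,e^{2\sigma[k]_N}\,|\hat P_k|^{2}.
\]

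Next, setting $r_j:=q_j-q_{j+1}$, differentiation of $H_l$ gives
\[
\frac{\partial H_l}{\partial q_j}=\frac{r_j^{l+1}-r_{j-1}^{l+1}}{(l+1)!},
\]
whose Fourier transform is
\(
\widehat{\partial_q H_l}_k=\frac{1-e^{2\pi\im k/N}}{(l+1)!}\,\widehat{r^{l+1}}_k,
\)
so that $|\widehat{\partial_q H_l}_k|=\omega_k\,|\widehat{r^{l+1}}_k|/(l+1)!$. Substituting into the formula above yields
\[
\norma{X_{H_l}(p,q)}_{\spazio{\reg}}^{2}
=\frac{1}{2N\,((l+1)!)^{2}}\sum_{k=1}^{N-1}[k]_N^{2s}e^{2\sigma[k]_N}\,\omega_k^{2}\,|\widehat{r^{l+1}}_k|^{2}.
\]
Using the trivial bound $\omega_k\leq 2$, this is controlled by $C\,((l+1)!)^{-2}\norma{r^{l+1}}_{H^{s,\sigma}}^{2}$, where $\norma{\cdot}_{H^{s,\sigma}}$ denotes the discrete Sobolev-analytic norm with weights $\max(1,[k]_N^{s})e^{\sigma[k]_N}$ on the whole Fourier side (cf.\ \eqref{norm.reg}). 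The key algebraic input is that, for $s\geq 1$, the space $H^{s,\sigma}$ is a Banach algebra with constants independent of $N$, hence iterating gives
\[
\norma{r^{l+1}}_{H^{s,\sigma}}\leq C^{\,l}\,\norma{r}_{H^{s,\sigma}}^{\,l+1}.
\]
Finally, because $\hat r_k=(1-e^{-2\pi\im k/N})\hat q_k$, one has $|\hat r_k|=\omega_k|\hat q_k|$, and using $\hat q_k=(\xi_k-\eta_{N-k})/(\im\sqrt{2\omega_k})$ yields $\omega_k^{2}|\hat q_k|^{2}\leq\omega_k(|\xi_k|^{2}+|\eta_{N-k}|^{2})$, so that $\norma{r}_{H^{s,\sigma}}\leq C\,\norma{(\xi,\eta)}_{\spazio{\reg}}$. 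Combining everything gives \eqref{fpu.bc.est}.

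For \eqref{fpu.bc.est2} the only change is to keep track of one more power of $\omega_k$: using the sharper bound $\omega_k\leq 2\pi[k]_N/N$ one gets
\[
[k]_N^{2(s-1)}\,\omega_k^{2}\leq \frac{(2\pi)^{2}}{N^{2}}[k]_N^{2s},
\]
which produces precisely the extra $1/N^{2}$ factor in the square of the norm, i.e.\ the $1/N$ in \eqref{fpu.bc.est2}; the rest of the argument is unchanged. The main obstacle is the algebra estimate for $H^{s,\sigma}$ with $N$-uniform constants (together with the fact that, at the level of Fourier convolution, the constant deteriorates like $C$ per factor so $r^{l+1}$ costs $C^{\,l}$); this is essentially a Young/Moser-type inequality for weights $[k]_N^{s}e^{\sigma[k]_N}$ and relies on $s\geq 1$ (the discrete analogue of $s>1/2$), together with the fact that $(e^{\sigma[k]_N})$ is submultiplicative in $k$ under addition mod $N$.
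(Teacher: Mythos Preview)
Your proposal is correct and follows essentially the same route as the paper's proof. Both arguments (i) observe that the vector field in $(p,q)$ variables has only a $p$-component given by a difference operator applied to $r^{l+1}$ with $r=S_+q$, (ii) invoke the Banach algebra property of the discrete Sobolev--analytic space (your ``key algebraic input'' is exactly Lemma~\ref{bullet} together with Remark~\ref{rem:s.ext}), and (iii) use the extra factor $\omega_k$ coming from the difference structure to gain the $1/N$ in \eqref{fpu.bc.est2}; the paper phrases this last step as ``$\norma{(S_-(q),0)}_{\spazio{s-1,\sigma}}\leq \tfrac{C}{N}\norma{q}_{\reg}$'', which is precisely your inequality $[k]_N^{2(s-1)}\omega_k^2\leq (2\pi)^2 N^{-2}[k]_N^{2s}$.
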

\proof Define the difference operators by
\begin{equation}
\label{diff.oper}
S_\pm: \{ q_j \}_{0 \leq j \leq N-1} \mapsto \{ q_j-q_{j\pm 1}\}_{0
  \leq j \leq N-1} \ ,\qquad \mbox{where} \ q_N \equiv q_0 \ ,
\end{equation}
and the operator $[ S_{+}(q)]^l $ by
$$
\left\{\left[ S_{+}(q)\right]^l\right\}_j:=(q_j-q_{j+1})^l\ ,
$$
so that 
\begin{equation}
\label{hlcap}
X_{H_l\circ
  T}(\xi,\eta)=\frac{1}{(l+1)!} 
T^{-1}\left( \ S_-\left[S_+(T(\xi,\eta))\right]^l, \ 0 \right)
\ .
\end{equation}
By Lemma  \ref{bullet} and Remark
\ref{rem:s.ext} in Appendix \ref{app.DFT}, there exists a constant $C_\reg >0$, independent of $N$, such that for every integer $n\geq 1$
\begin{equation}
\label{diff.oper.est} 
\norm{[S_{\pm}(q)]^{l+1}}_{\reg} \leq C_\reg^{l+1}  \norm{S_\pm(q)}_\reg^{l+1} \leq
C_\reg^{l+1} \norma{(\xi,\eta)}_{\spazio{\reg}}^{l+1}\ ,
\end{equation}
where for the last inequality we have identified the couple $(0,q)$
with the corresponding $(\xi,\eta)$ vector. 

Then the thesis follows just remarking that
$\norma{T^{-1}(q,0)}_{\spazio{\reg}}=\norma{q}_{\reg}$, and
that $S_-$ is bounded as an operator from $\spazio{\reg}$ to itself,
while one has
$$
\norma{(S_-(q), 0)}_{\spazio{s-1,\sigma}}\leq
\frac{C}{N}\norma{q}_{\reg}
\ . 
$$
\qed

Introducing the Birkhoff coordinates and using the standard formulae
for the pull back of vector fields\footnote{Namely $$
[\Phi_N^*X](x)=d\Phi_N^{-1}(\Phi_N(x))X(\Phi_N(x))\ 
$$ which gives the vector field of the transformed Hamiltonian due to
the fact that $\Phi_N$ is canonical} one has the following 
\begin{corollary}
\label{birhl}
Fix $s\geq 1$ 
and  $\sigma\geq 0$, 
then there exist constants $R_\reg,C_1, C_2>0$,
independent of $N$, such that for all $w\equiv(\phi, \psi) \in
B^{\reg}(R_\reg/N^2)$  one has
\begin{align}
\label{fpu.bc.est.1} 
&\norm{X_{H_l\circ T\circ\Phi_N}(w)}_{\spazio{\reg}} \leq
\frac{C_1^{l}}{(l+1)!}  \norm{w}_{\spazio{\reg}}^{l+1}\ , \\
\label{fpu.bc.est.2}
&\norm{ X_{H_l\circ T\circ\Phi_N}(w) }_{\spazio{s-1, \sigma }} \leq
\frac{C_2^l}{N(l+1)!} \norm{w}_{\spazio{\reg}}^{l+1}.
\end{align}
\end{corollary}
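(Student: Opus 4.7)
The plan is to apply the pull-back formula for Hamiltonian vector fields,
\[
[\Phi_N^* X_{H_l\circ T}](w) = d\Phi_N^{-1}(\Phi_N(w))\, X_{H_l\circ T}(\Phi_N(w))\ ,
\]
to estimate the factor $X_{H_l\circ T}(\Phi_N(w))$ via Lemma \ref{lem:fpu.grad}, and to control $d\Phi_N^{-1}(\Phi_N(w))$ using the smoothing estimates provided by Theorem \ref{main} applied to $\Phi_N^{-1}$. Specifically, I would write $\Phi_N^{-1}=\uno+G$ with $G$ analytic from $B^\reg(R_\reg/N^2)$ into $\spazio{s+1,\sigma}$ and $\sup_{B^\reg(R_\reg/N^2)}\norma{G(v)}_{\spazio{s+1,\sigma}}\leq C R_\reg^2/N^2$. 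Since $d\Phi_N(0)=\uno$ forces $G=O(v^2)$, a standard Cauchy estimate in the $\spazio{s+1,\sigma}$-norm yields
\[
\norma{dG(v)}_{\L(\spazio\reg,\,\spazio{s+1,\sigma})} \leq C\,\norma{v}_{\spazio\reg}
\]
on a slightly smaller ball; moreover, Remark \ref{rem.ds} gives $\norma{\Phi_N(w)}_{\spazio\reg}\leq 2\norma{w}_{\spazio\reg}$ for $w$ in the ball under consideration.

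With these tools in hand, \eqref{fpu.bc.est.1} is straightforward: applying Lemma \ref{lem:fpu.grad} at the point $\Phi_N(w)$ yields
\[
\norma{X_{H_l\circ T}(\Phi_N(w))}_{\spazio{\reg}}\leq \frac{C_1^{l}}{(l+1)!}\norma{\Phi_N(w)}^{l+1}_{\spazio{\reg}}\leq \frac{(2C_1)^{l}}{(l+1)!}\norma{w}^{l+1}_{\spazio{\reg}}\ ,
\]
and the operator $d\Phi_N^{-1}(\Phi_N(w))=\uno+dG(\Phi_N(w))$ is uniformly bounded on $\spazio\reg$ with norm at most $1+CR_\reg$ (using the embedding $\spazio{s+1,\sigma}\hookrightarrow\spazio{\reg}$), so \eqref{fpu.bc.est.1} follows after absorbing constants.

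The main point is \eqref{fpu.bc.est.2}, where the extra $1/N$ factor must be produced. I would split
\[
[\Phi_N^* X_{H_l\circ T}](w) = X_{H_l\circ T}(\Phi_N(w)) + dG(\Phi_N(w))\,X_{H_l\circ T}(\Phi_N(w))\ ;
\]
the first summand is handled directly by Lemma \ref{lem:fpu.grad} (second estimate), which already carries the factor $1/N$. For the second summand I would use the continuous embedding $\spazio{s+1,\sigma}\hookrightarrow\spazio{s-1,\sigma}$ (valid with norm $1$ because $[k]_N\geq 1$ for $1\leq k\leq N-1$) together with the Cauchy bound on $dG$ to get
\[
\norma{dG(\Phi_N(w))\,X_{H_l\circ T}(\Phi_N(w))}_{\spazio{s-1,\sigma}}\leq C\norma{\Phi_N(w)}_{\spazio{\reg}}\cdot\frac{C_1^l}{(l+1)!}\norma{\Phi_N(w)}^{l+1}_{\spazio{\reg}}\ .
\]
The extra factor $\norma{\Phi_N(w)}_{\spazio{\reg}}\leq 2R_\reg/N^2\leq 2R_\reg/N$ supplies precisely the missing $1/N$ factor, and combining the two contributions yields \eqref{fpu.bc.est.2}. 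The slightly delicate point is that one needs Theorem \ref{main} applied to $\Phi_N^{-1}$ with uniform-in-$N$ constants in the one-smoothing norm, otherwise the $dG$ contribution would destroy the gain of $1/N$.
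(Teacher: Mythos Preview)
Your treatment of \eqref{fpu.bc.est.1} via the pull-back formula is correct and is exactly what the paper does. The argument for \eqref{fpu.bc.est.2}, however, has a genuine gap: the claimed Cauchy bound
\[
\norma{dG(v)}_{\L(\spazio\reg,\,\spazio{s+1,\sigma})}\leq C\,\norma{v}_{\spazio\reg}
\]
with $C$ independent of $N$ is \emph{false}. Theorem \ref{main} only tells you that $G$ is bounded by $C R_\reg^2/N^2$ on a ball of radius $R_\reg/N^2$; the Cauchy estimate for the second differential therefore gives $\norma{d^2G}\leq C N^2$, hence only $\norma{dG(v)}\leq C N^2\norma{v}_{\spazio\reg}$. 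This $N^2$ is sharp: by Lemma \ref{vbar} the quadratic part $-Q^{\Phi_N}$ of $G$ satisfies $\norma{Q^{\Phi_N}(\bar v)}_{\spazio\reg}\geq cN^2\norma{\bar v}^2_{\spazio\reg}$ for suitable $\bar v$, so $\norma{dG(\bar v)}\geq c' N^2\norma{\bar v}$. Consequently the ``extra factor'' $\norma{\Phi_N(w)}_{\spazio\reg}\leq 2R_\reg/N^2$ only cancels this $N^2$ and leaves a constant, not a $1/N$, in front of the second summand of your splitting.

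The fix (which is what the paper tacitly uses) is simpler than your route: do not split. Apply Theorem \ref{main} for $\Phi_N^{-1}$ at regularity level $s-1$ --- this is exactly where the hypothesis $s\geq 1$ enters --- to conclude that $d\Phi_N^{-1}(\Phi_N(w))$ is bounded \emph{as an operator on $\spazio{s-1,\sigma}$} by a constant independent of $N$ (since $\norma{\Phi_N(w)}_{\spazio{s-1,\sigma}}\leq\norma{\Phi_N(w)}_{\spazio\reg}\leq 2R_\reg/N^2$ lies in the admissible ball). Then the second estimate of Lemma \ref{lem:fpu.grad}, which already carries the factor $1/N$, passes directly through $d\Phi_N^{-1}$ and yields \eqref{fpu.bc.est.2} without any further manipulation.
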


\begin{remark}
\label{rem.hp}
Write
\begin{equation}
\begin{aligned}
\label{H_fpu.1} 
\tilde H_{FPU} \equiv H_{FPU}\circ T \circ \Phi_N =\tilde
H_{Toda}+\tilde H_P\ ,
\end{aligned}
\end{equation}
where 
\begin{equation}
\label{ddd}
\tilde H_{Toda}:= H_{Toda}\circ T \circ \Phi_N \ ,\quad \tilde
H_P:=(\beta -1) H_2\circ T \circ \Phi_N + H^{(3)}\circ T \circ
\Phi_N\ ,
\end{equation}
then, provided $R$ is small enough the vector field of
$\tilde H_P$ fulfills the following estimates 
\begin{align}
\label{fpu.bc.est.5} 
&\norm{X_{\tilde H_P}(w)}_{\spazio{\reg}} \leq C
\left[|\beta-1|\norm{w}_{\spazio{\reg}}^3+C\norm{w}_{\spazio{\reg}}^4   \right]\ ,
\\
\label{fpu.bc.est.6}
&\norm{ X_{\tilde H_P}(w) }_{\spazio{s-1, \sigma }}
\leq\frac{C}{N}
\left[|\beta-1|\norm{w}_{\spazio{\reg}}^3+C\norm{w}_{\spazio{\reg}}^4
  \right] \ ,
\end{align}
for all $w\in B^{\reg}(R/N^2)$.
\end{remark}

In the following we denote by $v(t)\equiv (\xi(t),\bar\xi(t))$ the
solution of the FPU model in the original Cartesian coordinates (we
restrict to the real subspace). We denote by $w(t):=\Phi_N^{-1}(v(t))$
the same solution in Birkhoff coordinates. 

\begin{lem}
\label{lem:E-E0}Fix $s \geq 2\,$ and  $\sigma \geq 0$.
Then there exist $R'_\reg, \, T, \, C_2 >0$ such that $v_0\in
\Br^{\reg}\left(\frac{R}{N^2}\right)$ with $ R\leq R'_{\reg}$
implies $v(t)\in \Br^{\reg}\left(\frac{4R}{N^2}\right)$ for
\begin{equation}
\label{tempi.dim}
|t|\leq \frac{T}{R^2\mu^4[|\beta-1|+C_2R\mu^2]}\ .
\end{equation}
\end{lem}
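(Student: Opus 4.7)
The strategy is to pass to Birkhoff coordinates, where the Toda Hamiltonian preserves the weighted action norm, and run a bootstrap argument driven only by the small FPU perturbation $\tilde H_P$.

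First, transfer the initial datum: setting $w_0:=\Phi_N^{-1}(v_0)$, Remark \ref{rem.ds} applied to $\Phi_N^{-1}$ gives $\|w_0\|_{\spazio{\reg}}\le(R+C_\reg R^2)/N^2\le 2R/N^2$ provided $R$ is sufficiently small. Since $\tilde H_{Toda}$ is a function of the actions $I_k=(x_k^2+y_k^2)/2$ alone (Theorem \ref{HK}(i)), its flow preserves every $I_k$ and in particular the weighted sum
\begin{equation*}
\|w\|^2_{\spazio{\reg}}=\frac1N\sum_{k=1}^{N-1}[k]_N^{2s}e^{2\sigma[k]_N}\om{k}\,I_k.
\end{equation*}
Hence along the FPU flow in Birkhoff variables only $\tilde H_P$ contributes to $\tfrac{d}{dt}\|w(t)\|^2_{\spazio{\reg}}$. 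Writing $\{I_k,\tilde H_P\}=x_k(X_{\tilde H_P})_{x_k}+y_k(X_{\tilde H_P})_{y_k}$ and applying Cauchy--Schwarz term-by-term in the sum gives
\begin{equation*}
\bigl|\tfrac{d}{dt}\|w(t)\|^2_{\spazio{\reg}}\bigr|\le 2\|w(t)\|_{\spazio{\reg}}\,\|X_{\tilde H_P}(w(t))\|_{\spazio{\reg}}.
\end{equation*}

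Now run a continuity argument. Let $T^*$ be the largest time such that $\|w(t)\|_{\spazio{\reg}}\le 3R\mu^2$ for $|t|\le T^*$; by continuity $T^*>0$. On this interval, estimate \eqref{fpu.bc.est.5} of Remark \ref{rem.hp} yields
\begin{equation*}
\bigl|\tfrac{d}{dt}\|w(t)\|^2_{\spazio{\reg}}\bigr|\le C\|w\|^2_{\spazio{\reg}}\bigl(|\beta-1|\|w\|^2_{\spazio{\reg}}+C\|w\|^3_{\spazio{\reg}}\bigr)\le C_1 R^4\mu^8\bigl(|\beta-1|+C_2 R\mu^2\bigr),
\end{equation*}
so by integration
\begin{equation*}
\|w(t)\|^2_{\spazio{\reg}}\le 4R^2\mu^4+C_1R^4\mu^8\bigl(|\beta-1|+C_2R\mu^2\bigr)|t|.
\end{equation*}
Choosing the constant $T$ so that the second summand on the right-hand side is at most $4R^2\mu^4$ whenever $|t|\le T/\bigl(R^2\mu^4[|\beta-1|+C_2R\mu^2]\bigr)$, one obtains $\|w(t)\|^2_{\spazio{\reg}}\le 8R^2\mu^4<(3R\mu^2)^2$ strictly, so the defining inequality for $T^*$ is never saturated and the bound $\|w(t)\|_{\spazio{\reg}}\le 3R/N^2$ extends to the entire interval \eqref{tempi.dim}.

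Finally, transferring back with Remark \ref{rem.ds} applied to $\Phi_N$, one has
\begin{equation*}
\|v(t)\|_{\spazio{\reg}}=\|\Phi_N(w(t))\|_{\spazio{\reg}}\le 3R\mu^2(1+3C_\reg R)\le 4R/N^2
\end{equation*}
for $R$ sufficiently small, which is the conclusion. The key point of the argument, and its only real subtlety, is the identification of $\|\cdot\|^2_{\spazio{\reg}}$ as a linear combination of actions in Birkhoff coordinates: this makes the large, fully nonlinear Toda term drop out of the derivative of the norm, leaving the estimate driven by the much smaller perturbation $\tilde H_P$, whose vector field has size $|\beta-1|\|w\|^3+\|w\|^4$; this is what produces the gain of a factor $R^2\mu^4$ in the lifespan with respect to the naive estimate.
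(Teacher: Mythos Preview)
Your proof is correct and follows essentially the same route as the paper's: pass to Birkhoff coordinates, use that $M(w):=\|w\|_{\spazio{\reg}}^2$ Poisson-commutes with $\tilde H_{Toda}$, bound $\{M,\tilde H_P\}$ via $\|dM(w)\|\le C\|w\|_{\spazio{\reg}}$ (which is exactly your Cauchy--Schwarz step) together with \eqref{fpu.bc.est.5}, and close a bootstrap on the interval \eqref{tempi.dim}. The paper phrases the continuity argument through $\bar M(t):=\sup_{|s|\le t}M(w(s))$ and the integral inequality $\bar M(t)\le M(w_0)+|t|\,C\bar M(t)^2(|\beta-1|+C\bar M(t)^{1/2})$, but this is the same argument as yours with only cosmetic differences.
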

\begin{proof}
First consider $w_0:=\Phi^{-1}_N(v_0)$ and remark that (provided
$R'_{\reg}$ is small enough) one has $w_0\in
\Br^{\reg}\left(\frac{2R}{N^2}\right) $. Denote by
$M(w):=\norma w_{\spazior{\reg}}^2$. Since $\left\{
M,\tilde H_{Toda}\right\}\equiv 0$, one has
\begin{equation}
\label{poi.M}
M(w(t))=M(w_0)+\int_0^t\left\{ M; \tilde H_P\right\}(w(s)) \di s\ .
\end{equation}
Denoting $\bar M(t):=\sup_{|s|\leq t}M(w(s))$, one has
\begin{align}
\label{A.B.1}
\bar M(w(t))\leq M(w_0)+\int_0^t\left|\left\{ M; \tilde
H_P\right\}(w(s))\right| \di s
\\
\nonumber
\leq
M(w_0)+\int_0^t\left(C\norma{w(s)}_{\spazior{\reg}}^4\left|\beta-1\right|
+ C\norma{w(s)}_{\spazior{\reg}}^5\right) \di s
\\
\nonumber
\leq 
M(w_0)+\int_0^tC\bar M(t)^2\left( \left|\beta-1\right|+C\bar
M(t)^{1/2}\right) \di s
\\
\label{A.B}
\leq M(w_0)+|t| C\bar M(t)^2\left(\left|\beta-1\right|+C\bar
M(t)^{1/2}\right)\ , 
\end{align}
where, in order to prove the second inequality we used $\left\{ M;
\tilde H_P\right\}:=\di MX_{\tilde H_P}$ and  
$$ \norm{dM(w)}_{\L(\spazio{\reg}, \C)} \leq C \norm{w}_{\spazio{\reg}} \ ,
$$ which follows from an explicit computation. Taking $t$ as in the
statement of the Lemma we have that \eqref{A.B.1}-\eqref{A.B}
ensures $\bar M(t)\leq 9M(w(0))/4$, which implies
$w(t)\in\Br^{\reg}\left(\frac{3R}{N^2}\right)$ from which the
thesis immediately follows.
\end{proof}

\noindent {\it Proof of Theorem \ref{N.1}.} Inequality \eqref{N.1.2} is
a direct consequence of Lemma \ref{lem:E-E0}. To prove inequality 
\eqref{N.1.3}  remark that
$\dot I_k = \{I_k, \tilde H_P \} = x_k \frac{\partial \tilde H_P}{\partial y_k}  -  y_k \frac{\partial \tilde H_P}{\partial x_k}$. 
Thus
\begin{align*}
&\frac{1}{N}\sum_{k=1}^{N-1} [k]_N^{2s-2}e^{2 \sigma [k]_N} \om{k}\mmod{\{I_k, \tilde H_P \}} = \frac{1}{N}\sum_{k=1}^{N-1} [k]_N^{2s-2}e^{2 \sigma [k]_N} \om{k}\mmod{ y_k \frac{\partial \tilde H_P}{\partial y_k} - x_k \frac{\partial \tilde H_P}{\partial x_k}}\\
&  \leq \left( \frac{1}{N}\sum_{k=1}^{N-1} [k]_N^{2s-2}e^{2 \sigma [k]_N}\om{k}(y_k^2+ x_k^2)\right)^{1/2}
\left( \frac{1}{N}\sum_{k=1}^{N-1} [k]_N^{2s-2}e^{2 \sigma [k]_N}\om{k}\left(\mmod{\frac{\partial \tilde H_P}{\partial y_k}}^2 + \mmod{\frac{\partial \tilde H_P}{\partial x_k}}^2\right)\right)^{1/2} \\
&  \leq 2\norm{w}_{\spazior{s-1, \sigma}} \norm{X_{\tilde H_P}(w)}_{\spazior{s-1, \sigma}} \leq\frac{C}{N}
\left[|\beta-1|\norm{w}_{\spazio{\reg}}^4+C\norm{w}_{\spazio{\reg}}^5
  \right] \ ,
\end{align*}
where in the last inequality we used \eqref{fpu.bc.est.6}.
Using that $\mmod{I_k(w(t)) - I_k(w(0))} \leq \int_0^t \mmod{\{I_k, \tilde H_P \}(w(s))} \, ds $, one gets 
$$
\frac{1}{N}\sum_{k=1}^{N-1} [k]_N^{2s-2}e^{2 \sigma [k]_N} \om{k} \mmod{I_k(w(t)) - I_k(w(0))} \leq \frac{|t| C}{N} \sup_{|s|\leq t}
\left[|\beta-1|\norm{w(s)}_{\spazio{\reg}}^4+C\norm{w(s)}_{\spazio{\reg}}^5
  \right] \ ,
$$
which, using $w(t)\in\Br^{\reg}\left(\frac{3R}{N^2}\right)$
immediately implies the thesis.
\qed

\appendix

\section{Properties of normally analytic maps} 
\label{propnagerms}

In this section we study the properties of the space $\mathcal{N}_{\rho}(\spazio{w^1}, \, \spazio{w^2})$ and
$\mathcal{A}_{w^1, \rho}^{w^2}$ defined in section \ref{KP.section}, with weights $w^1 \leq w^2$. In particular, we consider the operations on germs
defined in \cite{kuksinperelman} and perform quantitative estimates.

\begin{lem}Let $w^1 \leq w^2 \leq w^{3}$ be weights. Let $G\in\Nc_{\rho}(\spazio{w^1}, \spazio{w^2})$ with $\mmod{\und{G}}_{\rho}\leq
  \sigma$ and $ F \in
  \mathcal{N}_{\sigma}(\spazio{w^2}, \spazio{w^{3}})$. Then $F \circ G \in \mathcal{N}_\rho(\P^{w^1},\P^{w^{3}})$ and
$\mmod{\und{F \circ G}}_\rho\leq \mmod{\und{F}}_\sigma.$
\label{FGinN}
\end{lem}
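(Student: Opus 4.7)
My plan is to establish the pointwise majorization
\begin{equation*}
\underline{F\circ G}(v) \;\leq\; \underline{F}\bigl(\underline{G}(|v|)\bigr)
\end{equation*}
componentwise in $\spazio{w^3}$, for every $v$ with $\|v\|_{w^1}\leq \rho$, where $|v|$ denotes the vector of moduli of the components of $v$ (so $\||v|\|_{w^1}=\|v\|_{w^1}$). Once this is in hand the lemma follows in two lines: the hypothesis $\mmod{\und{G}}_\rho\leq\sigma$ forces $\underline{G}(|v|)$ to lie in the ball of radius $\sigma$ in $\spazio{w^2}$, so $\underline{F}$ can be evaluated there, and then
\begin{equation*}
\|\underline{F\circ G}(v)\|_{w^3} \;\leq\; \bigl\|\underline{F}\bigl(\underline{G}(|v|)\bigr)\bigr\|_{w^3} \;\leq\; \mmod{\und{F}}_\sigma,
\end{equation*}
uniformly in $v\in B^{w^1}(\rho)$, which gives both the normal analyticity of $F\circ G$ and the quantitative bound.

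To prove the majorization I would expand the two germs in their Taylor series. Writing $v=(\xi,\eta)\in\spazio{w^1}$, denoting by $\{\be_m\}_{m\geq1}$ the canonical basis of $\spazio{w^2}$ and by $\{\bb_n\}_{n\geq1}$ a basis of $\spazio{w^3}$, one has
\begin{equation*}
G(v)=\sum_{r\geq 0}\;\sum_{\substack{|K|+|L|=r\\ m\geq 1}} G^{r,m}_{K,L}\,\xi^K\eta^L\,\be_m,
\qquad
F(u)=\sum_{s\geq 0}\;\sum_{\substack{|P|+|Q|=s\\ n\geq 1}} F^{s,n}_{P,Q}\,u^{P}\bar u^{Q}\,\bb_n,
\end{equation*}
with the monomial notation of \eqref{exp.1}. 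Substituting the series for $G(v)$ in place of $u$ and expanding multinomially, each monomial $\xi^K\eta^L\bb_n$ appearing in the resulting formal series for $F(G(v))$ carries a coefficient which is a (finite) sum of products of one $F^{s,n}_{P,Q}$ with $|P|+|Q|$ many $G^{r_i,m_i}_{K_i,L_i}$'s, with the indices constrained so that the $K_i$'s sum to $K$ and the $L_i$'s sum to $L$. Applying the triangle inequality to each such finite sum, and then recognising the resulting combinatorial expression as exactly the coefficient of $\xi^K\eta^L\bb_n$ in $\underline{F}(\underline{G}(|v|))$, yields the desired componentwise majorization of Taylor coefficients. This bookkeeping step, though elementary, is the main technical point.

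The only remaining issue is convergence, which is automatic from the hypotheses: on $\|v\|_{w^1}\leq\rho$ the series defining $\underline{G}(|v|)$ converges in $\spazio{w^2}$ with norm at most $\sigma$, and on the ball of radius $\sigma$ in $\spazio{w^2}$ the series defining $\underline{F}$ converges absolutely in $\spazio{w^3}$. Since all coefficients of $\underline{F}$ and $\underline{G}$ are nonnegative, Tonelli's theorem legitimises any rearrangement of the resulting double series. This shows simultaneously that the formal expansion of $F\circ G$ is an absolutely convergent power series on $B^{w^1}(\rho)$ and that its modulus polynomial obeys the claimed pointwise majorization, so $F\circ G\in\Nc_\rho(\spazio{w^1},\spazio{w^3})$ with $\mmod{\und{F\circ G}}_\rho\leq \mmod{\und{F}}_\sigma$. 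The main obstacle, as I see it, lies entirely in making the combinatorial identification between coefficients of $F\circ G$ and $\underline{F}(\underline{G}(|\cdot|))$ clean and unambiguous; everything else is a routine unpacking of the definitions in Section \ref{KP.section}.
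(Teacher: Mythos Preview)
Your proposal is correct and follows essentially the same route as the paper's proof: both rest on the pointwise majorization $\underline{F\circ G}(|v|)\leq \underline{F}\bigl(\underline{G}(|v|)\bigr)$, followed by the chain of inequalities using $\mmod{\und{G}}_\rho\leq\sigma$. The only difference is that the paper treats this majorization as known (citing \cite{kuksinperelman}) and gives a two-line proof, whereas you spell out the combinatorial Taylor-coefficient argument; your more explicit treatment is sound and aligns with what the cited reference contains.
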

\begin{proof}Exploiting the obvious inequality $\underline{F\circ
    G}(|v|)\leq \underline{F}\circ \underline{G}(|v|)$(cf \cite{kuksinperelman}), one has
\begin{align*}
\mmod{\und{F \circ G}}_\rho &\equiv \sup_{v\in
  B^{w^1}(\rho)}\norm{\und{F\circ G}(|v|)}_{w^{3}}\leq \sup_{v\in
  B^{w^1}(\rho)}\norm{\und{F}( \und{G}(|v|))}_{w^{3}}\leq \sup_{u \in
  B^{w^2}(\sigma)}{\norm{\und{F}(|u|)}}_{w^{3}}\equiv \mmod{\und{F}}_\sigma.
\end{align*}
\end{proof}

\begin{lem}
\label{Ginverso}
Let  $F \in \mathcal{N}_\rho(\spazio{w^1}, \, \spazio{w^2})$,  $F=O(v^2)$ and $\mmod{\und{F}}_\rho \leq \rho/e$. Then the map $\uno + F$ is invertible in $B^{w^1}(\mu \rho)$, $\mu$ as in \eqref{mu.def}. Moreover 
there exists  $G\in\Nc_{\mu\rho}(\spazio{w^1}, \, \spazio{w^2})$, $G=O(v^2)$, such that 
  $(\uno+F)^{-1}=\uno-G$, and  
\begin{equation}
\label{G.inv.est}
\mmod{\und{G}}_{\mu\rho}\leq
\frac{\mmod{\und{F}}_\rho}{8} .
\end{equation}
\end{lem}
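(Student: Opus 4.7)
The identity $(\uno + F)(\uno - G) = \uno$ I want to prove is equivalent to the fixed-point equation $G(v) = F(v - G(v))$, which I plan to solve by Banach contraction on the complete subset
\[
\mathcal{B} := \bigl\{ H \in \Nc_{\mu\rho}(\spazio{w^1}, \spazio{w^2}) : H = O(v^2),\ \mmod{\und{H}}_{\mu\rho} \leq M/8 \bigr\},
\]
where $M := \mmod{\und{F}}_\rho$, under the operator $\mathcal{T}(H) := F \circ (\uno - H)$. The quantitative key will be a pair of majorant estimates, namely $\mmod{\und{F}}_{r'} \leq M (r'/\rho)^2/(1 - r'/\rho)$ and $\mmod{\und{dF}}_{r'} \leq M/(\rho - r')$ for every $0 < r' < \rho$. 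The first comes from decomposing $F = \sum_{r \geq 2} F^r$ and applying Cauchy's inequality to the vector-valued holomorphic function $z \mapsto \und{F}(z v_0)$, whose norm is dominated by $\norm{\und{F}(|z| v_0)}_{w^2} \leq M$ for $v_0 \geq 0$ and $|z| \leq \rho/\norm{v_0}_{w^1}$, giving $\norm{\und{F^r}(v_0)}_{w^2} \leq M (\norm{v_0}_{w^1}/\rho)^r$ and then summing the geometric series over $r \geq 2$; the second is precisely Remark \ref{Diff}.

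For $H \in \mathcal{B}$ and $\norm{v}_{w^1} \leq \mu\rho$, I set $r' := \mu\rho + M/8$ so that $r'/\rho \leq \mu + 1/(8e) < 1$ thanks to the hypothesis $M \leq \rho/e$. Lemma \ref{FGinN} then ensures $\mathcal{T}(H) \in \Nc_{\mu\rho}$ with
\[
\mmod{\und{\mathcal{T}(H)}}_{\mu\rho} \leq \mmod{\und{F}}_{r'} \leq M \frac{(\mu + 1/(8e))^2}{1 - \mu - 1/(8e)} \leq \tfrac{M}{8},
\]
a numerical check using $\mu \approx 0.051$; moreover $\mathcal{T}(H) = O(v^2)$ since $\uno - H = \uno + O(v^2)$ and $F = O(v^2)$, so $\mathcal{T}(\mathcal{B}) \subset \mathcal{B}$. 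For the contraction, starting from the integral representation
\[
\mathcal{T}(H_1)(v) - \mathcal{T}(H_2)(v) = \int_0^1 dF\bigl(v - t H_1(v) - (1-t) H_2(v)\bigr)\bigl(H_2(v) - H_1(v)\bigr)\, dt,
\]
I pass to moduli via the componentwise monotonicity $\und{dF}(|u|) \leq \und{dF}(|u'|)$ for $|u| \leq |u'|$ and apply the Cauchy bound on $\und{dF}$ at radius $r'$ to deduce
\[
\mmod{\und{\mathcal{T}(H_1) - \mathcal{T}(H_2)}}_{\mu\rho} \leq \frac{M}{\rho - r'}\, \mmod{\und{H_1 - H_2}}_{\mu\rho} \leq \frac{1}{e(1 - \mu - 1/(8e))}\, \mmod{\und{H_1 - H_2}}_{\mu\rho},
\]
whose constant is below $1/2$, giving a strict contraction.

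Banach's fixed-point theorem then produces a unique $G \in \mathcal{B}$ satisfying $G = F \circ (\uno - G)$, hence $(\uno + F)(\uno - G) = \uno$; since $d(\uno + F)(0) = \uno$, this right-inverse is automatically a two-sided inverse on $B^{w^1}(\mu\rho)$ by the local inversion principle. Membership in $\mathcal{B}$ delivers the estimate $\mmod{\und{G}}_{\mu\rho} \leq \mmod{\und{F}}_\rho/8$ and the vanishing $G = O(v^2)$. The main technical point will be bookkeeping the numerical constants to ensure both the self-map and contraction properties hold at the same radius $\mu\rho$, but the choice $\mu = 1/(e\sqrt{32 S})$ fixed in \eqref{mu.def} leaves ample slack.
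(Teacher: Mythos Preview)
Your proof is correct and takes a genuinely different route from the paper's. Rather than a Banach contraction, the paper constructs $G=\sum_{n\geq 2}G^n$ term by term via the recursion
\[
G^n(v)=\sum_{r=2}^{n}\sum_{k_1+\cdots+k_r=n}\tilde F^r\bigl(G^{k_1}(v),\ldots,G^{k_r}(v)\bigr)
\]
and proves by induction the degree-wise bound $\|\underline{G^n}(|v|)\|_{w^2}\leq \frac{M}{8Sn^2}A^n\|v\|_{w^1}^n$ with $A=e\sqrt{32S}/\rho$, using the combinatorial inequality of Lemma~\ref{disserieconv}; the radius $\mu\rho=1/A$ then emerges naturally from summing this majorant series. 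Your fixed-point argument is shorter and avoids that combinatorics, but it only confirms that the given $\mu$ works with slack rather than explaining its origin, and it does not deliver the term-by-term control the paper reuses essentially verbatim in the proof of the flow Lemma~\ref{flussoinA}. One small remark: your contraction step is correct but terse---beyond the monotonicity of $\underline{dF}$ you are implicitly using that the modulus commutes with $\int_0^1$ at the level of Taylor coefficients and satisfies the product bound $\underline{A(\cdot)B(\cdot)}(|v|)\leq\underline{A}(|v|)\,\underline{B}(|v|)$ for operator-valued $A$ and vector-valued $B$; both hold and are standard majorant-calculus facts, but they are the real content of ``passing to moduli'' here.
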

\begin{proof}We look for $G$ in the form $G=\sum_{n \geq 2}G^n$, with the homogeneous polynomial $G^n$ to be determined at every order $n$. Note  that the equation defining $G$ can be given in
  the form $F(v-G(v))=G(v)$, which can be recasted in a recursive way
  giving the formula
\begin{equation}
\label{recursive_inverse}
{G}^n(v)=
\sum_{r=2}^{n}\sum_{k_1+\cdots+k_r=n}{\tilde{F}}^r\Big({G}^{k_1}(v),\cdots,{G}^{k_r}(v)\Big), \qquad \forall \, n \geq 2 \ .
\end{equation}
In the formula above  $k_1, \ldots, k_r \in \N$, and we write  $F = \sum_{r \geq 2} F^r$, where $
F^r$ is a homogeneous polynomial of degree $r$ and $\tilde{F}^r$ is its
associated multilinear map (see \eqref{polin}). Moreover we write
$G^1(v) := v$.  We show now that the formal series $G = \sum_{n \geq 2}
G^n$ with $G^n$ defined by \eqref{recursive_inverse} is normally
analytic in $B^{w^1}(\mu \rho)$. Note that
\begin{equation}
 \und{G^n}(|v|)\leq \sum_{r=2}^{n}\sum_{k_1+\cdots+k_r=n}
 \und{\tilde{F}^r}\Big(\und{G^{k_1}}(|v|),\ldots,\und{G^{k_r}}(|v|)\Big).
\end{equation}
In order to prove that the series  $\sum_{n \geq 2}\und{G^n}$ is convergent in $B^{w^1}(\mu \rho)$, we prove that there exists a constant  $A>0$ such that
\begin{equation}
\norm{\und{G^n}(|v|)}_{w^2}\leq \frac{\mmod{\und{F}}_\rho}{8Sn^2}A^n\norm{v}_{w^1}^n, \qquad \forall n \geq 2.
\label{induzioneserie}
\end{equation}
The proof is by induction on $n$. We will use in the following the chain of inequalities  
$$\norm{\und{\tilde{F}}^r}\leq
e^r\norm{\und{F}^r}\leq e^r \mmod{\und{F}}_\rho / \rho^r \qquad \forall r \geq 1 \ ,$$
see \cite{mujica}. For $n=2$, by \eqref{recursive_inverse} it follows that   $G^2(v) = \tilde{F}^2(v,v)$. Since
$$
\norm{\und{G^2}(|v|)}_{w^2} \leq \norm{\und{\tilde F^2}}\norm{v}^2_{w^1} \leq e^2\frac{\mmod{\und{F}}_\rho}{\rho^2} \norm{v}^2_{w^1},
$$
it follows that  \eqref{induzioneserie} holds for $n=2$ with   $A =  \frac{e(32S)^{1/2}}{\rho}$.
We prove now the inductive step $n-1 \leadsto n$. Assume therefore that \eqref{induzioneserie} holds up to order $n-1$. Then one has
\begin{align*}
 \norm{\und{G^n}(|v|)}_{w^2}&\leq\sum_{r=2}^{n}
 \sum_{k_1+\cdots+k_r=n}
 \norm{\und{\tilde{F}^r}}
 \norm{\und{G^{k_1}}(|v|)}_{w^2}\cdots
 \norm{\und{G^{k_r}}(|v|)}_{w^2}\\ 
 &\leq A^n\norm{v}_{w^1}^n
 \sum_{r=2}^{n}\sum_{k_1+\cdots+k_r=n}
 e^r\frac{\mmod{\und{F}}_\rho}{\rho^r}\frac{\mmod{\und{F}}_\rho^r}{8^r
   S^r k_1^2 \cdots k_r^2}\\ &\leq\frac{\mmod{\und{F}}_\rho}{4 S
   n^2}A^n
 \norm{v}_{w^1}^n\sum_{r=2}^\infty{\left(\frac{e\mmod{\und{F}}_\rho}{2\rho}\right)^r}\leq
 \frac{\mmod{\und{F}}_\rho}{8Sn^2}A^n\norm{v}_{w^1}^n
\end{align*}
where in the first inequality we used the fact that $w^1 \leq w^2$, in the second the inductive assumption and  in the last  we used  the hypothesis $\mmod{\und{F}}_\rho \leq \rho/e$. 
Finally to pass from the second to the third line we used the following inequality, proved in Lemma \eqref{disserieconv} below:
\begin{equation}
n^2\sum_{k_1+\cdots+k_r=n}\frac{1}{ k_1^2 \cdots k_r^2}\leq (4S)^{r-1}, \qquad n \geq 1 \ .
\label{disserieconv2}
\end{equation}
\newline
  Hence, choosing $\mu \rho= 1/A =\rho/e(32 S)^{1/2}$ one proves \eqref{G.inv.est}.
\end{proof}

Now it is easy to prove the following lemma, giving closedness of the
class $\A_{w^1, \rho}^{w^2}$ under different operations. 
\begin{lem}
\label{FGinA}
  Let $w^1\leq w^2$ be weights and let $\mu$ be as in \eqref{mu.def}. Then the following holds true:
\begin{enumerate}[i)]
	\item Let $F \in \A_{w^1,\rho}^{w^2}$ and
          $G\in\A_{w^1,\mu\rho}^{w^2}$ with
          $\norm{G}_{\A_{w^1,\mu\rho}^{w^2}}<\tfrac{\mu\rho}{e}$. Then
          $H(v):= F(v+G(v))$ is of class $ \A_{w^1, \mu\rho}^{w^2}$ and
$$
\norm{H}_{\A_{w^1, \mu\rho}^{w^2}} \leq 2 \norm{F}_{\A_{w^1, \rho}^{w^2}} \ .
$$
\item Let $F \in \mathcal{A}_{w^1, \rho}^{w^2}$ and $\|F\|_{\A_{w^1,\rho}^{w^2}}\leq\rho/e$. Then
  $(\uno+F)^{-1}= \uno + G$, with $G\in \mathcal{A}_{w^1,\mu\rho}^{w^2}$. Moreover one has 
\begin{equation}
\label{sti.inv.2}
\norma{G}_{\A_{w^1,\mu\rho}^{w^2}}\leq 2\norma{F}_{\A_{w^1,\rho}^{w^2}} \ .
\end{equation}
\item Let $F \in \A_{w^1,\rho}^{w^2}$, then the function  $H(v) := dF(v) v$ is in the class $\A_{w^1,\mu\rho}^{w^2}$ and
$$\norma{H}_{\A_{w^1,\mu\rho}^{w^2}} \leq 2 \norma{F}_{\A_{w^1,\rho}^{w^2}} \ . $$
\item Let $F^0, G^0 \in \A_{w^1,\rho}^{w^2}$ with $\norm{F^0}_{\A_{w^1,\rho}^{w^2}} \leq \tfrac{\rho}{e}$. Denote  $F = \uno + F^0$. Then  $H(v):=dG^0(v)^*(F(v))$ is in the class $\A_{w^1,\mu\rho}^{w^2}$ and $$ \norma{H}_{\A_{w^1,\mu\rho}^{w^2}} \leq 2 \norma{G^0}_{\A_{w^1,\rho}^{w^2}}.$$
\end{enumerate}
\end{lem}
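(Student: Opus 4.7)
The plan is to verify the $\A$-norm bound for each item separately by controlling the three contributions $\mmod{\und{H}}_{\mu\rho}$, $\mu\rho\,\mmod{\und{dH}}_{\mu\rho}$ and $\mu\rho\,\mmod{\und{dH^*}}_{\mu\rho}$ that make up $\norm{H}_{\A_{w^1,\mu\rho}^{w^2}}$. The first will be handled by combining the submultiplicative inequalities $\und{F\circ G}(|v|) \leq \und F(\und G(|v|))$ (Lemma \ref{FGinN}) with the norm hypotheses on the inputs. The second will follow from chain-rule expansions for $dH$ together with the Cauchy estimate \eqref{diff.1} from Remark \ref{Diff}, which converts a $\Nc_\rho$-bound on $F$ into a $\Nc_{(1-d)\rho}$-bound on $dF$. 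The third is the only genuinely new ingredient: it relies on the $\A$-class hypothesis — which, beyond $\Nc$-regularity, assumes normal analyticity of $v\mapsto dF(v)^*$ — applied to the adjoints of the chain-rule formulas.

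For items (i) and (iii) the argument will be direct. In (i), writing $H(v) = F(v+G(v))$, the estimate $\und{H}(|v|) \leq \und{F}(|v|+\und{G}(|v|))$ combined with the hypothesis $\norm{G}_{\A_{w^1,\mu\rho}^{w^2}} < \mu\rho/e$ (which keeps $|v|+\und G(|v|)$ inside a ball of radius $\rho$ when $|v| \leq \mu\rho$) yields $\mmod{\und H}_{\mu\rho} \leq \mmod{\und F}_\rho$. The chain rule gives $dH(v) = dF(v+G(v))(\uno + dG(v))$ and, by taking adjoints, $dH(v)^* = (\uno+dG(v))^* dF(v+G(v))^*$; applying the same submultiplicative estimate componentwise to the modulus produces the remaining bounds. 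For (iii), since $H(v) = dF(v)v = \sum_{r\geq 2} r F^r(v)$, one has $\und H(|v|) \leq \und{dF}(|v|)\,|v|$, and a Cauchy step with $d = 1-\mu$ reduces this to a bound on $\mmod{\und F}_\rho$; the analogous estimates for $dH$ and $dH^*$ invoke Cauchy one further time and absorb the loss into numerical constants.

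Item (ii) is the first place where the $\A$-structure really matters. Lemma \ref{Ginverso} already provides $G \in \Nc_{\mu\rho}$ with $\mmod{\und G}_{\mu\rho} \leq \mmod{\und F}_\rho/8$. Differentiating the defining identity $F(v - G(v)) = G(v)$ gives the functional equation
\[
dG(v) = \bigl(\uno + dF(v-G(v))\bigr)^{-1}\, dF(v-G(v)),
\]
and since Cauchy yields $\mmod{\und{dF}}_{(1-d)\rho} \leq \frac{1}{d\rho}\mmod{\und F}_\rho$, the smallness assumption makes $\uno + dF$ invertible by a Neumann series, giving the required bound on $\und{dG}$ after shrinking the radius to $\mu\rho$. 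Taking adjoints yields $dG(v)^* = dF(v-G(v))^*\bigl(\uno + dF(v-G(v))^*\bigr)^{-1}$, and exactly the same Neumann-series estimate, now applied to $\und{dF^*}$ (which is controlled precisely by the $\A$-norm of $F$), produces the adjoint bound. Item (iv) will then follow by the same pattern: decompose $dH(v)$ and $dH(v)^*$ using the product and chain rules applied to $H = dG^{0*}\!\cdot\! F$, bound each factor via the $\A$-norms of $F^0$ and $G^0$ after a Cauchy step to reach the second differential of $G^0$, and multiply using Lemma \ref{FGinN}.

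The main obstacle throughout is that the adjoint estimates require simultaneous normal analyticity of $dF$ \emph{and} $dF^*$ — that is, the $\A$-structure rather than merely the $\Nc$-structure — and that, to carry the Neumann series and product estimates through, one has to lose a factor $\mu$ in the radius at each step while keeping track of universal numerical constants. The constant $\mu$ in \eqref{mu.def} is chosen precisely to accommodate these losses, so all four items close up with the stated bound $2\,\norm{\cdot}_{\A_{w^1,\rho}^{w^2}}$.
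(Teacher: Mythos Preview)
Your proposal is correct and follows essentially the same strategy as the paper: for each item you bound the three pieces of the $\A$-norm by combining the submultiplicative inequality $\und{F\circ G}\leq \und F\circ \und G$, chain-rule/product-rule expressions for $dH$ and $dH^*$, and Cauchy estimates for the extra derivative. In particular, your formula for $dG$ in item (ii) via differentiation of the implicit equation and Neumann inversion is exactly what the paper does; the only place you are slightly less explicit than the paper is item (iv), where the adjoint identity used is
\[
dH(v)^*u=(dF^0(v)^*+\uno)\,dG^0(v)u+d_v\bigl(dG^0(v)^*u\bigr)\,F(v),
\]
which indeed produces the second differential of $G^0$ that you anticipate and is then bounded by a Cauchy step on $\und{dG^{0*}}$.
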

\begin{proof}
\begin{enumerate}
\item[$i)$] Since $\und{H}(|v|) \leq \und{F}(|v| + \und{G}(|v|))$ it follows that $\mmod{\und{H}}_{\mu\rho} \leq \mmod{\und{F}}_{2\mu\rho} \leq \mmod{\und{F}}_{\rho}$. Furthermore, since $dH(v) = dF(v+ G(v)) (\uno + dG(v))$  one gets that $\und{dH}(|v|) \leq \und{dF}(|v| + \und{G}(|v|)) + \und{dF}(|v| + \und{G}(|v|)) \und{dG}(|v|)$, which implies that
$\mu\rho\mmod{\und{dH}}_{\mu\rho} \leq \mmod{\und{dF}}_{\rho}(\mu\rho+\mu\rho\mmod{\und{dG}}_{\mu\rho})\leq \mmod{\und{dF}}_{\rho}\mu\rho(1 + 1/e)$.
The adjoint $dH(v)^*$ is estimated analogously, thus the claimed estimate  follows.
\item[$ii)$] It follows from the formula
$
dG(v)= [\uno-dF(v-G(v))]^{-1}dF(v-G(v)) , 
$
arguing as in item $i)$.
\item[$iii)$] It follows  from $dH(v)u = dF(v)u + d^2F(v)(u,v),$ arguing as in item $i)$.
\item[$iv)$] To estimate  $\und{H}(|v|)$ and $\und{dH}(|v|)$ one proceeds as in item $i)$. In order to estimate $\und{dH}(|v|)^*$ remark that (see \cite{kuksinperelman}) 
$
dH(v)^*u = (dF^0(v)^* + \uno) dG^0(v)u + d_v (dG^0(v)^*u)(F(v)), 
$
thus
$$
\und{dH}(|v|)^*|u| \leq (\und{dF^0}(|v|)^* + \uno) \und{dG^0}(|v|)|u| + d_{|v|} (\und{dG^0}(|v|)^*|u|)(\und{F}(|v|)) \ .
$$
The claimed estimate follows easily.
\end{enumerate}
\end{proof}
Now we analyze the flow generated by a vector field of class
$\A_{w^1,\rho}^{w^2}$.  Given a time dependent vector field $V_t(v)$,
consider the differential equation
\begin{equation}
\label{ode}
\begin{cases}
\dot{u}(t)=V_t(u(t))\\
u(0)=v\ .
\end{cases}
\end{equation}
We will denote by $\phi^t(v)$ the corresponding flow map whose
existence and properties are given in the next lemma. 

\begin{lem}
\label{flussoinA}
Assume that the map $\left[0,1\right]\ni t\mapsto
V_t\in\A_{w^1,\rho}^{w^2} $ is continuous and furthermore fulfills $\sup_{t \in
  [0,1]}\norma{V_t}_{\A_{w^1,\rho}^{w^2}}\leq \rho/e$;  then for each
$t\in \left[0,1\right]$, $\phi^t-\uno \in \A_{w^1,\mu\rho}^{w^2}$ with
$\mu$ as in \eqref{mu.def}. Furthermore one has
\begin{equation}
\label{sti.flu}
\norma{\phi^t-\uno}_{\A_{w^1,\mu\rho}^{w^2}}\leq
2\sup_{t\in[0,1]}\norma{V_t}_{\A_{w^1,\rho}^{w^2}}\ .
\end{equation}
\end{lem}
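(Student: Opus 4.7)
The plan is to produce the flow as a fixed point of the integral equation
$$\psi^t(v)=\int_0^t V_s\bigl(v+\psi^s(v)\bigr)\,ds,\qquad \phi^t=\uno+\psi^t,$$
via Picard iteration inside the class $\A_{w^1,\mu\rho}^{w^2}$, exploiting the hypothesis $M:=\sup_{t\in[0,1]}\norma{V_t}_{\A_{w^1,\rho}^{w^2}}\leq \rho/e$ to close a bootstrap on all three pieces of the $\A$-norm simultaneously.

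Concretely, I would set $\psi^t_0\equiv 0$ and $\psi^t_{n+1}(v):=\int_0^t V_s(v+\psi^s_n(v))\,ds$, then show by induction that the iterates remain in $\A_{w^1,\mu\rho}^{w^2}$ with uniformly bounded norm, splitting the estimate in three pieces. For the modulus, $\und{\psi^t_{n+1}}(|v|)\leq \int_0^t \und{V_s}(|v|+\und{\psi^s_n}(|v|))\,ds$ together with the inductive bound $\mmod{\und{\psi^s_n}}_{\mu\rho}\leq M$ keeps the argument of $\und{V_s}$ inside $B^{w^1}(\rho)$ (because $\mu\rho+M\leq \rho$), giving $\mmod{\und{\psi^t_{n+1}}}_{\mu\rho}\leq M$. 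For the differential, differentiating the iteration yields $d\psi^t_{n+1}(v)=\int_0^t dV_s(v+\psi^s_n(v))(\uno+d\psi^s_n(v))\,ds$; taking underlines and operator norms in $\L(\spazio{w^1},\spazio{w^2})$, and using $w^1\leq w^2$ so that the inclusion $\spazio{w^2}\hookrightarrow\spazio{w^1}$ is a contraction, I obtain the scalar recursion $\alpha_{n+1}\leq (M/\rho)(1+\alpha_n)$ with $\alpha_n:=\sup_t\mmod{\und{d\psi^t_n}}_{\mu\rho}$, whose limit satisfies $\mu\rho\,\alpha_\infty\leq \tfrac{e\mu}{e-1}M$. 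For the adjoint, taking the adjoint of the variational equation produces $d\psi^t(v)^*=\int_0^t(\uno+d\psi^s(v)^*)\,dV_s(v+\psi^s(v))^*\,ds$, and the same recursion, fed this time by the hypothesis $\rho\mmod{\und{dV_s^*}}_\rho\leq M$ built into the $\A$-norm, delivers the identical bound on $\mu\rho\,\mmod{\und{d\psi^{t*}}}_{\mu\rho}$.

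Summing the three contributions gives $\norma{\psi^t}_{\A_{w^1,\mu\rho}^{w^2}}\leq M\bigl(1+\tfrac{2e\mu}{e-1}\bigr)\leq 2M$, since the numerical value $\mu=1/(e\sqrt{32S})$ from \eqref{mu.def} is comfortably below the threshold $(e-1)/(2e)$. Convergence of the Picard sequence then follows from the standard difference estimate $\norma{\psi^t_{n+1}-\psi^t_n}\leq (M/\rho)\int_0^t\norma{\psi^s_n-\psi^s_{n-1}}\,ds$ via a Banach-type contraction, and the limit solves the integral equation; hence $\phi^t$ exists on $[0,1]$, lies in $\uno+\A_{w^1,\mu\rho}^{w^2}$, and satisfies \eqref{sti.flu}. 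The main technical hurdle I anticipate is the adjoint piece: one must take the adjoint of the variational equation with the factors in the correct order and verify that the underline of the resulting operator composition really controls $\und{d\psi^{t*}}$ in the mixed $\L(\spazio{w^1},\spazio{w^2})$/$\L(\spazio{w^1},\spazio{w^1})$ picture forced by $w^1\leq w^2$; everything else reduces to a routine Picard--Gronwall estimate.
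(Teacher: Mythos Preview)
Your Picard-iteration strategy is sound and will close, but it is a genuinely different argument from the one the paper gives. The paper does \emph{not} iterate in time: it expands the flow as a formal power series in $v$, writing $u(t,v)=\sum_{n\geq 1}u^n(t,v)$ with $u^1(t,v)=v$ and
\[
u^n(t,v)=\sum_{r=2}^{n}\ \sum_{k_1+\cdots+k_r=n}\int_0^t \tilde V_s^r\bigl(u^{k_1}(s,v),\dots,u^{k_r}(s,v)\bigr)\,ds,
\]
and then proves the degree-by-degree bound $\|\und{u^n}(t,|v|)\|_{w^2}\leq \frac{M}{8Sn^2}A^n\|v\|_{w^1}^n$ by the same induction used for the inverse map in Lemma~\ref{Ginverso}. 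Differentiating this recursion (and rewriting each term via $E_s:=dV_s$) gives formulas for $du^n$ and $(du^n)^*$ of exactly the same shape, so the bounds on $\und{du}$ and $\und{du^*}$ follow by the identical induction with $\mmod{\und{dV_t}}_\rho$ and $\mmod{\und{dV_t^*}}_\rho$ in place of $\mmod{\und{V_t}}_\rho$.

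What each approach buys: the paper's recursion works directly at the level of Taylor coefficients, so normal analyticity is manifest at every step and no limiting argument is needed; it also treats the modulus, the differential, and the adjoint by literally the same inductive estimate, which is why the paper can dispatch all three in a few lines by pointing back to Lemma~\ref{Ginverso}. Your route is the classical Picard/Gronwall argument transplanted into $\Nc_{\mu\rho}$; it is perfectly valid once you (i) check that $\Nc_{\mu\rho}$ is complete so the Cauchy iterates have a limit in the modulus norm (this follows by Fatou on the coefficients), and (ii) justify the inequality $\und{dV_s(w(v))u(v)}(|v|)\leq \und{dV_s}(\und{w}(|v|))\,\und{u}(|v|)$ for the composition/product step. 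Your flagged worry about the adjoint is the right place to be careful, but since the adjoint equation has the same bilinear structure as the variational equation (with $dV_s^*$ replacing $dV_s$ and the factors swapped), the same recursion $\beta_{n+1}\leq (M/\rho)(1+\beta_n)$ closes using the hypothesis $\rho\,\mmod{\und{dV_s^*}}_\rho\leq M$ built into the $\A$-norm.
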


\begin{proof}
We look for a solution $u(t,v)= \sum_{j \geq 1} u^j(t,v)$ in power
series of $v$, with $u^j(t,v)$ a homogeneous polynomial of degree $j$ in $v$. Expanding the vector field
$V_t(v)=\sum_{r \geq 2}V_t^r(v)$ in Taylor series, one obtains the recursive formula for the solution
\begin{equation}
\label{ric.1}
u^1(t,v)=v,\qquad
u^n(t,v)=\sum_{r=2}^{n}\sum_{k_1+\cdots+k_r=n}\int_0^t{\tilde{V}^r_s(u^{k_1}(s,v),\ldots,u^{k_r}(s,v)) \;ds}\qquad
\forall n\geq 2,
\end{equation}
where $\tilde{V}^r_s$ is the multilinear map associated to $V^r_s$ (see \eqref{polin}).
Arguing as in the proof of \eqref{Ginverso} one gets the bounds 
\begin{equation}
\norm{\underline u^n(t,v)}_{w^2} \leq \frac{\sup_{t \in
    [0,1]}\mmod{\underline {V_t}}_\rho}{8Sn^2}A^n\norm{v}_{w^1}^n \qquad \forall n\geq
2,
\label{ode.rec.est}
\end{equation} 
with $A= \tfrac{e}{\rho} (32S)^{1/2}$,  
from which it follows that $\mmod{\und{\phi^t-\uno}}_{\mu\rho} \leq \sup_{t \in [0,1]}\mmod{\und{V_t}}_\rho/8.$

		We come to the estimate of the
differential of $u(t,v)$ and of its adjoint. We differentiate equation
\eqref{ric.1} getting the recursive formula
\begin{equation}
\label{ric.2}
d u^n(t,v)\xi =\sum_{r=2}^{n}\sum_{k_1+\cdots+k_r=n}
\int_0^t{\left[ \tilde{V}^r_s(du^{k_1}(s,v)\xi,\ldots,u^{k_r}(s,v))
+\cdots+\tilde{V}^r_s(u^{k_1}(s,v),\ldots,du^{k_r}(s,v)\xi)
    \right]ds} \ .
\end{equation}
To estimate such an expression remark that, defining $E_t(v):=d
V_t(v)$ (where the differential is with respect to the $v$ variable
only), one has 
$$ d^{r-1}E_s(u^{k_2}(s,v),\ldots,u^{k_r}(s,v))\xi=
\tilde{V}^r_s(\xi,u^{k_2}(s,v),\ldots,u^{k_r}(s,v))
$$
which allows to write formula \eqref{ric.2} as 
\begin{equation}\begin{aligned}
\label{ric.3}
 d u^n(t,v)\xi &=
\sum_{r=2}^{n} \sum_{k_1+\cdots+k_r=n}
\int_0^t \left[ d^{r-1}E_s (u^{k_2}(s,v)\ldots,u^{k_r}(s,v)) du^{k_1}(s,v)\xi + \ldots \right. \\
&\left. \ldots +d^{r-1} E_s(u^{k_1}(s,v),\ldots,u^{k_{r-1}}(s,v)) du^{k_r}(s,v)\xi \right] ds \ .
\end{aligned}\end{equation} 
This formula allows to proceed exactly as in the estimate of $\und{u^n}$,
namely making the inductive assumption that 
$$
\norma{\underline{d u^n}(t,v)}_{\L(\spazio{w^1}, \spazio{w^2})}\leq \frac{\sup_{t \in
    [0,1]}\mmod{\und{dV_t}}_\rho}{8Sn^2}A^n\norm{v}_{w^1}^n
$$
and proceeding as above one gets the thesis. Finally one has to
estimate $[\underline{d u^n}]^*$, but again equation \eqref{ric.3} allows
to obtain a formula whose estimate is obtained exactly as the estimate
of $\underline {du}$.
\end{proof}
\vspace{1em}
We prove now a  useful inequality.
\begin{lem}
\label{disserieconv}
\textit{\cite{treves}} Let $r \in \N$ be fixed and $S = \sum_{k\geq 1}\tfrac{1}{k^2}$. Then for every $n \in \N$ it holds that
 $$n^2\sum_{\substack{k_1, \ldots, k_r \in \N \\ k_1+\cdots+k_r=n}}\frac{1}{ k_1^2 \cdots k_r^2}\leq (4S)^{r-1} \ .$$
\end{lem}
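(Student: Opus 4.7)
The plan is to argue by induction on $r$, with the base case $r=1$ being trivial since the only admissible multi-index is $k_1=n$, giving $n^2 \cdot n^{-2} = 1 = (4S)^0$.

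For the inductive step from $r-1$ to $r\geq 2$, I would single out the last summation variable: writing $k_r=j$ (so that $j$ ranges over $1,\ldots,n-r+1$) and using the induction hypothesis applied to the partitions of $n-j$ into $r-1$ positive integers,
\begin{equation*}
n^2 \sum_{\substack{k_1+\cdots+k_r=n}} \frac{1}{k_1^2\cdots k_r^2}
= \sum_{j=1}^{n-r+1} \frac{n^2}{j^2} \sum_{\substack{k_1+\cdots+k_{r-1}=n-j}} \frac{1}{k_1^2\cdots k_{r-1}^2}
\leq (4S)^{r-2} \sum_{j=1}^{n-r+1} \frac{n^2}{j^2(n-j)^2}.
\end{equation*}
Thus it suffices to verify the one-variable bound
\begin{equation*}
\sum_{j=1}^{n-1} \frac{n^2}{j^2(n-j)^2} \leq 4S,
\end{equation*}
which would close the induction.

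To prove this, I would use the elementary partial-fraction identity $\frac{1}{j(n-j)} = \frac{1}{n}\bigl(\frac{1}{j}+\frac{1}{n-j}\bigr)$, which upon squaring gives $\frac{n^2}{j^2(n-j)^2} = \bigl(\frac{1}{j}+\frac{1}{n-j}\bigr)^2 \leq 2\bigl(\frac{1}{j^2}+\frac{1}{(n-j)^2}\bigr)$ by the standard inequality $(a+b)^2\leq 2(a^2+b^2)$. Summing over $j=1,\ldots,n-1$ and using the symmetry $j \leftrightarrow n-j$ yields
\begin{equation*}
\sum_{j=1}^{n-1} \frac{n^2}{j^2(n-j)^2} \leq 4 \sum_{j=1}^{n-1} \frac{1}{j^2} \leq 4S,
\end{equation*}
which is exactly the required estimate.

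I do not anticipate any real obstacle: the only mildly delicate point is making sure the summation range $j=1,\ldots,n-r+1$ in the inductive step is harmlessly extended to $j=1,\ldots,n-1$ (all omitted terms are nonnegative), and that the base case is handled separately so that $n-j\geq 1$ in the partial-fraction identity. The constant $4S$ comes out sharp from the symmetric splitting, so no further refinement is needed to propagate the factor $(4S)^{r-1}$ through the induction.
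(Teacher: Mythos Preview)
Your proof is correct and follows essentially the same approach as the paper: induction on $r$, singling out one summation variable, and reducing to the one-variable estimate $\sum_{j=1}^{n-1}\frac{n^2}{j^2(n-j)^2}\leq 4S$ via the inequality $\frac{n^2}{j^2(n-j)^2}\leq 2\bigl(\frac{1}{j^2}+\frac{1}{(n-j)^2}\bigr)$. Your write-up is in fact a bit more explicit, as you justify this last inequality through the partial-fraction identity and $(a+b)^2\leq 2(a^2+b^2)$, whereas the paper simply asserts it.
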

\begin{proof}
The proof is by induction, the case $n=1$ being trivial. For $n>1$ one gets 
$$n^2\sum_{k_1+\cdots+k_r=n}\frac{1}{ k_1^2 \cdots k_r^2}=\sum_{k_1+j=n}\frac{n^2}{ k_1^2 j^2}\sum_{k_2+\cdots+k_r=j}\frac{ j^2}{k_2^2\cdots k_r^2}\leq \sum_{k_1+j=k}\frac{n^2}{ k_1^2 j^2} (4S)^{r-2}$$
by the induction assumption. Now it is enough to note that
$$\sum_{k_1+j=n}\frac{n^2}{ k_1^2 j^2}=\sum_{k_1+j=n}\frac{n^2}{ k_1^2 (n-k_1)^2}\leq 2\sum_{k_1=1}^{n-1}\Big(\frac{1}{k_1^2}+\frac{1}{(n-k_1)^2}\Big)\leq4\sum_{k_1=1}^{n-1}{\frac{1}{k_1^2}}\leq 4S.$$
\end{proof}

\section{Discrete Fourier Transform}
\label{app.DFT} 
In this section we collect some well-known properties of the discrete
Fourier transform (DFT).  For $u \in \mathbb{C}^{N}$, $N \in \N$, the
DFT of $u$ is the vector $\hat{u} \in \C^N$ whose $k^{th}$ component
is defined by
\begin{equation}
\hat{u}_k=\frac{1}{\sqrt{N}} \sum_{j=0}^{N-1}{u_j e^{2\pi \im jk/N}}, \qquad \forall \, 0\leq k \leq N-1.
\label{DFT}
\end{equation}
When the DFT is considered as a map, it will be denoted by $\F$, i.e.  $\F: u \mapsto \hat{u}$.

For any $s\geq 0$ and $\sigma \geq 0$  we endow $\C^N$ with the norm $\norm{\cdot}_\reg$ defined in \eqref{norm.reg}. Such a space will be denoted by $\C^\reg$.
\begin{rem}Let $j$ be an integer such that $0\leq j \leq N-1$. Then
\begin{equation}\sum_{k=0}^{N-1}{e^{\im 2\pi jk/N}}=\left\{
                                           \begin{array}{cc}
                                           0 &\mbox{if}\  j\neq 0 \\
                                           N &\mbox{if}\  j=0
                                           \end{array}
                                           \right.
                                           \label{lemfourier}
\quad \mbox{ and } \quad
\sum_{k=0}^{2N-1}{u_k \,e^{\im \pi kj/N}}=\begin{cases}
2\sqrt{N}\, \hat{u}_{l} , & \qquad j\;\mbox{even}, \, j=2l \\
 0 & \qquad  j\;\mbox{odd}
\end{cases} 
\end{equation}
\end{rem}
\begin{remark}
\label{lem:normaNs}
 Fix $s > \tfrac{1}{2}$ and $\sigma \geq 0$. Then there exists a constant $C_{\reg}>0$, independent of $N$, such that for every $u \in \C^{N}$ the following estimate holds:
$$\sup_{0 \leq j \leq N-1} |u_j| \leq C_{\reg} \norm{u}_{\reg} \ .$$
\end{remark}

For  $u, v \in \C^N$, we denote by  $u \cdot v$ the component-wise product of $u$ and $v$, namely the vector whose $j^{th}$ component is given by the product of the $j^{th}$ components of $u$ and $v$:
\begin{equation}
\label{comp.prod}
(u \cdot v)_j := u_j v_j, \qquad 0 \leq j \leq N-1 \ .
\end{equation}
We denote by $u \ast v$ the convolution product of $u$ and $v$, a vector whose $j^{th}$ component is defined by 
\begin{equation}
\label{conv.prod}
( u \ast v )_j := \sum_{k=0}^{N-1} u_k v_{j-k}, \qquad 0 \leq j \leq N-1 \ ,
\end{equation} 
where in the summation above $u$ and $v$ are extended periodically defining $v_{k + lN} \equiv v_k$ for $l \in \Z$.
The DFT maps the component-wise product in convolution:
\begin{lem}\label{bullet} For $s > \tfrac{1}{2}$ and  $\sigma \geq 0 $ there exists a constant $C_\reg >0$, independent of $N$, such that the following holds:
\begin{enumerate}[(i)]
\item  $\widehat{u \cdot v} = \frac{1}{\sqrt{N}}\; \hat{u} \ast \hat{v}$;
\item $\norm{u \cdot v}_{\reg} \leq C_\reg \norm{u}_{\reg} \norm{v}_{\reg}$;
\item the map $X: u \mapsto u^2$, has bounded modulus w.r.t.  the exponentials, and  $\norm{\und{X }(u)}_{\reg} \leq C_\reg \norm{u}_{\reg}^2.$
\end{enumerate}
\end{lem}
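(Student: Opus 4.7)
My plan is to treat the three items in turn, with (iii) following almost immediately once (ii) is in hand.

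For (i), the approach is a direct computation: substitute the DFT inversion formulas for $u_j$ and $v_j$ into $\widehat{u\cdot v}_k$, swap the order of summation, and use the first orthogonality identity of \eqref{lemfourier} to collapse the inner sum over $j$ to a delta function forcing $m+n \equiv k \pmod N$. The double sum in $(m,n)$ thereby reduces to a single convolution, yielding exactly $\widehat{u\cdot v}_k = \frac{1}{\sqrt N}(\hat u \ast \hat v)_k$ with indices read modulo $N$.

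The real work is in (ii). Using (i), $\norm{u\cdot v}_\reg^2$ is a weighted $\ell^2$ norm of the convolution $\hat u \ast \hat v$. I would exploit the fact that $\Z/N\Z$ equipped with the distance $[\,\cdot\,]_N$ is a metric space, so that whenever $k \equiv m+(k-m) \pmod N$ one has the $N$-uniform bounds $e^{\sigma[k]_N}\leq e^{\sigma[m]_N}e^{\sigma[k-m]_N}$ and $\max(1,[k]_N^s) \leq 2^s\bigl(\max(1,[m]_N^s)+\max(1,[k-m]_N^s)\bigr)$. Splitting the weight symmetrically between the two factors, the convolution estimate reduces to two pieces of the form $\norm{a\ast \tilde b}_{\ell^2}$, where $a_k:=\max(1,[k]_N^s)e^{\sigma[k]_N}|\hat u_k|$ carries both weights (so that $\norm{a}_{\ell^2}=\sqrt N\,\norm{u}_\reg$) and $\tilde b_k:=e^{\sigma[k]_N}|\hat v_k|$ carries only the exponential one. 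By Young's inequality $\norm{f\ast g}_{\ell^2}\leq \norm{f}_{\ell^2}\norm{g}_{\ell^1}$ the whole estimate reduces to bounding $\norm{\tilde b}_{\ell^1}$, which by Cauchy--Schwarz gives $\norm{\tilde b}_{\ell^1}\leq C_s\sqrt N\,\norm{v}_\reg$ with $C_s$ uniform in $N$, precisely because $\sum_{m\geq 1}m^{-2s}<\infty$ when $s>1/2$.

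Part (iii) is then free. By (i), the polynomial $X(u)=u\cdot u$, expanded in the basis of exponentials, has all coefficients equal to the positive real number $1/\sqrt N$, since $\widehat{X(u)}_l = \frac{1}{\sqrt N}\sum_k \hat u_k\hat u_{l-k}$. Consequently taking the modulus of its coefficients leaves $X$ unchanged, $\und{X}=X$ as a polynomial in $\{\hat u_k\}$, and the desired bound $\norm{\und{X}(u)}_\reg\leq C_\reg\norm{u}_\reg^2$ is exactly the content of (ii) specialised to $v=u$.

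The main obstacle is the uniform-in-$N$ convolution estimate in (ii); the critical point is to split both weights symmetrically between the two factors of the convolution and to use the hypothesis $s>1/2$ to absorb the resulting series $\sum_m\max(1,[m]_N^{2s})^{-1}$ into an $N$-independent constant. Everything else is bookkeeping with weighted norms on the discrete torus $\Z/N\Z$.
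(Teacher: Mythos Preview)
Your proposal is correct and follows essentially the same line as the paper. Part (i) is a direct computation in both; for (ii) the paper writes the ratio $\gamma_{k,l}=\frac{[k]^s e^{\sigma[k]}}{[l]^s[k-l]^s e^{\sigma[l]}e^{\sigma[k-l]}}$, bounds $\sup_k\sum_l\gamma_{k,l}^2$ via the same additive weight splitting $[k]^{2s}\leq 4^s([l]^{2s}+[k-l]^{2s})$ and exponential triangle inequality you invoke, and then applies Cauchy--Schwarz in $l$, whereas you package the same splitting through Young's inequality $\ell^1\ast\ell^2\to\ell^2$ followed by Cauchy--Schwarz on the $\ell^1$ factor---both arguments reduce to the identical summability condition $\sum_m\max(1,[m]_N)^{-2s}<\infty$ for $s>1/2$, with the factors of $\sqrt N$ tracked in the same way; part (iii) is handled identically in both, observing that in the exponential basis the coefficients of $X$ are already nonnegative so $\und X=X$ and (ii) applies.
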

\begin{proof}
Item $(i)$ is standard and the details of the proof are omitted.

We prove now item $(ii)$. 
To begin, note that, by periodicity, one has
$$
\norm{u}_{\reg}^2 = \frac{1}{N} \sum_{k \in K_N^0} [k]^{2s} e^{2 \sigma |k|} \mmod{\hat{u}_k}^2 \ ,
$$
where the set 
\begin{equation}
\label{index.set}
K_N^0 := \left\{ k \in \Z: -(N-1)/2 \leq k \leq (N-1)/2 \right\} \cup \{ \lfloor N /2 \rfloor \},
\end{equation}
 while $[k]:= \max(1, |k \bmod{N} |)$.
By  item $(i)$, one has that
\begin{equation}
\label{eq:B.1}
\norm{ u \cdot v}_\reg^2 = \frac{1}{N} \sum_{k \in K_N^0} [k]^{2s} e^{2 \sigma |k|} |\widehat{(u \cdot v)}_k|^2 = \frac{1}{N^2} \sum_{k \in K_N^0} [k]^{2s} e^{2 \sigma |k| } \mmod{ \sum_{l=0}^{N-1} \hat{u}_l \hat{v}_{k-l} }^2.
\end{equation}
Introduce now the quantities
$$
\gamma_{k,l}:= \frac{[k]^s}{[l]^s \, [k-l]^s} \cdot \frac{e^{\sigma |k|}}{e^{\sigma |l|} e^{\sigma |k-l|}} \ .
$$
 For $s > \tfrac{1}{2}$ and  $\sigma \geq 0 $,  it holds that
 $
 \gamma_{k,l}^2 \leq 4^{s} \frac{([k-l]^{2s} + [l]^{2s})\,  e^{2 \sigma (|k-l| + |l|)} }{[k-l]^{2s} \, [l]^{2s}\, e^{2\sigma |l|}\, e^{2\sigma |k-l|}} \leq 4^{s} \left( \frac{1}{[l]^{2s}}  + \frac{1}{[k-l]^{2s}} \right) , 
 $
 from which it follows that there exists a constant $C_\reg >0$, independent of $N$, such that 
 \begin{equation}
 \label{gamma.est}
\sup_{0 \leq k \leq N-1} \sum_{l=0}^{N-1} \gamma_{k,l}^2 \leq C_\reg^2 \ .
 \end{equation}
 By Cauchy-Schwartz one has
\begin{align*}
[k]^{s} e^{ \sigma |k| }  \sum_{l=0}^{N-1} |\hat{u}_l|\, |\hat{v}_{k-l}| & 
=  \sum_{l=0}^{N-1} \gamma_{k,l}\, [l]^s\, e^{ \sigma |l|} \, |\hat{u}_l|\, [k-l]^s\, e^{ \sigma |k-l|} \, |\hat{v}_{k-l}| \\
& \leq \left( \sum_{l=0}^{N-1} \gamma_{k,l}^2 \right)^{1/2} \, \left( \sum_{l=0}^{N-1} [l]^{2s}\, e^{2 \sigma |l|} \, |\hat{u}_l|^2 \, [k-l]^{2s}\, e^{2 \sigma |k-l|} \, |\hat{v}_{k-l}|^2\right)^{1/2}.
\end{align*}
Inserting the inequality above in \eqref{eq:B.1}, one has 
\begin{align*}
\norm{ u \cdot v}_\reg & \leq \frac{C_\reg}{N} \left(\sum_{l=0}^{N-1}[l]^{2s}\, e^{2 \sigma |l|} \, |\hat{u}_l|^2 \right)^{1/2} \, \left( \sum_{k=0}^{N-1}[k-l]^{2s}\, e^{2 \sigma |k-l|} \, |\hat{v}_{k-l}|^2 \right)^{1/2} \\
& \leq C_\reg \norm{u}_\reg \, \norm{v}_\reg.
\end{align*}
We  prove now item $(iii)$.
 Consider $\widehat{X}:= \F X \F^{-1}$. By item $(i)$ one has  $\widehat{X}: \{\hat{u}_j\}_{j \in \Z} \mapsto \{\tfrac{1}{\sqrt{N}} \sum_l \hat{u}_l \hat{u}_{j-l} \}_{j \in \Z}$. Thus $\und{\widehat{X}} \equiv \widehat{X}$ and the claim follows.
\end{proof}

\begin{remark}
\label{lem:shift}
Let $S_\pm$ be the difference operators defined in \eqref{diff.oper}. Let $\hat{\omega}_\pm$ be the vectors whose $k^{th}$ components are given by $\hat{\omega}_{\pm,k}:= 1 -e^{\mp 2\pi \im k/N}$. Then the following holds:
\begin{enumerate}[(i)]
\item the map $\widehat{S}_\pm:= \F S_\pm \F^{-1}$ is  a multiplication by the vector $\hat{\omega}_\pm$: $\widehat{S}_\pm: \hat{u} \mapsto \hat{\omega}_\pm \cdot  \hat{u}$.
\item  $\mmod{\und{\widehat{S}_\pm}(\hat u)}  \leq \omega \cdot |\hat u|$, where   $\omega \equiv \lbrace \om{k} \rbrace_{k=1}^{N-1}$ is the vector of the linear frequencies. 
\end{enumerate}
\end{remark} 

\begin{rem}
\label{rem:s.ext} 
Consider $q=q(\xi, \eta)$ as a function of the linear Birkhoff variables defined in \eqref{xi_variable}. Then one has
$ \norm{\und{S_\pm}(q)}_\reg \leq \norm{(\xi, \eta)}_{\spazio{\reg}}$.
\end{rem}

\section{Proof of Proposition \ref{prop:Thata_na}}
\label{Theta_na}
We prove now property $(\Theta 1)$.   
Let $T: (\xi, \eta) \mapsto
 (p,q)$ be the map introducing linear Birkhoff coordinated. Explicitly $(p, q) = T(\xi, \eta)$ iff $( \hat{p}_0, \, \hat{q}_0)  = (0, 0)$ and 
 $$
( \hat{p}_k, \, \hat{q}_k ) = \left(  \sqrt{\frac{1}{2}\om{k}} \ (\xi_k + \eta_{N-k} ), \  \frac{1}{\im \sqrt{2 \om{k}}} (\xi_k - \eta_{N-k} ) \right), \ 1 \leq k \leq N-1 \ .
 $$

 Then  $\Theta_\Xi \equiv \Theta \circ T$  and in particular $d \Theta_\Xi(0,0) = d\Theta(0,0) T$. Using the formula above  and the fact that 
$d \Theta(0,0)(P,Q) = \left(-P, \tfrac{1}{2}S_+(Q) \right)$, where $S_+$ is defined in \eqref{diff.oper}, one obtains easily formula \eqref{Theta.lin}.
The estimate of $\norm{\und{d\Theta_\Xi}
(0,0)}_{\L(\spazio{\reg},\, \Cs^{\reg} )}$ is trivial, and is omitted. 

We prove now the estimate for 
$ \norm{\und{d\Theta_\Xi}
(0,0)^{*}}_{\L(\Cs^{s+1, \sigma},  \, \spazio{\reg})}$. 
Using the explicit formula \eqref{Theta.lin}, one computes that  $(\xi, \eta ) =  d\Theta_\Xi(0,0)^* (B,A)$ iff 
$$
\left( \xi_k, \ \eta_k \right) = \left(-\sqrt{\tfrac{1}{2}\om{k}}\hat{B}_k + \frac{\varpi_k}{\im \sqrt{2 \om{k}}} \hat{A}_k, \ -\sqrt{\tfrac{1}{2}\om{k}}\hat{B}_{N-k} - \frac{\varpi_k}{\im \sqrt{2 \om{k}}} \hat{A}_{N-k}  \right) 
$$
for $1 \leq k \leq N-1$. 
Thus there exist  constants $ C, C_{\Theta_2} >0$, independent of $N$, such that
$$
\norm{\und{d\Theta_\Xi}
(0,0)^{*}(B, A)}_{\spazio{\reg}} \leq  C \left(\frac{1}{N} \sum_{k=1}^{N-1} [k]_N^{2s} e^{2\sigma [k]_N} \om{k}^2 (|\widehat{B}_k|^2 + |\widehat{A}_k|^2)\right)^{1/2} \leq \frac{C_{\Theta_2}}{N} \norm{(B,A)}_{\Cs^{s+1, \sigma}} \ ,
$$
where we used that $\mmod{\om{k}}^2 \leq \frac{\pi^2 [k]_N^2}{N^2}$. Thus the second of \eqref{dtheta.est} is proved.
\vspace{0.5em}\\

We prove now property $(\Theta 2)$. Denote by $\Theta_b$ the map $ p \mapsto -p$ and by $\Theta_a$ the map $ q \mapsto \exp\left(\tfrac{1}{2} S_+(q)\right)-1$. Then 
 $(b,a) = \Theta(p, q) \equiv \left(\Theta_b(p), \Theta_a(q)\right)$.  Introduce on $\C^N$ the norm $\norm{\cdot}_\reg$ defined in \eqref{norm.reg}.  Then $\norm{\Theta(p,q)}_{\Cs^\reg}^2 \equiv \norm{\Theta_b(p)}_{\reg}^2 + \norm{\Theta_a(q)}_{\reg} ^2$.
The analyticity of  $p \mapsto \Theta_b(p)$ is obvious. Consider now the map $q \mapsto \Theta_a(q)$. 
Expand $\Theta_a$  in Taylor series with center at the origin to get 
\begin{equation}
\label{theta.taylor}
\Theta_{a}(q) = \sum_{r \geq 1}\Theta_{a}^r(q), \qquad \Theta^r_a (q):=  \frac{1}{ r!\, 2^r} (S_+(q))^r, \qquad \forall r \geq 1.
\end{equation}
Consider $q$ as a function of the linear Birkhoff variables $\xi, \eta$. 
Then  Lemma \ref{bullet} and Remark \ref{rem:s.ext} imply that for any $s \geq 0$, $\sigma \geq 0$
\begin{equation}
\label{exp.theta.fou}
\norm{ \und{\Theta_a^r} (q)}_{s+1,\sigma} \leq C_1^r \norm{\und{S_+}(q)}_{s+1, \sigma}^r \leq C_2^r \norm{(\xi, \eta)}_{\spazio{s+1, \sigma}}^r \leq C_3^r N^r \norm{(\xi, \eta)}_{\spazio{\reg}}^r \ , \  \forall r \geq 2 ,
\end{equation}
where $C_1, C_2, C_3 >0$ are positive constants independent of $N$.
 Therefore for $\epsilon < \frac{1}{C_3}$ one has
\begin{align*}
  \sup_{ \norm{(\xi, \eta)}_{\spazio{\reg}} \leq \epsilon/N^2} \norm{\und{\Theta^0_\Xi}(\xi,\eta)}_{\Cs^{s+1, \sigma}}\leq\sum_{r \geq 2} \sup_{ \norm{(\xi, \eta)}_{\spazio{\reg}} \leq \epsilon/N^2} \norm{ \und{\Theta_\Xi^r}(\xi, \eta)}_{\Cs^{s+1, \sigma}}  \leq \sum_{r \geq 2} C^r_3 N^r \frac{\epsilon^r}{N^{2r}} \leq  \frac{2 C_3^2 \epsilon^2}{N^2} \ .
 \end{align*}
 This proves the first estimate in $(\Theta 2)$.  We show now that for any $s \geq 0$, $\sigma \geq 0$ one has $[d
   \Theta_\Xi^0]^* \in \Nc_{\epsilon/N^2}(\spazio{\reg}, \L(\Cs^{s+2,
   \sigma}, \spazio{s+1,\sigma}) ) $. Note that $d\Theta_\Xi(\xi, \eta)^* =
 T^* d\Theta(T(\xi, \eta))^*$. Using the explicit expression of $T$, one verifies that
 $(\xi, \eta) = T^*(P, Q)$ iff
\begin{equation}
\label{T*.formula}
( \xi_k, \, \eta_k ) = \left(\sqrt{\tfrac{1}{2}\om{k}}\widehat{P}_k + \frac{1}{\im \sqrt{2 \om{k}}} \widehat{Q}_k, \ \sqrt{\tfrac{1}{2}\om{k}}\widehat{P}_{N-k} - \frac{1}{\im \sqrt{2 \om{k}}} \widehat{Q}_{N-k},  \right) 
\end{equation}
for $1 \leq k \leq N-1$. Thus one has that for any $s \geq 0$, $\sigma \geq 0$
\begin{equation}
\label{T*.est}
\norm{T^*( 0, Q)}_{\spazio{\reg}} \leq \norm{Q}_{\reg} \ .
\end{equation}
  Using \eqref{theta.taylor} one
 verifies that $d\Theta^r(p,q) (P, Q) =\frac{1}{(r-1)!\, 2^r} \left(0,
 \, S_+(q)^{r-1 }\cdot S_+(Q)\right)\,$, $\forall r \geq 2$, from
 which it follows that
 $$
 d\Theta^r(p,q)^*(B,A) = \frac{1}{(r-1)!\, 2^r}\left(0, \  \overline{S_+(q)}^{r-1 } \cdot S_-(A)\right) \ , \qquad \forall r \geq 2 \ .
 $$
 Thus, using  estimate \eqref{T*.est},  
there exists a constant $C_4 >0$, independent of $N$, such that
 \begin{align*}
  \norm{\und{ d \Theta_\Xi^r}(\xi, \eta)^*(B,A)}_{\spazio{s+1,\sigma}} \leq  C_4^{r} \norm{\und{S_+}(\und{q}(\xi, \eta))}_{s+1,\sigma }^{r-1} \norm{\und{S_-}(A)}_{s+1,\sigma} \leq  C_4^{r} N^{r-2} \norm{(\xi, \eta)}_{\spazio{\reg}}^{r-1} \norm{(B, A)}_{\Cs^{s+2,\sigma}} \ . 
 \end{align*}
Then there exists  $C_5, \epsilon_0 > 0$, independent of $N$, such that $\forall \, 0 < \epsilon \leq \epsilon_0$
\begin{align*}
\sup_{\norm{(\xi, \eta)}_{\spazio{\reg}}   \leq \epsilon/N^2}\norm{\und{d\Theta_\Xi^0}
(\xi , \eta)^{*}}_{\L(\Cs^{s+2,\sigma}, \,  \spazio{s+1,\sigma})} 
& \leq 
\sum_{ r \geq 2} \sup_{\norm{(\xi, \eta)}_{\spazio{\reg}}\leq \epsilon/N^2} \norm{\und{ d \Theta_\Xi^r}(\xi, \eta)^*}_{\L(\Cs^{s+2,\sigma}, \, \spazio{s+1,\sigma})} \\
&  \leq  \sum_{ r \geq 2} C_4^r N^{r-2}\frac{\epsilon^{r-1}}{N^{2(r-1)}} \leq 
 \frac{C_5\epsilon}{N^2}.
\end{align*}

\section{Proof of Lemma \ref{lem:spec.prop1} and Corollary \ref{lem:spec.prop}}
\label{spectral.object}
{\em Proof of Lemma \ref{lem:spec.prop1}.}  Since the map $(b,a) \mapsto L_p(b,a)$ is linear, it is enough to prove that it is continuous from $\Cs^\reg$ to $\L(\C^{2N})$. In particular we will prove that
\begin{equation}
\label{L_p_esti_sup}
\norm{L_p}_{\L(\C^{2N})}\leq \sup_{0 \leq j \leq N-1} \left( |b_j| + 2\sup_j|a_j|\right).
\end{equation}
This estimate, together with Lemma \ref{lem:normaNs}, proves \eqref{L_p_esti}. In order to prove \eqref{L_p_esti_sup}, write  $L_p=D+A^+ +A^-$, where $D$ is the diagonal part of $L_p$ and $A^{\pm}$ are defined by
\begin{equation*}
	A^+= \left( 
	\begin{array}{cccc}
	0 & a_{0} &  &  \\
	 & 0 & \ddots &  \\
   &  & 0 & a_{N-1}\\
	a_{N-1} &  &  & 0
	\end{array}
	\right),	
	\ \ 
	A^-=\left( 
	\begin{array}{cccc}
	0 &  &  & a_{N-1}\\
	a_{0} & 0 &  &   \\
   & \ddots & 0 &   \\
   &  & a_{N-2} & 0
	\end{array}
	\right).
\end{equation*}
To estimate the norms of $D, A^+$ and $A^-$ is enough to observe that for every $x \in \C^{2N}$ one has
\begin{equation*}
\norm{Dx}_{\C^{2N}}^2 :=\sum_{j=0}^{2N-1}{\left|b_j x_j\right|^2}\leq \left( \sup_{0 \leq j \leq N-1}{|b_j|}\right)^2\norm{x}_{\C^{2N}}^2, \qquad 
\norm{A^\pm x}_{\C^{2N}}^2\leq \left(\sup_{0 \leq j \leq N-1}{|a_j|}\right)^2\norm{x}_{\C^{2N}}^2,
\end{equation*}
where $\norm{\cdot}_{\C^{2N}}$ is the standard euclidean norm on $\C^{2N}$. 
Thus \eqref{L_p_esti_sup} follows.
\qed
\vspace{1em}\newline
{\em Proof of Corollary \ref{lem:spec.prop}.}
 Item $(i)$ follows by  standard perturbation theory, and the details are omitted. 
 We prove now item $(ii)$. 
Let $\Gamma_j$ be the circle  defined by  $\Gamma_j:=\left\{\lambda \in \mathbb{C}:\left|\lambda_{2j}^0-\lambda \right|=\frac{1}{2N^2} \right\},$ counter-clockwise oriented. By item $(i)$, for any  $\norm{(b,a)}_{\Cs^\reg} \leq \tfrac{\epsilon_*}{N}$, $\lambda_{2j}(b,a)$ and $\lambda_{2j-1}(b,a)$ are inside the ball enclosed by $\Gamma_j$. 
 Write $L_{b,a} - \lambda = L_0 - \lambda + L_p = \left(L_0-\lambda\right)
\left(1+\left(L_0-\lambda \right)^{-1}L_p\right)$; its  inverse 
\begin{equation}
\label{L-lambda}
\left(L_{b,a} - \lambda\right)^{-1}=
\Bigg(\sum_{n=0}^{\infty}{\Big(-\left(L_0- \lambda\right)^{-1} L_p\Big)^n}\Bigg)\left(L_0-\lambda\right)^{-1}
\end{equation}
is well defined as a Neumann operator when $\norm{\left(L_0- \lambda\right)^{-1} L_p}_{\L(\C^{2N})} < 1$.
Since $L_0-\lambda$ is diagonalizable with $\lbrace (\lambda_{j}^0 - \lambda) \rbrace_{0 \leq j \leq 2N-1}$ as eigenvalues,  the norm of its inverse is bounded by the inverse of the smallest eigenvalue:
\begin{equation}
\sup_{\lambda \in \Gamma_j} \norm{\left(L_0-\lambda\right)^{-1}}_{\L(\C^{2N})}\leq \sup_{\substack{\lambda \in \Gamma_j \\ 0 \leq k \leq 2N-1}}{\left|\frac{1}{\lambda_{k}^0-\lambda}\right|}< 2N^2
\label{stimaL0-l}
\end{equation}
where the last estimates is due to the form of $\Gamma_j$.
Therefore for $0< \epsilon \leq \epsilon_*$ and $\norm{(b,a)}_{\Cs^\reg} < \eN$ one gets, using \eqref{L_p_esti},
$$
\norm{\left(L_0-\lambda\right)^{-1} L_p}_{\L(\C^{2N})} \leq \norm{L_p}_{\L(\C^{2N})} \norm{\left(L_0-\lambda\right)^{-1}}_{\L(\C^{2N})} \leq C_\reg \norm{(b,a)}_{\Cs^\reg} 2N^2 < 2C_\reg \epsilon_*,
$$
which proves the convergence of the Neumann series \eqref{L-lambda} for $\epsilon_* \leq \tfrac{1}{2 C_\reg}$.
\newline
Substituting  \eqref{L-lambda}   in \eqref{P}  we get, for $1 \leq j \leq N-1$,
\begin{equation}
P_j(b,a)=P_{j0} -\frac{1}{2\pi \im }\oint_{\Gamma_j}{\left(\sum_{n=1}^{\infty}{\left(-\left(L_0- \lambda\right)^{-1} L_p\right)^n}\right)\left(L_0-\lambda\right)^{-1} \diff{\lambda}}.
\label{sviluppo_proiettore}
\end{equation}
Since the series inside the integral is absolutely and uniformly convergent for $(b,a) \in B^{\Cs^\reg}\left(\eN\right)\,$,  $(b,a) \mapsto P_j(b,a)$ is analytic as a map from $B^{\Cs^\reg}\left(\eN\right)$ to $\L(\C^{2N})$.
Estimate \eqref{P_j-P_j0} follows easily from \eqref{sviluppo_proiettore}.

We prove now item $(iii)$.  Properties $(U1)-(U3)$ are standard
\cite{kato}. The analyticity of the map $(b,a)\mapsto U_j(b,a)$
follows from item $(ii)$. Indeed, in order for $U_j(b,a)$ to be
defined as a Neumann series one needs
$\norm{P_j(b,a)-P_{j0}}_{\L(\C^{2N})}<1$, which follows from
\eqref{P_j-P_j0}. Estimate \eqref{U_j-P_j} follows by expanding
\eqref{defUj} in power series of $P_j(b,a) - P_{j0}$.  \qed

\section{Proof of Proposition \ref{propZn}}
\label{proofZ}
Denote by $D: \C^{N-1} \to \C^{N-1}$ the diagonal operator
\begin{equation}
\label{D.def}
D: \{ \xi_j\}_{1 \leq j \leq N-1} \mapsto \{ D_j \xi_j\}_{1 \leq j \leq N-1}, \quad \mbox{where} \quad D_j := \left(\tfrac{2}{N}\om{j}\right)^{-1/2} \ .
\end{equation}
{\em Proof of properties $(Z1)-(Z3)$.} Property $(Z1)$  follows from formula \eqref{def:zj}, since
 \footnote{to simplify the notation, we write  $f_{j}\equiv f_j(b,a)$ and $U_j \equiv U_j(b,a)$}: 
\begin{align*}
\overline{z_j(b,a)}&=D_j \overline{\left\langle \left(L_{b,a}-\lambda_{2j}^0\right)U_{j}
f_{2j,0},\overline{U_{j}f_{2j,0}}\right\rangle}= D_j
\left\langle \overline{U_{j}f_{2j,0}}, \left(L_{b,a}-\lambda_{2j}^0\right)U_{j}f
_{2j,0}\right\rangle= \\
&= D_j \left\langle U_{j}f_{2j-1,0}, \left(L_{b,a}-\lambda_{2j}^0\right)
\overline{U_{j}f_{2j-1,0}}\right\rangle
=D_j \left\langle \left(L_{b,a}-\lambda_{2j}^0\right)
f_{2j-1},\overline{f_{2j-1}}\right\rangle=w_{j}(b,a).
\end{align*}
We  prove now $(Z2)$. Using Lemma \ref{lem:spec.prop} $(iv)$ and the fact that $\overline{f_{2j,0}}=f_{2j-1,0}$, decompose $f_{2j,0}$ and $f_{2j}$ in real and imaginary part:
\begin{align*}
&  f_{2j,0}=e_{j,0}+\im h_{j,0}, \quad f_{2j}=e_{j}+\im h_{j}\\
& f_{2j-1,0}=e_{j,0}-\im h_{j,0}, \quad f_{2j-1}=e_{j}-\im h_{j} , 
\end{align*}
where
\begin{align*}
&e_{j,0}:=Re \, f_{2j,0}, \quad h_{j,0}:=Im \, f_{2j,0}, \qquad \mbox{ and } \qquad e_{j}:=Re\,f_{2j}=U_{j}e_{j,0}, \quad h_{j}:=Im\, f_{2j}\,=U_{j}h_{j,0}.
\end{align*}
The vectors $\{e_j, h_{j}\}$ form a real  orthogonal basis for $E_{j}(b,a)$. Let $M_{j}(b,a)$ be the matrix of
 the selfadjoint operator $\left.L_{b,a}-\lambda_{2j}^0\right|_{E_{j}(b,a)}$ with respect to this basis:
$$M_{j}(b,a)=\left( 
	\begin{array}{cc}
	\alpha_j & \sigma_j\\
	\sigma_j & \beta_j \\
	\end{array}
	\right).
$$
The eigenvalues of  $M_{j}$ are obviously $\lambda_{2j}-\lambda_{2j}^0$ and $\lambda_{2j-1}-\lambda_{2j}^0$, hence
\begin{align*}
&\mbox{Tr}\; M_{j}= \alpha_j + \beta_j= \left(\lambda_{2j}-\lambda_{2j}^0\right) + \left(\lambda_{2j-1}-\lambda_{2j}^0\right),\\ 
&\mbox{Det}\; M_{j}=\alpha_j \beta_j - \sigma_j^2=\left(\lambda_{2j}-\lambda_{2j}^0\right)\left(\lambda_{2j-1}-\lambda_{2j}^0\right).
\end{align*}
Now observe that 
\begin{align*}
z_j(b,a)&=D_j \left\langle  \left(L_{b,a}-\lambda_{2j}^0\right)(e_j + \im h_{j}), (e_j-\im h_{j})\right\rangle=\\
        &=D_j \left\langle  \left(L_{b,a}-\lambda_{2j}^0\right)e_j ,e_j\right\rangle - 
           D_j\left\langle  \left(L_{b,a}-\lambda_{2j}^0\right) h_{j}, h_{j}\right\rangle +
           2\im D_j\left\langle  \left(L_{b,a}-\lambda_{2j}^0\right)e_j, h_{j}\right\rangle=\\
        &=\left(\tfrac{2}{N}\om{j}\right)^{-1/2} (\alpha_j - \beta_j +\im 2\sigma_j).
\end{align*}
Finally one computes
\begin{align*}
\left(\lambda_{2j}-\lambda_{2j-1}\right)^2
&= (\mbox{Tr } M_j)^2 - 4 \mbox{Det }M_j = 
\left(\alpha_j+\beta_j\right)^2-4\alpha_j\beta_j+
4\sigma_j^2 \\ &=\left(\alpha_j-\beta_j\right)^2+4\sigma_j^2
=\left(\mbox{Re}\;z_j\right)^2+  \left(\mbox{Im}\;z_j\right)^2=\left(\tfrac{2}{N}\om{j}\right)|z_j(b,a)|^2.
\end{align*}
We prove now $(Z3)$. The first order terms of $z_j$ and $w_j$ in $(b,\; a)$ are given by
$$dz_j(0,0)(b,a)=D_j \left\langle L_p f_{2j,0},\,\overline{f_{2j,0}} \right\rangle,\quad 
dw_j(0,0)(b,a)=D_j\left\langle L_p f_{2j-1,0},\,\overline{f_{2j-1,0}} \right\rangle, \qquad 1 \leq j \leq N-1 \ .
$$
Using the explicit formula  for  $f_{2j,0}$  in   Lemma \ref{lem:spettroL0}, one computes
 \begin{equation}
\begin{aligned}
\label{Lp.coeff.1}
\left\langle L_p f_{2j,0},\overline{f_{2j,0}} \right\rangle
&=\frac{1}{2N}\sum_{l=0}^{2N-1}{b_l e^{\im 2\rho_j l} + a_{l-1}e^{\im 2\rho_j (l-1)}e^{\im \rho_j} + a_l e^{\im 2\rho_j l}e^{\im \rho_j}}\\
&=\frac{1}{2N}\sum_{l=0}^{2N-1}{b_l e^{\im 2\pi jl/N} + a_{l-1}e^{\im 2\pi (l-1)j/N}e^{\im \rho_j} + a_l e^{\im 2\pi lj/N}e^{\im \rho_j}}\\
&=\frac{1}{\sqrt{N}}\left(\hat{b}_j + 2e^{\im \rho_j}\hat{a}_j \right)=\frac{1}{\sqrt{N}}\left(\hat{b}_j - 2e^{\im \pi j/N}\hat{a}_j \right).
\end{aligned}
\end{equation}
The formula for $dz_j(0,0)(b,a)$ immediately follows. The one for
$dw_j(0,0)(b,a)$ is proved in the same way and the details are
omitted.

 The estimate \eqref{dZ(0).estimate} for $\und{dZ}(0,0)$
follows immediately. We estimate now the norm of  $\und{dZ}(0,0)^*$. One checks that $(B,A) = dZ(0,0)^*(\xi,\eta) $ iff $\widehat{B}_0 = \widehat{A}_0 = 0$ and for $1 \leq k \leq N-1$
$$
(\widehat{B}_k, \widehat{A}_k) = \left( \frac{1}{\sqrt{2 \om{k}}} (\xi_k + \eta_{N-k}),\  
\frac{2}{\sqrt{2 \om{k}}}  (e^{\im \pi k/N} \xi_k + e^{\im \pi (N-k)/N} \eta_{N-k}) \right)   \ .
$$
Thus there exist  constants $C, C', C_Z >0$, independent of $N$, such that
\begin{align*}
\norm{\und{dZ}(0,0)^*(\xi,\eta)}_{\Cs^{s+2, \sigma}}^2 
& \leq \frac{C'}{N} \sum_{k = 1}^{N-1} [k]_N^{2s} \, e^{2 \sigma [k]_N}\om{k} \frac{[k]_N^4}{\om{k}^2} \left(|\xi_k|^2 + |\eta_k|^2 \right) \leq  C_Z^2 N^4 \norm{(\xi,\eta)}_{\spazio{\reg}}^2 
\end{align*}
where in the last inequality we used that $[k]_N^4 / \om{k}^2 \leq C'' N^4$ for some constant $C'' > 0$ independent of $N$. Thus the second of \eqref{dZ(0).estimate} is proved.
\vspace{1em}\\
{\em Proof of property $(Z4)$.}  We will prove
that $Z$ is normally analytic. Recall that, as mentioned in the
discussion before Proposition \ref{prop:Thata_na}, the map $Z$ is said to be
normally analytic if $\check{Z} := Z \F$ is normally analytic. With an
abuse of notations, we omit the ``check'' from $Z$.

We begin by expanding the components of $Z$, denoted by $Z_j(b,a) :=
(z_j(b,a),\, w_j(b,a))$, in Taylor series with center at $(b,a) =
(0,0)$.  The first two terms of the expansions are given by
\begin{equation}
\begin{aligned}
\label{taylor.exp}
&z_j(b,a) = D_j\langle L_p f_{2j,0}, \overline{f_{2j,0}} \rangle + D_j\langle L_p \left(L_0 - \lambda_{2j}^0 \right)^{-1} \left(\uno - P_{j0}\right) L_p f_{2j,0},  \overline{f_{2j,0}}\rangle + O((b,a)^3), \\
&w_j(b,a) = D_j\langle L_p f_{2j-1,0}, \overline{f_{2j-1,0}} \rangle + D_j\langle L_p \left(L_0 - \lambda_{2j}^0 \right)^{-1} \left(\uno - P_{j0}\right) L_p f_{2j-1,0},  \overline{f_{2j-1,0}}\rangle + O((b,a)^3).
\end{aligned}
\end{equation}
To perform the Taylor expansion at every order it is convenient to proceed in the following way.
Write $z_j(b,a)=z_{j,1}(b,a)+z_{j,2}(b,a)$ and $w_j(b,a) = w_{j,1}(b,a) + w_{j,2}(b,a)$ where
\begin{equation}
  \begin{split}
	  &z_{j,1}(b,a)=D_j\left\langle \left(L_0-\lambda_{2j}^0\right) f_{2j}(b,a),\overline{f_{2j}(b,a)}\right\rangle, \quad 
	  z_{j,2}(b,a)=D_j\left\langle L_p f_{2j}(b,a),\overline{f_{2j}(b,a)}\right\rangle \ ,
  \end{split}
	\label{zj1ezj2}
\end{equation} 
while $w_{j,1}(b,a)$ and $w_{j,2}(b,a) $ are defined as in
\eqref{zj1ezj2}, but with $f_{2j-1}(b,a)$ replacing
$f_{2j}(b,a)$. 

Expand $z_{j,\varsigma}(b,a)$, $\varsigma=1,2$, in
Taylor series with center at $(b,a)=(0,0)$: $z_{j,\varsigma}(b,a) =
\sum_{n \geq 1} z_{j,\varsigma}^n(b,a)$, with $z_{j,\varsigma}^n$ a
homogeneous polynomial of degree $n$ in $b, a$. We write an analogous
expansion for $w_{j,\varsigma}(b,a)$.  Therefore one has 
$$Z_j^n(b,a) := (z_j^n(b,a),\, w_j^n(b,a)) \equiv \left(z_{j,1}^n(b,a)
+ z_{j,2}^n(b,a), \,w_{j,1}^n(b,a) + w_{j,2}^n(b,a)\right).$$ In order
to write explicitly $z_{j,\varsigma}^n(b,a)$ as a function of $b$ and
$a$, one needs to expand the vectors $f_{2j}(b,a)$ and $f_{2j-1}(b,a)$
in Taylor series of $b,\,a$.  Rewrite \eqref{defUj}, \eqref{def:fj} as
$$f_{2j}(b,a)=U_{j}(b,a)f_{2j,0}=\Big(\uno-\left(P_{j}(b,a)-P_{j0}\right)^2\Big)^{-1/2}\Big(\uno+(P_{j}(b,a)-P_{j0})\Big)f_{2j,0}$$
and expand the r.h.s. above in power series of $P_{j}(b,a)-P_{j0}$,
getting:
\begin{equation}
f_{2j}(b,a)=\sum_{m=0}^{\infty}c_m\left(P_{j}(b,a)-P_{j0}\right)^m f_{2j,0}, \qquad f_{2j-1}(b,a)=\sum_{m=0}^{\infty}c_m\left(P_{j}(b,a)-P_{j0}\right)^m f_{2j-1,0} \ ,
\label{espansione di fj}
\end{equation}
where the $c_m$'s are the coefficients of the Taylor series of the
 function $\phi(x)=\frac{1+x}{(1-x^2)^{1/2}}$. Note that  $c_{2k+1}  = c_{2k} \equiv (-1)^k \binom{-1/2}{k}$, where $\binom{-1/2}{k}:= -\tfrac{1}{2}(-\tfrac{1}{2} - 1) \cdots (-\tfrac{1}{2} - k + 1 )$ is the product of $k$ negative terms, thus   $(-1)^k \binom{-1/2}{k}\geq 0,\, $  $\forall k \geq 0$, and therefore $c_m \geq 0, \,$ $\forall m$.\\ 

 By  Corollary \ref{cor:spec.prop1} (see also formula \eqref{sviluppo_proiettore}) one has,   in the ball $B^{\Cs^\reg}(\epsilon_*/N^2)$,
\begin{equation}
P_{j}(b,a)-P_{j0}=\frac{\im }{2\pi }\sum_{n=1}^{\infty}(-1)^n\oint_{\Gamma_j}{T^n(b,a,\lambda)\left(L_0-\lambda\right)^{-1}\diff{\lambda}} 
\label{espansionePj}
\end{equation}
where the $\Gamma_j$'s are defined as in equation \eqref{P},  and 
$$ T(b,a,\lambda):=\left(L_0-\lambda\right)^{-1}L_p \ .$$
Substituting \eqref{espansionePj}  in \eqref{espansione di fj} we get that
\begin{equation}
\label{expressionfj}
\begin{aligned}
&f_{2j}(b,a)=f_{2j,0}+\sum_{n\geq 1}\sum_{1\leq m \leq n}c_m \sum_{\multindex{\alpha}{m}{n}}f_{2j,m}^\alpha(b,a),\\
&f_{2j,m}^\alpha(b,a):=\\
&\left(\frac{\im}{2\pi}\right)^m (-1)^{|\alpha|} \oint_{\Gamma_j}\ldots \oint_{\Gamma_j}T^{\alpha_1}(b,a,\lambda_1)\left(L_0-\lambda_1\right)^{-1}\ldots T^{\alpha_m}(b,a,\lambda_m)\left(L_0-\lambda_m
\right)^{-1}f_{2j,0}\diff{\lambda_1}
\ldots \diff{\lambda_m}. 
\end{aligned}
\end{equation}
An analogous expansion holds for $f_{2j-1}(b,a)$, with $f_{2j-1,0}$
substituting $f_{2j,0}$ in the integral formula above. In order to
write explicitly the expression inside the integral, one needs to
compute the iterated terms $T^n(b,a,\lambda) f_{2j,0}$ and
$T^n(b,a,\lambda) f_{2j-1,0}$. The computation turns out to be simpler
if we express $L_pf_{2j,0}$ in the basis of the eigenvectors of
$L_0$. To simplify the notations we relabel the eigenvectors of $L_0$
in the following way:
$$g_{0} := f_{00}, \quad g_{N}:=f_{2N-1,0}, \qquad g_{j} := f_{2j,0},
\quad g_{-j}:= f_{2j-1,0}, \quad \mbox{ for } 1 \leq j \leq N-1$$ and
the eigenvalues of $L_0$ as
$$\llambda_0 := \lambda_{0}^0, \quad \llambda_{N}:= \lambda_{2N-1}^0,
\qquad \llambda_j := \lambda_{2j}^0, \quad \llambda_{-j} :=
\lambda_{2j-1}^0, \quad \mbox{ for } 1 \leq j \leq N-1.$$ For every $1
\leq j \leq N-1$ one has that $\overline{g_j} = g_{-j}$, formally, one
can also write $g_{j+2N} = g_j$, $\llambda_j = \llambda_{-j} $ and
$\llambda_{j+2N} = \llambda_j$, as one verifies using the explicit
expressions of the $g_j$'s and $\llambda_j$'s.  In this notation, for
$\lambda\neq \llambda_{\pm j}$, one has $(L_0-\lambda)^{-1} g_{\pm j}=
g_{\pm j}/(\llambda_{\pm j}-\lambda)$. 
With a computation analogous to the one in \eqref{Lp.coeff.1} (using also the second formula in \eqref{lemfourier}), one verifies that the projection of $L_p g_{j}$
on the vector $g_{k}$ is given by
\begin{equation}
\label{coeff.xlj}
\langle L_p \,g_{j}, g_{k} \rangle = \frac{1}{\sqrt{N}} \left(\hat{b}_{\frac{j-k}{2}} -2 \cos\left( \tfrac{k \pi}{N} \right) \hat{a}_{\frac{j-k}{2}} \right) \delta_{(j-k; \mbox{ even })}, 
\end{equation}
where $\delta_{(j-k; \mbox{ even })} = 1$ if $j-k$ is an even integer,
and equals $0$ otherwise.  Formula \eqref{coeff.xlj} implies that $L_p
g_{j}$ is supported only on the vectors $g_{k}$ whose index $k$
satisfies $k=j-2l$ for some integer $l$. Therefore we can write
\begin{equation}
\label{x^l_j.formula}
T(b,a,\lambda)g_{j} = \sum_{l \in K_N^0 } \frac{x^l_j}{\llambda_{j-2l} - \lambda} g_{j-2l}, \quad 
x^l_j:=\langle L_p g_{j}, g_{j-2l}\rangle = 
\frac{1}{\sqrt{N}} \left(\hat{b}_l -2 \cos\left( \tfrac{(j-2l) \pi}{N} \right) \hat{a}_l \right) \ ,
\end{equation}
where $K_N^0$ is the set of indexes defined in \eqref{index.set}.
Note that  $\mmod{x^l_j} \leq \frac{2}{\sqrt{N}} \left( |\hat{b}_l| + |\hat{a}_l| \right)$ uniformly in $j$, and $x^{l+N}_j = x^l_j$.
Iterating \eqref{x^l_j.formula} one gets 
$$T^{n}(b,a,\lambda)\left(L_0-\lambda\right)^{-1}g_{j}=\sum_{i_1,\cdots,i_n \in K_N^0}\frac{x_j^{i_1}x_{j-2i_1}^{i_2}\ldots x_{j-2i_1-\cdots -2i_{n-1}}^{i_n}}{(\llambda_j-\lambda)
\prod_{l=1}^n
\left(\llambda_{j-2\sum_{m=1}^l i_m}-\lambda\right)}g_{j-2i_1-\cdots -2i_{n}}.$$
More generally, for a vector  $\alpha=(\alpha_1,\ldots,\alpha_m) \in \N^m$ with $|\alpha|=n$ and $\lambda_1, \cdots, \lambda_m \in \Gamma_j$, one has
\begin{equation}
\begin{aligned}
&T^{\alpha_m}(b,a,\lambda_m)\left(L_0-\lambda_m\right)^{-1}
\cdots T^{\alpha_1}(b,a,\lambda_1)\left(L_0-\lambda_1\right)^{-1}g_{j}=\\
&\quad = \sum_{i_1,\ldots,i_n \in K_N^0 }\frac{x_j^{i_1}x_{j-2i_1}^{i_2}\ldots x_{j-2i_1-\cdots - 2i_{n-1}}^{i_n}}{(\llambda_j-\lambda_1)
\prod_{l=1}^n\left(\llambda_{j-2\sum_{m=1}^l i_m}-\mu_l\right)
\prod_{l=1}^{m-1}\left(\llambda_{j- 2\sum_{h=1}^{\alpha_1 + \cdots + \alpha_l} i_h}
-\lambda_{l+1}\right)}g_{j- 2i_1-\cdots-2i_n}
\end{aligned} 
\label{expressionTalfa}
\end{equation}
where
\begin{equation}
\mu_l= \lambda_1  \mbox{ for } 1\leq l \leq \alpha_1, \quad \mbox{and} \quad  \mu_l = \lambda_k \mbox{ for } \sum_{h=1}^{k-1} \alpha_h + 1 \leq l \leq \sum_{h=1}^k \alpha_h, \quad 2 \leq k \leq m \ .
\label{defmul}
\end{equation}
To  obtain the explicit expression of  $z^n_{j,\varsigma}$ and $w^n_{j, \varsigma}$, $\varsigma= 1,2$, in terms of the Fourier variables $\hat{b}, \, \hat{a}$, we substitute \eqref{expressionTalfa} in \eqref{expressionfj} and the obtained result  in \eqref{zj1ezj2}. By  \eqref{expressionfj}, $z_{j,1}^n$  is a sum of terms of the form 
$\left\langle \left(L_0 - \lambda_{2j}^0\right)f_{2j,p_1}^{\alpha},\overline{f_{2j,p_2}^{\beta}}  \right\rangle $ 
over $(p, \alpha, \beta) \in \N^{2} \times \N^{p_1} \times \N^{p_2}$ with $|p| = p_1 + p_2 \leq n$ and $|\alpha| + |\beta| = n$.
For $|\alpha| = r$, $|\beta| = n-r$ one gets
\begin{equation}
\label{expressionL}
\begin{aligned}
&\left\langle \left(L_0 - \llambda_{j}\right)f_{2j,p_1}^{\alpha},\overline{f_{2j,p_2}^{\beta}}  \right\rangle = 
\left(\frac{\im}{2\pi}\right)^{|p|}(-1)^n \oint_{\Gamma_j}\ldots\oint_{\Gamma_j}\kappa_{j,1}^{p,\alpha,\beta}(i)
x_j^{i_1}x_{j-2i_1}^{i_2}\ldots x_{j-2i_1-\cdots - 2i_{r-1}}^{i_r} \times \\
& \quad \times x_j^{i_n}x_{j-2i_n}^{i_{n-1}}\ldots x_{j-2i_n-\cdots -2i_{r+2}}^{i_{r+1}} 
\left\langle g_{j-2i_1-\cdots-2i_r}, \overline{g_{j-2i_{r+1}-\cdots-2i_n}}\right\rangle
 \diff{\lambda_1}\ldots
  \diff{\lambda_{|p|}},
\end{aligned}
\end{equation}
where, writing $ \bi = (i_1, \cdots, i_n)$, 
\begin{equation}
\begin{aligned}
&\kappa_{j,1}^{p,\alpha,\beta}(\bi):=\frac{\left(\llambda_{j-2\sum_{m=1}^r i_m}-\llambda_j\right)}
{(\llambda_j-\lambda_1)\prod_{l=1}^{r}
\left(\llambda_{j-2\sum_{m=1}^l i_m}-\mu_l\right)\prod_{l=1}^{p_1-1}\left(\llambda_{j- 2\sum_{h=1}^{\alpha_1 + \cdots + \alpha_l} i_h}-\lambda_{l+1}\right)}\times \\
&\qquad \times
\frac{1}{(\llambda_j-\lambda_{p_1+1})\prod_{l=r+1}^{n}\left(\llambda_{j-2\sum_{m=l}^n i_m}-\tilde{\mu}_l\right)\prod_{l=1}^{p_2-1}
\left(\llambda_{j- 2\sum_{h=1}^{\beta_1 + \cdots + \beta_l} i_h}- \lambda_{l+1}\right)}
\end{aligned}
\label{kappa}
\end{equation}
and the $\tilde{\mu}_l$'s are defined as in  \eqref{defmul}, but with the multi-index $\beta$ replacing $\alpha$. 
Similarly, the term $z^n_{j,2} $ is a sum of terms of the form $\left\langle L_p \, f_{2j,p_1}^{\alpha},\overline{f_{2j,p_2}^{\beta}}  \right\rangle$ over  
$(p, \alpha, \beta) \in \N^{2} \times \N^{p_1} \times \N^{p_2}$ with $|p|  \leq n$ and $|\alpha| + |\beta| = n-1$.
The term $\left\langle L_p \, f_{2j,p_1}^{\alpha},\overline{f_{2j,p_2}^{\beta}}  \right\rangle$ 
 has an expression similar to \eqref{expressionL}, and  for $|\alpha| = r$ and $|\beta| = n-1-r$ the kernel 
$\kappa_{j,2}^{p,\alpha,\beta}(\bi)$ is given by
\begin{equation}
\begin{aligned}
&\kappa_{j,2}^{p,\alpha,\beta}(\bi):=\frac{1}
{(\llambda_j-\lambda_1)\prod_{l=1}^{r}
\left(\llambda_{j-2\sum_{m=1}^l i_m}-\mu_l\right)\prod_{l=1}^{p_1-1}\left(\llambda_{j- 2\sum_{h=1}^{\alpha_1 + \cdots + \alpha_l} i_h}-\lambda_{l+1}\right)}\times \\
&\qquad \times
\frac{1}{(\llambda_j-\lambda_{p_1+1})\prod_{l=r+2}^{{n}}\left(\llambda_{j-2\sum_{m=l}^n i_m}-\tilde{\mu}_l\right)\prod_{l=1}^{p_2-1}
\left(\llambda_{j- 2\sum_{h=1}^{\beta_1 + \cdots + \beta_l} i_h}- \lambda_{l+1}\right)}
\end{aligned}.
\label{kappa2}
\end{equation}
Using the explicit form of the eigenvectors $\left\{g_k\right\}_{-(N-1) \leq k \leq N}$ (see Lemma \ref{lem:spettroL0}), one verifies that  
\begin{align*}
&\left\langle g_{j-2i_1-\cdots-2i_r}, \overline{g_{j-2i_{r+1}-\cdots-2i_n}}\right\rangle = \delta\left(j, \sum_{m=1}^n i_m\right), \quad \left\langle g_{N-j-2i_1-\cdots-2i_r}, \overline{g_{N-j-2i_{r+1}-\cdots-2i_n}}\right\rangle =\delta\left(-j, \sum_{m=1}^n i_m\right).
\end{align*}
This is used to simplify the last term  in  \eqref{expressionL}. Moreover, using  $j = \sum_{m=1}^n i_m$ and  the identity  $\llambda_j=\llambda_{-j}$, one gets   that
\begin{align}
\label{eigenv.new} 
&\llambda_{j-2i_n}=\llambda_{j-2\sum_{m=1}^{n-1}i_m},\quad  \ldots , \quad \llambda_{j-2i_n-2i_{n-1}- \cdots - 2i_{r+1}}=\llambda_{j-2\sum_{m=1}^{r}i_m}.
\end{align}
 Recalling the definition of the coefficients $x_j^l$ (formula
 \eqref{x^l_j.formula}), we can write, for $\varsigma = 1,2$,
\begin{equation}
\label{z^n.exp}
z_{j,\varsigma}^n(\hat{b}, \hat{a}) = \frac{1}{N^{n/2}}\left(\tfrac{2}{N}\om{j}\right)^{-1/2} \sum_{\bii \in \Delta^n} \mathcal{K}^n_{j,\varsigma}\bii \, u_{i_1, \iota_1} \ldots u_{i_n, \iota_n}
\end{equation}
where 
the set
$$
\Delta^n := \left\{\bii\in \Z^n \times \N^n: \,  i_l \in K_N^0, \quad \iota_l \in \{1, 2\}, \quad  \forall 1 \leq l\leq  n \right\},
$$
the variables $u = (u_{i_1, \iota_1}, \cdots, u_{i_n, \iota_n})$ are defined by
$$
u_{i_r, 1} := \hat{b}_{i_r}, \qquad  u_{i_r, 2} :=\hat{a}_{i_r},
$$
the kernels $\mathcal{K}^n_{j,\varsigma}\bii$ are defined for $\bii \in \Delta^n$ by
\begin{equation}
\label{kernelK.def}
\mathcal{K}^n_{j,\varsigma}\bii :=
\tilde{\mathcal{K}}^n_{j,\varsigma}(\bi) \prod_{\{1 \leq l \leq n\} } \left(-2 \cos \left( \tfrac{(j-2i_i - \cdots
  -2i_l)\pi}{N}\right) \right)^{\iota_l-1}, 
\end{equation}
\begin{equation}
\label{kernelK.def2}
\tilde{\mathcal{K}}^{n}_{j,\varsigma}(\bi)  = \sum_{\substack{r+s=n-(\varsigma - 1) \\ p=(p_1,p_2)\in \mathbb{N}^2, \, |p|\leq n}}c_{p_1}c_{p_2}
\sum_{\substack{(\alpha, \beta) \in \N^{p_1}\times \N^{p_2}\\  |\alpha|=r,  \, |\beta| = s}} \mathcal{S}^{p,\alpha,\beta}_{j,\varsigma}(\bi)
\end{equation}
and finally
\begin{equation}
\label{def.Spab}
\mathcal{S}^{ p,\alpha,\beta}_{j,\varsigma}(\bi)= \delta\left(j, \sum_{m=1}^n i_m \right)
 \left(\frac{\im}{2\pi}\right)^{|p|}(-1)^n\oint_{\Gamma_j}\ldots\oint_{\Gamma_j}
 \kappa_{j,\varsigma}^{p,\alpha,\beta}(\bi)
 \diff{\lambda_1}\ldots 
 \diff{\lambda_{|p|}}. 
 \end{equation}
An analogous expansion holds also for $w^n_{j,1}$ and $w^n_{j,2}$.\\
We need now to get estimates of the kernels  $\mathcal{K}^n_{j,\varsigma}$, which will follow from estimates on  the denominators of $\kappa_{j,\varsigma}^{p,\alpha,\beta}$.
\begin{lem}
\label{stimaautov}
Let $\mu \in \Gamma_j:=\left\{\lambda \in \C: \;
\mmod{\lambda-\lambda_{2j}^0}= \min
\left(\frac{\langle j \rangle}{2N^2}, \frac{\langle N -j \rangle}{2N^2} \right)  \right\}$, where $\langle j \rangle = \left( 1+ |j|^2\right)^{1/2}$. Then  there exists a constant $R>0$, independent of $N$, such that for every $-(N-1) \leq k \leq N$ one has
\begin{equation}
\begin{aligned}
 \mmod{\llambda_k- \mu} \geq 
\begin{cases} 
 R  \langle j-k \rangle \langle j+k \rangle /N^2, & \qquad \mbox{ if } 0 \leq |j| \leq \lfloor N /2 \rfloor \\
 R \langle j-k \rangle \langle (N -j) + (N-k) \rangle/N^2, & \qquad \mbox{ if } \lfloor N/2 \rfloor + 1 \leq |j| \leq N 
 \end{cases}
 \end{aligned}
\label{stimaautovalori2}
\end{equation}
\end{lem}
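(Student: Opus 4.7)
My plan is to use the product-to-sum identity
\[
\llambda_k-\llambda_j \;=\; 2\cos(j\pi/N)-2\cos(k\pi/N) \;=\; 4\sin\!\bigl(\tfrac{(j+k)\pi}{2N}\bigr)\sin\!\bigl(\tfrac{(j-k)\pi}{2N}\bigr),
\]
which in view of Lemma \ref{lem:spettroL0} holds for all relevant $j,k$, since $\llambda_k=-2\cos(k\pi/N)$ with $\llambda_{-k}=\llambda_k$ and $\llambda_N=2$. Combining this with the elementary inequality $|\sin(y\pi/(2N))|\geq \operatorname{dist}(y,2N\Z)/N$ (which follows from $|\sin x|\geq 2|x|/\pi$ on $|x|\leq\pi/2$ together with the $\pi$-periodicity of $|\sin|$) produces the master lower bound
\[
|\llambda_k-\llambda_j| \;\geq\; \frac{4\,\operatorname{dist}(j+k,2N\Z)\,\operatorname{dist}(j-k,2N\Z)}{N^2}.
\]
Then the triangle inequality $|\llambda_k-\mu|\geq |\llambda_k-\llambda_j|-r_j$, with $r_j=\min(\<j\>,\<N-j\>)/(2N^2)$ the radius of $\Gamma_j$, reduces everything to converting the two distances $\operatorname{dist}(j\pm k,2N\Z)$ into $\<j\pm k\>$. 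Since $\llambda_{-k}=\llambda_k$ and $\<j-k\>\<j+k\>$ is invariant under $k\mapsto -k$, I may assume $0\leq k\leq N$, and similarly $j\geq 0$.

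For case $(i)$, with $0\leq j\leq \lfloor N/2\rfloor$, I note that $j+k\in[0,3N/2]$ and $j-k\in[-N,N/2]$, so $\operatorname{dist}(j-k,2N\Z)=|j-k|\geq \tfrac{1}{\sqrt 2}\<j-k\>$ whenever $|j-k|\geq 1$, while $\operatorname{dist}(j+k,2N\Z)=\min(j+k,\,2N-j-k)$. The delicate subcase is $j+k\in(N,3N/2]$, in which $\operatorname{dist}(j+k,2N\Z)=2N-j-k$; but there the constraint $j\leq N/2$ forces $2N-j-k\geq N/2$ while $\<j+k\>\leq 3N/2+1$, so $2N-j-k\geq c\,\<j+k\>$ for an absolute $c>0$. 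Putting the two distance bounds together gives $|\llambda_k-\llambda_j|\geq C\<j-k\>\<j+k\>/N^2$ with an absolute $C>0$, and since $r_j=\<j\>/(2N^2)\leq \<j+k\>\<j-k\>/(2N^2)$ (because $\<j+k\>\geq \<j\>$ and $\<j-k\>\geq 1$), subtracting $r_j$ absorbs at most half of the right-hand side, which yields \eqref{stimaautovalori2} with some $R\in(0,C/2]$. The exceptional pairs $k=j$ (where $|\llambda_k-\mu|=r_j$ and one uses $\<2j\>\leq 2\<j\>$) and $j=k=0$ are checked by hand.

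For case $(ii)$, $\lfloor N/2\rfloor<j\leq N-1$, I will reduce to case $(i)$ via the reflection $(j,k)\mapsto (N-j,N-k)$: since $\llambda_{N-k}=-\llambda_k$, the difference $|\llambda_k-\llambda_j|$ is preserved, $\<j+k\>$ is replaced by $\<(N-j)+(N-k)\>$, $\<j-k\>$ stays the same, and $r_j=\<N-j\>/(2N^2)$ matches the radius in the reflected setting, delivering the second alternative of \eqref{stimaautovalori2}. The main obstacle is the bookkeeping inside case $(i)$: the naive lower bound $|\sin((j+k)\pi/(2N))|\geq (j+k)/N$ is only valid up to $j+k=N$, so one must carefully handle the overflow regime $j+k\in(N,3N/2]$, where the bound switches to $(2N-j-k)/N$. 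That this regime still produces the claimed estimate hinges on the geometric fact that $j\leq N/2$ prevents $j+k$ from approaching $2N$, keeping $\<j+k\>$ and $2N-j-k$ comparable up to an absolute constant — and it is precisely this bound that fails when $j$ is near $N$, forcing the separate treatment in case $(ii)$.
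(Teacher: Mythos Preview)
Your argument is correct and complete, with only a cosmetic sign slip (the identity gives $-4\sin(\cdot)\sin(\cdot)$, not $+4$, but only $|\llambda_k-\llambda_j|$ is used) and a slightly loose phrase: ``absorbs at most half'' would require $C\geq 1$, whereas in the overflow regime $j+k\in(N,3N/2]$ your constants only give $C\geq 1/\sqrt{2}$. Since $C>1/2$ regardless, the subtraction still leaves a positive multiple $(C-\tfrac12)\<j-k\>\<j+k\>/N^2$, so the conclusion stands with $R=C-\tfrac12$.

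Your route differs from the paper's in organization rather than substance. The paper splits case~$(i)$ into two sub-cases according to whether $|k|\le\lfloor N/2\rfloor$ or $|k|>\lfloor N/2\rfloor$: in the first it uses $|\llambda_k-\llambda_j|\geq 4|j^2-k^2|/N^2$ (which is exactly your product bound when neither $j\pm k$ overflows), and in the second it bounds $|\llambda_k-\llambda_j|=2|\cos((N-|k|)\pi/N)+\cos(|j|\pi/N)|$ from below via $\cos x\geq 1-2x/\pi$, obtaining a separation of order $1/N$ which dominates $\<j-k\>\<j+k\>/N^2$ because $\<\ell\>\le CN$ for $|\ell|\le 2N$. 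Case~$(ii)$ is then dispatched ``by symmetry'', exactly your reflection $(j,k)\mapsto(N-j,N-k)$. Your single master bound $|\llambda_k-\llambda_j|\geq 4\operatorname{dist}(j+k,2N\Z)\operatorname{dist}(j-k,2N\Z)/N^2$ unifies the two sub-cases, trading the paper's ad hoc $\cos x\geq 1-2x/\pi$ step for a short discussion of the overflow regime. Neither approach is materially shorter; yours is a bit more systematic and makes transparent why the estimate degenerates as $j$ approaches $N$ (forcing the separate bookkeeping in case~$(ii)$), while the paper's sub-case split keeps all constants explicit without tracking $\operatorname{dist}(\cdot,2N\Z)$.
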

\begin{proof}
Consider first the situation in which both  the eigenvalues $\llambda_j$ and $\llambda_k$ are  in the low half of the spectrum, namely   $0\leq |j|, |k| \leq \lfloor N/2 \rfloor$. In this case one has 
$$
|\llambda_{k} - \llambda_{j}| \equiv |\lambda^0_{2|k|} - \lambda^0_{2|j|}| = 2\mmod{\cos \left(\tfrac{|k| \pi}{N}\right)- \cos \left(\tfrac{|j| \pi}{N}\right)} = 2\mmod{\cos \left(\tfrac{k \pi}{N}\right)- \cos \left(\tfrac{j \pi}{N}\right)} \geq  \frac{4 |j^2 - k^2|}{N^2}.
$$
Therefore,  for $k\neq j$, there exists a positive constant $R_1$ such that for $\forall \mu \in \Gamma_j$
	\begin{align}
	\label{eig2}
	\mmod{\llambda_k - \mu} &\geq \mmod{\llambda_k - \llambda_j} - \frac{\langle j \rangle}{2N^2} \geq \frac{4|j^2-k^2|}{N^2}-\frac{\< j \>}{2N^2} \geq R_1 \frac{\< j-k \> \< j + k \>}{N^2},
	\end{align}
	where we used the inequality $\< j \> \leq 2 \< j-k \> \< j+k\> $, which holds since $j, k$ are integers. If $k=j$, then the claimed estimate follows trivially since $|\llambda_k - \mu| = \langle j \rangle/2N^2.$\\
Consider now the case when $\llambda_j$ is in the low half of the spectrum, while $\llambda_k$ is in the high half, i.e. $0 \leq |j| \leq \lfloor N/2 \rfloor, \,$ while $\lfloor N/2 \rfloor < |k| \leq N$. In this case the distance of the eigenvalues $\llambda_j$ and $\llambda_k$ is of order $\tfrac{1}{N}$, therefore the  estimate \eqref{stimaautovalori2} holds as well. More precisely, using $\cos x \geq 1-\frac{2}{\pi}x$ for $0\leq x \leq \pi/2$, one has
$$
|\llambda_{k} - \llambda_{j}| = |\lambda^0_{2|k|} - \lambda^0_{2|j|}| =  2\mmod{\cos \left(\tfrac{(N - |k|) \pi}{N}\right)+ \cos \left(\tfrac{j \pi}{N}\right)} \geq  \frac{4 (|k| - |j|)}{N} \geq \frac{\< j-k \> \< j + k \>}{N^2},
$$
where the last inequality holds since $\< l\>/N \leq 4, \,  $ $\forall |l| \leq 2N$. The inequality above implies that
\begin{align}
\label{diff.eig.case.2}
 \mmod{\llambda_k - \mu} \geq 
 |\llambda_{k} - \llambda_{j}|- \frac{\< j \>}{2N^2} 
  \geq \frac{\< j-k \> \< j + k \>}{N^2}-\frac{\<j\>}{2N^2} \geq R_2 \frac{\< j-k \> \< j + k \>}{N^2},
\end{align}
for some $R_2 >0$. Thus the first of \eqref{stimaautovalori2} is proved.\\
The proof of the second inequality of \eqref{stimaautovalori2} follows by symmetry and is omitted.
\end{proof}

We can now estimate the kernels $\mathcal K^n_{j, \varsigma}$ defined in \eqref{kernelK.def}.
\begin{lem}
\label{K.estimate}
 There exists a constant $R>0$, independent of $N$, such that $\mathcal{K}^{n}_{j, \varsigma}\bii$, $\varsigma=1,2$, satisfy, for every $n \geq 2$ and $1 \leq j \leq \lfloor N/2 \rfloor$, the estimates
\begin{equation}
\label{kernel.punct.est} 
\begin{aligned}
&\mmod{\mathcal{K}^{n}_{j, \varsigma}\bii} \leq R^n N^{2(n-1)}\delta\left( j, \sum_{l=1}^n i_l\right)\; \frac{1}{\prod_{l=1}^{n-1} \< \sum_{k=1}^l i_k \> \, \< \sum_{k=1}^l i_k - j \>},\\
&\mmod{\mathcal{K}^{n}_{N-j, \varsigma}\bii} \leq R^n N^{2(n-1)} \delta\left( -j, \sum_{l=1}^n i_l\right) \; \frac{1}{\prod_{l=1}^{n-1} \< \sum_{k=1}^l i_k \> \, \< \sum_{k=1}^l i_k - j \>}.
\end{aligned}
\end{equation}
\end{lem}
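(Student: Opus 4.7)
The plan is to bound $\mathcal{K}^{n}_{j,\varsigma}\bii$ using its explicit representation \eqref{kernelK.def}--\eqref{def.Spab} and the eigenvalue gap estimates of Lemma \ref{stimaautov}. First, the cosine product in \eqref{kernelK.def} is harmless since $|-2\cos(\cdot)|\le 2$, yielding a factor $2^n$ absorbable into $R^n$. The core task is to bound $\tilde{\mathcal{K}}^n_{j,\varsigma}(\bi)$, i.e.\ the contour integrals $\mathcal{S}^{p,\alpha,\beta}_{j,\varsigma}(\bi)$, uniformly in $(p,\alpha,\beta)$ so that the combinatorial sum in \eqref{kernelK.def2}, together with the decay of $c_{p_1}c_{p_2}$ coming from the Taylor coefficients of $(1+x)/(1-x^2)^{1/2}$, produces only a $(\text{const})^n$ overhead.

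For each fixed $(p,\alpha,\beta)$, the kernel $\kappa^{p,\alpha,\beta}_{j,\varsigma}$ has $n-1+|p|$ denominator factors (for $\varsigma=2$) or $n+|p|$ factors together with one numerator (for $\varsigma=1$). Among the denominators I distinguish two types: the two ``central'' factors $(\llambda_j-\lambda_1)$ and $(\llambda_j-\lambda_{p_1+1})$, for which $|\llambda_j-\lambda_l|$ is exactly the radius $\rho_j=\min(\<j\>,\<N-j\>)/(2N^2)$ of $\Gamma_j$, and the remaining factors of the form $(\llambda_{J_l}-\mu)$ with $J_l:=j-2\sum_{m=1}^{l}i_m$, which by Lemma \ref{stimaautov} are bounded below by $R\<j-J_l\>\<j+J_l\>/N^2$. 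Next I pair the bracket indices $l=1,\dots,n-1$ of the target estimate with specific denominators: factor 2 (see \eqref{kappa}) with $l=1,\dots,r$ supplies the first $r$ brackets, while factor 5 combined with the identity $J_l''=-J_{l-1}$ (which holds exactly because of the delta constraint $\sum i_m=j$) and $\llambda_{-k}=\llambda_k$ supplies the brackets for $l=r+1,\dots,n-1$ (for $\varsigma=2$ the shift in the product range in \eqref{kappa2} exactly accounts for this matching). For $\varsigma=1$ one has an extra numerator $(\llambda_{J_r}-\llambda_j)$, which by the elementary estimate $|\llambda_{J_r}-\llambda_j|\le C\<j-J_r\>\<j+J_r\>/N^2=C'\<\sum_{k=1}^{r}i_k\>\<\sum_{k=1}^{r}i_k-j\>/N^2$ is controlled by the factor 5 denominator with $l=r+1$ up to a constant; this is the cancellation mechanism that removes one denominator and explains why both cases obey the same final bound.

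The residual denominators (central factors plus factors 3 and 6, totalling $|p|$ factors) must be estimated against the contour integration. Here the naive ``sup $\times$ length'' bound $|\mathcal{S}|\le \rho_j^{|p|}\sup|\kappa|$ is too weak; the correct treatment is to integrate each $\lambda_l$ against the specific factor in which it appears as the only pole: $\oint_{\Gamma_j}|d\lambda_l|/|\llambda_j-\lambda_l|= 2\pi$ for the two central integrations, and the remaining integration variables $\lambda_k$ ($k\neq 1,p_1+1$) are paired with one factor of type 3 (respectively 6) containing them, invoking either the same cancellation (when the singularity lies on $\Gamma_j$, i.e.\ when an inner partial sum vanishes) or the $L^1$-type estimate $\oint|d\lambda_k|/|\llambda_{J}-\lambda_k|\le 2\pi\rho_j\cdot N^2/(R\<\cdot\>\<\cdot\>)$ (which gives a harmless constant since $\rho_j\le 1$). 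Combining these contributions produces $|\mathcal{S}^{p,\alpha,\beta}_{j,\varsigma}(\bi)|\le C^{n+|p|}\delta(j,\sum i_m)N^{2(n-1)}/\prod_{l=1}^{n-1}\<\cdot\>\<\cdot\>$, and the sum over compositions $(\alpha,\beta)$ with $|\alpha|+|\beta|=n-(\varsigma-1)$ and over $p$ contributes $\sum_{p}c_{p_1}c_{p_2}\binom{r-1}{p_1-1}\binom{s-1}{p_2-1}C^{|p|}\le\tilde{C}^n$, producing the desired bound with an enlarged constant $R$. The second inequality, for $\lfloor N/2\rfloor+1\le j\le N-1$, follows by the same argument but invoking the second case of Lemma \ref{stimaautov}.

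The main obstacle, and the place where the most care is required, is Step 5: the pairing of integration variables with denominator factors. Treating the integrals as pure upper bounds forces one to swallow extra powers of $\langle j\rangle/N^2$ per variable that would accumulate to $\langle j\rangle^{|p|-2}$ and destroy the target scaling $N^{2(n-1)}$. The correct bookkeeping--in which each $\lambda_l$ ``eats'' exactly one singular factor--relies on the precise structure of \eqref{defmul} and on the fact that the Kato contour $\Gamma_j$ is chosen with radius exactly comparable to the smallest gap around $\llambda_j^0$, so that integration against a factor $1/|\llambda_j-\lambda_l|$ produces an $O(1)$ quantity rather than $O(\rho_j^{-1})\cdot O(\rho_j)$ with uncontrolled absolute constants.
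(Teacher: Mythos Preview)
Your overall strategy---bound the cosine factors trivially, use Lemma \ref{stimaautov} on the ``main'' denominator factors coming from $\prod_{l=1}^{r}$ and $\prod_{l=r+1}^{n}$ (after invoking \eqref{eigenv.new} and the delta constraint), cancel the numerator in the $\varsigma=1$ case against one of these, and then control the combinatorial sums by $(\text{const})^n$---matches the paper exactly. The one place where you diverge is Step 5, and there you have manufactured a difficulty that is not present.

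You claim that the naive ``$\sup\times\text{length}$'' estimate on the $|p|$-fold contour integral is too weak and propose an $L^1$-type pairing of each $\lambda_k$ with a single singular factor. In the paper's argument this is unnecessary: the $|p|$ residual factors (the two central ones $(\llambda_j-\lambda_1),(\llambda_j-\lambda_{p_1+1})$ together with the products of types 3 and 6) are \emph{not} estimated via Lemma \ref{stimaautov} but by the crude lower bound $|\llambda_K-\mu|\ge|\llambda_j-\mu|=\rho_j$ valid for every $K$ and every $\mu\in\Gamma_j$. This gives $|\kappa_{j,\varsigma}^{p,\alpha,\beta}(\bi)|\le 2\,a_j(\bi)/\rho_j^{|p|}$, and then the $|p|$ integrations each contribute $|\Gamma_j|=2\pi\rho_j$, so the powers of $\rho_j$ cancel \emph{exactly} and one is left with $|\mathcal S^{p,\alpha,\beta}_{j,\varsigma}(\bi)|\le 2\,a_j(\bi)$, with no stray factors of $\langle j\rangle$ at all. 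The accumulation of $\langle j\rangle^{|p|-2}$ you worry about would only occur if one insisted on applying the sharp gap estimate of Lemma \ref{stimaautov} to the residual factors as well---but there is no reason to do so, and the paper does not.

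Your pairing argument could be pushed through (the bound $\oint_{\Gamma_j}|d\lambda|/|\llambda_J-\lambda|\le 2\pi\rho_j\cdot N^2/(R\langle j-J\rangle\langle j+J\rangle)\le C$ does hold, because $\rho_j\simeq\langle j\rangle/N^2$ and $\langle j-J\rangle\langle j+J\rangle\gtrsim\langle j\rangle$), so your proposal is not wrong, merely more laborious than needed. The rest of your outline---the numerator cancellation, the identification $J_l''=-J_{l-1}$ via the delta constraint, and the combinatorial control of $\sum_{p}c_{p_1}c_{p_2}\binom{r-1}{p_1-1}\binom{s-1}{p_2-1}C^{|p|}$---is correct and matches the paper.
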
 
 \begin{proof}
We start by estimating  $\kappa_{j, \varsigma}^{p,\alpha,\beta}(\bi)$, defined in \eqref{kappa} and \eqref{kappa2}. For every $-(N-1) \leq k \leq N$ and $\mu \in \Gamma_j$ one has $\mmod{\llambda_k - \mu} \geq \mmod{\llambda_j - \mu} \geq \min
\left(\frac{\langle j \rangle}{2N^2}, \frac{\langle N -j \rangle}{2N^2} \right) $, therefore
\begin{align*}
&\mmod{(\llambda_j-\lambda_1)\prod_{l=1}^{p_1-1}
\left(\llambda_{j- 2\sum_{h=1}^{\alpha_1 + \cdots + \alpha_l} i_h}-\lambda_{l+1}\right) (\llambda_j-\lambda_{p_1+1})
\prod_{l=1}^{p_2-1}
\left(\llambda_{j- 2\sum_{h=1}^{\beta_1 + \cdots + \beta_l} i_h}- \lambda_{l+1}\right)}\\
 & \qquad \geq \left[\min
\left(\frac{\langle j \rangle}{2N^2}, \frac{\langle N -j \rangle}{2N^2} \right)\right]^{|p|}.
\end{align*}
Let now $1 \leq j \leq \lfloor N/2 \rfloor$. 
By Lemma \ref{stimaautov}, formula \eqref{eigenv.new} and the inequality $\frac{|\llambda_{j-2\sum_{m=1}^r i_m}-\llambda_j|}{|\llambda_{j-2\sum_{m=1}^r i_m}-\mu|} \leq 2$ (which is used to estimate just $\kappa_{j, 1}^{p,\alpha,\beta}(\bi)$), it follows  that, for $\varsigma = 1,2$, 
 \begin{align*}
 \mmod{\kappa_{j, \varsigma}^{p,\alpha,\beta}(\bi)} \leq \frac{2}{\left[\min
\left(\frac{\langle j \rangle}{2N^2}, \frac{\langle N -j \rangle}{2N^2} \right)\right]^{|p|} \prod_{l=1}^{n-1}\mmod{\llambda_{j-2\sum_{m=1}^l i_m}-\mu_l}} \leq
\frac{2 \, a_j(i_1, \cdots, i_{n-1})}{\left[\min
\left(\frac{\langle j \rangle}{2N^2}, \frac{\langle N -j \rangle}{2N^2} \right)\right]^{|p|}}
 \end{align*}
 where $$a_j(i_1, \cdots, i_{n-1}) := \frac{R^{n-1}N^{2(n-1)}}{\prod_{l=1}^{n-1} \< \sum_{k=1}^l i_k \> \, \< \sum_{k=1}^l i_k - j \>}\ .$$ 
 
 To estimate  $\mathcal{S}^{p,\alpha,\beta}_{j, \varsigma}$ consider  \eqref{def.Spab}. The $\mathcal{S}^{p,\alpha,\beta}_{j, \varsigma}$'s are  defined by integrating the kernels $\kappa_{j,\varsigma}^{p,\alpha,\beta} $ over $\Gamma_j$ $|p|$-times. Since $\mmod{\Gamma_j}=2\pi \min
\left(\frac{\langle j \rangle}{2N^2}, \frac{\langle N -j \rangle}{2N^2} \right)  $, one gets  
$$
\mmod{\mathcal{S}^{ p,\alpha,\beta}_{j, \varsigma}(\bi)}\leq \left[  \min \left( \tfrac{\langle j \rangle}{N^2},  \tfrac{\langle N-j \rangle}{N^2} \right)\right]^{|p|} \delta\left(j, \sum_{l=1}^n i_l \right) \mmod{\kappa_{j, \varsigma}^{p,\alpha,\beta}(\bi)} \leq 2 \delta\left(j, \sum_{l=1}^n i_l \right) a_j(i_1, \cdots, i_{n-1}).
$$
 Finally consider  $\mathcal{K}^n_{j, \varsigma}$. From \eqref{kernelK.def} one has $\mmod{\mathcal{K}^n_{j, \varsigma}\bii} \leq 2^n \mmod{\tilde{\mathcal{K}}_{j, \varsigma}^{n}(\bi)} $, and from \eqref{kernelK.def2}   
 \begin{align*}
\left|\tilde{\mathcal{K}}_{j, \varsigma}^{n}(\bi)\right|&\leq \delta \left(j, \sum_{l=1}^n i_l \right) \, a_j(i_1, \cdots, i_{n-1}) \sum_{\substack{r+s=n-(\varsigma - 1) \\ p=(p_1,p_2)\in \mathbb{N}^2, \, |p|\leq n}}c_{p_1}c_{p_2}
\sum_{\substack{ (\alpha, \beta) \in \N^{p_1}\times \N^{p_2} \\  |\alpha|=r,  \, |\beta| = s}} 1\\
&  \leq C^n \delta \left(j, \sum_{l=1}^n i_l \right) \, a_j(i_1, \cdots, i_{n-1}) \ ,
\end{align*}
thus the first estimate of \eqref{kernel.punct.est} follows. The proof of the second one is similar, and is omitted.
 \end{proof}
Define now $\mathcal{K}^n_j := \mathcal{K}^n_{j,1} + \mathcal{K}^n_{j,2}$. Then 
 \begin{equation}
\label{w^n.exp3}
\begin{aligned}
&z_{j}^n(\hat{b}, \hat{a}) = \frac{D_j}{N^{n/2}} \sum_{\bii \in \Delta^n} \mathcal{K}^n_{j}\bii \, u_{i_1, \iota_1} \ldots u_{i_n, \iota_n}, \quad
w_{j}^n(\hat{b}, \hat{a}) = \frac{D_j}{N^{n/2}} \sum_{\bii \in \Delta^n} \mathcal{H}^n_{j}\bii \, u_{i_1, \iota_1} \ldots u_{i_n, \iota_n},
\end{aligned}
\end{equation}
where $\mathcal{H}^n_{j}\bii = \overline{\mathcal{K}^n_j(-{\bf i}, {\boldsymbol \iota})}$. The second formula holds since  
for $b, a$ real one has $w^n(b,a) = \overline{z^n(b,a)}$.
 \begin{cor}
 \label{K.norm}
 Let 
 $
 \Delta^n_{ j} := \{ \bii \in \Delta^n: \, \sum_{l=1}^n i_l =  j \}.  $
Then for $1\leq j \leq \lfloor N/2 \rfloor $ one has $\mbox{supp } \mathcal{K}^n_j \subseteq \Delta^n_j$ and $\mbox{supp } \mathcal{K}^n_{N-j} \subseteq \Delta^n_{-j}$.  Moreover  
 \begin{equation}
 \label{K.est.2}
\norm{\mathcal{K}^{n}_j}_{\Delta^n_j}, \quad \norm{\mathcal{K }^{n}_{N-j}}_{\Delta^n_{-j}} \leq \frac{R^n N^{2(n-1)}}{\< j \>^{n-1}} ,
 \end{equation}
 where $\norm{\mathcal{K}^{n}_j}^2_{\Delta^n_{ j}} := \sup_{\iota_1, \cdots, \iota_n \in \{1,2\}} \sum_{i_1+ \cdots + i_n =  j}\mmod{\mathcal{K}^{n}_j\bii}^2$.
 \end{cor}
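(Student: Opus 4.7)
\medskip
\noindent\textbf{Proof plan for Corollary \ref{K.norm}.}
The support statement is immediate: from Lemma \ref{K.estimate} the pointwise bound on $|\mathcal{K}^n_{j,\varsigma}\bii|$ contains the factor $\delta(j,\sum_{l=1}^n i_l)$ (respectively $\delta(-j,\sum_{l=1}^n i_l)$ for $\mathcal{K}^n_{N-j,\varsigma}$), and hence $\mathcal{K}^n_j=\mathcal{K}^n_{j,1}+\mathcal{K}^n_{j,2}$ vanishes outside $\Delta^n_j$ (resp.\ $\Delta^n_{-j}$).

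For the norm estimate I focus on $\mathcal{K}^n_j$; the case $\mathcal{K}^n_{N-j}$ is identical up to replacing $j$ by $-j$. Since the bound of Lemma \ref{K.estimate} is independent of the indices $\iota_l\in\{1,2\}$, taking the square and summing over $\bi$ with $\sum_l i_l=j$ gives
\begin{equation*}
\|\mathcal{K}^n_j\|^2_{\Delta^n_j}\leq R^{2n}\,N^{4(n-1)}\sum_{\substack{i_1,\ldots,i_n\in K_N^0\\ i_1+\cdots+i_n=j}}\prod_{l=1}^{n-1}\frac{1}{\<\textstyle\sum_{k=1}^l i_k\>^2\,\<\textstyle\sum_{k=1}^l i_k-j\>^2}.
\end{equation*}
The key step is the change of summation variables $s_l:=\sum_{k=1}^l i_k$ for $1\leq l\leq n-1$. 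The linear constraint $\sum_{k=1}^n i_k=j$ is absorbed into $i_n=j-s_{n-1}$, and, dropping the restriction $i_k\in K_N^0$ (which only enlarges the sum), the $s_l$'s become free integer variables. The product over $l$ then factorizes completely:
\begin{equation*}
\sum_{\substack{i_1+\cdots+i_n=j}}\prod_{l=1}^{n-1}\frac{1}{\<s_l\>^2\<s_l-j\>^2}\leq \prod_{l=1}^{n-1}\Bigl(\sum_{s\in\Z}\frac{1}{\<s\>^2\<s-j\>^2}\Bigr).
\end{equation*}

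The remaining one-dimensional sum is handled by the standard estimate $\sum_{s\in\Z}\<s\>^{-2}\<s-j\>^{-2}\leq C\<j\>^{-2}$, proved by splitting into $|s|\leq|j|/2$ (where $\<s-j\>\gtrsim\<j\>$) and $|s|\geq|j|/2$ (where $\<s\>\gtrsim\<j\>$) and using $\sum_{s}\<s\>^{-2}<\infty$. Substituting back yields
\begin{equation*}
\|\mathcal{K}^n_j\|^2_{\Delta^n_j}\leq R^{2n}\,N^{4(n-1)}\,\frac{C^{n-1}}{\<j\>^{2(n-1)}},
\end{equation*}
and absorbing $C$ into $R$ gives the claim. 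For $\mathcal{K}^n_{N-j}$ one changes variables $s_l=\sum_{k=1}^l i_k$ with the constraint $s_n=-j$, and the estimate $\sum_s\<s\>^{-2}\<s+j\>^{-2}\leq C\<j\>^{-2}$ (symmetric in the sign of $j$) gives the same bound. There is no genuine obstacle here: the core point is that, once the pointwise estimate of Lemma \ref{K.estimate} is in place, the chained denominators depend only on the partial sums $s_l$, so after the change of variables the $(n-1)$-fold sum splits into a product of one-dimensional sums, each of which is estimated by the elementary convolution bound above.
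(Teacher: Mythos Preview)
Your proof is correct and follows essentially the same approach as the paper's: the paper's one-line proof simply records the key inequality $\frac{\langle j\rangle^2}{\langle k\rangle^2\langle k-j\rangle^2}\leq 4\bigl(\frac{1}{\langle k\rangle^2}+\frac{1}{\langle k-j\rangle^2}\bigr)$, which is exactly the convolution bound you obtain by the splitting argument, and the change of variables to the partial sums $s_l$ and the factorization into one-dimensional sums is the intended (though unwritten) unpacking of that hint. Your write-up is correct as it stands and considerably more explicit than the paper's sketch.
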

\proof Just remark that $\frac{\langle j \rangle^2}{\langle k \rangle^2 \langle k-j \rangle^2} \leq 4 \left(\frac{1}{\langle k \rangle^2} + \frac{1}{\langle k-j \rangle^2} \right)$.\qed 

\vspace{1em}
We   prove now bounds on the  map $\und{Z^n}(\hat{b}, \hat{a}) := (\und{z^n}(\hat{b}, \hat{a}), \und{w^n}(\hat{b}, \hat{a}))$. 
\begin{lem}
\label{Z^n.ext}
There exists a constant $C>0$, independent of $N$,  such that for any $s \geq 0$ and $ \sigma \geq 0$ 
\begin{equation}
\label{Zn.est1}
\norm{\und{Z}^n(|\hat{b}|,|\hat{a}|)}_{\spazio{s+1, \sigma}}\leq C^n N^{2(n-1)}\norm{(b,a)}^n_{\Cs^\reg}, \qquad \forall \,
n\geq 2.
\end{equation}
\end{lem}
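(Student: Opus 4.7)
The strategy is to convert the explicit representation \eqref{w^n.exp3} into a pointwise bound on $\und{z^n_j}, \und{w^n_j}$ via Cauchy--Schwarz in the multi-index $\bii$, trading one factor for the $\ell^2$-norm of the kernel (estimated in Corollary~\ref{K.norm}) and the other for the square root of an $n$-fold convolution of $V^2$, where $V_i^2:=|\hat b_i|^2+4|\hat a_i|^2$. The crucial algebraic cancellation is $\om{j}\,D_j^2=N/2$, which absorbs the weight $\om{j}$ appearing in \eqref{nor.bir}.

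Concretely: since $|u_{i,\iota}|\leq V_i$ for $\iota\in\{1,2\}$, applying Cauchy--Schwarz in $\bi$ for each fixed $\iota$, summing over the $2^n$ values of $\iota$, and invoking Corollary~\ref{K.norm} will give
\begin{equation*}
\und{z^n_j}(|\hat b|,|\hat a|)\leq \frac{D_j\,2^n R^n N^{2(n-1)}}{N^{n/2}\,\<[j]_N\>^{n-1}}\,\mathcal{A}_j^{1/2},\qquad \mathcal{A}_j:=\sum_{\substack{\bi\in (K_N^0)^n \\ \sum_m i_m\equiv j\,(\bmod N)}}V_{i_1}^2\cdots V_{i_n}^2,
\end{equation*}
and the same bound for $\und{w^n_j}$. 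Squaring, multiplying by $[j]_N^{2(s+1)}e^{2\sigma[j]_N}\om{j}/(2N)$, summing over $j$, and combining $\om{j}D_j^2=N/2$ with the bound $[j]_N^{2(s+1)}/\<[j]_N\>^{2(n-1)}\leq [j]_N^{2s}$ (which holds whenever $n\geq 2$ and $[j]_N\geq 1$) will reduce the estimate to
\begin{equation*}
\norm{\und{Z^n}(|\hat b|,|\hat a|)}^2_{\spazio{s+1,\sigma}}\leq (C_0 R^2)^n\,\frac{N^{4(n-1)}}{N^n}\,\sum_{k=1}^{N-1}[k]_N^{2s}e^{2\sigma[k]_N}\mathcal{A}_k.
\end{equation*}

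The remaining step is a submultiplicativity argument. The key input is the elementary inequality: for $k,i_1,\ldots,i_n\in K_N^0$ with $k\equiv \sum_m i_m\,(\bmod N)$, one has $[k]_N\leq \sum_m[i_m]_N$; indeed $|k|\leq N/2$ and $|k|$ equals the distance from $\sum_m i_m\in\Z$ to the nearest multiple of $N$, hence $|k|\leq|\sum_m i_m|\leq\sum_m|i_m|=\sum_m[i_m]_N$. Therefore $e^{2\sigma[k]_N}\leq\prod_m e^{2\sigma[i_m]_N}$ and, by Jensen's inequality when $s\geq 1/2$ or subadditivity of $x^{2s}$ when $s\leq 1/2$, $[k]_N^{2s}\leq n^{\max(2s-1,0)}\sum_m[i_m]_N^{2s}$. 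Expanding $\mathcal{A}_k$, dropping the mod-$N$ constraint by summing over $k\in\Z/N\Z$, and exploiting the symmetry in $m$, the expression factorises into a product of $n$ unconstrained one-variable sums, each bounded by $2N\norm{(b,a)}^2_{\Cs^\reg}$ by the definition of the $\Cs^\reg$ norm. This gives $\sum_k [k]_N^{2s}e^{2\sigma[k]_N}\mathcal{A}_k\leq n^{C_s}(2N)^n\norm{(b,a)}^{2n}_{\Cs^\reg}$, and substituting back yields the estimate $\norm{\und{Z^n}(|\hat b|,|\hat a|)}_{\spazio{s+1,\sigma}}^2\leq n^{C_s}(2C_0 R^2)^n N^{4(n-1)}\norm{(b,a)}^{2n}_{\Cs^\reg}$, after which the polynomial-in-$n$ factor is absorbed into an $n$-th power to produce the claimed bound.

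The main obstacle is the submultiplicativity step: establishing the elementary lemma $[k]_N\leq\sum_m[i_m]_N$ under mod-$N$ convolution and then distributing the weights $[k]_N^{2s}e^{2\sigma[k]_N}$ across the $n$ factors. A subtle quantitative point is that $n\geq 2$ is essential, since it is precisely the combination of the $\<[j]_N\>^{-(n-1)}$ decay of the kernel norm, the drop of one power from $s+1$ to $s$ in the weight, and the $\om{j}D_j^2=N/2$ cancellation that produces exactly the factor $N^{2(n-1)}$ required, with an $[j]_N^{-2(n-2)}\leq 1$ to spare.
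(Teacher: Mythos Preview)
Your argument is correct and follows essentially the same strategy as the paper: Cauchy--Schwarz in the multi-index $\bi$, the kernel $\ell^2$-bound of Corollary~\ref{K.norm}, the cancellation $\om{j}\,D_j^2=N/2$, and submultiplicativity of the weights under convolution. The one technical difference is how the polynomial weight $[j]_N^{2s}$ is passed to the factors $i_1,\ldots,i_n$. The paper inserts the auxiliary product $\Lambda(\bi)=\prod_m[i_m]$ and uses $\sup_{\sum i_m=j}\Lambda(\bi)^{-1}\leq R^n/\langle j\rangle$, thereby placing a full weight $[i_m]^{2s}$ on \emph{every} factor; you use the additive inequality $[k]_N^{2s}\leq n^{\max(2s-1,0)}\sum_m[i_m]_N^{2s}$, which puts the weight on only one factor and relies on the observation that the remaining unweighted sums $\sum_i e^{2\sigma[i]_N}V_i^2$ are still dominated by $2N\norm{(b,a)}^2_{\Cs^\reg}$ (since the norm carries $\max(1,[k]_N^{2s})\geq 1$). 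Both routes reach the same final bound; the paper's multiplicative trick avoids the extraneous $n^{C_s}$ prefactor and is closer in spirit to standard tame-modulus calculus, whereas your additive route is slightly more elementary.
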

\begin{proof}  By formula \eqref{z^n.exp} one has that for $1 \leq j \leq \lfloor N/2 \rfloor$
\begin{equation}
\begin{aligned}
\label{expansion.Zn.1}
&\mmod{\und{z}_j^{n}(|\hat{b}|,|\hat{a}|)} \leq \frac{D_j}{N^{n/2}} \sum_{\bii \in \Delta_j^n}  \mmod{\mathcal{K}^{ n}_j\bii} |u_{i_1, \iota_1}| \ldots |u_{i_n, \iota_n}|, \\
&\mmod{\und{z}_{N-j}^{n}(|\hat{b}|,|\hat{a}|)} \leq
\frac{D_j}{N^{n/2}} \sum_{\bii \in \Delta_{-j}^n}  \mmod{\mathcal{K}^{
    n}_{N-j}\bii} |u_{i_1, \iota_1}| \ldots |u_{i_n, \iota_n}|.
\end{aligned}
\end{equation}
Introduce $\Lambda(\bi) := [i_1] \cdots [i_n]$, where $[i_r]=\max (1, |i_r|)$ $\forall 1 \leq r \leq n$,  and remark that for some  constant $R>0$ one has
$$
\sup_{i_1 + \cdots + i_n=  j} \Lambda(\bi)^{-1} \leq \frac{R^n}{\langle j \rangle}, \qquad \forall j \in \Z.
$$
Therefore, by Corollary \ref{K.norm}, 
\begin{align*}
&\mmod{\und{z}_j^{n}(|\hat{b}|,|\hat{a}|)}^2 \leq \frac{1}{N^{n}}D_j^2 \norm{\mathcal{K}^{ n}_j}^2_{\Delta^n_j} \left(\sup_{i_1 + \cdots + i_n = j}\Lambda(\bi)^{-2s}\right) \sum_{\bii \in \Delta^n_j} [i_1]^{2s} |u_{i_1, \iota_1}|^2 \ldots [i_n]^{2s} |u_{i_n, \iota_n}|^2, \\
&\mmod{\und{z}_{N-j}^{n}(|\hat{b}|,|\hat{a}|)}^2 \leq \frac{1}{N^{n}}D_j^2 \norm{\mathcal{K}^{ n}_{N-j}}^2_{\Delta^n_{-j}} \left(\sup_{i_1 + \cdots + i_n = -j}\Lambda(\bi)^{-2s}\right) \sum_{\bii \in \Delta^n_{-j}} [i_1]^{2s} |u_{i_1, \iota_1}|^2 \ldots [i_n]^{2s} |u_{i_n, \iota_n}|^2.
\end{align*}
Use now inequalities \eqref{K.est.2}, the definition of $D_j$, the fact that  $e^{2\sigma |j|}\leq e^{2\sigma |i_1|}\cdots e^{2\sigma |i_{n-1}|}e^{2\sigma |j-i_1-\cdots-i_{n-1}|}$,  and the bounds $|u_{l,\iota_l}| \leq |\hat{b}_l| + |\hat{a}_l|$, to deduce that, for any $n \geq 2$, 
\begin{align*}
&\frac{1}{N}\sum_{j=1}^{\lfloor N/2 \rfloor}
 [j]^{2(s+1)}e^{2\sigma |j|}\om{j}\left(\mmod{\und{z}_{j}^{n}(|\hat{b}|,|\hat{a}|)}^2  + \mmod{\und{z}_{N-j}^{n}(|\hat{b}|,|\hat{a}|)}^2 \right)  \\
  & \quad \leq   N^{4(n-1)}  \frac{C^n}{N^{n}} 
  \sum_{j=1}^{\lfloor N/2 \rfloor}
 [j]^{2(2-n)} e^{2\sigma |j|} \sum_{\bii \in \Delta^n_{\pm j}} [i_1]^{2s} |u_{i_1, \iota_1}|^2 \ldots [i_n]^{2s} |u_{i_n, \iota_n}|^2 \\
& \quad \leq  N^{4(n-1)} C^n \norm{(b,a)}^{2n}_{\Cs^\reg}.
\end{align*}
Since  $w^n(\hat{b}, \hat{a})$ satisfies the same inequality,   estimate \eqref{Zn.est1} holds.
\end{proof}
Consider now the map $(\hat{b},\hat a)\mapsto dZ^n(\hat b,\hat a)^*$, where $dZ^n(\hat b, \hat a)^*$ is the adjoint of the differential of $Z^n$. Explicitly, if  $\xi, \eta$ are vectors in $\C^{N-1}$ and $h, g$ are vectors in $\C^N$ such that
$(h, g)\equiv dZ^n(\hat{b}, \hat{a})^* (\xi, \eta)$,  then the $j^{th}$ components of $h$ and $g$ are given by
\begin{equation}
\label{diff.trasp.comp}
(h_j, \, g_j)  = \left( \sum_{k=1}^{N-1} \left( \overline{\frac{\partial z_{k}^n}{\partial \hat{b}_j}(\hat{b}, \hat{a})} \xi_{k} +  \overline{\frac{\partial w_{k}^n}{\partial \hat{b}_j}(\hat{b}, \hat{a})} \eta_{k} \right), \quad 
  \sum_{k=1}^{N-1} \left( \overline{\frac{\partial z_{k}^n}{\partial \hat{a}_j}(\hat{b}, \hat{a})} \xi_{k} +  \overline{\frac{\partial w_{k}^n}{\partial \hat{a}_j}(\hat{b}, \hat{a})} \eta_{k} \right) \right).
\end{equation}
Denote by $\und{h}, \und{g}$ the vectors of $\C^N$ whose components are given by
\begin{equation}
\label{diff.trasp.comp.und}
(\underline{ h_j}, \, \und{ g_j})  = \left( \sum_{k=1}^{N-1} \left( \und{ \frac{\partial z_{k}^n}{\partial \hat{b}_j}}(|\hat{b}|, |\hat{a}|) |\xi_{k}| +  \und{ \frac{\partial w_{k}^n}{\partial \hat{b}_j}}(|\hat{b}|, |\hat{a}|) |\eta_{k}| \right), \quad 
  \sum_{k=1}^{N-1} \left( \und{ \frac{\partial z_{k}^n}{\partial \hat{a}_j}}(|\hat{b}|, |\hat{a}|) |\xi_{k}| + \und{ \frac{\partial w_{k}^n}{\partial \hat{a}_j}}(|\hat{b}|, |\hat{a}|) |\eta_{k}| \right) \right).
\end{equation}
We begin to study the case $n=2$.
\begin{lem}
\label{dZ^n=2}
There exists a constant $R>0$, independent of $N$,  such that $\forall s \geq 0 \,$, $\sigma \geq 0$  one has 
\begin{equation}
\label{dZ*2}
\norm{\und{dZ^2}(|\hat{b}|, |\hat{a}|)^* (|\xi|, |\eta|)}_{\Cs^{s+2, \sigma}}\leq R N^{3} \norm{(b,a)}_{\Cs^{\reg}}
 \norm{(\xi, \eta)}_{\spazio{\reg}} \ .
 \end{equation}
\end{lem}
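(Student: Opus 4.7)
The plan is to compute the adjoint explicitly using the bilinearity of $Z^2$, and then bound it by Cauchy--Schwarz once the pointwise kernel bounds of Lemma~\ref{K.estimate} have been inserted. Since $Z^2$ is a homogeneous polynomial of degree $2$, the partial derivatives $\partial z_k^2/\partial\hat b_j$, $\partial z_k^2/\partial\hat a_j$ and the analogous ones for $w_k^2$ are linear in $(\hat b,\hat a)$, and can be read off from \eqref{w^n.exp3}. Concretely, collecting the two contributions $i_1=j$ and $i_2=j$ in the sum over $\Delta^2_k$, and using the pointwise bound $|\mathcal{K}^2_k\bii|\le CN^2\,\delta(k,i_1+i_2)/(\langle i_1\rangle\langle i_2\rangle)$ from Lemma~\ref{K.estimate} (together with the analogous estimate for $\mathcal{H}^2_k$ on $\Delta^2_{-k}$, and the companion bound for $k$ replaced by $N-k$), one obtains
$$
\underline{\tfrac{\partial z_k^2}{\partial\hat b_j}}(|\hat b|,|\hat a|),\ \underline{\tfrac{\partial z_k^2}{\partial\hat a_j}}(|\hat b|,|\hat a|),\ \underline{\tfrac{\partial w_k^2}{\partial\hat b_j}}(|\hat b|,|\hat a|),\ \underline{\tfrac{\partial w_k^2}{\partial\hat a_j}}(|\hat b|,|\hat a|)\ \le\ \frac{C\,D_k\,N}{\langle j\rangle\,\langle k-j\rangle}\bigl(|\hat b_{k-j}|+|\hat a_{k-j}|\bigr),
$$
for every $1\le j,k\le N-1$. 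Inserting this into \eqref{diff.trasp.comp.und} yields
$$
|\underline{h_j}|+|\underline{g_j}|\ \le\ \frac{CN}{\langle j\rangle}\sum_{k=1}^{N-1}\frac{D_k}{\langle k-j\rangle}\bigl(|\hat b_{k-j}|+|\hat a_{k-j}|\bigr)\bigl(|\xi_k|+|\eta_k|\bigr).
$$

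The second step is to insert this into the $\Cs^{s+2,\sigma}$ norm and apply Cauchy--Schwarz in $k$, distributing the Sobolev--analytic weight $[j]_N^{s+2}e^{\sigma[j]_N}$ so as to reconstruct one factor of the $\Cs^\reg$ norm of $(b,a)$ and one factor of the $\spazio{\reg}$ norm of $(\xi,\eta)$. The distribution is done through the elementary splittings $[j]_N\le C([k-j]_N+[k]_N)$ and $e^{\sigma[j]_N}\le e^{\sigma[k-j]_N}e^{\sigma[k]_N}$, combined with $D_k\le CN/[k]_N^{1/2}$ and $\omega_k\ge c[k]_N/N$. After Cauchy--Schwarz, one of the two factors becomes an $\ell^2$ norm of $\langle m\rangle^{-1}$-type weights that is bounded by a numerical constant uniformly in $N$, while the other reproduces (up to a uniform constant) the product $\|(b,a)\|_{\Cs^\reg}^2\,\|(\xi,\eta)\|_{\spazio{\reg}}^2$.

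The main obstacle is the careful bookkeeping of the powers of $N$: the step from $\Cs^\reg$ to $\Cs^{s+2,\sigma}$ supplies $[j]_N^4\le CN^4$; the factor $1/\omega_k$ absorbed to match the $\spazio{\reg}$ weight supplies an additional $N$; and the kernel contributes $N^2$ through $D_k\cdot N$. One has to check that these $N^7$ powers are reduced to $N^6$ by the two gains $\langle j\rangle^{-2}$ and $\langle k-j\rangle^{-2}$ (each effectively canceling one power of $N$ after summation), so that taking the square root yields the announced $N^3$. No further ideas are needed beyond Lemma~\ref{K.estimate}; the remainder is a (somewhat tedious) direct estimate, handled separately for the cases $1\le k\le\lfloor N/2\rfloor$ and $\lfloor N/2\rfloor<k\le N-1$ using the two alternatives in \eqref{kernel.punct.est}.
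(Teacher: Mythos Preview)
Your approach is essentially the same as the paper's: both proofs bound the adjoint componentwise, use the pointwise decay of the second-order kernel coming from the eigenvalue-gap estimate, and finish with a Cauchy--Schwarz/convolution argument in the spirit of Lemma~\ref{bullet}. The only cosmetic difference is that the paper works from the explicit second-order formula \eqref{taylor.exp} and bounds $|\lambda^0_{2(k-2j)}-\lambda^0_{2k}|^{-1}$ directly, whereas you quote the packaged bound of Lemma~\ref{K.estimate} with $n=2$; these are equivalent.

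One comment on your bookkeeping paragraph: the way the argument actually runs, the factor $\langle j\rangle^{-2}$ coming from the kernel is not used to ``cancel a power of $N$ after summation'' but rather to absorb two of the four extra powers of $[j]_N$ produced by passing from the $\Cs^{\reg}$ to the $\Cs^{s+2,\sigma}$ norm (so that only $[j]_N^{2s+2}$ remains to be distributed onto the $k$- and $(k-j)$-factors). Likewise the factor $\langle k-j\rangle^{-2}$ is the summable weight in the Cauchy--Schwarz step, not a separate $N$-gain. If you rewrite the $N$-count along those lines you recover exactly the paper's intermediate bound
\[
\frac{1}{N}\sum_j [j]_N^{2(s+2)}e^{2\sigma[j]_N}\bigl(|\underline{h_j}|^2+|\underline{g_j}|^2\bigr)\ \le\ C\,N^{6}\,\|(b,a)\|_{\Cs^{\reg}}^2\,\|(\xi,\eta)\|_{\spazio{\reg}}^2,
\]
and hence \eqref{dZ*2}.
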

\begin{proof} By \eqref{taylor.exp}, one computes that the second order terms $Z^2 = (z^2, w^2)$  are given by 
 \begin{align*}
&z^2_k(\hat{b}, \hat{a}) = \tfrac{D_k}{N}\sum_{l \neq 0} \left(\hat{b}_l - 2 \cos(\tfrac{(k-2l)\pi}{N})\hat{a}_l\right) \left(\hat{b}_{k-l} - 2 \cos(\tfrac{k\pi}{N})\hat{a}_{k-l}\right)/(\lambda_{2(k-2l)}^0 - \lambda_{2k}^0)\\
&w^2_k(\hat{b}, \hat{a}) = \tfrac{D_k}{N}\sum_{l \neq 0} \left(\hat{b}_{N-l} - 2 \cos(\tfrac{(k-2l)\pi}{N})\hat{a}_{N-l}\right) \left(\hat{b}_{l-k} - 2 \cos(\tfrac{k\pi}{N})\hat{a}_{l-k}\right)/(\lambda_{2(k-2l)}^0 - \lambda_{2k}^0).
\end{align*}
Let  $\und{h}, \, \und{g}$ be as in \eqref{diff.trasp.comp.und} with $n=2$.
Using the explicit expressions for $z_k^2$ and $w_k^2$, one computes that for $0 \leq j \leq \lfloor N/2\rfloor $ 
\begin{align*}
|\und{h_j}|  & \leq \frac{1}{N} \sum_{k=1}^{N-1}  \frac{\left(|\hat{b}_{k-j}|+ 2|\hat{a}_{k-j}|\right)\, D_k( |\xi_k| +  |\eta_k|) }{|\lambda_{2(k-2j)}^0 - \lambda_{2k}^0|} \\
& \leq N \sum_{k=1}^{\lfloor N/2\rfloor } \frac{\left(|\hat{b}_{k-j}|+ 2|\hat{a}_{k-j}|\right)\, D_k( |\xi_k| +  |\eta_k|)}{\langle k-j \rangle \langle j \rangle} + N \sum_{k=\lfloor N/2 \rfloor +1}^{N-1 } \frac{\left(|\hat{b}_{k-j}|+ 2|\hat{a}_{k-j}|\right)\, D_k( |\xi_k| +  |\eta_k|)}{\langle N-k +j \rangle \langle j \rangle}\\
& \leq N \sum_{k=1}^{\lfloor N/2\rfloor } \frac{\left(|\hat{b}_{k-j}|+ 2|\hat{a}_{k-j}|\right)\, D_k( |\xi_k| +  |\eta_k|) }{\langle k-j \rangle \langle j \rangle} + \frac{ \left(|\hat{b}_{N-k-j}|+ 2|\hat{a}_{N-k-j}|\right)\, D_k( |\xi_{N-k}| +  |\eta_{N-k}|)}{\langle k+j \rangle \langle j \rangle}\\
& \leq \frac{  N^2}{\langle j \rangle} \sum_{k=1}^{\lfloor N/2\rfloor } \frac{\left(|\hat{b}_{k-j}|+ 2|\hat{a}_{k-j}|\right)\, \langle k \rangle^{1/2}( |\xi_k| +  |\eta_k|) }{\langle k-j \rangle \langle k \rangle} + \frac{ \left(|\hat{b}_{N-k-j}|+ 2|\hat{a}_{N-k-j}|\right)\, \langle k \rangle^{1/2}( |\xi_{N-k}| +  |\eta_{N-k}|)}{\langle k+j \rangle \langle k \rangle}
\end{align*}
where in the last inequality we used that $D_k \leq  N/\langle k \rangle^{1/2}$. 
With analogous computations, one verifies that 
$$
|\und{h_{N-j}}|  \leq \frac{N^2}{\langle j \rangle} \sum_{k=1}^{\lfloor N/2\rfloor } \frac{\left(|\hat{b}_{k+j}|+ 2|\hat{a}_{k+j}|\right)\, \langle k \rangle^{1/2}( |\xi_k| +  |\eta_k|) }{\langle k+j \rangle \langle k \rangle} + \frac{ \left(|\hat{b}_{j-k}|+ 2|\hat{a}_{j-k}|\right)\, \langle k \rangle^{1/2}( |\xi_{N-k}| +  |\eta_{N-k}|)}{\langle k-j \rangle \langle k \rangle}.
$$
Proceeding as in the proof of Lemma \ref{bullet}, one obtains that there exist constants $C, C' >0$, independent of $N$, such that
\begin{align}
\nonumber
& \frac{1}{N}\sum_{j=0}^{\lfloor N/2 \rfloor}[j]^{2(s+2)} e^{2 \sigma |j|}(|\und{h_j}|^2 + |\und{h_{N-j}}|^2) \\
\nonumber
& \qquad \leq C N^3 \left( \sum_{k=0 }^{N-1} [k]_N^{2s} e^{2\sigma [k]_N} (|\hat{b}_k|^2 + |\hat{a}_k|^2) \right) \left( \sum_{l=1}^{N-1} [l]_N^{2s} e^{2\sigma [l]_N} [l]_N ( |\xi_l|^2 + |\eta_l|^2) \right)\\
\label{und.h.2.est}
& \qquad \leq C'  N^6 \norm{(b,a)}_{\Cs^{\reg }}^2 \norm{(\xi, \eta)}_{\spazio{\reg}}^2 
\end{align}
where in the last inequality we used that $[l]_N \leq  N \om{l}$ for $l$ integer.
One verifies that  $\und{g}$ satisfies the same inequality as \eqref{und.h.2.est}. Thus estimate \eqref{dZ*2} follows from the following inequality:
\begin{equation}
\label{dZ2.norm2}
\norm{\und{dZ^2}(|\hat{b}|, |\hat{a}|)^* (|\xi|, |\eta|)}_{\Cs^{s+2, \sigma}}^2 \leq \frac{1}{N}\sum_{j = 0 }^{N-1} [j]_N^{2s+4} e^{2\sigma [j]_N} \left(|\und{h_j}|^2 + |\und{g_j}|^2 \right).
\end{equation}
\end{proof}
We study now  $ dZ^n(\hat{b}, \hat{a})^* $ for $n \geq 3$.
\begin{lem}
\label{dZ^n>3}
There exists a constant $R>0$, independent of $N$,  such that for every $s \geq 0$, $\sigma \geq 0$ and  $n \geq 3$
\begin{equation}
\label{dZn.1}
\norm{\und{dZ^n}(|\hat{b}|, |\hat{a}|)^* (|\xi|, |\eta|)}_{\Cs^{s+2, \sigma}}\leq R^n N^{2n-1}\norm{(b,a)}^{n-1}_{\Cs^{\reg}}
 \norm{(\xi, \eta)}_{\spazio{\reg}}.
 \end{equation}
\end{lem}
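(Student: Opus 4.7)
The plan is to mimic the proofs of Lemmas \ref{Z^n.ext} and \ref{dZ^n=2}, tracking the one extra power of $N$ that arises from the adjoint differentiation. Starting from the expansion \eqref{w^n.exp3}, the product rule yields
\[
\frac{\partial z_k^n}{\partial \hat{b}_j}(\hat b,\hat a)=\frac{D_k}{N^{n/2}}\sum_{r=1}^n \sum_{\substack{\bii\in\Delta^n_k\\ i_r=j,\ \iota_r=1}} \mathcal{K}^n_k\bii \prod_{l\neq r} u_{i_l,\iota_l},
\]
and analogously for $\partial z^n_k/\partial \hat{a}_j$ (with $\iota_r=2$) and for $w_k^n$ (with $\mathcal{H}^n_k$ in place of $\mathcal K^n_k$, and the constraint $\sum i_l=-j$). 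Passing to moduli and inserting into \eqref{diff.trasp.comp.und}, the $n$ terms coming from the product rule each admit the same bound, so one reduces to estimating, for $0\le j\le \lfloor N/2\rfloor$,
\[
|\und{h_j}|,\ |\und{g_j}| \ \le\ \frac{n}{N^{n/2}}\sum_{k=1}^{N-1} D_k(|\xi_k|+|\eta_k|) \sum_{\substack{\bii\in\Delta^n\\ \sum_l i_l=\pm j\pm\ldots}} |\mathcal{K}^n_k\bii|\prod_{l=1}^{n-1} (|\hat b_{i_l}|+|\hat a_{i_l}|),
\]
after relabelling the distinguished variable into the last slot (the pointwise bound \eqref{kernel.punct.est} is invariant under this re-ordering up to the constant $R^n$).

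Next I would extract a sliced $\ell^2$-bound on the kernel: fixing the last index to the value $j$ or $-j$ (in the bounds for $\und{h_j},\und{g_j}$ respectively corresponding to indices $j$ in the first and second half of the spectrum), Corollary \ref{K.norm} combined with the inequality $\langle k\rangle^2 \leq 4(\langle i_l\rangle^2+\langle k-i_l\rangle^2)$ gives
\[
\Big(\sum_{i_1+\cdots+i_{n-1}=k\mp j} |\mathcal K^n_k(\bi,{\boldsymbol\iota})|^2\Big)^{1/2} \ \le\ \frac{R^n N^{2(n-1)}}{\langle j\rangle\,\langle k\rangle^{n-2}}.
\]
I then apply Cauchy--Schwarz to the inner sum over $(i_1,\ldots,i_{n-1})$, multiplying and dividing by $\prod_l[i_l]^s e^{\sigma|i_l|}$ exactly as in Lemma \ref{Z^n.ext}; this converts the product $\prod_l(|\hat b_{i_l}|+|\hat a_{i_l}|)$ into a factor of $\norm{(b,a)}_{\Cs^\reg}^{n-1}$ up to the loss of a combinatorial constant $R^n$, using the exponential and Sobolev sub-multiplicativity $e^{2\sigma|j|}\le \prod e^{2\sigma|i_l|}$ and $[j]_N^{s+2}\leq C [j]_N \prod_l[i_l]^s\cdot[k]^s$.

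After this reduction, $\und{h_j}$ and $\und{g_j}$ are bounded by
\[
\frac{R^n N^{2(n-1)}}{\langle j\rangle}\,\frac{1}{N^{n/2}}\sum_{k=1}^{N-1}\frac{D_k\,[k]^s}{\langle k\rangle^{n-2}}(|\xi_k|+|\eta_k|)\cdot \text{(weighted product in } b,a)
\]
and a final Cauchy--Schwarz in $k$ together with the definition of $\norm{\cdot}_{\spazio\reg}$ (which involves the weight $\om{k}^{1/2}$) and the elementary bound $[k]_N\leq CN\om{k}$ produces the norm $\norm{(\xi,\eta)}_{\spazio\reg}$. Plugging everything into the identity $\norm{\und{dZ^n}(|\hat b|,|\hat a|)^*(|\xi|,|\eta|)}_{\Cs^{s+2,\sigma}}^2 \leq \frac1N\sum_j[j]_N^{2s+4}e^{2\sigma[j]_N}(|\und{h_j}|^2+|\und{g_j}|^2)$, as in \eqref{dZ2.norm2}, yields the required bound. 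The arithmetic of powers of $N$ is $D_k\sim N/\langle k\rangle^{1/2}$ (one $N$), the kernel contributes $N^{2(n-1)}$, the prefactor $N^{-n/2}$ is absorbed into the $N^{-1/2}$ normalisations inside $\norm{(b,a)}_{\Cs^\reg}^{n-1}\norm{(\xi,\eta)}_{\spazio\reg}$, and the extra weight $[j]_N^2\leq CN^2\om{j}^2/\cdots$ in the $\Cs^{s+2,\sigma}$ norm gives the last factor of $N$; summing these contributions produces exactly $N^{2n-1}$.

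The main obstacle will be the bookkeeping: keeping the role of the differentiated Fourier index $j$ separate from the role of the "output" index $k$, and correctly distributing the Sobolev and analytic weights so that, after Cauchy--Schwarz, one obtains precisely $\norm{(b,a)}_{\Cs^\reg}^{n-1}\norm{(\xi,\eta)}_{\spazio\reg}$ and not a stronger norm on either factor. Uniformity in $n$ (up to the factor $R^n$) is guaranteed by the exponential-in-$n$ combinatorial bound already present in Corollary \ref{K.norm}, so no further convergence issue arises.
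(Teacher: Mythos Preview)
Your outline (product rule, kernel bound, Cauchy--Schwarz, weight distribution) is the same as the paper's, but there is a genuine quantitative gap in the middle. The sliced kernel bound you state,
\[
\Big(\sum_{i_1+\cdots+i_{n-1}=k\mp j}|\mathcal K^n_k\bii|^2\Big)^{1/2}\le \frac{R^nN^{2(n-1)}}{\langle j\rangle\,\langle k\rangle^{n-2}},
\]
extracts only \emph{one} power of $\langle j\rangle^{-1}$, whereas two powers are needed to absorb the extra weight $[j]_N^{4}$ that distinguishes the $\Cs^{s+2,\sigma}$ norm from $\Cs^{\reg}$. Your compensating inequality ``$[j]_N^{s+2}\le C[j]_N\prod_l[i_l]^s\,[k]^s$'' is false: for $s=1$, $n=3$, $i_1=i_2=1$ and $j$ large (so $k=j+2$), the left side grows like $j^{3}$ while the right side grows like $j^{2}$. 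A secondary issue is that the pointwise bound \eqref{kernel.punct.est} involves cumulative partial sums $S_l=i_1+\cdots+i_l$ and is \emph{not} invariant under permutation of the indices, so the step ``relabel the distinguished variable into the last slot'' is not justified; each of the $n$ product-rule terms has to be estimated with the differentiated index in its actual position.

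The paper repairs both points at once by \emph{not} keeping $k$ separate: it defines the dual kernels $\mathcal A^{n,l}_j\bii:=\mathcal K^n_{i_l}\big((i_1,\ldots,i_{l-1},j,i_{l+1},\ldots,i_n),\iota\big)$ (so the former output index $k$ becomes the summation variable $i_l$) and proves, for each $l$ and for $n\ge3$, the stronger bound $\|\mathcal A^{n,l}_j\|_{\Delta^n_{\mp j}}\le R^nN^{2(n-1)}/\langle j\rangle^{2}$. With $\langle j\rangle^{-2}$ in hand one has $[j]_N^{2(s+2)}/\langle j\rangle^{4}\le C[j]_N^{2s}$, and the remaining $[j]_N^{s}$ is absorbed by the genuine sub-multiplicativity $[j]_N^{s}\le C\prod_m[i_m]^{s}$ available on the support of $\mathcal A^{n,l}_j$ (where $\sum_m(\pm i_m)=\mp j$). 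After that the argument runs exactly as in Lemma~\ref{Z^n.ext}, and the $N$-bookkeeping closes via $D_l^2\le N^3[l]_N^{-2}\om{l}$.
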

\begin{proof} Let $h,g$ be as in \eqref{diff.trasp.comp.und}. We concentrate on $h$ only, the estimates for $g$ being analogous. Write $h_j =\sum_{k=1}^{N-1}  \frac{\partial z_{k}^n}{\partial \hat{b}_j} \xi_{k} + \sum_{k=1}^{N-1} \frac{\partial w_{k}^n}{\partial \hat{b}_j} \eta_{k} =: h_{j,1}+ h_{j,2}$. By \eqref{w^n.exp3}  one gets that  
\begin{align*}
h_{j,1}   = \frac{1}{N^{n/2}} \sum_{l=1}^n A_{j}^{n,l}(D\xi, u, \ldots, u), \qquad h_{j,2} &  = \frac{1}{N^{n/2}} \sum_{l=1}^n B_{j}^{n,l}(D\eta, u, \ldots, u)
\end{align*}
where  $D$ is defined in \eqref{D.def},  the  multilinear map  $A_{j}^{n,l}$ is defined by
$$
A_{j}^{n,l}(h,u, \ldots , u) = \sum_{\bii \in \Delta^n} \mathcal{A}_{j}^{ n,l}\bii u_{i_1, \iota_1} \ldots h_{i_l} \ldots u_{i_n, \iota_n}, 
$$
$B_{j}^{n,l}$ is defined analogously but with kernel $\mathcal{B}_{j}^{ n,l}\bii$, 
and finally $\mathcal{A}_j^{n,l}$ and  $\mathcal{B}^{ n,l}_{j}$ are  defined for $ 1 \leq j \leq \lfloor N/2 \rfloor$ by
 \begin{align*}
& \mathcal{A}^{n,l}_{j}\bii := \mathcal{K}^{n}_{i_l}\Big( (i_1, \ldots, i_{l-1}, j, i_{l+1}, \ldots, i_n),\,  (\iota_1, \ldots, \iota_{l-1}, 1, \iota_{l+1}, \ldots, \iota_n)\Big),\\
& \mathcal{A}^{n,l}_{N-j}\bii := \mathcal{K}^{n}_{i_l}\Big( (i_1, \ldots, i_{l-1}, -j, i_{l+1}, \ldots, i_n),\,  (\iota_1, \ldots, \iota_{l-1}, 1, \iota_{l+1}, \ldots, \iota_n)\Big),
 \end{align*}
 while $\mathcal{B}^{n,l}_{j}\bii = \overline{\mathcal{A}^{n,l}_{j}(-{\bf i},{\boldsymbol \iota})}$ and $\mathcal{B}^{n,l}_{N-j}\bii= \overline{\mathcal{A}^{n,l}_{N-j}(-{\bf i},{\boldsymbol \iota})}$, see \eqref{w^n.exp3}.
 By  Corollary \eqref{K.norm} it follows that 
 \begin{align*}
 &\mbox{ supp } \mathcal{A}^{ n,l}_{j}=\mbox{ supp } \mathcal{B}^{ n,l}_{N-j} \equiv 
  \left\{ \bii: \, i_1 + \cdots +  i_{l-1} -i_l +  i_{l+1}+ \cdots + i_n =-j, \, \iota_l= 1 \right\} \subseteq \Delta^n_{-j}, \\
  &\mbox{ supp } \mathcal{A}^{ n,l}_{N-j}=\mbox{ supp } \mathcal{B}^{ n,l}_{j} \equiv 
  \left\{ \bii: \, i_1 + \cdots +  i_{l-1} -i_l +  i_{l+1}+ \cdots + i_n =j, \, \iota_l= 1 \right\} \subseteq \Delta^n_{j}.
 \end{align*}
Proceeding as in the proof of Corollary \ref{K.norm}, one proves that there exists a constant  $R>0$, independent of $N$, such that  (see \cite{kuksinperelman})
 \begin{equation}
 \label{A.B.est}
\max_{1\leq l \leq n} \left(\norm{\mathcal{A}_{j}^{ n,l}}_{\Delta^n_{-j}}, \, \norm{\mathcal{A}_{N-j}^{ n,l}}_{\Delta^n_j}, \, \norm{\mathcal{B}_{j}^{ n,l}}_{\Delta^n_j}, \, \norm{\mathcal{B}_{N-j}^{ n,l}}_{\Delta^n_{-j}} \right) \leq 
\frac{R^n N^{2(n-1)}}{\langle j\rangle^2}, \quad \forall n \geq 3 \ .
\end{equation}  
Thus $\und{h}$, defined in \eqref{diff.trasp.comp.und}, satisfies
 $$
 |\und{h_j}| \leq \frac{1}{N^{n/2}} \sum_{l=1}^n \left(\und{A_{j}^{n,l}}(|D\xi|, |u|, \ldots, |u|)+\und{B_{j}^{n,l}}(|D\eta|, |u|, \ldots, |u|)\right), 
$$
where $ \und{A_{j}^{ n,l}}(h,u, \ldots , u) = \sum_{\bii \in \Delta^n} \mmod{\und{\mathcal{A}_{j}^{n,l}\bii}} u_{i_1,\iota_1} \ldots h_{i_l} \ldots u_{i_n, \iota_n}$, and $\und{B_{j}^{ n,l}}$ is defined in analogous way. 
Then, using \eqref{A.B.est} and arguing as in the proof of Lemma \ref{Z^n.ext}, one proves the estimate
\begin{align*}
 \frac{1}{N}\sum_{j=0}^{N-1}[j]_N^{2(s+2)} e^{2 \sigma [j]_N} |\und{h_j}|^2  &
 \leq R^n N^{4n-5} \norm{(b,a)}_{\Cs^\reg}^{2(n-1)}\left(\frac{1}{N} \sum_{l=1}^{N-1} [l]_N^{2s} e^{2\sigma [l]_N} D_l^2 ( |\xi_l|^2 + |\eta_l|^2) \right)\\
 & \leq  R^n N^{4n-2} \norm{(b,a)}^{2(n-1)}_{\Cs^\reg} \norm{(\xi, \eta)}_{\spazio{s-1, \sigma}}^2 \  ,
\end{align*}
where in the last inequality we used that $D_l^2 \leq  \frac{N^3}{ [l]_N ^2} \om{l}$.
One verifies that $\und{g}$ satisfies the same inequality, thus estimate \eqref{dZn.1} follows.
\end{proof}
\vspace{1em}
We can finally prove  property $(Z 4)$.  Let $s \geq 0$, $\sigma \geq 0$ be fixed.
By Lemma \ref{Z^n.ext}, \ref{dZ^n=2} and \ref{dZ^n>3}, there exists $C_1, C_2, \epsilon_*>0$, independent of $N$,  such that for every $0 < \epsilon \leq \epsilon_* $ it holds that
\begin{align*}
  \sup_{\norm{(b,a)}_{\Cs^\reg} \leq \epsilon/N^2} \norm{\und{Z^0}(b,a)}_{\spazio{s+1, \sigma}} & \leq \sum_{n \geq 2} \, \sup_{\norm{(b,a)}_{\Cs^\reg} \leq \epsilon/N^2} \norm{\und{Z^n}(b,a)}_{\spazio{s+1, \sigma}} \\
&  \leq \sum_{n \geq 2} R^n N^{2(n-1)}\frac{\epsilon^n}{N^{2n}} \leq \frac{C_1 \epsilon^2}{N^2} ,\\
 \sup_{\norm{(b,a)}_{\Cs^\reg} \leq \epsilon/N^2} \norm{\und{dZ^0}(b,a)^*}_{\L(\spazio{\reg},\, \Cs^{s+2, \sigma})} &\leq \sum_{n \geq 2} \, \sup_{\norm{(b,a)}_{\Cs^\reg} \leq \epsilon/N^2} \norm{\und{dZ^n}(b,a)^*}_{\L(\spazio{\reg},\, \Cs^{s+2, \sigma})}\\
& \leq\sum_{n \geq 2} R^n N^{2n-1} \frac{\epsilon^{n-1}}{N^{2(n-1)}} \leq  C_2 N \epsilon \ . 
\end{align*}



\end{document}